\theoremstyle{plain}
\newtheorem{theorem}{Theorem}[section] 
\newtheorem{corollary}[theorem]{Corollary}
\newtheorem{lemma}[theorem]{Lemma}
\newtheorem{proposition}[theorem]{Proposition}
\newtheorem{definition}[theorem]{Definition}
\newtheorem{assumption}[theorem]{Assumption}
\newtheorem*{assumpintro}{Assumption}
\theoremstyle{remark}
\newtheorem{remark}{Remark}[section]
\newcommand\C{\mathbb{C}}
\renewcommand\H{\mathbb{H}}
\newcommand\R{\mathbb{R}}
\newcommand\E{\mathbb{E}}
\newcommand\veps{\varepsilon}
\newcommand\cE{\mathcal{E}}
\newcommand\cF{\mathcal{F}}
\newcommand\cX{\mathcal{X}}
\newcommand\cR{\mathcal{R}}
\newcommand\cS{\mathcal{S}}
\newcommand\cT{\mathcal{T}}
\newcommand\cO{\mathcal{O}}
\newcommand\cQ{\mathcal{Q}}
\newcommand\cZ{\mathcal{Z}}
\newcommand\rI{\mathrm{I}}
\newcommand\rX{\mathrm{X}}
\newcommand\pa{\partial}
\newcommand\opa{\overline{\partial}{}}
\newcommand\paS{\partial_\cS}
\newcommand\opaS{\overline{\partial}{}_\cS}
\newcommand\pao{\partial_\omega}
\newcommand\opao{\overline{\partial}{}_\omega}
\newcommand\DS{\Delta_\cS}
\newcommand\Area{\operatorname{Area}}
\newcommand\crad{\operatorname{crad}}
\newcommand\cst{\operatorname{cst}}
\newcommand\diag{\operatorname{diag}}
\newcommand\diam{\operatorname{diam}}
\newcommand\dist{\operatorname{dist}}
\def\hm{\operatorname{hm}}
\renewcommand\Re{\operatorname{Re}}
\renewcommand\Im{\operatorname{Im}}
\newcommand\Id{\operatorname{Id}}
\newcommand\iso{\operatorname{iso}}
\newcommand\Int{\operatorname{Int}}
\newcommand\osc{\operatorname{osc}}
\renewcommand\Pr[2]{\operatorname{Pr}[#1\,;#2]}
\newcommand\dm{\diamond}
\newcommand\Dm{\diamondsuit}
\newcommand{\frb}{\mathfrak{b}}
\newcommand{\frw}{\mathfrak{w}}
\def\Od{\Omega^\delta}
\def\intr{{\operatorname{int}(\eta)}}
\def\LipKd{{\mbox{\textsc{Lip(}$\kappa$\textsc{,}$\delta$\textsc{)}}}}
\def\ExpFat{{{\mbox{\textsc{Exp-Fat(}$\delta$\textsc{)}}}}}
\def\ExpFatPrime{{{\mbox{\textsc{Exp-Fat(}$\delta,\delta'$\textsc{)}}}}}
\newcommand\Unif{{\mbox{\textsc{Unif(}$\delta$\textsc{)}}}}
\newcommand\Qflat{{\mbox{\textsc{Flat(}$\delta$\textsc{)}}}}
\newcommand{\stoptocwriting}{%
  \addtocontents{toc}{\protect\setcounter{tocdepth}{-5}}}
\newcommand{\resumetocwriting}{%
  \addtocontents{toc}{\protect\setcounter{tocdepth}{\arabic{tocdepth}}}}
\begin{document}

\title{Ising model and s-embeddings of planar graphs}

\author[Dmitry Chelkak]{Dmitry Chelkak$^\mathrm{a,b,c}$}

\thanks{\textsc{${}^\mathrm{A}$ ENS--MHI chair funded by MHI. D\'epartement de math\'ematiques et applications, \'Ecole Normale Sup\'erieure, CNRS, PSL University, 45 rue d'Ulm, 75005 Paris, France}}

\thanks{{\textsc{${}^\mathrm{B}$ \emph{Current address:} Department of Mathematics, University of Michigan, Ann Arbor, MI 48109-1043, USA}}}

\thanks{\textsc{${}^\mathrm{C}$} \emph{On leave from} \textsc{St.~Petersburg Dept.~of Steklov Mathematical Institute RAS, Fontanka 27, 191023 St.~Petersburg, Russia}}

\thanks{\emph{E-mail:} \texttt{dchelkak@umich.edu}}

\date{\today}

\begin{abstract}
We discuss the notion of s-embeddings~$\mathcal{S}=\mathcal{S}_\mathcal{X}$ of planar graphs carrying a nearest-neighbor Ising model. The construction of~$\mathcal{S}_\mathcal{X}$ is based upon a choice of a global complex-valued solution~$\mathcal{X}$ of the propagation equation for Kadanoff--Ceva fermions. Each choice of~$\mathcal{X}$  provides an interpretation of all other fermionic observables as s-holomorphic functions on~$\mathcal{S}_\mathcal{X}$. We set up a general framework for the analysis of such functions on s-embeddings~$\mathcal{S}^\delta$ with~$\delta\to 0$. Throughout this analysis, a key role is played by the functions~$\mathcal{Q}^\delta$ associated with~$\mathcal{S}^\delta$, the so-called origami maps in the bipartite dimer model terminology. In particular, we give an interpretation of the mean curvature of the limit of discrete surfaces $(\mathcal{S}^\delta;\mathcal{Q}^\delta)$ viewed in the Minkowski space~$\mathbb R^{2,1}$ as the mass in the Dirac equation describing the continuous limit of the model.

We then focus on the simplest situation when $\mathcal{S}^\delta$ have uniformly bounded lengths/angles and $\mathcal{Q}^\delta=O(\delta)$; as a particular case this includes all critical Ising models on {doubly periodic} graphs via their canonical s-embeddings. In this setup we prove RSW-type crossing estimates for the random cluster representation of the model and the convergence of basic fermionic observables. The proof relies upon a new strategy as compared to the already existing literature, {and} also provides a quantitative estimate on the speed of convergence.
\end{abstract}

\keywords{planar Ising model, Dirac spinors, fermionic observables, \mbox{s-holomorphic} functions, conformal invariance, universality}

\subjclass[2010]{Primary 82B20; Secondary 30G25, 60J67, 81T40}

\maketitle

\newpage

\tableofcontents

\newpage

\section{Introduction, main results and perspectives}
\setcounter{equation}{0}

\subsection{General context} The Ising model of a ferromagnet, introduced by Lenz in 1920, recently celebrated its centenary; probably, this is one of the most studied models in statistical mechanics. The \emph{planar} Ising model (i.e., the 2D model with nearest-neighbor interactions) gives rise to surprisingly rich structures of correlation functions; we refer the reader to monographs~\cite{Friedli-Velenik,McCoy-Wu,Mussardo,Palmer}, lecture notes~\cite{duminil-smirnov}, as well as to the introductions of the papers~\cite{AGG-20,ChSmi2} and references therein for more background on the subject, discussed from a variety of perspectives.

In this paper we consider the planar Ising model without the magnetic field, which is known to be exactly solvable on any graph and at any temperature: the partition function can be written as the Pffafian of a certain matrix and the entries of the inverse matrix -- known under the name \emph{fermionic observables} -- satisfy a simple \emph{propagation equation}; we refer the reader to the paper~\cite{CCK} for more details on various combinatorial formalisms used to study the planar Ising model during its long history.

We prefer to work with the ferromagnetic Ising model defined on \emph{faces} of a planar graph~$G$; the partition function is given by
\begin{equation}
\label{eq:intro-Zcirc}
\textstyle \cZ(G)\ =\ \sum_{\sigma:G^\circ\to\{\pm 1\}}\exp\big[\,\beta\sum_{e\in E(G)}J_e\sigma_{v^\circ_-(e)}\sigma_{v^\circ_+(e)}\,\big],
\end{equation}
where~$G^\circ$ and~$E(G)$ denote the dual graph and the set of edges of~$G$, respectively; $\beta=1/kT$ is the inverse temperature, $J_e>0$ are interaction constants assigned to the edges of~$G$, and~$v^\circ_{\pm}(e)\in G^\circ$ denote the two faces adjacent to~$e\in E(G)$. Passing to the \emph{domain walls} representation ({also known as the} low-temperature expansion) of the model, one can rewrite~$\cZ(G)$ as
\[
\textstyle \cZ(G)\ =\ 2\prod_{e\in E(G)}(x(e))^{-1/2}\times\,\sum_{C\in\cE(G)}\prod_{e\in C}x(e),\qquad x(e):=\exp[-2\beta J_e].
\]
where~$\cE(G)$ denotes the set of all even subgraphs of~$G$. Throughout this paper we identify edges of~$G$ with faces~$z(e)$ of the graph~$\Lambda(G):=G\cup G^\circ$ and denote 
\begin{equation}
\label{eq:x=tan-theta} \theta_{z(e)}\ :=\ 2\arctan x(e)\ \in\ (0,\tfrac{1}{2}\pi).
\end{equation}
Let us emphasize that we do \emph{not} fix an embedding of~$G$ into~$\C$ at this point, thus~$\theta_e$ is nothing more than another (abstract, i.e., not geometric) \emph{parametrization} of the interaction constants~$J_e$.

We are mostly interested in the situation when~$(G,x)$ is a big weighted graph carrying critical or near-critical Ising weights~$x(e)$ though we do \emph{not} precise the exact sense of this condition and mention it here only to give a proper perspective. For instance, the reader can think about the setup in which~$G$ is a subgraph of an infinite {doubly periodic} graph, in this case the criticality condition on the collection of weights~$x(e)$ is known explicitly~\cite{Li,cimasoni-duminil}. However, this periodic setup is \emph{not} the main motivation for our paper and should be viewed as a very particular case though our results are new even in this situation and, in particular, answer a question posed by Duminil-Copin and Smirnov in their lecture notes on the conformal invariance of 2D lattice models; see~\cite[Question~8.5]{duminil-smirnov}.

Given a weighted graph~$(G,x)$ we aim to embed it into the complex plane~$\C$ (actually, we construct both an embedding of $\Lambda(G):=G\cup G^\circ$ and of its dual graph~$\Dm(G)$) in a way allowing to analyze (subsequential) limits of fermionic observables in the same spirit as in the seminal work of Smirnov~\cite{Smi-ICM06,Smi-I} on the critical Ising model on~$\mathbb{Z}^2$. A possible analogy could be Tutte's barycentric embeddings which, among other things, provide a framework to study the convergence of discrete harmonic functions to continuous ones. Let us emphasize that \emph{s-embeddings} introduced in our paper are \emph{not} directly related to these barycentric embeddings; in fact both can be viewed as special cases of a more general construction called \emph{t-embeddings} or \emph{Coloumb gauges} appearing in the bipartite dimer model context~\cite{KLRR,CLR1}. The notion of s-embeddings was first announced in~\cite{Ch-ICM18} and motivated a closely related research project~\cite{CLR1,CLR2}. We benefit from this interplay and partly rely upon results obtained in~\cite{CLR1} in a more general context.

One of the most important (from our perspective) motivations to study special embeddings of irregular weighted planar graphs into~$\C$ is the conjectural convergence of critical random maps carrying a lattice model to an object called \emph{Liouville Quantum Gravity}; e.g. see~\cite{duplantier-sheffield-LQG,garban-Bourbaki} or~\cite{rhodes-vargas-lectures} and references therein. Though this discussion goes far beyond the scope of this paper, let us briefly mention several ideas of this kind that appeared in the literature: circle packing embeddings (see~\cite{nachmias-lectures} and references therein), embeddings as piecewise flat Riemann surfaces (e.g., see~\cite{gill-rohde} or~\cite[Section~1.7.1]{garban-Bourbaki}), embeddings by square tilings for UST-decorated random maps~\cite{smirnov-boykiy-private}, Tutte's embeddings~\cite{gwynne-miller-sheffield,gwynne-holden-sun-survey} and circle packings~\cite{gurel-jerison-nachmias} for the mating-of-trees approach to the LQG, the so-called Cardy embeddings~\cite{holden-sun-Cemb} introduced recently in the pure gravity context, etc. We believe that adding s-embeddings to this list might be a proper path for the analysis of random maps carrying the Ising model.

Besides generalizing the results of~\cite{Smi-I,ChSmi2} to a large family of deterministic graphs with `flat' functions~$\cQ^\delta=O(\delta)$ {and suggesting} a general framework to attack problems coming from the random maps context, this paper also contains the following contributions to the subject:
\begin{itemize}
\item A new `constructive' strategy of the proof of Theorem~\ref{thm:FK-conv}, which avoids compactness arguments in the Carath\'eodory topology on the set of planar domains and gives a quantitative estimate of the speed of convergence; see Section~\ref{sub:proof-FK} for details. Due to a general framework recently developed in~{\cite{richards-thesis}} this also allows to control the speed of convergence of FK-Ising interfaces to SLE(16/3) curves; 
    {see~\cite{richards-thesis} for more details.}
\smallskip
\item In the non-flat case $\cQ^\delta\not\to 0$ as~$\delta\to 0$, we reveal the importance of embeddings of planar graphs carrying a nearest-neighbor Ising model into the \emph{Minkowski space}~$\R^{2,1}$ and not simply into the complex plane. As briefly discussed in Section~\ref{sub:shol-limits}, this leads to a rather unexpected interpretation of the \emph{mass} appearing in the effective description of the model in the small mesh size limit as the \emph{mean curvature} of the corresponding surface in~$\R^{2,1}$.
\end{itemize}
The latter observation can be used in deterministic setups (e.g., see a recent paper~\cite{CIM-massive} where such an interpretation of a near-critical model on isoradial grids is discussed) and leads to interesting questions on limits of random surfaces in~$\R^{2,1}$ obtained via s-embeddings of appropriate random maps carrying the Ising model.

\subsection{Fermionic observables, s-embeddings and s-holomorphicity} Appearances of fermionic observables in the planar Ising model can be traced back to the seminal work of Kaufman and Onsager. Below we rely upon the classical spin-disorders formalism of Kadanoff and Ceva~\cite{kadanoff-ceva}, which is briefly reviewed in Section~\ref{sub:notation} below; see also~\cite{CCK} for a discussion of links of this formalism with other combinatorial techniques such as Kac--Ward matrices or dimer representations.
The notation used in this paper agrees with that of~\cite[Section~3]{CCK} and with~\cite{Ch-ICM18}.

Let
\begin{equation}
\label{eq:KC-chi-def}
\chi_c:=\mu_{v^\bullet(c)}\sigma_{v^\circ(c)},
\end{equation}
where~$v^\bullet(c)\in G=G^\bullet$ and~$v^\circ(c)\in G^\circ$ are incident primal and dual vertices of the planar graph~$G$  and~$\mu_{v^\bullet}$ is the so-called \emph{disorder variable} associated to~$v^\bullet$; we briefly recall its definition in Section~\ref{sub:notation}. The {Kadanoff--Ceva fermions}~\eqref{eq:KC-chi-def} are indexed by edges 
of the graph~$\Lambda(G):=G^\circ\cup G^\bullet$ or, equivalently, by vertices of the medial graph~$\Upsilon(G)$ of~$\Lambda(G)$. Given a set \mbox{$\varpi=\{v_1^\bullet,\ldots,v_{m-1}^\bullet,v_1^\circ,\ldots,v_{n-1}^\circ\}\subset\Lambda(G)$} with~$n,m$ even, one can consider the expectation
\begin{equation}
\label{eq:KC-fermions}
X_{\varpi}(c):=\E[\,\chi_c\cO_\varpi[\mu,\sigma]\,],\qquad \cO_\varpi[\mu,\sigma]:=\mu_{v_1^\bullet}\ldots\mu_{v_{m-1}^\bullet}\sigma_{v_1^\circ}\ldots\sigma_{v_{n-1}^\circ},
\end{equation}
below we call such expectations \emph{Kadanoff--Ceva fermionic observables}. Though $X_\varpi(c)$ is a priori defined only up to the sign, one can avoid making non-canonical choices by passing to a \emph{double cover}~$\Upsilon^\times_\varpi(G)$ of the graph~$\Upsilon(G)$ (e.g., see~\cite[Fig.~4]{CIM-massive}) that branches over all faces of~$\Upsilon(G)$ except those from the set~$\varpi$.

\smallskip

\begin{figure}
\hskip 0.02\textwidth \begin{minipage}{0.44\textwidth}
\includegraphics[clip, trim=4.5cm 16.8cm 10.2cm 5.4cm, width=\textwidth]{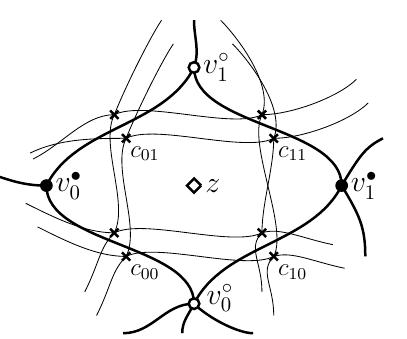}
\end{minipage}\hskip 0.08\textwidth \begin{minipage}{0.44\textwidth}
\includegraphics[clip, trim=12.4cm 16.8cm 2.3cm 5.4cm, width=\textwidth]{Sdef_tikz.pdf}
\end{minipage}

\caption{The notation for vertices {and `corners'} near~$z\in\Dm(G)$, a piece of the double cover~$\Upsilon^\times(G)$ {(left) and a piece of the corresponding bipartite graph~$\Upsilon^\bullet(G)\cup\Upsilon^\circ(G)$ (right).} Sometimes we fix a choice of the lifts of the `corners'~$c_{pq}$ to the double cover so that~$c_{11}$\,---\,$c_{01}$\,---\,$c_{00}$\,---\,$c_{10}$ are connected on~$\Upsilon^\times(G)$.\label{fig:notation}}
\end{figure}

Let~$z\in\Dm(G)$ and~$v_0^\bullet,v_0^\circ,v_1^\bullet,v_1^\circ\in\Lambda(G)$ be the vertices of the quad~$z$ listed counterclockwise, see {Fig.~\ref{fig:notation}} for the notation. It is well known (see~\cite[Section~4.3]{Mercat-CMP}, \cite[Remark~4]{Smi-ICM10} and~\cite[Section~3.5]{CCK} for historical comments) that Kadanoff--Ceva fermionic observables satisfy a very simple three-term propagation equation with coefficients determined by the Ising interaction parameter~\eqref{eq:x=tan-theta}: 
\begin{equation}
\label{eq:3-terms}
X(c_{pq})=X(c_{p,1-q})\cos\theta_z+X(c_{1-p,q})\sin\theta_z,
\end{equation}
where the corner~$c_{pq}\in\Upsilon^\times_\varpi(G)$ corresponds to the edge~$(v^\bullet_p v^\circ_q)$ of~$\Lambda(G)$ and we assume that the lifts of~$c_{pq}$, $c_{p,1-q}$ and of $c_{1-p,q}$ from the medial graph~$\Upsilon(G)$ to its double cover (see {Fig.~\ref{fig:notation}}) are chosen so that they remain connected on~$\Upsilon^\times_\varpi(G)$. Note that all solutions of the propagation equation~\eqref{eq:3-terms} are \emph{spinors} on~$\Upsilon^\times_\varpi(G)$, i.e., their values at two lifts of the same~$c\in\Upsilon(G)$ differ by the sign.

\smallskip

Let a {proper} embedding~$\cS:\Lambda(G)\to\C$ of a planar graph~$\Lambda(G)$ into the complex plane be fixed and denote
\begin{equation} \label{eq:def-eta}
\eta_c:=\varsigma\cdot \exp\big[-\tfrac{i}{2}\arg(\cS(v^\bullet(c))-\cS(v^\circ(c)))\big],\qquad \varsigma:=e^{i\frac{\pi}{4}},
\end{equation}
where a global prefactor $\varsigma\in\mathbb{T}=\{z\in\C:|z|=1\}$ can be chosen arbitrarily; we choose the value~$e^{i\frac\pi 4}$ in order to keep the notation of this paper consistent with~\cite{ChSmi2}. As above, one can avoid an ambiguity in the values of square roots in the definition~\eqref{eq:def-eta} by passing to the double cover $\Upsilon^\times(G)$. Clearly, the products
\begin{equation}
\label{eq:etaX-product}
\eta_c X_\varpi(c):\Upsilon_\varpi(G)\to\C
\end{equation}
are spinors defined on the double cover~$\Upsilon_\varpi(G)$ that branches \emph{only} over~$\varpi$.

\smallskip

Assume now that we work with the critical Ising model on~$\mathbb{Z}^2$ or, more generally, with the critical Z-invariant model on isoradial {grids}; see~\cite{ChSmi2}. In this context, essentially the same objects as~\eqref{eq:KC-fermions} are sometimes called \emph{Smirnov's fermionic observables}. The correspondence reads as follows (e.g., see~\cite[Lemma~3.4]{ChSmi2}):
\begin{itemize}
\item in the setup of the \emph{critical Z-invariant} Ising model on an \emph{isoradial {grids,}} a real-valued spinor~$X_\varpi$ satisfies the propagation equation~\eqref{eq:3-terms} around a quad~$z\in\Dm(G)$ if and only if there exists a number~$F_\varpi(z)\in\C$ such that
    \begin{equation}
    \label{eq:Pr=Pr-iso}
    \eta_c X_\varpi(c)\ =\ \Pr{F_\varpi(z)}{\eta_c\R}\,:=\,\tfrac{1}{2}\big(\,F_\varpi(z)+\eta_c^2\cdot \overline{F_\varpi(z)}\,\big).
    \end{equation}
\end{itemize}
Note that~$F_\varpi:\Dm_\varpi(G)\to\C$ has the same branching structure as~\eqref{eq:etaX-product}: this is a spinor defined on the double cover~$\Dm_\varpi(G)$ of~$\Dm(G)$ that branches over~$\varpi$. The equation~\eqref{eq:Pr=Pr-iso}, reformulated in terms of the values~$F_\varpi(z)$, is nothing but the definition of \emph{s-holomorphic functions} on an isoradial grid; see~\cite[Definition~3.1]{ChSmi2}.

\smallskip

Though a correspondence similar to~\eqref{eq:Pr=Pr-iso} can be established in more general contexts (e.g., see~\cite{beffara-duminil,park} for the massive perturbation of the Ising model on the square grid or~\cite[Section~3.6]{CCK}), let us emphasize the following fundamental difference between Kadanoff--Ceva fermionic observables and s-holomorphic functions:
\begin{itemize}
\item Kadanoff--Ceva observables~$X_\varpi$ are defined on double covers~$\Upsilon^\times_\varpi(G)$ of an \emph{abstract} planar graph~$\Upsilon(G)$ while
\item Smirnov's observables~\cite{Smi-ICM06,Smi-I} and their spinor generalizations~$F_\varpi$ require to fix an \emph{embedding}~$\cS$ of the graphs under consideration.
\end{itemize}

\smallskip

We now move to a slightly informal definition of \emph{s-embeddings} of planar graphs carrying an Ising model, a more detailed presentation is given in Section~\ref{sub:semb-definition}.
\begin{definition}\label{def:intro-semb}
Let~$\cX:\Upsilon^\times(G)\to\C$ be a \emph{complex-valued} solution of the propagation equation~\eqref{eq:3-terms}, which we call a \emph{Dirac spinor}. We say that~$\cS=\cS_\cX:\Lambda(G)\to\C$ is an \emph{s-embedding} (associated with~$\cX$) of the weighted graph~$(G,x)$ if
\begin{equation}
\label{eq:Sdef}
\cS_\cX(v^\bullet(c))-\cS_\cX(v^\circ(c))\ =\ (\cX(c))^2\ \ \text{for all\ \ $c\in\Upsilon(G)$}.
\end{equation}
\end{definition}

\begin{remark}
(i) The consistency of~\eqref{eq:Sdef} around a quad~$z\in\Dm(G)$, i.e., the identity
\[
(\cX(c_{10}(z)))^2+(\cX(c_{01}(z)))^2=(\cX(c_{00}(z)))^2+(\cX(c_{11}(z)))^2
\]
can be easily derived from~\eqref{eq:3-terms}; see {Fig.~\ref{fig:notation}} for the notation.

\smallskip

\noindent (ii) In the `standard' context of isoradial {grids} (already embedded into~$\C$ so that all quads~$z\in\Dm(G)$ are \emph{rhombi} with the sides of length~$\delta$), the function~$\eta_c$ given by~\eqref{eq:def-eta} solves the propagation equation~\eqref{eq:3-terms} and thus can be chosen as a Dirac spinor~$\cX(c):=\varsigma\,\delta^{1/2}\overline{\eta}_c$ in Definition~\ref{def:intro-semb}. With this choice, the s-embedding~$\cS_\cX$ coincides with the original isoradial grid. We discuss more particular cases of \mbox{s-embeddings} in Section~\ref{sub:semb-definition}. In particular, {each doubly periodic} graph carrying a critical Ising model admits a canonical {doubly periodic} s-embedding.
\end{remark}

One can also easily see from~\eqref{eq:3-terms} that
\[
|\cX(c_{10}(z))|^2+|\cX(c_{01}(z))|^2=|\cX(c_{00}(z))|^2+|\cX(c_{11}(z))|^2,
\]
which means that~$(\cS(v_0^\bullet)\cS(v_0^\circ)\cS(v_1^\bullet)\cS(v_1^\circ))$ is a \emph{tangential} (though not necessarily convex) quadrilateral in the complex plane. Let a function $\cQ^\delta:\Lambda(G)\to\R$ be defined (up to a global additive constant) by the identity
\[
\cQ^\delta(v^\bullet(c))-\cQ^\delta(v^\circ(c))\ :=\ |\cS^\delta(v^\bullet(c))-\cS^\delta(v^\circ(c))|\,.
\]
{(Note that~$\cQ=0$ on~$G^\circ$ and~$\cQ=\delta$ on~$G^\bullet$ if~$\cS$ is a rhombic lattice of mesh~$\delta$.)}

{
\begin{remark} \label{rem:remintro-R21etc}
Clearly, a multiplication of the Dirac spinor~$\cX$ by a global factor~$e^{is}$, $s\in\R$, results in the rotation~$\cS\mapsto e^{2is}\cS$ of the s-embedding. Less trivially, if one replaces~$\cX$ by~$\cosh(t)\cX+\sinh(t)\overline{\cX}$, $t\in\R$, then~$\cS$ and~$\cQ$ change as
\[
(\Re\cS\,;\,\Im\cS\,;\,\cQ)\ \mapsto\ (\cosh(2t)\Re\cS+\sinh(2t)\cQ\,;\,\Im\cS\,;\,\sinh(2t)\Re\cS+\cosh(2t)\cQ),
\]
which is an isometry in the Minkowski space~$\R^{2,1}$; see Fig.~\ref{fig:Z2-emb}.
This means that a natural viewpoint on s-embeddings is to consider them not simply as tilings~$\cS$ of the complex plane but rather as discrete surfaces $(\cS;\cQ)$ in $\C\times\R\cong \R^{2,1}$. Though in this paper we are mostly interested in the `flat' setup when~$\cQ^\delta\to 0$ as~$\delta\to 0$, the reader can find a discussion of the general case in Section~\ref{sub:shol-limits} below.
\end{remark}}

\begin{figure}
\begin{minipage}{0.28\textwidth}
\includegraphics[clip, trim=9.43cm 17.1cm 6.07cm 6.35cm, width=\textwidth]{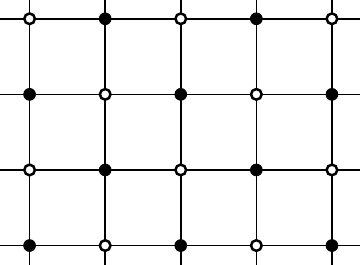}

\vskip 12pt

\includegraphics[clip, trim=7.75cm 8cm 7.75cm 15cm, width=\textwidth]{Semb_Z2.pdf}
\end{minipage}
\hskip 0.12\textwidth
\begin{minipage}{0.28\textwidth}
\includegraphics[clip, trim=9.2cm 6.1cm 6.3cm 11.4cm, width=\textwidth]{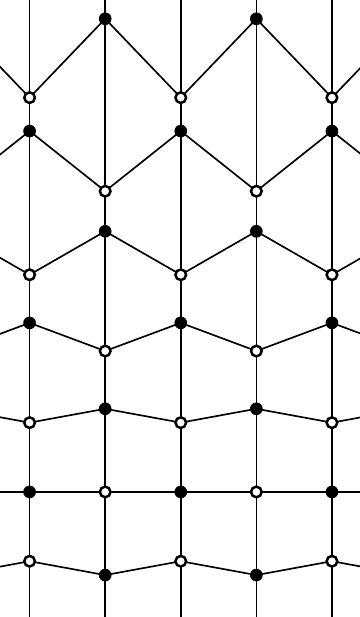}
\end{minipage}

\caption{\textsc{Left:} Two periodic s-embeddings of the \emph{same} critical homogeneous Ising model on~$\mathbb{Z}^2$. The corresponding discrete surfaces~$(\cS;\cQ)$ in the space~$\R^{2,1}$ are isometric; see Remark~\ref{rem:remintro-R21etc}. \textsc{Right:} An s-embedding of the non-critical homogeneous Ising model on~$\mathbb{Z}^2$; {see also Remark~\ref{rem:massive-iso} below and~\cite[Section~5.2]{CHM}.} \label{fig:Z2-emb}}
\end{figure}

By a direct computation (see Proposition~\ref{prop:shol=3term}), it is not hard to see that Smirnov's interpretation~\eqref{eq:Pr=Pr-iso} of the propagation equation~\eqref{eq:3-terms} \emph{can} be generalized to the context of s-embeddings. More precisely, the following equivalence holds:
\begin{itemize}
\item Let~$\cS=\cS_\cX$ be an s-embedding of the weighted graph~$(G,x)$ associated with a Dirac spinor~$\cX$. Then, a real-valued spinor~$X_\varpi$ satisfies the propagation equation~\eqref{eq:3-terms} around a quad~$z\in\Dm(G)$ if and only if there exists a number~$F_\varpi(z)\in\C$ such that
    \begin{equation}
    \label{eq:Pr=Pr-semb}
    \eta_c X_\varpi(c)\ =\ |\cX(c)|\cdot \Pr{F_\varpi(z)}{\eta_c\R}\,.
    \end{equation}
\end{itemize}
This equivalence gives rise to a notion of \emph{s-holomorphic functions on s-embeddings}, see Section~\ref{sub:dimers} for more details and for a link with more general t-holomorphic functions on \mbox{\emph{t-embeddings}} of weighted bipartite planar graphs, which were recently introduced and studied in~\cite{CLR1}, in particular motivated by the Ising model context.

\smallskip

\begin{figure}

\centering \includegraphics[clip, trim=5.24cm 13.24cm 6.36cm 9.16cm, width=\textwidth]{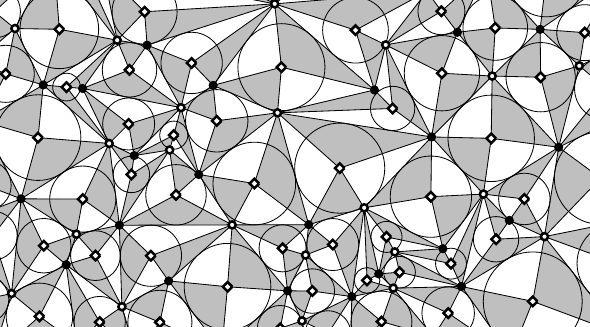}

\caption{A {`generic'} s-embedding. The images~$\cS(v^\bullet)$, $\cS(v^\circ)$ of vertices~$v^\bullet\in G^\bullet$, $v^\circ\in G^\circ$ are shown as black and white discs; the points~$\cS(z)$,~$z\in\Dm(G)$, are shown as white rhombi. Note that we do \emph{not} require the convexity of tangential quads~$\cS^\dm(z)$ and that we allow vertices of degree~$2$ and multiple edges in both $G^\bullet$ and~$G^\circ$.\label{fig:Semb}}
\end{figure}

Let us informally summarize the preceding discussion as follows:
\begin{itemize}
\item Each choice of a Dirac spinor~$\cX$ -- i.e., of a complex-valued solution of the propagation equation~\eqref{eq:3-terms} considered on an \emph{abstract} graph~$\Upsilon^\times(G)$ -- provides an {interpretation}~\eqref{eq:Pr=Pr-semb} of all other solutions~$X_\varpi$ of~\eqref{eq:3-terms} via a \emph{discrete complex structure} defined by the s-embedding~$\cS_\cX$ {(or, more precisely, by the discrete surface~$(\cS_\cX;\cQ_\cX)$ in~$\R^{2,1}$; see Remark~\ref{rem:remintro-R21etc} above).}
\end{itemize}

We conclude this section by recalling the breakthrough idea of \mbox{Smirnov~\cite{Smi-ICM06,Smi-I}}, which served as a cornerstone for a series of works on the convergence of correlation functions in the critical Ising model on~$\mathbb{Z}^2$ to the scaling limits predicted by the CFT (e.g., see a brief survey~\cite{Ch-ECM16} of these developments and references therein):
\begin{itemize}
\item Working with discrete domains~$\Od\to\Omega$, $\delta\to 0$, one can view s-holomorphic functions~$F_\varpi$ as solutions to certain discrete Riemann-type boundary value problems, and use appropriate discrete complex analysis techniques to prove the convergence of these solutions to their continuous counterparts.
\end{itemize}

In our paper we continue to develop this philosophy. However, let us emphasize once more an important addition to the guideline described above: in many interesting setups, one should first find an appropriate embedding of an abstract planar graph (or to re-embed a given graph properly) so that discrete complex analysis techniques become available. This is what s-embeddings were suggested for in~\cite{Ch-ICM18}.

\subsection{Assumptions and main convergence results}
The `algebraic' part (see Sections~\ref{sec:definitions} and~\ref{sec:operators}) of this paper, which includes definitions of relevant discrete differential operators and algebraic identities between them, is developed in the full generality. However, the proof of our main convergence result -- Theorem~\ref{thm:FK-conv} -- requires many `analytic' estimates and at the moment we are able to give it only under very restrictive assumptions~\Unif\ and~\Qflat\ on a family of s-embeddings~$\cS^\delta$ with~$\delta\to 0$, which are discussed below. On the other hand, even this setup provides a much greater generality as compared to the isoradial context; we refer the reader to Section~\ref{sub:further} for a further discussion.

\begin{assumpintro}[\Unif] There exists constants~$r_0,R_0,\theta_0>0$ such that all edge lengths of~$\cS^\delta$ are uniformly comparable to~$\delta$:
\[
r_0\delta\ \le\ |\cX^\delta(c)|^2=|\cS^\delta(v^\bullet(c))-\cS^\delta(v^\circ(c))|\ \le\ R_0\delta
\]
and all angles of quads in~$\cS^\delta$ are uniformly bounded from below by $\theta_0$. (This assumption also implies that all the interaction parameters~\eqref{eq:x=tan-theta} are uniformly bounded away from~$0$ and from~$1$; e.g., see the formula~\eqref{eq:theta-from-S}.)
\end{assumpintro}

\begin{assumpintro}[\Qflat] With a proper choice of global additive constants, the functions~$\cQ^\delta$ satisfy the uniform (both in space and in~$\delta$) estimate~$|\cQ^\delta(v)|=O(\delta)$.
\end{assumpintro}
Let us discuss the last assumption in several particular cases of s-embeddings:
\begin{itemize}
\item In the setup of isoradial {grids} one has~$\cQ^\delta=0$ on~$G^\circ$ and~$\cQ^\delta=\delta$ on~$G^\bullet$.
\smallskip
\item For the critical Ising model on circle patterns introduced by Lis in~\cite{Lis}, which generalizes the isoradial setup, the functions~$\cQ^\delta$ vanish on~$G^\bullet$ and are equal to (minus) the radii of the corresponding circles on~$G^\circ$.
\smallskip
\item For {doubly periodic} graphs carrying a critical Ising model, the existence of periodic s-embeddings~$\cS^\delta$ with \emph{periodic} (and thus~$O(\delta)$) functions~$\cQ^\delta$ is guaranteed by~\cite[Lemma~13]{KLRR}, see also Lemma~\ref{lem:double-periodic} below.
\end{itemize}

Our main convergence result is an analogue of~\cite[Theorem~2.2]{Smi-I} and~\cite[Theorem~A]{ChSmi2} in the setup of s-embeddings. Let~$\Od$ be simply connected discrete domains drawn on s-embeddings~$\cS^\delta$, with wired boundary conditions on the arc~$(a^\delta b^\delta)$ and free boundary conditions on the complementary arc~$(b^\delta a^\delta)$; see Fig.~\ref{fig:Dob}. Let
\begin{equation}
\label{eq:KC-Dob-def}
X^\delta(c)\ :=\ \E_{\Omega^\delta}[\chi_c\mu_{(b^\delta a^\delta)^\bullet}\sigma_{(a^\delta b^\delta)^\circ}]
\end{equation}
be the basic fermionic observable in the domain~$\Omega^\delta$ and~$F^\delta$ be the corresponding s-holomorphic function defined by~\eqref{eq:Pr=Pr-semb}.

\begin{figure}
\includegraphics[clip, trim=4.4cm 14.4cm 4.8cm 4cm, width=0.91\textwidth]{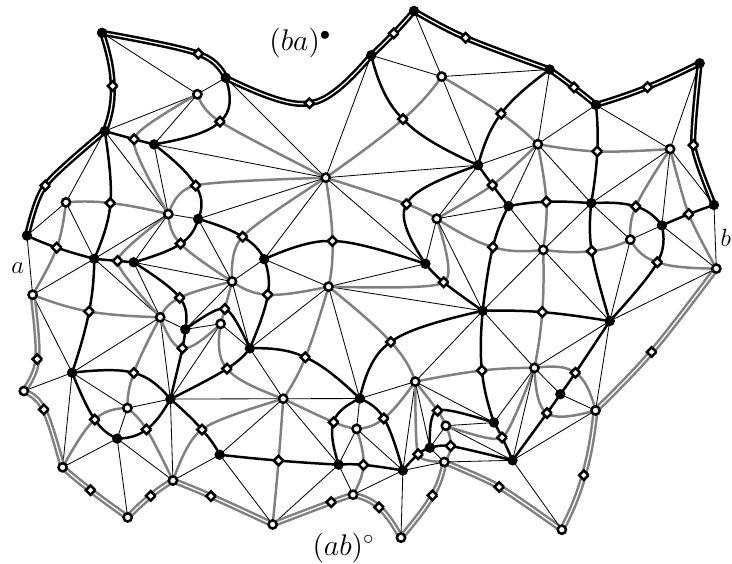}
\caption{An example of a domain with wired/free boundary conditions drawn on an s-embedding as in Theorem~\ref{thm:FK-conv}. We view the boundary arc~$(ab)^\circ$ as a single vertex of the graph~$G^\circ$ and the boundary arc~$(ba)^\bullet$ as a single vertex of the dual graph~$G^\bullet$.\label{fig:Dob}}
\end{figure}

\begin{theorem}\label{thm:FK-conv}
Let s-embeddings~$\cS^\delta$, $\delta\to 0$, satisfy the assumptions \Unif\ and \Qflat. Assume that discrete domains~$(\Od;a^\delta,b^\delta)$ drawn on~$\cS^\delta$ converge to a bounded simply connected domain~$(\Omega;a,b)$ in the Carath\'eodory sense (e.g., see~\cite[Section~3.2]{ChSmi1} for the definition). Then, for each compact subset~$K$ of $\Omega$, the following uniform convergence holds as~$\delta\to 0$:
\[
F^\delta(z)\ =\ \sqrt{\Phi'(\cS^\delta(z))}+o(1),\qquad \cS^\delta(z)\in K,
\]
where~$\Phi:\Omega\to\R\times(0,1)$ is a conformal mapping that sends~$a,b$ to~$\pm\infty$.
\end{theorem}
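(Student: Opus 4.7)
The natural starting point is to construct, following Smirnov's strategy for the critical Ising model on $\mathbb Z^2$, the real-valued discrete primitive $H^\delta$ of $\Im\int (F^\delta)^2 dz$ on the s-embedding $\cS^\delta$. Using the algebraic framework developed earlier in the paper for s-holomorphic functions on s-embeddings, I expect $H^\delta$ to be a well-defined function on $\Lambda(G^\delta)$, approximately s-harmonic with discrepancies controlled by $\cQ^\delta$; under \Qflat\ these discrepancies are $O(\delta)$ and disappear in the limit. The Dobrushin-type definition~\eqref{eq:KC-Dob-def} of the observable forces $H^\delta$, after proper normalization, to be constant on each boundary arc --- namely $0$ on the wired arc $(a^\delta b^\delta)$ and $1$ on the free arc $(b^\delta a^\delta)$. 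The natural target is then $H := \Im\Phi$, the unique bounded harmonic function in $\Omega$ with these Dirichlet values.

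To bypass compactness arguments in the Carath\'eodory topology, the plan is as follows. First, derive uniform \emph{a priori} bounds on $F^\delta$ and a uniform modulus of continuity for $H^\delta$ on compact subsets of $\Omega$, using the RSW-type crossing estimates for the FK-Ising representation announced in the abstract, together with discrete gradient estimates for approximately s-harmonic functions on $\cS^\delta$ (assumptions \Unif\ and \Qflat\ provide comparability with the standard harmonic structure). Second, construct a `reference' approximately s-harmonic function $\widetilde H^\delta$ on $\cS^\delta$ that discretizes $H = \Im\Phi$ and matches the boundary data of $H^\delta$ up to a quantitative error determined by the Carath\'eodory defect between $(\Od;a^\delta,b^\delta)$ and $(\Omega;a,b)$. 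The difference $H^\delta - \widetilde H^\delta$ is then approximately s-harmonic with small boundary values, and the discrete maximum principle together with a Beurling-type decay estimate for the discrete harmonic measure yields $\|H^\delta - H\|_{L^\infty(K)}\to 0$ with an explicit rate. Finally, transfer the convergence from $H^\delta$ to $F^\delta$ via the identity $H^\delta = \Im\int (F^\delta)^2 dz$ together with discrete Cauchy-type estimates on s-holomorphic functions, which produce $F^\delta = \sqrt{\Phi'(\cS^\delta(\cdot))} + o(1)$ on $K$.

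The main obstacle I anticipate is the boundary analysis near the Dobrushin points $a^\delta$ and $b^\delta$, where $\sqrt{\Phi'}$ has an inverse square-root singularity and the Carath\'eodory convergence degrades. Around these points one needs both a sharp discrete boundary Harnack principle (so that the boundary contribution of $H^\delta - \widetilde H^\delta$ is not amplified on interior compacts) and a quantitative crossing estimate (to rule out macroscopic excursions of the exploration interface to the `wrong' boundary arc). This is where the RSW estimates enter most crucially, and where the quantitative rate of convergence is ultimately determined. A secondary, more technical difficulty is to track the $\cQ^\delta$-discrepancies through all of the discrete-analytic manipulations: one has to verify that the Beurling-type estimate for the discrete harmonic measure is robust under the passage from exact to approximate s-harmonicity, and that the cumulative effect of the $O(\delta)$ errors provided by \Qflat\ does not destroy the convergence rate on $K$. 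The \Qflat\ hypothesis is precisely what renders this approximate analysis tractable and, together with \Unif, reduces the problem to a perturbation of the isoradial/flat situation.
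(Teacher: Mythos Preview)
Your overall architecture (build $H^\delta$, show it converges to $\Im\Phi$, deduce $F^\delta\to\sqrt{\Phi'}$) is correct, and so are the boundary values and the use of RSW estimates near $\partial\Omega$. But the central step---comparing $H^\delta$ with a \emph{discrete} reference function via a ``discrete maximum principle'' and a ``Beurling-type decay for the discrete harmonic measure''---is precisely what \emph{fails} on general s-embeddings, and this is the whole point of the paper's new strategy.

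On s-embeddings the natural operator is the s-Laplacian $\Delta_\cS=-4\pao\opaS$, and one does have the s-positivity $\Delta_\cS H_F\ge 0$. However, $\Delta_\cS$ is \emph{not} a standard graph Laplacian: its off-diagonal coefficients $b_{vv'}$ on the edges of $\Lambda(G)$ have no sign (and the $a_{v^\circ v^\circ}$ coefficients enter with a minus sign). In particular, there is no discrete harmonic measure, no discrete Beurling estimate, and no discrete boundary Harnack principle available. Your sentence ``\Unif\ and \Qflat\ provide comparability with the standard harmonic structure'' is exactly the unjustified step: \Qflat\ does \emph{not} reduce the problem to a perturbation of the isoradial situation at the level of discrete operators.

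What the paper does instead is pass entirely to \emph{continuous} harmonic analysis. One mollifies $H_{F^\delta}$ with a running kernel of size $\rho_u\asymp\delta^\varepsilon d_u$ and proves the key quantitative estimate
\[
|\Delta\widetilde H_{F^\delta}(u)|=O(\delta^\alpha d_u^{-2+\alpha})\quad\text{for }d_u\ge\delta^{1-\eta},
\]
where $\Delta$ is the \emph{continuous} Laplacian. This estimate requires the full machinery of Section~3 (the operators $\opaS,\pao$, the coefficients $A_z,B_z,C_z$, the cancellations in Proposition~3.11 and~3.12) and, crucially, a trick in which one exploits s-positivity together with \emph{shifts of the additive constant in $\cQ$} to kill the sign-indefinite remainder. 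One then compares $H_{F^\delta}$ with the \emph{continuous} Dirichlet extension from $\partial\Omega^\delta_{\operatorname{int}(\eta)}$ and applies the \emph{continuous} Beurling estimate. The RSW input enters only at the very end, to bound $|F^\delta|$---and hence the boundary values of $H_{F^\delta}$ on $\partial\Omega^\delta_{\operatorname{int}(\eta)}$---by a power of $\delta/\dist(\cdot,(b^\delta a^\delta))$.

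Separately, your a priori bounds on $F^\delta$ should not come from ``discrete gradient estimates for approximately s-harmonic functions'' (which are unavailable) but from the t-embedding regularity theory: H\"older continuity via backward random walks on S-graphs, and the Harnack-type bound $|F^\delta|^2\lesssim d_u^{-1}\operatorname{osc}H_{F^\delta}$.
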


To prove Theorem~\ref{thm:FK-conv} we use a strategy that substantially differs from the one used in~\cite{Smi-I,ChSmi2}. The reason is that both these papers heavily rely upon a very peculiar comparison of the {functions~$H^\delta:=\int \Im[(F^\delta(z)^2)]dz$ constructed out of s-holomorphic observables $F^\delta$} with sub- and super-harmonic functions on~$G^\bullet$ and~$G^\circ$. Though this \emph{s-positivity} property has an analogue in the setup of s-embeddings (see Corollary~\ref{cor:s-positivity}), the link with harmonic functions is missing, which forces us to develop a totally different approach.

{In~\cite{Smi-I,ChSmi2}, the aforementioned sub-/super-harmonicity of~$H^\delta$ was used for two purposes: (i) to get a priory regularity estimates of~$F^\delta$ in the bulk of~$\Omega^\delta$ and, more importantly, (ii) to prove that the $0/1$ Dirichlet boundary values of $H^\delta$ survive in the limit~$\delta\to 0$. As for (i), one can relatively easily develop an alternative proof following the approach from~\cite[Section~6]{CLR1}; see Section~\ref{sub:regularity} below. However, the analysis near the boundary is more challenging. Loosely speaking, in the standard square grid setup our approach reads as follows. In a very thin~$\delta^{1-\eta}$ neighborhood of~$\partial\Omega^\delta$, the function~$H^\delta$ can be controlled via a `crude' estimate~$F^\delta=O(\delta^{-\frac{1}{2}+\eta})$ coming from the $O(\delta^\eta)$ decay of the magnetization as~$\delta\to 0$. Inside the $\delta^{1-\eta}$ interior of~$\Omega^\delta$ we use the fact that the Laplacian of (a mollified version of) $H^\delta$ is bounded by~$O(\delta/d^3)$ and hence by~$O(d^{-2+\eta})$, where $d$ denotes the distance to the boundary. This guarantees (see Lemma~\ref{lem:cont-estimate}) that~$H^\delta=O(d^{\eta/5})$, uniformly in both~$d$ and~$\delta$.} It is worth noting that this new strategy of the proof also provides a quantitative estimate (of the form~$O(\delta^\alpha)$, $\alpha>0$) on the speed of convergence. {We believe that it can also be used to give an alternative proof of an analogue of Theorem~\ref{thm:FK-conv} for the massive model on~$\mathbb{Z}^2$ and, more generally, on isoradial grids; the result recently obtained by Park~\cite{park-iso} via a different approach to the control of boundary values.}

\smallskip

One of essential ingredients of our proof of Theorem~\ref{thm:FK-conv} {(that, in particular, gives us the polynomial decay of the magnetization on~$\cS^\delta$)} is the so-called \emph{uniform crossing estimates} for the random cluster (or Fortuin--Kasteleyn; e.g., see~\cite{duminil-smirnov}) representation of the Ising model on~$\cS^\delta$.
This is why we also prove a supplementary theorem, which can be viewed as a \emph{weak} analogue of~\cite[Theorem~C]{ChSmi2} for very special \emph{`discrete rectangles'} on s-embeddings satisfying~$\Unif$\ and~$\Qflat$.

\begin{theorem}\label{thm:RSW-selfdual} Let~$x_1<x_2$,~$y_1<y_2$, $\cR:=(x_1,x_2)\times(y_1,y_2)\subset\C$, and~$\cS^\delta$ be s-embeddings satisfying the assumptions~\Unif\ and~\Qflat\ with $\delta\to 0$. There exists discretizations~$\cR^\delta=[\cR(x_1,x_2;y_1,y_2)]^{\circ\bullet\circ\bullet}_{\cS^\delta}$\,of the rectangle~$\cR$ with boundaries staying within~$O(\delta)$ from the boundaries of~$\cR$, and with wired boundary conditions at the bottom~$(a^\delta_1b^\delta_1)^\circ$ and the top~$(a^\delta_2b^\delta_2)^\circ$ sides of~$\cR^\delta$, and free at the left~$(b^\delta_2a^\delta_1)^\bullet$ and the right~$(b^\delta_1a^\delta_2)^\bullet$ sides (see also Fig.~\ref{fig:SDdomain}) such that
\[
\liminf_{\delta\to 0}\E_{\cR^\delta}[\sigma_{(a^\delta_1b^\delta_1)^\circ}\sigma_{(a^\delta_2b^\delta_2)^\circ}]\ \ge\ \cst\,>\,0,
\]
where the constant~$\cst>0$ depends only on constants in~\Unif\ and~\Qflat.
\end{theorem}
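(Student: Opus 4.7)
The strategy is to reduce the spin--spin correlation to an FK connectivity event through the Edwards--Sokal coupling, and then to lower bound this connectivity probability using the self-duality of critical FK--Ising together with the approximate primal--dual symmetry of $\cS^\delta$ furnished by the assumption \Qflat.

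First I translate the two-point function into a random cluster event. With the two wired arcs $(a_i^\delta b_i^\delta)^\circ$ carrying independent $\pm 1$ spins and the complementary arcs $(b_i^\delta a_{3-i}^\delta)^\bullet$ carrying free boundary conditions, the Edwards--Sokal coupling applied to the loop/random-cluster representation of \eqref{eq:intro-Zcirc} yields
\[
\E_{\cR^\delta}\bigl[\,\sigma_{(a_1^\delta b_1^\delta)^\circ}\sigma_{(a_2^\delta b_2^\delta)^\circ}\,\bigr]\ =\ P^{\mathrm{FK}}_{\cR^\delta}\bigl(\,(a_1^\delta b_1^\delta)^\circ\longleftrightarrow(a_2^\delta b_2^\delta)^\circ\,\bigr),
\]
so it suffices to bound from below the probability of a primal FK-open vertical crossing of $\cR^\delta$.

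Next I invoke the planar self-duality of critical FK--Ising: the weights \eqref{eq:x=tan-theta} satisfy the self-dual relation, so the dual FK measure on the planar dual of $\cR^\delta$ (with wired $\leftrightarrow$ free boundary conditions exchanged) is again a critical FK--Ising measure. The standard planar duality of loop configurations then gives
\[
P^{\mathrm{FK}}_{\cR^\delta}(\text{vert.~primal crossing})\ +\ P^{\mathrm{FK}}_{(\cR^\delta)^\star}(\text{horiz.~dual crossing})\ =\ 1.
\]
The discretization $\cR^\delta$ is to be chosen so that, after a $90^\circ$ rotation and the primal--dual swap, it is $O(\delta)$-close to an admissible discretization on $\cS^\delta$ again (of a rectangle of comparable aspect ratio). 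Here \Qflat\ is essential: $|\cQ^\delta|=O(\delta)$ forces $G^\bullet$ and $G^\circ$ to project to planar graphs that coincide up to $O(\delta)$-displacement, so the dual s-embedding differs from $\cS^\delta$ only by such an $O(\delta)$-perturbation. Combined with \Unif, which gives uniform control of quad shapes and of the interaction parameters \eqref{eq:x=tan-theta}, FKG monotonicity of FK--Ising in boundary conditions together with a short path-gluing near the boundary then allows the horizontal dual crossing probability to be compared with a horizontal primal crossing probability on a perturbed rectangle; from the display above both crossing probabilities are thus comparable up to multiplicative constants, yielding the desired lower bound.

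The main obstacle is executing the primal--dual comparison in step three quantitatively: although $\cS^\delta$ and its dual are geometrically close by \Qflat, they are combinatorially distinct and the induced FK measures are not literally equal. I expect most of the work to go into choosing the discretization $\cR^\delta$ so that the primal--dual swap maps admissible rectangles to admissible ones with only an $O(\delta)$ boundary perturbation, and then into absorbing that perturbation via FKG monotonicity and finite-range path comparisons. A secondary difficulty is the dependence on the aspect ratio of $\cR$: for elongated rectangles a classical RSW-style chaining of sub-rectangles of bounded aspect ratio will likely be needed to propagate the symmetric-case bound.
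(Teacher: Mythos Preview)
Your approach has a genuine circularity that I do not see how to break. The duality relation
\[
P^{\mathrm{FK}}_{\cR^\delta}(\text{vert.\ primal crossing})\ +\ P^{\mathrm{FK}}_{(\cR^\delta)^\star}(\text{horiz.\ dual crossing})\ =\ 1
\]
is correct, and it is also true that under Kramers--Wannier duality the dual model on $G^\bullet$ is again an Ising model whose natural s-embedding (with $G^\bullet$ and $G^\circ$ swapped, hence $\cQ^\delta\mapsto -\cQ^\delta$) still satisfies \Unif\ and \Qflat. But the step where you ``compare the horizontal dual crossing probability with a horizontal primal crossing probability on a perturbed rectangle'' is not justified. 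These are crossing probabilities for two \emph{different} FK--Ising models on two \emph{different} graphs. FKG monotonicity lets you compare the same model under different boundary conditions; it does not let you compare distinct models on distinct graphs, however geometrically close they are. Knowing that two s-embeddings are $O(\delta)$-close does not by itself imply that their FK crossing probabilities are close: any such stability statement would already require uniform crossing estimates of exactly the type you are trying to prove. The ``short path-gluing near the boundary'' cannot bridge this gap either, since the discrepancy is in the bulk law, not just at the boundary.

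Even granting the comparison for a square, the extension to arbitrary aspect ratios via ``classical RSW-style chaining'' is not available here. The standard RSW arguments (Russo--Seymour--Welsh, Duminil-Copin--Hongler--Nolin, Tassion) all rely on lattice symmetries or translation invariance that a generic s-embedding satisfying \Unif\ and \Qflat\ simply does not have. This is precisely why the paper abandons duality-based arguments entirely and instead proves Theorem~\ref{thm:RSW-selfdual} via fermionic observables: one assumes by contradiction that $\E_{\cR^\delta}[\sigma\sigma]\to 0$, passes to subsequential limits $F^\delta\to f$, $H_{F^\delta}\to h$, identifies the boundary values of the harmonic function $h$ (this is the technical heart, requiring the special ``straight cut'' construction of $\cR^\delta$ and careful control of $\Re[e^{i\phi}F^\delta]$ near $\partial\H^\circ_{\cS^\delta}$), and finally derives a contradiction from the sign of $\Re[e^{\pm i\phi_0}f]$ near the bottom side. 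No self-duality enters.
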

\begin{remark} (i) Under assumptions~\Unif\ and \Qflat, we consider Theorem~\ref{thm:RSW-selfdual} to be much less conceptual than Theorem~\ref{thm:FK-conv} and put the former after the latter because of this reason; {see also Section~\ref{subsub:RSW-Mahfouf} for a discussion of a very recent work by Mahfouf that vastly generalizes Theorem~\ref{thm:RSW-selfdual}.} Still, let us emphasize that Theorem~\ref{thm:RSW-selfdual} and Corollary~\ref{cor:circuits} actually \emph{precede} Theorem~\ref{thm:FK-conv} in our approach. The proof of Theorem~\ref{thm:FK-conv} is given in Section~\ref{sec:convergence} and relies upon the results from Section~\ref{sec:operators}; the proof of Theorem~\ref{thm:RSW-selfdual} is given in Section~\ref{sec:RSW} and can be read directly after Section~\ref{sec:definitions}.

\smallskip

\noindent (ii) Though the guideline idea (due to Smirnov, cf.~\cite{kemppainen-smirnov-I}) of using fermionic observables to prove Theorem~\ref{thm:RSW-selfdual} is exactly the same as in~\cite{ChSmi2}, its implementation is also totally different from~\cite{ChSmi2} due to the lack of comparison with harmonic functions. In particular, a special construction of `straight boundaries' of rectangles~$\cR^\delta$ on plays a crucial role in the proof. Of course, \emph{a posteriori}, Theorem~\ref{thm:RSW-selfdual} holds for all discretizations due to the monotonicity with respect to boundary conditions.
\end{remark}

We conclude this section by a standard corollary of Theorem~\ref{thm:RSW-selfdual}, which is exactly the prerequisite mentioned above that we need for the proof of Theorem~\ref{thm:FK-conv}. Given ${u}\in\C$ and~$d>0$, denote
\begin{align*}
\boxbox(u,d)\ :=\ ([\Re u-3d,\Re u+3d]&\times [\Im u-3d,\Im u+3d])\\
&\smallsetminus((\Re u-d,\Re u+d)\times(\Im u-d,\Im u+d))
\end{align*}
and let~$\mathbb{P}^{\operatorname{free}}_{\boxbox(u,d)}$ be the probability in the random cluster (or Fortuin-Kasteleyn) representation of the Ising model with free boundary conditions on both the outer and the inner boundaries of the annulus~$\boxbox_d(u)$.

\begin{corollary}\label{cor:circuits} There exists constants~$L_0>0$ and~$p_0>0$ depending only on constants in the assumptions~\Unif\ and \Qflat\ such that for all~$u\in\C$, $d\ge L_0\delta$, and all s-embeddings~$\cS^\delta$ satisfying \Unif, \Qflat\ and covering the disc~$B(u,5d)$, the following is fulfilled:
\[
\mathbb{P}^{\operatorname{free}}_{\boxbox(u,d)}\bigl[\,\mathrm{there~exists~an~{open}~circuit~in}~{\boxbox(u,d)}\,\bigr]\ \ge\ p_0.
\]
A similar uniform estimate holds for the dual model.
\end{corollary}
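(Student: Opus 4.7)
The plan is a standard four-rectangle FKG construction bootstrapped from Theorem \ref{thm:RSW-selfdual}. I would position four rectangles $\cR_N,\cR_E,\cR_S,\cR_W$ inside the annulus $\boxbox(u,d)$, each a strip of dimensions comparable to $6d\times 2d$ lying along one side of the outer square, arranged so that a wired long-direction crossing of each of the four forces the existence of a wired circuit in $\boxbox(u,d)$ surrounding the inner hole. Since long-direction crossings are increasing events for the FK-Ising measure, the FKG inequality reduces the corollary to a uniform lower bound on the probability that a single such rectangle is crossed the long way, inside the annulus equipped with free boundary conditions on both components of its boundary.

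To obtain such a lower bound I would first apply Theorem \ref{thm:RSW-selfdual} to each $\cR_\bullet$, viewed as a discretization of a Euclidean rectangle of bounded aspect ratio so that the constant $\cst$ is truly uniform. The Edwards--Sokal identity $\E[\sigma_a\sigma_b]=\mathbb{P}^{\mathrm{FK}}[a\leftrightarrow b]$ reinterprets the spin correlation of the theorem as the probability of a wired connection across $\cR_\bullet$ with the mixed (wired ends, free sides) boundary conditions prescribed there. The self-duality of the critical FK-Ising weights encoded in the angles $\theta_z$ of the s-embedding, combined with a standard square-root / RSW transfer to slightly thickened rectangles, would then upgrade this to a uniform lower bound on the wired-crossing probability under fully free boundary conditions on $\partial\cR_\bullet$. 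Finally, the spatial Markov property and monotonicity in boundary conditions show that embedding $\cR_\bullet$ inside $\boxbox(u,d)$ with free BC on both components of its boundary only increases the crossing probability, yielding the required $p_0$. The threshold $L_0$ is chosen in terms of the constants $r_0,R_0,\theta_0$ from \Unif\ so that, for $d\ge L_0\delta$, each $\cR_\bullet$ admits such a discretization and the $O(\delta)$ wiggle of its boundary does not disrupt the four-rectangle topology. The statement for the dual model follows by the parallel construction, interchanging the roles of $G^\bullet$ and $G^\circ$ throughout.

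The main obstacle I anticipate is precisely this boundary-condition transfer: Theorem \ref{thm:RSW-selfdual} delivers crossings only for a specific mixed boundary condition on one fixed rectangle shape, while the circuit event requires wired crossings under free BC everywhere inside the annulus. In the $\mathbb{Z}^2$ literature this gap is closed via Russo--Seymour--Welsh and the self-duality of the critical FK-Ising model; in the s-embedding generality the same line of reasoning should apply once one verifies that the classical square-root trick respects the weights $x(e)$ encoded through \eqref{eq:x=tan-theta} and the angles $\theta_z$ of the s-embedding. This check should ultimately be combinatorial, and its verification under \Unif\ and \Qflat\ is what makes the final constant $p_0$ depend only on $r_0,R_0,\theta_0$ rather than on the particular s-embedding at hand.
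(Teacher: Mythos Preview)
Your four-rectangle/FKG skeleton is correct and matches the paper, and your monotonicity step (free BC on $\cR_\bullet$ versus free BC on the larger annulus) is fine. The genuine gap is exactly where you flag it: the passage from the wired/free/wired/free boundary conditions delivered by Theorem~\ref{thm:RSW-selfdual} to free boundary conditions on the rectangle. Your proposed tools do not close it. There is no self-duality to invoke on a generic s-embedding: the dual of the FK model on~$G^\bullet$ with parameters~$x(e)$ lives on~$G^\circ$ with different parameters~$x^*(e)=(1-x(e))/(1+x(e))$, and nothing in \Unif\ or \Qflat\ forces these to coincide. The square-root trick transfers crossing estimates between geometries, not between boundary conditions, so it does not help either. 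As stated, you have $\mathbb{P}^{\mathrm{wfwf}}[\text{crossing}]\ge \cst$, and monotonicity runs the wrong way to conclude anything about $\mathbb{P}^{\mathrm{free}}[\text{crossing}]$.

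The paper fills this gap with the buffer-zone trick of Duminil-Copin--Garban--Pete. To lower-bound $\mathbb{P}^{\mathrm{free}}_{\cR}[\text{open vertical crossing of }\cR_{\mathrm{mid}}]$ for a long thin rectangle~$\cR$, one inserts two thin strips $\cR_-,\cR_+$ at the top and bottom of~$\cR$ and asks additionally for \emph{closed} horizontal crossings there. Exploring the extremal closed crossings and using the domain Markov property, those dual paths shield~$\cR_{\mathrm{mid}}$ from whatever sits on the horizontal boundary, so the conditional law between them is free regardless of whether the outer horizontal BC was free or wired. This yields
\[
\mathbb{P}^{\mathrm{free}}_{\cR}[\text{open crossing of }\cR_{\mathrm{mid}}]\ \ge\ \mathbb{P}^{\mathrm{wfwf}}_{\cR}[\text{open crossing}]\cdot \mathbb{P}^{\mathrm{wfwf}}_{\cR_-}[\text{closed crossing}]\cdot \mathbb{P}^{\mathrm{wfwf}}_{\cR_+}[\text{closed crossing}],
\]
and each factor on the right is exactly of the form controlled by Theorem~\ref{thm:RSW-selfdual} (the closed factors by its dual version). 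After this, your Step~3 and the four-rectangle FKG finish the proof as you describe.
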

\begin{remark} It is well known that Theorem~\ref{thm:FK-conv} and Theorem~\ref{thm:RSW-selfdual} together imply the convergence of interfaces in the FK-Ising model with Dobrushin boundary conditions to Schramm's SLE(16/3) curves. We refer the reader to~\cite{kemppainen-smirnov-I} for the analysis of interfaces in the bulk of~$\Omega$ and to~\cite{karrila-boundary} for a treatment near the prime ends~$a,b$.

 However, to extend this result to the convergence of full \emph{loop ensembles} as in~\cite{kemppainen-smirnov-III} (see also~\cite{benoist-hongler}, \cite{garban-wu} and a brief discussion in~\cite[Section~5]{Ch-ICM18}) one needs stronger crossing estimates similar to those obtained in~\cite{CDCH} by methods specific to the isoradial setup. {Such `strong RSW' estimates on s-embeddings are missing as for now though we hope that the strategy used in a recent paper~\cite{DCMT-fractal} together with techniques developed in Section~\ref{sec:RSW} below can be combined to get the desired result at least for all critical doubly periodic planar Ising models; see also Section~\ref{subsub:RSW-Mahfouf} below.}
 \end{remark}

\subsection{Further perspectives} \label{sub:further} Let us start this \emph{informal} discussion by a comment on the convergence results for correlation functions (e.g., energy densities~\cite{hongler-smirnov} or spins~\cite{CHI}) to the CFT scaling limits. Though the existing proofs of such results are heavily based upon the sub-/super-harmonicity of functions~$H_{F}$ (e.g., see \cite[Lemma~3.8]{Smi-I} and \cite[Proposition~3.6]{ChSmi2}), this comparison with \emph{harmonic} functions can be totally avoided and replaced by the results of this paper via an abstract comparison principle for functions~$H_F$ themselves (see Proposition~\ref{prop:HF-comparison}). An appropriate version of the convergence arguments from~\cite{hongler-smirnov,CHI} will appear in~\cite{CHI-mixed}. Note however that these proofs also rely upon an explicit construction of certain full-plane kernels which is not available yet even in the {doubly periodic} setup.

We now move to discussing possible \emph{generalizations} of the results presented in this paper. As already mentioned above, assumptions~\Unif\ and, especially, \Qflat\ do not look conceptual for Theorem~\ref{thm:RSW-selfdual}. A much more reasonable starting point for the uniform crossing estimates on s-embeddings would be
\begin{assumpintro}[\LipKd] There exists a positive constant~$\kappa<1$ such that
\begin{equation}
\label{eq:LipKd}
|\cQ^\delta(v')-\cQ^\delta(v)|\le\kappa\cdot |\cS^\delta(v')-\cS^\delta(v)|\quad \text{if}\quad |\cS^\delta(v')-\cS^\delta(v)|\ge\delta.
\end{equation}
\end{assumpintro}
{It is clear that one \emph{cannot} hope to prove uniform crossing estimates without such a Lipschitzness assumption. E.g., if we start with the standard s-embedding of the critical Ising model on the square grid and replaces the underlying Dirac spinor~$\cX$ by $\cX+t\overline{\cX}$ with $|t|\uparrow 1$ as discussed in Remark~\ref{rem:remintro-R21etc} (see also Fig.~\ref{fig:Z2-emb}), then the new s-embeddings of the \emph{same} model are heavily stretched in one direction as compared to the standard embedding, which is not compatible with RSW-type estimates. The same happens for \mbox{s-embeddings} of a non-critical model: for each~$\kappa<1$, the graph~$(\cS^\delta;\cQ^\delta)$ is not $\kappa$-Lipschitz (even on large scales) in regions of size greater than the characteristic length of the model; see Fig.~\ref{fig:Z2-emb} and Remark~\ref{rem:massive-iso} below.

Also,} note that the assumption~\LipKd\ makes sense without~$\Unif$ and {thus can be used as} a \emph{definition} of the `mesh sizes'~$\delta\to 0$ of {s-embeddings~$\cS^\delta$ in} convergence results; cf.~\cite{CLR1}. (It is also easy to see that the assumption~\Unif\ implies~\LipKd\ with the parameter~$\delta$ in \LipKd\ being a constant multiple of that in \Unif.) This suggests the following research direction:
\begin{enumerate}\renewcommand{\theenumi}{{\bf \Roman{enumi}}}
\item Prove Theorem~\ref{thm:RSW-selfdual} under the assumption~\LipKd\ and, if needed, {a `technical'} assumption on absence of \emph{drastic} degeneracies in s-embeddings similar to the assumption~\ExpFat\ used in~\cite{CLR1}; see Assumption~\ref{assump:ExpFat} below.
\end{enumerate}
{It is worth noting that a substantial progress towards an affirmative answer to this question has been obtained in a recent work~\cite[Section~5]{mahfouf-thesis} of Mahfouf; see Section~\ref{subsub:RSW-Mahfouf} below for more comments.}

\smallskip

The preceding discussion naturally leads to another question:
\begin{enumerate}\renewcommand{\theenumi}{{\bf \Roman{enumi}}}
\setcounter{enumi}{1}
\item Which \emph{spectral} property of the propagation operator~\eqref{eq:3-terms} (or, equivalently, of the related Kac--Ward or the Fisher--Kasteleyn matrices, see~\cite[Section~3]{CCK}) implies the existence of a Dirac spinor~$\cX$ such that the corresponding s-embedding $\cS_\cX$ satisfies the estimate~\eqref{eq:LipKd} for big enough~$\delta$?
\end{enumerate}
Similarly to the periodic setup, this property should be related to the existence of `zero mode' solutions of the equation~\eqref{eq:3-terms}, i.e., to spectral characteristics of the propagator near~$\lambda=0$; we also refer the interested reader to~\cite[Theorem~1.1 and Sections~5.2--5.3]{CHM} for a related study of the so-called {zig-zag layered} Ising model.

\smallskip

From our perspective, the most intriguing and presumably the most important research direction immediately related to the results presented in this article is
\begin{enumerate}\renewcommand{\theenumi}{{\bf \Roman{enumi}}}
\setcounter{enumi}{2}
\item to prove an analogue of Theorem~\ref{thm:FK-conv} and analogues of the results~\cite{hongler-smirnov,CHI,park,CIM-massive} on the convergence of energy density and spin correlations for \mbox{s-embeddings} satisfying the assumption~\LipKd\ (supplemented, if needed, by a mild assumption similar to~\ExpFat).
\end{enumerate}
Though we are still rather far from achieving this goal at the moment, let us mention two relevant observations that immediately follow from our results:

\smallskip

(a) If~$\cQ^\delta\not\to 0$ as~$\delta\to 0$, one should \emph{not} expect that fermionic observables~$F^\delta$ on s-embeddings~$\cS^\delta$ have \emph{holomorphic} scaling limits. In fact, each (subsequential) limit~$\cQ^\delta\circ(\cS^\delta)^{-1}\to\vartheta$ gives rise to a non-trivial conformal structure on the domain~$\Omega$ coming from the surface~$(z,\vartheta(z))_{z\in\Omega}$ viewed in the \emph{Minkowski} space~$\R^{2,1}$. Moreover, the mean curvature of this surface appears as the mass in the free fermion description of the model as~$\delta\to 0$. We refer the reader to a brief discussion in Section~\ref{sub:shol-limits} for more details on this geometric interpretation and leave this research direction for the future study; see also a related project~\cite{CLR1,CLR2} on the bipartite dimer model and its connection to { maximal} surfaces in the Minkowski space~$\R^{2,2}$.

(b) Contrary to fluctuations of height functions in the bipartite dimer model considered in~\cite{CLR1,CLR2}, the critical Ising model correlations require a \emph{non-trivial scaling} to be made before passing to the limit~$\delta\to 0$. On the square grid (and on isoradial {grids}), these scaling factors are $\delta^{-\frac{1}{2}}$ per Kadanoff--Ceva fermion, $\delta^{-1}$ per energy density, and $\delta^{-\frac{1}{8}}$ per spin (e.g., see~\cite{Ch-ECM16,CHI-mixed}). Aiming to generalize such convergence results to s-embeddings, one should replace these factors appropriately. For Kadanoff--Ceva fermions~$X_\varpi(c)$ the relevant local factor~$|\cX(c)|^{-1}$ is clearly visible from~\eqref{eq:Pr=Pr-semb} and one should expect a similar factor~$|\cX(c)|^{-2}$ for energy densities. However, finding appropriate scaling factors for \emph{spins} seems to be more involved.

\subsection{Organization of the paper} This paper does not formally rely upon the previous work on the convergence of correlation functions in the critical Z-invariant Ising model on isoradial {grids.}
Nevertheless, we keep the notation as close to~\cite{ChSmi2} as possible so that readers familiar with the isoradial context can benefit from comparison of our methods with those from~\cite{Smi-I,ChSmi2}.

We collect preliminaries on fermionic observables in Section~\ref{sec:definitions}. Note that Sections~\ref{sub:HF-def}--\ref{sub:shol-limits} contain {a} new (as compared, e.g., to~\cite{ChSmi2}) approach to the a priori regularity theory for s-holomorphic functions, which relies upon the ideas developed in~\cite{CLR1} in a more general context of the bipartite dimer model. In Section~\ref{sec:operators} we discuss discrete differential operators on s-embeddings. Though they generalize the usual Cauchy--Riemann and Laplacian operators on isoradial {grids,} such a generalization is not at all straightforward. We give a proof of Theorem~\ref{thm:FK-conv} in Section~\ref{sec:convergence}, the whole strategy of the proof is new even in the square grid case. Section~\ref{sec:RSW} is devoted to the proof of Theorem~\ref{thm:RSW-selfdual} and does not rely upon Sections~\ref{sec:operators} and~\ref{sec:convergence}, so the readers primarily interested in RSW-type estimates on s-embeddings can go to Section~\ref{sec:RSW} right after Section~\ref{sec:definitions}. {The appendix} contains the proof of a useful estimate `in {the} continuum' (Lemma~\ref{lem:cont-estimate}), upon which our proof of Theorem~\ref{thm:FK-conv} is based.

\medskip

\noindent {\bf Acknowledgements.} The author is grateful to Niklas Affolter, {Arseniy Akopyan,} Scott Armstrong, {Mikhail Basok,} Dmitry Belyaev, Ilia Binder, Olivier Biquard, Hugo Duminil-Copin, Christophe Garban, Cl\'ement Hongler, Konstantin Izyurov, Beno\^it Laslier, Alexander Logunov, R\'emy Mahfouf, Eugenia Malinnikova, Jean-Christophe Mourrat, S.~C. Park, Olga Paris-Romaskevich, {Sanjay Ramassamy,} Marianna Russkikh, Stanislav Smirnov and Yijun Wan for {helpful discussions and remarks. We also thank R\'emy Mahfouf and the referees for carefully reading the manuscript and providing a useful feedback.}

This research was partly supported by the ANR-18-CE40-0033 project DIMERS. Last but not least, the author would like to gratefully acknowledge the support provided by the ENS--MHI chair funded by the MHI, which made this work possible.

\section{Definitions and preliminaries}\label{sec:definitions}
\setcounter{equation}{0}

\newcommand\vcirc[1]{v^\circ_1,\ldots,v^\circ_{#1}}
\newcommand\svcirc[1]{\sigma_{v^\circ_1}\ldots \sigma_{v^\circ_{#1}}}
\newcommand\vbullet[1]{v^\bullet_1,\ldots,v^\bullet_{#1}}
\newcommand\muvbullet[1]{\mu_{v^\bullet_1}\ldots\mu_{v^\bullet_{#1}}}

\subsection{Notation and Kadanoff--Ceva fermions}\label{sub:notation}
Let~$G$ be a planar graph (we do allow multiple edges and vertices of degree two in~$G$ but do \emph{not} allow loops and vertices of degree one) with the combinatorics of the sphere or that of the plane, considered \emph{up to homeomorphisms} preserving the cyclic order of edges at each vertex.  In the former case we always assume that one of the faces of~$G$ is fixed under such homeomorphisms and call this face the \emph{outer} face of~$G$; because of this we also call this setup the \emph{disc} case.

Throughout the paper we use the following notation:
\begin{itemize}
\item $G=G^\bullet$ is the original graph, its vertices are typically denoted by~$v^\bullet\in G^\bullet$;


\item $G^\circ$ is the graph dual to~$G$, its vertices are typically denoted by~$v^\circ\in G^\circ$ (in the disc case, we include to~$G^\circ$ the vertex corresponding to the outer face);


\item $\Lambda(G)$ {is the graph whose vertex set is the union of vertices of $G^\circ$ and $G^\bullet$} with the natural incidence relation: the faces of~$\Lambda(G)$ have degree four and are in a bijective correspondence with edges of~$G$;


\item $\Dm(G)$ is the graph dual to~$\Lambda(G)$, we typically denote its vertices \mbox{$z\in\Dm(G)$} and often call~$z$ a \emph{quad}, this is what generalizes rhombi in the isoradial~setup; {sometimes we also identify $z\in\Dm(G)$ with the corresponding edge of $G$;}


\item $\Upsilon(G)$ is the medial graph of~$\Lambda(G)$, its vertices are in a bijective correspondence with edges~$(v^\bullet v^\circ)$ of~$\Lambda(G)$, we typically denote these vertices by~$c\in\Upsilon(G)$ and sometimes call them \emph{corners} (of faces) of~$G$.
\end{itemize}
We often consider \emph{double covers} of~$\Upsilon(G)$, see~\cite[Fig.~27]{Mercat-CMP}, \cite[Fig.~6]{ChSmi2} or \cite[Fig.~4]{CIM-massive}:
\begin{itemize}
\item $\Upsilon^\times(G)$ stands for the double cover of~$\Upsilon(G)$ that branches over \emph{each} face of~$\Upsilon(G)$, i.e., over each~$v^\bullet\in G^\bullet$, each $v^\circ\in G^\bullet$, and each~$z\in\Dm(G)$ (if $G$ is finite, this notion makes sense since~$\#(G^\bullet)+\#(G^\circ)+\#(\Dm(G))$ is even);


\item given a set~$\varpi=\{\vbullet{m},\vcirc{n}\}\subset \Lambda(G)$ with $n,m$ even, we denote by~$\Upsilon^\times_\varpi(G)$ the double cover of~$\Upsilon(G)$ that branches over all its faces \emph{except} those from~$\varpi$,
    and by~$\Upsilon_\varpi(G)$ the double cover that branches \emph{only} over~$\varpi$.


\item We say that a function defined on one of these double covers is a \emph{spinor} if its values at the two lifts of the same~$c\in\Upsilon(G)$ differ by the sign.
\end{itemize}

Recall that we consider the Ising model~\eqref{eq:intro-Zcirc} defined on faces of~$G$ (including the outer face in the disc case), i.e., on vertices of~$G^\circ$. The \emph{domain walls} representation (aka the low-temperature expansion) assigns to a spin configuration~$\sigma:G^\circ\to\{\pm 1\}$ a subset~$C$ of edges of~$G$ separating different spins; clearly this is a $2$-to-$1$ mapping of spin configurations onto the set~$\cE(G)$ of even subgraphs of~$G$.

Given~$\vcirc{n}\in G^\circ$ with~$n$ even, let a subgraph $\gamma^\circ=\gamma_{[\vcirc{n}]}\subset G^\circ$ {have} even degree at all vertices of~$G^\circ$ except~$\vcirc{n}$ and odd degree at these vertices (the reader can think about a collection of paths on~$G^\circ$ linking~$\vcirc{n}$ pairwise). Denote
\[
x_{[\vcirc{n}]}(e)\ :=\ (-1)^{e\cdot\gamma_{[\vcirc{n}]}}\,x(e),\quad e\in E(G),
\]
where~$e\cdot\gamma=0$ if~$e$ does not cross~$\gamma$ and~$e\cdot\gamma=1$ otherwise. It is easy to see that
\begin{equation}
\label{eq:Esigma}
\textstyle \mathbb E\big[\svcirc{n}\big]\ =\ {x_{[\vcirc{n}]}(\cE(G))}\big/{x(\cE(G))},
\end{equation}
where~$x(\cE(G)):=\sum_{{C}\in\cE(G)}x(C)$, $x(C):=\prod_{e\in C}x(e)$, and similarly for~$x_{[\vcirc{n}]}$.

Now let~$m$ be even,~$\vbullet{m}\in G^\bullet$ and a subgraph~$\gamma^\bullet=\gamma^{[\vbullet{m}]}\subset G^\bullet$ has even degree at all vertices of~$G^\bullet$ except~$\vbullet{m}$ and odd degree at these vertices. Following Kadanoff and Ceva~\cite{kadanoff-ceva} let us abbreviate by~$\muvbullet{m}$ the effect of the change of the signs of the interaction constants~$J_e\mapsto -J_e$ on edges~$e\in\gamma^\bullet$ (which is equivalent to the change of parameters~$x(e)\mapsto 1/x(e)$ for~$e\in\gamma^\bullet$). More accurately, if we consider the product
\[
\textstyle \muvbullet{m}\ :=\ \exp\big[-2\beta\sum_{e\in\gamma^{[\vbullet{m}]}}J_e\sigma_{v^\circ_-(e)}\sigma_{v^\circ_+(e)}\,\big]\,,
\]
then, still using domain walls representations of $\sigma$, it is not hard to see that
\begin{equation}
\label{eq:Emu}
\textstyle \mathbb E\big[\muvbullet{m}\big]\ =\ x(\cE^{[\vbullet{m}]}(G))\big/{x(\cE(G))},
\end{equation}
where~$\cE^{[\vbullet{m}]}$ denotes the set of all subgraphs of~$G$ that have even degrees at all vertices except~$\vbullet{m}$ and odd degrees at these vertices. In particular, this expectation does not depend on the choice of $\gamma^\bullet$, similarly to the fact that the expectation~\eqref{eq:Esigma} does not depend on the choice of~$\gamma^\circ$. Further, a straightforward generalization of~\eqref{eq:Esigma} and~\eqref{eq:Emu} reads as
\begin{equation}
\label{eq:Emusigma}
\textstyle \mathbb E\big[\muvbullet{m}\svcirc{n}\big]\ =\ x_{[\vcirc{n}]}(\cE^{[\vbullet{m}]}(G))\big/{x(\cE(G))},
\end{equation}
where~$\muvbullet{m}$ should be understood as above. However, this expression does depend on the choice of the paths~$\gamma^\circ\subset G^\circ$ and~$\gamma^\bullet\subset G^\bullet$, more precisely its \emph{sign} depends on the parity of the number of intersections of~$\gamma^\circ$ and~$\gamma^\bullet$.

\smallskip

In general, there is no way to make the choice of the sign in~\eqref{eq:Emusigma} canonical just staying on the Cartesian product $(G^\bullet)^{\times m}\!\times (G^\circ)^{\times n}$. However, one can pass to the natural double cover of this Cartesian product (which has the same branching structure as the spinor~$[\,\prod_{p=1}^m\prod_{q=1}^n(\cS(v^\bullet_p)-\cS(v^\circ_q))\,]^{1/2}$, where~$\cS:\Lambda(G)\to\C$ is an \emph{arbitrarily} chosen embedding). By doing this, one can view the expectations~\eqref{eq:Emusigma} as spinors on the double cover of~$(G^\bullet)^{\times m}\!\times (G^\circ)^{\times n}$ described above. We refer an interested reader to~\cite{CHI-mixed} for a more extensive discussion of this definition.

\smallskip

Though we do not use this fact below, let us mention that \emph{disorders} $\muvbullet{m}$ are dual objects to spin variables $\svcirc{n}$ under the Kramers--Wannier duality: indeed, the expression~\eqref{eq:Emu} is nothing but the \emph{high-temperature expansion} (e.g., see~\cite[Section~7.5]{duminil-parafermions}) of the spin correlation in the dual Ising model on~$G^\bullet$ with interaction parameters~$x^*(e^*):=(1-x(e))/(1+x(e))$. In particular, the roles of the graphs~$G^\bullet$ and~$G^\circ$ are equivalent and one can also see that
\[
x_{[\vcirc{n}]}(\cE^{[\vbullet{m}]}(G))\big/{x(\cE(G))}\ =\ x^*_{[\vbullet{m}]}(\cE^{[\vcirc{n}]}(G^\circ))\big/{x^*(\cE(G^\circ))}
\]
by extending the Kramers--Wannier duality to the mixed correlations~\eqref{eq:Emusigma}.

Let~$v_0^\bullet,v_0^\circ,v_1^\bullet,v_1^\circ\in\Lambda(G)$ be vertices of a quad~$z\in\Dm(G)$ listed counterclockwise as in { Fig.~\ref{fig:notation}}. Then, the propagation equation for the Kadanoff--Ceva fermionic variables~$\chi_c$ defined by~\eqref{eq:KC-chi-def} can be seen as a simple corollary of the identity
\[
\mu_{v_0^\bullet}\mu_{v_1^\bullet}\ =\ \exp[-2\beta J_e\sigma_{v_0^\circ}\sigma_{v_1^\circ}]\,=\,\tfrac{1}{2}\cdot(x(e)+(x(e))^{-1})-\sigma_{v_0^\circ}\sigma_{v_1^\circ}\cdot \tfrac{1}{2}(x(e)-(x(e))^{-1}).
\]
Under the parametrization~\eqref{eq:x=tan-theta} this identity can be written as
\[
\mu_{v_0^\bullet}\mu_{v_1^\bullet}\cdot\sin\theta_z\ =\ 1-\sigma_{v_0^\circ}\sigma_{v_1^\circ}\cdot \cos\theta_z,
\]
which gives~\eqref{eq:3-terms} after the formal multiplication by~$\mu_{v^\bullet_0}\sigma_{v^\circ_0}$. (To get a rigorous proof for the expectations~$\E[\muvbullet{m}\svcirc{n}]$ with~$v^\bullet_m=v^\bullet(c)=v^\bullet_0(z)$ and~$v^\circ_n=v^\circ(c)=v^\circ_0(z)$, one should consider the effect of adding/removing the edge~$(v_0^\bullet(z) v_1^\bullet(z))$ to the corresponding disorder paths~$\gamma^\bullet$.)

\subsection{Definition of s-embeddings}\label{sub:semb-definition}

The following definition was proposed in~\cite{Ch-ICM18}.

\begin{definition}\label{def:cS-def}
Given a weighted planar graph $(G,x)$ with the combinatorics of the plane and a complex-valued solution $\cX:\Upsilon^\times(G)\to\C$ of the equation~\eqref{eq:3-terms}, we call $\cS=\cS_\cX:\! \Lambda(G)\to\C$ an s-embedding of $(G,x)$ defined by $\cX$ if
\begin{equation}
\label{eq:cS-def}
\cS(v^\bullet(c))-\cS(v^\circ(c))=(\cX(c))^2
\end{equation}
for each $c\in\Upsilon^\times(G)$. Given~$z\in\Dm(G)$, we denote by~$\cS^\dm(z)\subset\C$ the quadrilateral with vertices~$\cS(v_0^\bullet(z))$, $\cS(v_0^\circ(z))$, $\cS(v_1^\bullet(z))$, $\cS(v_1^\circ(z))$.

We call an s-embedding $\cS$ proper if the quads~$\cS^\dm(z)$ do not overlap with each other, and
non-degenerate if none of the quads~$\cS^\dm(z)$ degenerates to a segment.
\end{definition}

\begin{remark}
The quadrilaterals~$\cS^\dm(z)$ cannot be self-intersecting; the simplest way to prove this fact is to use the consistency of the definition~\eqref{eq:cQ-def} of the associated function {$\cQ_\cX$: if opposite sides of~$\cS^\dm(z)$ intersected each other, the sum of their lengths would be strictly bigger than the sum of the lengths of two other sides. Note, however,} that we do \emph{not} require the convexity of~$\cS^\dm(z)$; {see Fig.~\ref{fig:Semb}.}
\end{remark}

It is also useful to extend the definition of~$\cS$ from~$\Lambda(G)$ to~$\Dm(G)$ by setting
\begin{equation}\label{eq:cS(z)-def}
\begin{array}{l}
\cS(v_p^\bullet(z))-\cS(z):=\cX(c_{p0})\cX(c_{p1})\cos\theta_z,\\[2pt]
\cS(v_q^\circ(z))-\cS(z):=-\cX(c_{0q})\cX(c_{1q})\sin\theta_z,
\end{array}
\end{equation}
where~$c_{p0}$ and~$c_{p1}$ (respectively,~$c_{0q}$ and~$c_{1q}$) are assumed to be neighbors on the double cover~$\Upsilon^\times(G)$.
The consistency of the definitions~\eqref{eq:cS-def},~\eqref{eq:cS(z)-def} around~$z\in\Dm(G)$ can be easily deduced from the propagation equation~\eqref{eq:3-terms}.

\begin{definition}\label{def:cQ-def}
Given~$\cS=\cS_\cX$, we construct a function \mbox{$\cQ=\cQ_\cX:\Lambda(G)\to\R$}, defined up to a global additive constant by requiring that
\begin{equation}
\label{eq:cQ-def}
\cQ(v^\bullet(c))-\cQ(v^\circ(c))\ :=\ |\cX(c)|^2\,=\,|\cS(v^\bullet(c))-\cS(v^\circ(c))|\,.
\end{equation}
\end{definition}

\begin{remark} In~\cite[Section~6]{Ch-ICM18}, the notation~$L$ was used instead of~$\cQ$. We change the notation due to the fact (e.g., see~\cite[Section~7]{KLRR}) that~$\cS$ can be viewed as a \mbox{t-embedding}~$\cT$ of the corresponding bipartite dimer model and~$\cQ$ coincides with the origami map~$\cO$ of~$\cT$ under this correspondence; see Section~\ref{sub:dimers} for more comments.
\end{remark}

{Again, the consistency of the definition~\eqref{eq:cQ-def} (i.e., the fact that the sum of increments of~$\cQ$ around~$z$ vanishes) easily follows from the equation~\eqref{eq:3-terms}.} Geometrically, this means that~$\cS^\dm(z)$ is a tangential quad: the (lines containing) sides of~$\cS^\dm(z)$ are tangent to a circle. It is easy to see from~\eqref{eq:cS-def},~\eqref{eq:cS(z)-def} that~$\cS(z)$ is the center of this circle. Also, a simple computation shows that
\begin{align}
r_z\ :=&\ \Im[\,\cX(c_{01}(z))\overline{\cX(c_{10}(z))}\,]\cos\theta_z\sin\theta_z =\Im[\,\cX(c_{11}(z))\overline{\cX(c_{00}(z))}\,]\cos\theta_z\sin\theta_z \notag \\
=&\ \Im[\,\cX(c_{01}(z))\overline{\cX(c_{00}(z))}\,]\cos\theta_z =\Im[\,\cX(c_{00}(z))\overline{\cX(c_{10}(z))}\,]\sin\theta_z
\label{eq:rz=}
\end{align}
is the radius of the circle inscribed into~$\cS^\dm(z)$, where we assume that~$c_{11}$\,---\,$c_{01}$\,---\,$c_{00}$\,---\,$c_{10}$ are chosen to be neighbors on the double cover~$\Upsilon^\times(G)$; see {Fig.~\ref{fig:notation}}.

\smallskip

A natural question is how to recover the Ising model weights~$\theta_z$ from the geometric characteristics of tangential quads~$\cS^\dm$. { The answer is given by the formula}
\begin{equation}
\label{eq:theta-from-S}
\tan\theta_z\ =\ \biggl(\frac{\sin\phi_{v_0^\bullet z}\sin\phi_{v_1^\bullet z}}{\sin\phi_{v_0^\circ z}\sin\phi_{v_1^\circ z}}\biggr)^{\!1/2},
\end{equation}
where~$\phi_{vz}$ denotes the half-angle of the quad~$\cS^\dm(z)$ at the vertex~$\cS(v)$; we do not rely upon {\eqref{eq:theta-from-S} below. Particular cases of the setup covered by this paper include:}
\begin{itemize}
\item {\bf Critical Z-invariant Ising model on isoradial {grids.}}  Let \mbox{$\Lambda\!=\!\Gamma^\bullet\!\cup\Gamma^\circ$} be a \emph{rhombic lattice} and denote~$\cX(c):=(v^\bullet(c)-v^\circ(c))^{1/2}$. It is easy to check that~$\cX$ satisfies the propagation equation~\eqref{eq:3-terms} for the critical \mbox{Z-invariant} weights on the isoradial {grid.} We refer an interested reader to~\cite{Mercat-CMP} for
    a discussion of special solutions of the propagation equation~\eqref{eq:3-terms} on discrete Riemann surfaces, and to a series of recent papers~\cite{BdTR-Ising-dimers,deTiliere-18,park-iso,CIM-massive} for a discussion of the near-critical model on isoradial grids.
\smallskip
\item {\bf Critical Ising model on circle patterns} introduced by Lis in~\cite{Lis} as a generalization of the isoradial context. In this setup the quads~$\cS^\dm(z)$ are \emph{kites}, which allows one to view~$\cS$ as a circle pattern. Using the Kac--Ward matrices techniques, Lis managed to prove that the Ising model on centers of circles with interaction constants given by~\eqref{eq:theta-from-S} is critical. We also refer the interested reader to~\cite[Section~3]{CCK} for a discussion of the link between Kac--Ward matrices and the propagation equation~\eqref{eq:3-terms}. In this setup, the function~$\cQ_\cX$ defined by~\eqref{eq:cQ-def} vanishes on~$G^\bullet$ and equals the (minus) radius of the corresponding circle on~$G^\circ$.
\smallskip
\item {\bf Critical Ising model on {doubly periodic} graphs.} For a {doubly periodic} planar weighted graph~$(G,x)$ carrying the Ising model, the first natural question is to find an algebraic condition on the weights~$x_e$ describing the critical point. This was done in~\cite{Li} and~\cite{cimasoni-duminil}, we refer the reader to the latter paper for more details. Clearly, this condition does \emph{not} fix an embedding of the graph into~$\C$. The question of finding special embeddings that would allow to prove the conformal invariance of a given critical {doubly periodic} Ising model was posed by Duminil-Copin and Smirnov~\cite[Question~8.5]{duminil-smirnov} shortly after the papers~\cite{ChSmi1,ChSmi2} appeared but remained open until recently. The answer is provided by the following lemma (and Theorem~\ref{thm:FK-conv}).
\end{itemize}
\begin{lemma}\label{lem:double-periodic} Let~$(G,x)$ be a {doubly periodic} graph carrying a critical Ising model. There exists a unique (up to a multiplicative constant and the conjugation) {doubly periodic} solution~$\cX$ of the equation~\eqref{eq:3-terms} such that the corresponding function~$\cQ_\cX$ is also {doubly periodic}. Moreover, the s-embedding~$\cS_\cX$ constructed from~$\cX$ is proper.
\end{lemma}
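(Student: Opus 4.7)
The plan is to convert the statement into the language of t-embeddings of the associated bipartite dimer model and then invoke \cite[Lemma~13]{KLRR} essentially as a black box. As already noted in Section~\ref{sub:semb-definition} and recalled in Figure~\ref{fig:Sdef}B, an s-embedding $\cS_\cX$ of $(G,x)$ is the same datum as a t-embedding of the bipartite graph $\Upsilon^\bullet(G)\cup \Upsilon^\circ(G)$ carrying the corresponding dimer model: the Dirac spinor $\cX$ solving \eqref{eq:3-terms} is equivalent to a Kasteleyn-compatible Coulomb gauge, the relation \eqref{eq:cS-def} reproduces the t-embedding of the quotient bipartite graph, and the function $\cQ_\cX$ from Definition~\ref{def:cQ-def} coincides with the origami map~$\cO$ of this t-embedding. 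The first step of the proof is to write down this dictionary carefully, paying attention to the spinor conventions and to the double cover $\Upsilon^\times(G)$ so that everything descends to the fundamental domain when $(G,x)$ is double-periodic.

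Second, I would invoke the criticality characterization of \cite{Li, cimasoni-duminil}: critical Ising weights on a double-periodic graph are exactly those for which the spectral curve of the associated bipartite dimer model has a real node, and moreover the node can be placed at the true-periodic Bloch point~$(z,w)=(1,1)$. This is precisely the hypothesis under which \cite[Lemma~13]{KLRR} constructs a double-periodic t-embedding whose origami map is also double-periodic and whose faces do not overlap. Translating this output via the dictionary of the previous paragraph immediately yields a double-periodic complex-valued solution $\cX$ of \eqref{eq:3-terms} such that the associated $\cQ_\cX$ is double-periodic and the s-embedding $\cS_\cX$ is proper.

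For the uniqueness statement I would argue directly on the fundamental torus. Double-periodic complex solutions of \eqref{eq:3-terms} form the $\mathbb{C}$-kernel of the propagation operator restricted to the $(1,1)$-Bloch sector, which at a regular node is $2$-dimensional over $\mathbb{C}$; this kernel is stable under complex conjugation (the coefficients in \eqref{eq:3-terms} are real), and the requirement that $|\cX(c)|^2$ integrate to a double-periodic function around every cycle on $\Lambda(G)/\text{periods}$ is a closed real-quadratic constraint. One then checks that the set of~$\cX$'s satisfying this constraint consists of a single $\mathbb{C}^*$-orbit together with its image under conjugation, which is exactly the uniqueness assertion in \cite[Lemma~13]{KLRR}.

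I expect the main obstacle to be the bookkeeping of the s/t-embedding dictionary under the presence of the double cover $\Upsilon^\times(G)$: one must verify that the periodicity of a $\mathbb{C}$-valued spinor on this cover, rather than on $\Upsilon(G)$ itself, matches the periodicity of the gauge used in \cite{KLRR}, and that the node singled out by the Ising criticality condition is indeed the one to which \cite[Lemma~13]{KLRR} applies (in particular that it is a simple, non-degenerate real node, so that the properness conclusion is available). Once these compatibilities are verified, both existence and uniqueness, as well as properness of $\cS_\cX$, follow without further computation.
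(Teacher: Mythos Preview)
Your approach is essentially the same as the paper's: both reduce the statement to \cite[Lemma~13]{KLRR} via the s-/t-embedding dictionary, using the criticality characterization of \cite{Li,cimasoni-duminil} to ensure the hypotheses are met. The paper explicitly says the lemma ``appeared in~\cite[Lemma~13]{KLRR} in the bipartite dimer model context'' and then, for completeness, sketches its proof in Ising terminology rather than invoking it as a black box.

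The one substantive difference is in the uniqueness step. Your proposed argument (a closed real-quadratic constraint on a $2$-dimensional kernel cutting out a single $\mathbb{C}^*$-orbit and its conjugate) is correct in spirit but somewhat vague. The paper instead writes $\cX=\cX^0+k\overline{\cX^0}$ for a fixed basis solution~$\cX^0$, reduces the double-periodicity of~$\cQ_\cX$ to two explicit quadratic equations in~$k$ describing circles symmetric under inversion in~$\mathbb{T}$, and proves they intersect in~$\mathbb{D}$ by showing the intersection condition~\eqref{eq:x2-exist-per} is equivalent to~\eqref{eq:x1-exist-per}, the latter being forced by the non-degeneracy of~$\cS_\cX$ for all~$|k|<1$; the tangent case is ruled out via the non-degeneracy of the T-graph~$\cS_{\cX^0}+\overline\alpha\cQ_{\cX^0}$. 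This explicit route also makes clear why properness holds for the distinguished~$k$ (it holds for all~$|k|\ne 1$ by the Kenyon--Sheffield argument through~\cite[Lemma~14]{KLRR}), a point your sketch leaves implicit.
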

\begin{proof}
This lemma appeared in~\cite[Lemma~13]{KLRR} in the bipartite dimer model context. For the completeness of the exposition we sketch its proof below staying in the Ising model terminology.

The fact that the propagation equation~\eqref{eq:3-terms} has exactly two independent real-valued periodic solutions~$\cX_1,\cX_2$ at criticality follows from~\cite{Li,cimasoni-duminil} and~\cite[Section~3]{CCK}. Let~$\cX^0:=\cX_1+i\cX_2$ and note that each linear combination~$\cX=\cX^0+k\overline{\cX^0}$, $|k|\ne 1$, of these solutions gives rise to a \emph{non-degenerate} proper \mbox{s-embedding}~$\cS_\cX$; this can be deduced along the lines of~\cite{kenyon-sheffield}, we refer the reader to~\cite[Lemma~14]{KLRR} for details. Loosely speaking, the key idea is to consider a homotopy~$(\cS+t\cQ)_{t\in [0,1]}$ and to view~$\cS+\cQ$ as a T-graph (see Section~\ref{sub:dimers} below) to which~\cite[Theorem~4.6]{kenyon-sheffield} applies.
Still, one needs to prove that there exists a {unique $k_0\in\mathbb{D}$} in the unit disc $\mathbb{D}=\{k\in\C:|k|<1\}$ leading to a \emph{{ doubly periodic}} function~$\cQ_\cX$.

Let~$A_\mathcal{S},B_\mathcal{S}\in\C$ and~$A_\mathcal{Q},B_\mathcal{Q}\in\R$ be the increments of the functions~$\cS_{\cX^0}$ and~$\cQ_{\cX^0}$, respectively, along the horizontal and the vertical periods of the torus. The fact that the \mbox{s-embedding} $\cS_\cX$, $\cX=\cX^0+k\overline{\cX^0}$, is non-degenerate implies that
\[
(nA_\mathcal{S}+mB_\mathcal{S})+2(nA_\mathcal{Q}+mB_\mathcal{Q})\cdot k+(n\overline{A}_\mathcal{S}+m\overline{B}_\mathcal{S})\cdot k^2\ \ne\ 0\ \ \text{for all}\ \ k\in\mathbb D
\]
and for all~$(n,m)\in\mathbb Z^2\smallsetminus\{(0,0)\}$; otherwise $\cS_\cX$ would have a zero increment along the corresponding period~$(n,m)$ of the torus. Therefore, $|nA_\mathcal{Q}+mB_\mathcal{Q}|\le|nA_\mathcal{S}+mB_\mathcal{S}|$ for all $n,m\in\mathbb Z$. By continuity, this implies that
\begin{equation}
\label{eq:x1-exist-per}
|A_\mathcal{Q}+tB_\mathcal{Q}|\ \le\ |A_\mathcal{S}+tB_\mathcal{S}|\ \ \text{for all}\ \ t\in\R.
\end{equation}

Further, the {periodicity} of~$\cQ_\cX$ is equivalent to the conditions
\[
(1\!+\!|k|^2)\cdot A_\mathcal{Q}\,+\,2\Re[kA_\mathcal{S}]\ =\ 0\ =\ (1\!+\!|k|^2)\cdot B_\mathcal{Q}\,+\,2\Re[kB_\mathcal{S}].
\]
Each of these two equations describes a circle (or a line if, say, $A_\mathcal{Q}=0$) in the complex plane, symmetric with respect to the unit circle~$\mathbb T=\partial\mathbb D$. Thus, we need to prove that these two circles intersect each other. A straightforward computation shows that this intersection condition holds if
\begin{equation}
\label{eq:x2-exist-per}
|A_\mathcal{Q}B_\mathcal{S}-B_\mathcal{Q}A_\mathcal{S}|^2\ \le\ (\Im A_\mathcal{S}\overline{B}_\mathcal{S})^2.
\end{equation}
{In fact, it is not hard to check that} \eqref{eq:x2-exist-per} is \emph{equivalent} to~\eqref{eq:x1-exist-per} {(which is not very surprising in view of Remark~\ref{rem:remintro-R21etc}).} It remains to rule out the case when~\eqref{eq:x2-exist-per} becomes an equality and the two circles touch each other at \mbox{$\alpha\in\mathbb T$}. In this case the function~$\cQ_{\cX^0}+\Re[\alpha\cS_{\cX^0}]=\Re[\alpha(\cS_{\cX^0}+\overline{\alpha}\cQ_{\cX^0})]$ should be {doubly periodic}, a contradiction with the non-degeneracy of the T-graph~$\cS_{\cX^0}+\overline{\alpha}\cQ_{\cX^0}$.
\end{proof}

\subsection{Link with the bipartite dimer model, s-holomorphicity and S-graphs}\label{sub:dimers}
It is well known (e.g., see~\cite{Dubedat-bos},~\cite[Section~2.2.3]{deTiliere-18} and references therein) that to each planar graph~$(G,x)$ carrying the Ising model one can {naturally} associate a dimer model on a bipartite planar graph~$\Upsilon^\circ(G)\cup\Upsilon^\bullet(G)$, with dimer weights~$1,\cos\theta_z,\sin\theta_z$ as shown on {Fig.~\ref{fig:notation}}. (Note that only a very {particular} family of dimer models {appears} in this way.) In what follows we assume that we work in the bulk of~$G$; in other words, we do not discuss nuances of this correspondence near the boundary of~$G$ if it has the combinatorics of the disc.

In the bipartite model context, a notion of \emph{Coulomb gauges} was recently proposed in~\cite{KLRR}; exactly the same notion appears in~\cite{CLR1} under the name~\emph{t-embeddings}. Under the combinatorial correspondence mentioned above, s-embeddings can be viewed as a particular case of t-embeddings: see~\cite[Section~7]{KLRR} and~\cite[{Section~8.2}]{CLR1} for more details. We now recall basic facts about this correspondence in what concerns s/t-embeddings of the relevant graphs, the notation used below matches that of~\cite{CLR1}.

Let~$\cS$ be a proper non-degenerate embedding of the graph~$(G,x)$. Note that \mbox{$\Lambda(G)\cup\Dm(G)=(\Upsilon^\circ(G)\cup\Upsilon^\bullet(G))^*$}. The following holds:
\begin{itemize}
\item $\cS:\Lambda(G)\cup\Dm(G)\to\C$ is a t-embedding of the graph~$(\Upsilon^\circ(G)\cup\Upsilon^\bullet(G))^*$;

\smallskip

\item the function~$\eta_{c^\bullet}=\eta_{c^\circ}:= \overline{\varsigma}\eta_c$ (where~$\eta_c$ is given by~\eqref{eq:def-eta}) is an \emph{origami square root function} (see~\cite[Definition~2.4]{CLR1}) of the t-embedding~$\cT=\cS$;

\smallskip

\item $\cQ:\Lambda(G)\to\R$ is the restriction of the \emph{origami map}~\mbox{$\cO:\Lambda(G)\cup\Dm(G)\to\C$} associated to the t-embedding~$\cT=\cS$. Let us also set $\cQ(z):=\cO(z)$ for~\mbox{$z\in\Dm(G)$}; note that~$\Im\cQ(z)>0$.

\item {If one extends the mapping~$\cQ\circ\cS^{-1}$ from~$\Lambda(G)\cup\Dm(G)$ to the complex plane in a piecewise affine way, then $\cQ\circ\cS^{-1}$ is an isometry on each of the four triangles forming a quad $\cS^\dm(z)$. (This isometry preserves the orientation of white triangles and reverses the orientation of black ones.) In particular,
    \[
    \qquad |\cQ(w_1)-\cQ(w_2)|\le|\cS(w_1)-\cS(w_2)|\ \ \text{for~all}\ w_1,w_2
    \]
    since thus obtained (complex-valued) extension~$\cQ\circ\cS^{-1}$ is $1$-Lipschitz.}

\end{itemize}

In what follows, one should set~$\varsigma:=e^{i\frac{\pi}{4}}$ in~\eqref{eq:def-eta} to keep the notation consistent, e.g., with~\cite[Definition~3.1]{ChSmi2}, where the term \emph{s-holomorphicity} was coined. Choosing another value~$\varsigma\in\mathbb{T}$ leads to trivial modifications.
\begin{definition}\label{def:s-hol}
A complex-valued function~$F$ defined on (a subset of)~$\Dm(G)$ is called s-holomorphic if
\begin{equation}
\label{eq:s-hol}
\Pr{F(z)}{\eta_c\R}\ =\ \Pr{F(z')}{\eta_c\R}
\end{equation}
for each pair of quads $z,z'\in\Dm(G)$ adjacent to the same edge~$(v^\circ(c)v^\bullet(c))$ of~$\cS$.
\end{definition}

Under the correspondence described above, s-holomorphicity is a particular case of \emph{t-holomorphicity}; the latter notion was introduced and studied in~\cite{CLR1}. Below we use the notation introduced in {Fig.~\ref{fig:notation}; see also Fig.~\ref{fig:Sdef} below.}
\mbox{T-holomorphic} functions are defined on either white \mbox{$F_\frw^\circ:\Upsilon^\circ(G)\to\C$} or black $F_\frb^\bullet:\Upsilon^\bullet(G)\to\C$ faces of a t-embedding~$\cT$ and satisfy conditions similar to~\eqref{eq:s-hol}, with~$\eta_c\R$ replaced by~$\eta_{c^\bullet}\R$ or~$\eta_{c^\circ}\R$, respectively; see~\cite[Section~3]{CLR1} for precise definitions. Functions~$F_\frw$ are called t-white-holomorphic whilst~$F_\frb$ are called \mbox{t-black-holomorphic}. {Note that~\cite{CLR1} also uses the notation~$F_\frw^\bullet$ and~$F_\frb^\circ$ for the corresponding projections of the `true' complex values~$F_\frw^\circ$ and~$F_\frb^\bullet$ of these functions onto the lines~$\eta_{c^\bullet}\R$ and~$\eta_{c^\circ}\R$, respectively.} It is easy to see that
\begin{itemize}
\item given an s-holomorphic functions~$F$, the function~$F_\frw$ (resp., $F_\frb$) defined as
\[ F_\frw^\circ(c^\circ_{p,1-p}(z))\,:=\,\overline{\varsigma}F(z),\quad F_\frb^\bullet(c^\bullet_{p,p}(z))\,:=\,\overline{\varsigma}F(z),\quad p=0,1,
\]
is t-white- (resp., t-black-)holomorphic; 

\smallskip

\item vice versa, each t-white-(resp., t-black-)holomorphic function $F_\frw$ (resp.,~$F_\frb$) can be obtained from an s-holomorphic function~$F=\varsigma F_\frw^\circ=\varsigma F_\frb^\bullet$.
\end{itemize}

In the isoradial context, it is well known (e.g., see~\cite[Lemma~3.4]{ChSmi2}) that \emph{real-valued} solutions to the propagation equation~\eqref{eq:3-terms} can be viewed as s-holomorphic functions and vice versa, the correspondence is given by~$X(c):=\Re[\overline{\eta}_cF(z)]$ for corners~$c$ adjacent to a quad~$z$. (Note that~$X(c)$ is defined on the double cover~$\Upsilon^\times(G)$ and not just on~$\Upsilon(G)$ since~$\eta_c$ is a spinor on~$\Upsilon^\times(G)$.)
The following proposition generalizes this link to the setup of s-embeddings; see also~\cite[{Section~8.2}]{CLR1}.

\begin{proposition}\label{prop:shol=3term} Let~$\cS=\cS_\cX$ be a proper s-embedding and $F$ be an s-holomorphic function defined on (a subset of)~$\Dm(G)$. Then, the spinor~$X$ defined on corners \mbox{$c\in\Upsilon^\times(G)$} adjacent to~$z\in\Dm(G)$ by the formula
\begin{align}
X(c)\ &=\ |\cS(v^\bullet(c))-\cS(v^\circ(c))|^{\frac{1}{2}}\cdot\Re[\overline{\eta}_c F(z)] \notag\\
 &=\ \Re[\overline{\varsigma}\cX(c)\cdot F(z)]\ =\ \overline{\varsigma}\cX(c)\cdot\Pr{F(z)}{\eta_c\R}
\label{eq:X-from-F}
\end{align}
satisfies the propagation equation~\eqref{eq:3-terms}.

Vice versa, if~$X:\Upsilon^\times(G)\to\R$ satisfies the propagation equation~\eqref{eq:3-terms}, then there exists {an} s-holomorphic function~$F$ such that~\eqref{eq:X-from-F} is fulfilled.
\end{proposition}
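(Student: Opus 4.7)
The proof is entirely \emph{local}: everything reduces to a check around each individual quad~$z\in\Dm(G)$, with the two inter-quad consistencies (being a well-defined spinor on~$\Upsilon^\times(G)$ versus being s-holomorphic on~$\Dm(G)$) turning out to be the same identity after the algebraic rewriting~\eqref{eq:X-from-F}. So the very first step I would take is to verify the equivalence of the three expressions for~$X(c)$ in~\eqref{eq:X-from-F}; this is a direct algebraic computation using $(\cX(c))^2=\cS(v^\bullet(c))-\cS(v^\circ(c))$ together with~\eqref{eq:def-eta}, which yields the key identity $\eta_c^2=\varsigma^2\overline{\cX(c)}/\cX(c)$.

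For the forward direction, I would fix a quad~$z$ and observe that since $\cX$ is by definition a complex-valued solution of~\eqref{eq:3-terms} and the coefficients $\cos\theta_z,\sin\theta_z$ are real, the function $c\mapsto\overline{\varsigma}\cX(c)F(z)$ is itself a complex solution of~\eqref{eq:3-terms} around~$z$ for every fixed $F(z)\in\C$, whence its real part $X(c):=\Re[\overline{\varsigma}\cX(c)F(z)]$ is a real solution. It then remains to check that $X(c)$ does not depend on which quad adjacent to~$c$ is used to evaluate it: via the third expression in~\eqref{eq:X-from-F} this independence reduces word-for-word to~\eqref{eq:s-hol}, i.e., to s-holomorphicity of~$F$.

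For the converse, I would construct $F(z)$ quad-by-quad and then check consistency. Around a fixed quad $z$, the real-valued solutions of~\eqref{eq:3-terms} on the four corners $c_{pq}$ form a $2$-dimensional real vector space (any two of the four values determine the other two), and the forward direction gives a natural $\R$-linear map
\[
\Phi_z:\C\longrightarrow\bigl\{\text{real solutions of~\eqref{eq:3-terms} on the corners of }z\bigr\},\qquad F\longmapsto\bigl(c\mapsto\Re[\overline{\varsigma}\cX(c)F]\bigr),
\]
between two $2$-dimensional real vector spaces. To invert~$\Phi_z$, I would show $\ker\Phi_z=\{0\}$: if $\Re[\overline{\varsigma}\cX(c)F]=0$ at all four corners with some $F\ne 0$, then the four numbers $\cX(c)F$ all lie on the single real line $\varsigma i\R$, forcing the four edges $(\cX(c))^2=\cS(v^\bullet(c))-\cS(v^\circ(c))$ of the tangential quad $\cS^\dm(z)$ to be positive real multiples of a single direction~$-iF^{-2}$, and hence collapsing~$\cS^\dm(z)$ to a segment in contradiction with $\cS=\cS_\cX$ being a proper s-embedding. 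By the dimension count, $\Phi_z$ is an isomorphism and a unique $F(z)\in\C$ realizing the given real solution $(X(c))_c$ of~\eqref{eq:3-terms} on~$z$ exists. Globally, s-holomorphicity of the resulting~$F$ is then immediate: for a corner~$c$ shared by quads $z,z'$, both $F(z)$ and $F(z')$ satisfy $\Re[\overline{\varsigma}\cX(c)F]=X(c)$, which via~\eqref{eq:X-from-F} reads precisely as $\Pr{F(z)}{\eta_c\R}=\Pr{F(z')}{\eta_c\R}$.

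The main obstacle is the injectivity of~$\Phi_z$, where properness/non-degeneracy of~$\cS$ is genuinely used; beyond that, the proposition reduces to routine real-linear algebra once the three forms in~\eqref{eq:X-from-F} are identified. It is worth noting that the forward and converse directions are essentially dual: the same algebraic identity that turns~\eqref{eq:s-hol} into inter-quad consistency for the spinor~$X$ is what allows the locally-defined values~$F(z)$ to glue into a globally s-holomorphic function.
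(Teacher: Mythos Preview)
Your proof is correct and complete, but it takes a different route from the paper's. The paper routes both directions through the t-holomorphic framework of~\cite{CLR1}: it identifies~$F$ with a t-black-holomorphic function~$F_\frb^\bullet=\overline{\varsigma}F$, writes out the defining identity of t-holomorphicity at a triangle of~$\cT=\cS$, and then uses the explicit formulas~\eqref{eq:cS-def},~\eqref{eq:cS(z)-def} for the edge vectors to rewrite that identity as the propagation equation for~$X(c)=\cX(c)F_\frb^\circ(c^\circ)$. The converse is obtained by running the same computation backwards. Your argument, by contrast, is pure real-linear algebra on a single quad: you exploit directly that~$\cX$ itself solves~\eqref{eq:3-terms}, so multiplying by a constant and taking real parts manufactures real solutions, and you invert this by a dimension count plus an injectivity check. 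Your approach is more self-contained and makes transparent exactly where the geometry of~$\cS$ enters (only in the injectivity step), while the paper's approach emphasizes the link with the bipartite dimer model and the more general t-embedding machinery that is used elsewhere in the paper.

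One small terminological slip: in your injectivity argument you conclude that the quad~$\cS^\dm(z)$ collapses to a segment and say this contradicts~$\cS$ being \emph{proper}. In the paper's Definition~\ref{def:cS-def}, properness only means the quads do not overlap; what you actually need is \emph{non-degeneracy} (no quad collapses to a segment). The distinction is harmless here since the ambient discussion assumes both, but it is worth stating the correct hypothesis.
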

\begin{proof}
Following the preceding discussion, let us interpret~$F$ as, say, a t-black holomorphic function~${F_\frb^\bullet}=\overline{\varsigma}F$. By the definition of t-holomorphicity, we have
\[
\overline{\varsigma}\Pr{F}{\eta_c\R}\ =\ \Pr{F_\frb^\bullet}{\eta_{c^\circ}\R}\ {=:}\ F_\frb^\circ(c^\circ)
\]
and, using the notation of {Fig.~\ref{fig:notation}},
\[
F_\frb^\circ(c_{00}^\circ)(\cS(v_0^\circ)-\cS(v_0^\bullet))
+F_\frb^\circ(c_{10}^\circ)(\cS(z)-\cS(v_0^\circ))
+F_\frb^\circ(c_{01}^\circ)(\cS(v_0^\bullet)-\cS(z))=0.
\]
Due to the formulas~\eqref{eq:cS-def},~\eqref{eq:cS(z)-def}, this identity is equivalent to
\[
-F_\frb^\circ(c_{00}^\circ)\cX(c_{00})
+F_\frb^\circ(c_{10}^\circ)\cX(c_{10})\sin\theta_z
+F_\frb^\circ(c_{01}^\circ)\cX(c_{01})\cos\theta_z=0,
\]
i.e., to the propagation equation~\eqref{eq:3-terms} for the quantities~$X(c)$ defined by~\eqref{eq:X-from-F}.

Vice versa, let~$X$ satisfies~\eqref{eq:3-terms}. Applying the same arguments in the reverse order one sees that ratios~$X(c)/\cX(c)$ can be viewed as the values~$\cX_\frb^\circ=\Pr{\cX^\bullet_{\frb}}{\eta_{c^\circ}\R}$ of a t-black-holomorphic function and we can set~$F:=\varsigma F^\bullet_{\frb}$ using the correspondence between the notions of t-holomorphicity and s-holomorphicity discussed above.
\end{proof}
\begin{corollary}
\label{cor:F-from-X} Let~$F$ and~$X$ be as in Proposition~\ref{prop:shol=3term}. Then,
\[
F(z)\ =\ -i\varsigma\cdot\frac{\overline{\cX(c_{01}(z))}\,X(c_{10}(z))-\overline{\cX(c_{10}(z))}\,X(c_{01}(z))} {\Im[\,\overline{\cX(c_{01}(z))}\,\cX(c_{10}(z))\,]}
\]
and similarly for other pairs of~$c_{pq}(z)\in\Upsilon^\times(G)$.
\end{corollary}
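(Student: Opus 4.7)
The plan is to invert the $\mathbb{R}$-linear relation \eqref{eq:X-from-F} established in Proposition~\ref{prop:shol=3term}, viewed as a system of two equations in the unknown $F(z)$. Writing the middle identity of \eqref{eq:X-from-F} in the form
\[
2X(c)\ =\ \overline{\varsigma}\,\cX(c)\,F(z)\,+\,\varsigma\,\overline{\cX(c)}\,\overline{F(z)}
\]
for the two corners $c=c_{01}(z)$ and $c=c_{10}(z)$ gives a pair of $\C$-linear equations in the unknowns $F(z)$ and $\overline{F(z)}$. It then suffices to eliminate $\overline{F(z)}$ by a Cramer-type computation and solve for $F(z)$.

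Concretely, I would multiply the $c_{01}$-equation by $\overline{\cX(c_{10}(z))}$, the $c_{10}$-equation by $\overline{\cX(c_{01}(z))}$, and subtract. The coefficients of $\overline{F(z)}$ on the two sides are both equal to $\varsigma\,\overline{\cX(c_{01}(z))}\,\overline{\cX(c_{10}(z))}$ and thus cancel, leaving
\[
2\bigl(\overline{\cX(c_{10}(z))}\,X(c_{01}(z)) - \overline{\cX(c_{01}(z))}\,X(c_{10}(z))\bigr)\ =\ \overline{\varsigma}\bigl(\overline{\cX(c_{10}(z))}\,\cX(c_{01}(z)) - \overline{\cX(c_{01}(z))}\,\cX(c_{10}(z))\bigr)F(z).
\]
The bracket on the right-hand side equals $-2i\,\Im[\,\overline{\cX(c_{01}(z))}\,\cX(c_{10}(z))\,]$. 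Solving for $F(z)$, using $1/\overline{\varsigma}=\varsigma$ (since $|\varsigma|=1$) and $1/(-i)=i$, and then reversing the sign of both numerator and denominator in the resulting fraction yields precisely the formula claimed. The analogous expressions for the other pairs of corners $c_{pq}(z)$ follow by exactly the same elimination with indices relabeled; alternatively, one may use the propagation equation \eqref{eq:3-terms} to re-express $X$ at any corner in terms of its values at the other three.

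The only point requiring justification is non-vanishing of the denominator, i.e.\ that $\Im[\,\overline{\cX(c_{01}(z))}\,\cX(c_{10}(z))\,]\neq 0$. This follows from the standing assumption that $\cS$ is a proper non-degenerate s-embedding: the tangential quadrilateral $\cS^\dm(z)$ is non-degenerate, so its inscribed radius $r_z>0$, and identity \eqref{eq:rz=} (combined with $\cos\theta_z,\sin\theta_z>0$, cf.\ \eqref{eq:x=tan-theta}) yields
\[
\bigl|\,\Im[\,\overline{\cX(c_{01}(z))}\,\cX(c_{10}(z))\,]\,\bigr|\ =\ \frac{r_z}{\cos\theta_z\sin\theta_z}\ >\ 0.
\]
No substantive obstacle is anticipated: the corollary is a purely algebraic inversion of \eqref{eq:X-from-F}, amounting to the observation that the $\R$-linear map $F\mapsto(\Re[\overline{\varsigma}\,\cX(c_{01})F],\Re[\overline{\varsigma}\,\cX(c_{10})F])$ from $\C$ to $\R^2$ is an isomorphism whenever $\Im[\,\overline{\cX(c_{01})}\,\cX(c_{10})\,]\neq 0$.
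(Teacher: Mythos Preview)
Your proof is correct and is essentially the same elementary inversion as the paper's. The paper states it geometrically---given two projections $u_1,u_2$ of a complex number $w$ onto lines through the origin, one recovers $w$ via an explicit formula---whereas you carry out the equivalent Cramer-type elimination directly from $2X(c)=\overline{\varsigma}\,\cX(c)F+\varsigma\,\overline{\cX(c)}\,\overline{F}$; the content is identical. (One cosmetic remark: your phrase ``reversing the sign of both numerator and denominator'' is slightly off---the denominator $\Im[\,\overline{\cX(c_{01})}\,\cX(c_{10})\,]$ stays put; you are simply negating the numerator and absorbing the sign into the prefactor $i\varsigma\mapsto -i\varsigma$.)
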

\begin{proof} This is a simple computation: if~$u_1$, $u_2$ are projections of~$w\in\C$ on two lines passing through the origin, then~$w=i(|u_2|^2u_1-|u_1|^2u_2)/\Im(u_2\overline{u}_1)$.
\end{proof}

\begin{remark} Let us emphasize that the propagation equation~\eqref{eq:3-terms} does \emph{not} depend on a particular choice of the embedding of~$(G,x)$ into the complex plane while the \mbox{s-holomorphicity} condition~\eqref{eq:s-hol} heavily relies upon this choice. In other words, each Dirac spinor~$\cX$ gives rise to an equivalent \emph{interpretation} of~\eqref{eq:3-terms} in the geometric terms related to the corresponding s-embedding~$\cS_\cX$.
\end{remark}

\begin{figure}
\hskip 0.01\textwidth \begin{minipage}{0.48\textwidth} $\phantom{x}$

\includegraphics[clip, trim=4.7cm 6.2cm 9.6cm 14.8cm, width=\textwidth]{Sdef_tikz.pdf}

\vskip 12pt

\noindent {\textsc{(A)}} The image of a quad $z\!=\!(v_0^\bullet v_0^\circ v_1^\bullet v_1^\circ)$ in an s-embedding \mbox{$\cS:\Lambda(G)\!\cup\!\Dm(G)\!\to\!\C$}; note that~$\cS$ can be also viewed as a t-embedding of the (dual to the) graph \mbox{$\Upsilon^\bullet(G)\cup\Upsilon^\circ(G)$} carrying the corresponding bipartite dimer model.
\end{minipage}\hskip 0.02\textwidth\begin{minipage}{0.48\textwidth}
\includegraphics[clip, trim=12.3cm 6.45cm 2cm 13.18cm, width=\textwidth]{Sdef_tikz.pdf}

\vskip 12pt

\noindent \hskip0.1\textwidth \begin{minipage}{0.88\textwidth}{\textsc{(B)}} The image of the same quad under \mbox{$\cS\!-\!i\cQ$} and \mbox{$\cS\!-\!i\overline{\cQ}$}; the arrows indicate possible jumps of the corresponding random walks.\end{minipage}\end{minipage}

\caption{{A tangential quad~$\cS^\dm(z)$ and its image in the S-graph:} {the positions~$(\cS-i\cQ)(z)$ and~$(\cS-i\overline{\cQ})(z)$ are horizontally aligned.\label{fig:Sdef}}}
\end{figure}

We now introduce a notion of \emph{S-graphs} associated to s-embeddings, the term is chosen by analogy with T-graphs, which play a crucial role in the analysis of t-holomorphic functions on t-embeddings.
\begin{definition} Given a proper non-degenerate s-embedding~$\cS=\cS_\cX:\Lambda(G)\to\C$, the associated function~$\cQ=\cQ_\cX:\Lambda(G)\to\R$, and a complex number~$\alpha\in\mathbb{T}$, we call~$\cS+\alpha^2\cQ:\Lambda(G)\to \C$ an S-graph associated to~$\cS$.

An S-graph is called non-degenerate if~$(\cS+\alpha^2\cQ)(v)\ne(\cS+\alpha^2\cQ)(v')$ for~$v\ne v'$.
\end{definition}

\begin{remark} \label{rem:Sgraph-deg} Provided that the s-embedding~$\cS$ is proper and non-degenerate, it is easy to see that degeneracies in~$\cS+\alpha^2\cQ$ occur if and only if~$\cS(v^\bullet)-\cS(v^\circ)\in -\alpha^2\R_+$ for an edge~$(v^\circ v^\bullet)$ of~$\Lambda(G)$; see also a discussion in Section~\ref{sub:cuts}.
\end{remark}

Clearly, S-graphs associated to an s-embedding~$\cS$ are T-graphs~$\cT+\alpha^2\cQ$ associated to the same embedding~$\cT=\cS$ viewed as a t-embedding of the corresponding bipartite dimer model, where we do not keep track of the positions of the centers of quads~$z\in\Dm(G)$; see~\cite[Section~4]{CLR1}. Note however that (since the origami map \mbox{$\cQ=\cO$} is real-valued on~$\Lambda(G)$) the restriction of~$\cT+\alpha^2\cQ$ onto~$\Lambda(G)$ can be viewed as either~$\cT+\alpha^2\cO$ or \mbox{$\cT+\alpha^2\overline{\cO}$} from the dimer model perspective; see {Fig.~\ref{fig:Sdef}B.} (The choice between~$\cO$ and~$\overline{\cO}$ corresponds to the identification of s-holomorphic functions with either t-white- or t-black-holomorphic functions discussed above.)

\smallskip

The following property of S-graphs associated with non-degenerate s-embeddings can be easily obtained from the definition~\eqref{eq:cQ-def} of the function~$\cQ$.
\begin{itemize}
\item For each quad~$z\in\Dm(G)$ and each~$p=0,1$, $q=0,1$, we have
\begin{equation}
\label{eq:S:Im>Im}
\Re[\,\overline{\alpha}^2(\cS+\alpha^2\cQ)(v_p^\bullet(z))\,]\ \ge\ \Re[\,\overline{\alpha}^2(\cS+\alpha^2\cQ)(v_q^\circ(z){)}\,],
\end{equation}
the equality is possible only if~$\cS+\alpha^2\cQ$ is degenerate. In particular, the image~$(\cS+\alpha^2\cQ)^\dm(z)$ of~$\cS^\dm(z)$ in a non-degenerate S-graph is always \emph{non-convex.} Moreover, at most one edge~$(v_p^\bullet(z)v_q^\circ(z))$ of a quad~$z\in\Dm(G)$ can  collapse to a point if the S-graph is degenerate.
\end{itemize}

The following two simple geometric properties of S-graphs can be easily obtained from the correspondence with T-graphs explained above; see { Fig.~\ref{fig:Sdef}B.}
\begin{itemize}
\item For each~$z\in\Dm(G)$, the points~$(\cS+\alpha^2\cQ)(z)$ and~$(\cS+\alpha^2\overline{\cQ})(z)$ are the intersection points of the opposite sides of the non-convex quad~$(\cS+\alpha^2\cQ)^\dm(z)$.

\smallskip

\item The line passing through these two intersection points is parallel to~$i\alpha^2\R$,
\end{itemize}

As discussed in~\cite[Section~4.2]{CLR1}, t-holomorphic functions on a t-embedding~$\cT$ can be viewed as gradients of harmonic functions on the associated T-graphs. More precisely (see~\cite[Proposition~4.15]{CLR1}), t-white-holomorphic functions can be identified with gradients of $\alpha\R$-valued harmonic functions on the T-graph~$\cT+\alpha^2\cO$, while \mbox{t-black-holomorphic} functions are gradients of~$\overline{\alpha}\R$-valued harmonic functions on the T-graph~$\cT+\overline{\alpha^2\cO}$. {When} translated to the Ising model context, this identification reads as follows. {(It is worth noting that} we do \emph{not} rely upon this material in our paper {except that in a discussion given in Section~\ref{sub:shol-limits} below} and list it here {mostly} for the completeness of the presentation.)

\begin{itemize}
\item For each~$\alpha\in\mathbb{T}$, s-holomorphic functions~$F(z)$ on~$\cS$ can be viewed as gradients of $\varsigma\alpha\R$-valued functions~$\varsigma\rI_{\alpha\R}[\overline{\varsigma}F]$ that are linear on edges of the \mbox{S-graph} $\cS+\alpha^2\cQ$ and admit an \emph{affine} continuation to quads~$(\cS+\alpha^2\cQ)^\diamond(z)$ (in other words, the four points~$(\cS(v),{ \varsigma\rI_{\alpha\R}[\overline{\varsigma}F]}(v))\in\R^{2,1}$,~$v=v_{0,1}^\bullet(z),v_{0,1}^\circ(z)$, must be coplanar for each~$z\in\Dm(G)$). {Moreover, there exists a complex-valued function~$\rI_\C[{\overline\varsigma}F]$ defined on $\Lambda(G)$ such that $\rI_{\alpha\R}[\overline{\varsigma}F]=\Pr{\rI_\C[\overline{\varsigma}F]}{\alpha\R}$ for all $\alpha\in\mathbb{T}$.
    More precisely, $\rI_\C[\overline{\varsigma}F]$ is the primitive of the closed differential form
    \[
    \qquad \overline{\varsigma}Fd\cS+\varsigma \overline{F}d\cQ\ =\ F_\frb d\cS+\overline{F_\frb}d\cQ\,.
    \]
    Slightly abusing the notation, one can similarly define $\rI_\C[\overline{\varsigma}F]$ not only on vertices of~$\Lambda(G)$ or, equivalently, on those of the s-embedding~$\cS$ but also inside quads~$\cS^\dm(z)$. Also, note that this closed form can be equivalently written as $F_\frw d\cS+\overline{F_\frw}d\overline{\cQ}$ when working on edges of~$\Lambda(G)$ but this expression leads to a different continuation of~$\rI_\C[\overline{\varsigma}F]$ inside~$\cS^\dm(z)$.}
\end{itemize}

Below (notably, in Section~\ref{sec:RSW}) we will often choose a particular~$\alpha\in\mathbb{T}$ for concreteness, our preferred choice will be~$\alpha=\overline{\varsigma}=e^{-i\frac{\pi}{4}}$. In this case, the condition~\eqref{eq:S:Im>Im} reads as $\Im(\cS-i\cQ)(v^\circ_q)\ge\Im(\cS-i\cQ)(v^\bullet_p)$ for both $q=0,1$ and $p=0,1$.

\subsection{Functions~${H_F}$ associated to s-holomorphic functions~${F}$} \label{sub:HF-def}
In this section we discuss the notion of a ``primitive of the square of {an} s-holomorphic function'', which was introduced by Smirnov in his seminal paper~\cite{Smi-I} in the square grid context (though only on the set~$\Lambda(G)$ and not on~$\Dm(G)$). We first give an \emph{abstract} definition of these primitives for spinors on~$\Upsilon^\times(G)$ satisfying the propagation equation~\eqref{eq:3-terms} and then discuss its \emph{geometric} interpretation in the context of s- and t-embeddings.
\begin{definition}
\label{def:HX-def} Let a real-valued spinor~$X$ on $\Upsilon^\times(G)$ (locally) {satisfy} the propagation equation~\eqref{eq:3-terms}. We define a function~$H_X$ on~$\Lambda(G)\cup\Dm(G)$ by setting
\begin{equation}
\label{eq:HX-def}
\begin{array}{rcll}
H_X(v^\bullet_p(z))-H_X(z)&:=&X(c_{p0}(z))X(c_{p1}(z))\cos\theta_z, & p=0,1,\\[2pt]
H_X(v^\circ_q(z))-H_X(z)&:=&-X(c_{0q}(z))X(c_{1q}(z))\sin\theta_z,& q=0,1,\\[2pt]
H_X(v^\bullet_p(z))-H_X(v^\circ_q(z))&:=&(X(c_{pq}(z)))^2,
\end{array}
\end{equation}
similarly to~\eqref{eq:cS-def} and~\eqref{eq:cS(z)-def}; note that~$H_X$ is defined up to a global additive constant,
\end{definition}
\begin{remark}\label{rem:HX(z)=} The consistency of the definition~\eqref{eq:HX-def} easily follows from the propagation equation~\eqref{eq:3-terms}. Moreover, a straightforward computation shows that~\eqref{eq:HX-def} and~\eqref{eq:3-terms} also imply the identity
\begin{equation}
\label{eq:HX(z)=}
H_X(z)=\tfrac{1}{2}\big[(H_X(v^\bullet_0)+H_X(v^\bullet_1))\cdot\sin^2\theta_z+ (H_X(v^\circ_0)+H_X(v^\circ_1))\cdot\cos^2\theta_z\big].
\end{equation}
In other words, the value $H_X(z)$ defined by~\eqref{eq:HX-def} equals the weighted average of the four values~$H_X(v^{\bullet,\circ}_{0,1})$, with coefficients~$\tfrac{1}{2}\sin^2\theta_z$ and~$\tfrac{1}{2}\cos^2\theta_z$.
\end{remark}

Now let~$\cS$ be an s-embedding of~$(G,x)$. Recall that Proposition~\ref{prop:shol=3term} provides a correspondence {between} real-valued spinors~$X$ satisfying~\eqref{eq:3-terms} and s-holomorphic functions on~$\cS$. This correspondence can be used to translate Definition~\ref{def:HX-def} to the language of s-holomorphic functions. On the other hand, in the previous section we saw that each s-holomorphic function~$F$ on~$\cS$ can be viewed as both t-white- or t-black-holomorphic functions~$F_\frw$ or~$F_\frb$ provided that~$\cS=\cT$ is viewed as a t-embedding of the corresponding bipartite dimer model on~$\Upsilon^\circ(G)\cup\Upsilon^\bullet(G)$. Further, in the dimer model context, given a pair of t-holomorphic functions~$F_\frw$ and~$F_\frb$ one can consider a differential form
\begin{equation}
\label{eq:FwFbdT-closed}
\tfrac{1}{2}\Re(F_\frw^\circ F^\bullet_{\frb} d\cT+F_\frw^\circ\overline{F_\frb^\bullet}d\cO),
\end{equation}
which turns out to be closed; see~\cite[Proposition~3.10]{CLR1}. If both~$F_\frw^\circ=\overline{\varsigma}F$ and \mbox{$F_\frb^\bullet=\overline{\varsigma}F$} correspond to the same s-holomorphic function~$F$, this allows us to define the primitive
\begin{equation}
\label{eq:HF-def}
H_F:=\int\tfrac{1}{2}\Re(\overline{\varsigma}^2F^2d\cS+|F|^2d\cQ)=\int\tfrac{1}{2}(\Im(F^2d\cS)+\Re(|F|^2d\cQ)),
\end{equation}
on~$\Lambda(G)\cup\Dm(G)$. According to~\cite[{Remark~3.11}]{CLR1}, $H_F$ can be also viewed as a piecewise affine function on faces of the t-embedding~$\cT=\cS$. Note however that~$H_F$ is \emph{not} affine on quads~$\cS^\dm(z)$, which are composed of four triangular faces of~$\cT$, each of the four carrying its own constant differential form~$d\cQ$; see {Fig.~\ref{fig:Sdef}A.}

\begin{lemma} Let an s-holomorphic function $F$, defined on a (subset) of~$\Dm(G)$, and a real-valued spinor~$X$ on~$\Upsilon^\times(G)$ be related by the identity~\eqref{eq:X-from-F}. Then, the functions~$H_F$ and~$H_X$ coincide (up to a global additive constant).
\end{lemma}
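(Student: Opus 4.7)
The plan is to verify that $H_F$ and $H_X$ have matching increments along every edge of the t-embedding $\cS$ (whose vertex set is $\Lambda(G)\cup\Dm(G)$); since both functions are defined only up to a global additive constant, this suffices. The edges of the t-embedding split into two types: the \emph{side} edges of the quads $\cS^\dm(z)$ connecting $\cS(v^\bullet_p(z))$ to $\cS(v^\circ_q(z))$, and the \emph{internal} edges connecting $\cS(z)$ to each of its four corners.

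First I would do the side edges. Along such a segment the s-holomorphic function $F$ is constant equal to $F(z)$, so by \eqref{eq:HF-def}, \eqref{eq:cS-def} and \eqref{eq:cQ-def} one gets
\[
H_F(v^\bullet_p(z))-H_F(v^\circ_q(z))\ =\ \tfrac{1}{2}\bigl(\Im[F(z)^2\,\cX(c_{pq}(z))^2]+|F(z)|^2\,|\cX(c_{pq}(z))|^2\bigr).
\]
Setting $w:=\overline{\varsigma}\,\cX(c_{pq}(z))\,F(z)$, identity \eqref{eq:X-from-F} gives $\Re w=X(c_{pq}(z))$, while $\overline{\varsigma}^2=-i$ turns the first summand into $\tfrac{1}{2}\Re[w^2]$. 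The elementary identity $\tfrac{1}{2}(\Re w^2+|w|^2)=(\Re w)^2$ then collapses the right-hand side to $X(c_{pq}(z))^2$, which is exactly the third line of \eqref{eq:HX-def}. Consistency around each quad is automatic from the closedness of the form \eqref{eq:FwFbdT-closed}, so $H_F$ and $H_X$ agree (modulo a constant) on all of $\Lambda(G)$.

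It remains to match the values at the quad centers $z\in\Dm(G)$. My preferred route is not to grind through the internal edges directly (which would require tracking the real part of the complex-valued $\cQ=\cO$ on $\Dm(G)$ along the rays from $\cS(z)$), but rather to note that Remark~\ref{rem:HX(z)=} expresses $H_X(z)$ as the weighted average \eqref{eq:HX(z)=} of its four corner values with weights $\tfrac{1}{2}\sin^2\theta_z$ and $\tfrac{1}{2}\cos^2\theta_z$. I would then show that $H_F(z)$ defined by \eqref{eq:HF-def} satisfies the \emph{same} averaging formula: combining the two edge increments $H_F(v^\bullet_p)-H_F(z)$ and $H_F(z)-H_F(v^\circ_q)$ (which must be consistent with the side increment by closedness of \eqref{eq:FwFbdT-closed}) and using \eqref{eq:cS(z)-def} together with the propagation equation~\eqref{eq:3-terms}, an algebraic simplification parallel to the $(\Re w)^2$-identity above forces the same convex combination. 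The main obstacle, and essentially the only nontrivial bookkeeping, is checking that the $d\cQ$-contribution along the four internal edges distributes itself into precisely the same $\sin^2\theta_z/\cos^2\theta_z$ weights — this follows from the origami-map identity $d\cO=\eta^2\overline{d\cT}$ applied to the four faces of $\cS$ inside $\cS^\dm(z)$, each of which carries its own constant $d\cQ$.
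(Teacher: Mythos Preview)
Your side-edge computation is correct and clean: the identity $\tfrac12(\Re w^2+|w|^2)=(\Re w)^2$ with $w=\overline\varsigma\,\cX(c_{pq})F(z)$ gives $H_F(v^\bullet_p)-H_F(v^\circ_q)=X(c_{pq})^2$ at once, so $H_F=H_X$ on $\Lambda(G)$.

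For the quad centers the paper takes exactly the route you set aside: it computes one internal-edge increment directly, rewriting $H_X(v^\bullet_0)-H_X(z)=X(c_{00})X(c_{01})\cos\theta_z$ in terms of $F(z)$ via \eqref{eq:X-from-F} and expanding the product $(F+\eta_{00}^2\overline{F})(F+\eta_{01}^2\overline{F})$. Two geometric facts about the tangential quad --- that $\cS(z)$ lies on the bisector of the angle at $\cS(v^\bullet_0)$, giving $\eta_{00}^2\eta_{01}^2(\cS(v^\bullet_0)-\cS(z))=-(\overline{\cS(v^\bullet_0)}-\overline{\cS(z)})$, and that $(\eta_{00}^2+\eta_{01}^2)(\cS(v^\bullet_0)-\cS(z))=2i\Re(\cQ(v^\bullet_0)-\cQ(z))$ --- collapse the four terms to exactly $\tfrac12\Im[F^2(\cS(v^\bullet_0)-\cS(z))]+\tfrac12|F|^2\Re(\cQ(v^\bullet_0)-\cQ(z))=H_F(v^\bullet_0)-H_F(z)$.

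Your averaging route is valid but not as automatic as you make it sound. Since $H_F(\cdot)-H_F(z)$ is linear in $\cS(\cdot)-\cS(z)$ and $\Re(\cQ(\cdot)-\cQ(z))$ with $F$-dependent coefficients, showing that $H_F(z)$ obeys \eqref{eq:HX(z)=} amounts to showing that $\cS(z)$ and $\Re\cQ(z)$ \emph{separately} obey it. The $\cS$-part follows from \eqref{eq:cS(z)-def} and \eqref{eq:3-terms} exactly as in Remark~\ref{rem:HX(z)=}. The $\Re\cQ$-part, however, is not ``parallel to the $(\Re w)^2$-identity'': you first need the increment formula $\Re(\cQ(v^\bullet_p)-\cQ(z))=\pm\cos\theta_z\,\Re[\cX(c_{p0})\overline{\cX(c_{p1})}]$ (and its white analogue), which is precisely the tangential-quad/origami input the paper uses, and then a further substitution of \eqref{eq:3-terms} to see that the weighted sum vanishes. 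So both approaches rest on the same geometric identity; yours postpones it to the last step rather than avoiding it. Your organization is arguably more transparent (the $\Lambda(G)$ agreement is isolated cleanly), while the paper's single internal-edge check is shorter overall.
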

\begin{proof} Indeed, for a quad~$\cS^\dm(z)=(\cS(v_0^\bullet)\cS(v_0^\circ)\cS(v_1^\bullet)\cS(v_1^\circ))$ as in {Fig.~\ref{fig:Sdef}A} we have
\begin{align*}
H_X(v^\bullet_0)-H_X(z)\ &=\ X(c_{00})X(c_{01})\cos\theta_z\\
&=\ \overline{\varsigma}^2(\cS(v^\bullet_0)-\cS(z))\cdot \Pr{F(z)}{\eta_{00}\R}\cdot\Pr{F(z)}{\eta_{01}\R}\\
&=\ \tfrac{1}{4i}(\cS(v^\bullet_0)-\cS(z))\cdot(F(z)\!+\!\eta_{00}^2\overline{(F(z))}{}^2)\cdot(F(z)\!+\!\eta_{01}^2\overline{(F(z))}{}^2),
\end{align*}
where we write~$\eta_{00}$ and~$\eta_{01}$ instead of~$\eta_{c_{00}}$ and~$\eta_{c_{01}}$ for shortness.
It remains to note that
\[
\eta_{00}^2\eta^2_{01}(\cS(v^\bullet_0)-\cS(z))\ =\ \varsigma^4\cdot (\overline{\cS(v^\bullet_0)}-\overline{\cS(z)})\ =\ -(\overline{\cS(v^\bullet_0)}-\overline{\cS(z)})
\]
since $\cS(z)$ lies on the bisector of the angle~$\cS(v_1^\circ)\cS(v_0^\bullet)\cS(v_0^\circ)$ and that
\begin{align*}
(\eta_{00}^2+\eta^2_{01})(\cS(v^\bullet_0)-\cS(z))\ &=\ 2\varsigma^2\cos\phi_{v^\bullet_0 z}\cdot |\cS(v^\bullet_0)-\cS(z)|\\
&=\ 2i\cdot \Re(\cQ(v^\bullet_0)-\cQ(z)),
\end{align*}
where~$\phi_{v^\bullet_0 z}$ is the half-angle of the quad~$\cS^\dm(z)$ at the vertex~$\cS(v^\bullet_0)$. The computation of the other increments inside the quad~$\cS^\dm(z)$ is similar.
\end{proof}
\begin{remark} In the isoradial context, the function~$\cQ$ is constant on both~$G^\bullet$ and~$G^\circ$.
Therefore, on each of these graphs the function~$H_F$ can be viewed as the primitive of the form~$\tfrac{1}{2}\Im[F^2d\cS]$, {without the second term~$\frac12|F|^2d\cQ$;} cf.~\cite[Section~3.3]{ChSmi2}.
\end{remark}

We now state a \emph{maximum principle} for functions~$H_F=H_X$ associated with s-holomorphic functions. This statement
holds in the full generality and does \emph{not} rely upon any particular property of the Ising weights. In what follows, when speaking about the adjacency relation on~$\Lambda(G)\cup\Dm(G)$, we view this graph as the dual to~$\Upsilon^\bullet(G)\cup\Upsilon^\circ(G)$; see { Fig.~\ref{fig:notation}} for an illustration. In other words, given $u,u'\in \Lambda(G)\cup\Dm(G)$ we say that~$u\sim u'$ if
\begin{itemize}
\item either $u$ and $u'$ are adjacent vertices of the graph~$\Lambda(G)$
\item or exactly one of these vertices belongs to~$\Dm(G)$ and the other is one of its four adjacent vertices of~$\Lambda(G)$.
\end{itemize}

\begin{proposition} \label{prop:HF-max}
Let a spinor~$X:\Upsilon^\times(G)\to\R$ locally {satisfy} the propagation equation~\eqref{eq:3-terms} and the function~$H_X:\Lambda(G)\cup\Dm(G)\to\R$ be defined by~\eqref{eq:HX-def}. Then, $H_X$ cannot attain an extremum at an interior (in the sense of the adjacency relation on~$\Lambda(G)\cup\diamondsuit(G)$ described above) vertex in the domain of its definition.
\end{proposition}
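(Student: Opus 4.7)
I would prove the proposition by splitting into three cases according to the type of interior vertex: a quad $z\in\Dm(G)$, a primal vertex $v^\bullet\in G^\bullet$, and a dual vertex $v^\circ\in G^\circ$. The first case is immediate from Remark~\ref{rem:HX(z)=}: identity~\eqref{eq:HX(z)=} expresses $H_X(z)$ as a convex combination of the four values $H_X(v_{p,q}^{\bullet,\circ}(z))$ with positive weights $\tfrac12\sin^2\theta_z$ and $\tfrac12\cos^2\theta_z$ that sum to~$1$, so $H_X$ cannot attain a strict extremum at $z$.

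\textbf{Extrema at $v^\bullet$.} The identity $H_X(v^\bullet(c))-H_X(v^\circ(c))=(X(c))^2\geq 0$ from~\eqref{eq:HX-def} yields $H_X(v^\bullet)\geq H_X(v^\circ)$ across every edge of $\Lambda(G)$, so the only possible extremum at a primal vertex is a local maximum. Suppose $v^\bullet$ is interior with corners $c_1,\ldots,c_k$ in cyclic order and adjacent quads $z_1,\ldots,z_k$, where $z_i$ shares the consecutive pair $c_i,c_{i+1}$. I would lift the corners to $\Upsilon^\times(G)$ so that $\tilde c_i,\tilde c_{i+1}$ are neighbors for each $i$ and set $a_i:=X(\tilde c_i)$. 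Then the first line of~\eqref{eq:HX-def} reads
\[
H_X(v^\bullet)-H_X(z_i)\ =\ a_i\,a_{i+1}\cos\theta_{z_i},
\]
and since $\cos\theta_{z_i}>0$ the local maximum assumption forces $a_ia_{i+1}\geq 0$ for every $i$.

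\textbf{Spinor sign parity and conclusion.} The crucial observation is that $v^\bullet$ is a branching point of $\Upsilon^\times(G)$: continuing the consistent lifting once around the vertex produces $\tilde c_{k+1}=-\tilde c_1$, hence $a_{k+1}=-a_1$. If all $a_i$ were non-zero, the cyclic sign sequence $\operatorname{sgn}(a_1),\ldots,\operatorname{sgn}(a_k),-\operatorname{sgn}(a_1)$ would have to change sign an odd number of times, producing at least one index with $a_ia_{i+1}<0$\,---\,a contradiction. Therefore some $a_i=0$, whence $H_X(v^\circ(c_i))=H_X(v^\bullet)$ by~\eqref{eq:HX-def}, and $v^\bullet$ is not a strict local extremum. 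The case $v^\circ\in G^\circ$ is entirely symmetric: only a local minimum can occur, and the identity $H_X(v^\circ_q(z))-H_X(z)=-X(c_{0q})X(c_{1q})\sin\theta_z$ together with $\sin\theta_z>0$ reduces the problem to the same odd-parity argument around $v^\circ$. The heart of the proof is this single topological observation; the only delicate point I anticipate is verifying that the cyclic lifting scheme above is compatible, at each quad $z_i$, with the pair-of-neighbors convention required in Definition~\ref{def:HX-def}\,---\,but this is a routine check.
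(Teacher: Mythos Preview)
Your argument for each vertex type is essentially the paper's own (the product identity $\prod_k(H_X(z_k)-H_X(v^\circ))=-\prod_k\sin\theta_{z_k}\cdot\prod_k X(c_k)^2$ used there is just a repackaging of your sign-parity observation), and the handling of $z\in\Dm(G)$ via the convex combination~\eqref{eq:HX(z)=} is identical. So the isolated-extremum case is fine.

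The gap is that you only conclude ``$v^\bullet$ is not a \emph{strict} local extremum.'' The proposition is a genuine maximum principle: it asserts that a non-constant $H_X$ cannot attain its extremum at any interior vertex, strict or not. Your argument stops at producing a neighbour $v^\circ(c_i)$ with the same value and does not show that the extremal set propagates to the boundary. The paper handles this degenerate case separately: if $V$ is a connected component of the set where the extremum is attained and $z\in V\cap\Dm(G)$, then~\eqref{eq:HX(z)=} forces all four vertices of $z$ into $V$; and if adjacent $v^\bullet,v^\circ\in V$, then $X(c_{pq})=0$ for the corresponding corner, whence $H_X(z)=H_X(v^\bullet)=H_X(v^\circ)$ for \emph{both} quads $z$ sharing that edge, so both lie in $V$. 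These two observations make $V$ open in the graph sense and force it to be the whole domain. You have all the ingredients (indeed, from your $a_i=0$ you immediately get $H_X(z_{i-1})=H_X(z_i)=H_X(v^\bullet)$ as well), but the propagation step must be stated and carried out; without it the proof is incomplete.
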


\begin{proof} Assume first that the function~$H_X$ attains an extremal value at an~\emph{isolated} vertex~$u\in\Lambda(G)\cup\Dm(G)$.
It immediately follows from the identity~\eqref{eq:HX(z)=} that the case~$u\in\Dm(G)$ is impossible. To rule out the case~$u\in\Lambda(G)=G^\bullet\cup G^\circ$, assume that~\mbox{$u=v^\circ\in G^\circ$}; the other case is symmetric. Due to Definition~\ref{def:HX-def}, the value~$H_X(v^\circ)$ cannot be a local maximum as~$H_X(v^\bullet)\ge H_X(v^\circ)$ provided that~$v^\circ\sim v^\bullet\in G^\bullet$. Let~$z_k\in\Dm(G)$, $k=1,\ldots,n$, be the adjacent to~$v^\circ$ vertices of~$\Dm(G)$ and denote by $c_k\in\Upsilon(G)$, $k=1,\ldots,n$, the `corners' adjacent to~$v^\circ$. It easily follows from~\eqref{eq:HX-def} that
\[
\textstyle \prod_{k=1}^n(H_X(z_k)-H_X(v^\circ))\ =\ - \prod_{k=1}^n\sin\theta_{z_k}\cdot\prod_{k=1}^n(X(c_k))^2,
\]
the `$-$' sign comes from the fact that the double cover~$\Upsilon^\times(G)$ branches over~$v^\circ$. Therefore, the value~$H_x(v^\circ)$ cannot be a local minimum.

It remains to consider the degenerate case when the extremum of~$H_X$ is attained at \emph{several} interior vertices simultaneously.
Let~$V\subset \Lambda(G)\cup\Dm(G)$ be a connected component of this set. It follows from~\eqref{eq:HX(z)=} that, if~$z\in V\cap\Dm(G)$, then all the four adjacent to~$z$ vertices~$v^{\bullet,\circ}_{0,1}\in\Lambda(G)$ should also belong to~$V$; as usual, here and below we use the notation introduced in {Fig.~\ref{fig:notation}}. Finally, it easily follows from Definition~\ref{def:HX-def} that, if~$v^{\bullet,\circ}\in V\cap G^{\bullet,\circ}$ are adjacent vertices, then both~$z\in\Dm(G)$ such that~$v^\bullet\sim z\sim v^\circ$ also belong to the set~$V$ (indeed, if
~$H_X(v_p^\bullet)=H_X(v_q^\circ)$ for some~$p,q\in\{0,1\}$, then~$X(c_{pq})=0$ and hence $H_X(z)=H_X(v_p^\bullet)=H_X(v_q^\circ)$). Therefore, the set~$V$ cannot contain more than one interior vertex of the domain of definition of~$H_X$ unless it contains all of them.
\end{proof}

\begin{remark} Let us recall that a general Kadanoff--Ceva fermionic observable
\[
X_\varpi(c)=\E[\chi_c\sigma_{v_1^\bullet}\ldots \mu_{v_{m-1}^\bullet}\sigma_{v_1^\circ}\ldots \sigma_{v_{n-1}^\circ}],\quad c\in\Upsilon^\times_\varpi(G)
\]
is a spinor on the double cover~$\Upsilon^\times_\varpi(G)$, which does \emph{not} branch over vertices from the set~$\varpi=\{v_1^\bullet,\ldots,v_{m-1}^\bullet,v_1^\circ,\ldots,v_{n-1}^\circ\}$. Therefore, the corresponding function~$H_X$, well-defined on~$\Lambda(G)$, can only have maxima at points~$v_p^\bullet$, $p=1,\ldots,m-1$, and minima at~$v_q^\circ$, $q=1,\ldots,n-1$, but not other extrema in the interior of~$\Lambda(G)\cup\Dm(G)$.
\end{remark}

Remarkably enough, Proposition~\ref{prop:HF-max} admits an important generalization: the following abstract \emph{comparison principle} was communicated to us by S.~C.~Park; {see~\cite[Lemma~3.6 and Proposition~4.4]{park-iso} for some intuition behind these statements coming from a treatment of the near-critical model on~$\mathbb{Z}^2$ and the comparison principle for solutions of quasilinear elliptic PDEs.} Though we do not use Proposition~\ref{prop:HF-comparison} in our paper, we decided to include it for the completeness of the exposition. Let us repeat that, similarly to Proposition~\ref{prop:HF-max}, this statement holds in the full generality, i.e., without any assumption on the Ising weights and/or the combinatorics of the planar graphs under consideration.
\begin{proposition} \label{prop:HF-comparison} Let spinors~$X,Y:\Upsilon^\times(G)\to\R$ locally satisfy the propagation equation~\eqref{eq:3-terms} and the functions~$H_X,H_Y:\Lambda(G)\cup\Dm(G)\to\R$ be defined by~\eqref{eq:HX-def}. Then, the difference $H_X-H_Y$ cannot have an extremum at an interior vertex of its domain of definition.
\end{proposition}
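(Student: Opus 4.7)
The plan is to adapt the proof of Proposition~\ref{prop:HF-max} by combining the quad-averaging identity with a polarization trick that absorbs the indefinite-sign character of $h:=H_X-H_Y$. Applied to both $H_X$ and $H_Y$, identity~\eqref{eq:HX(z)=} uses the same weights $\tfrac{1}{2}\sin^2\theta_z$ and $\tfrac{1}{2}\cos^2\theta_z$ (depending only on the Ising parameter $\theta_z$), so $h(z)$ is the same strictly positive convex combination of $h$-values at the four $\Lambda(G)$-neighbors of any quad $z\in\Dm(G)$. This rules out strict interior extrema of $h$ at quads and, as in the proof of Proposition~\ref{prop:HF-max}, forces any connected component of the extremal level set of $h$ meeting $\Dm(G)$ to contain all $\Lambda(G)$-neighbors of the quads it contains.

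For vertices $v\in\Lambda(G)$, the main device is the polarization $U:=X+Y$, $V:=X-Y$; by linearity of~\eqref{eq:3-terms}, both $U$ and $V$ are again real spinors on $\Upsilon^\times(G)$ satisfying the propagation equation. Direct expansion of~\eqref{eq:HX-def} yields
\begin{equation*}
h(v^\bullet)-h(v^\circ)\,=\,U(c)V(c),\qquad h(v_p^\bullet)-h(z)\,=\,\tfrac{1}{2}\bigl(U(c_{p0})V(c_{p1})+V(c_{p0})U(c_{p1})\bigr)\cos\theta_z,
\end{equation*}
and analogously for $h(v_q^\circ)-h(z)$ with $-\sin\theta_z$. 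Assume for contradiction that $h$ has a strict isolated maximum at some $v^\bullet\in G^\bullet$, enumerate the corners $c_1,\ldots,c_n$ around $v^\bullet$ cyclically on $\Upsilon^\times(G)$ (so that $U(c_{n+1})=-U(c_1)$ and similarly for $V$), and write $F_k:=U(c_k)V(c_{k+1})+V(c_k)U(c_{k+1})$. Strict maximality forces $U(c_k)V(c_k)>0$ and $F_k\ge 0$ for every $k$. The branching identity $\prod_{k=1}^n F_k = -\prod_{k=1}^n F_k^{\mathrm{sing}}$, where $F_k^{\mathrm{sing}}$ uses single-sheet indices with $c_{n+1}=c_1$, gives $\prod_k F_k^{\mathrm{sing}}\le 0$. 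On the other hand, the strict positivity $U(c_k)V(c_k)>0$ yields well-defined signs $\sigma_k:=\operatorname{sgn} U(c_k)=\operatorname{sgn} V(c_k)$, whence $F_k^{\mathrm{sing}}=\sigma_k\sigma_{k+1}\bigl(|U(c_k)||V(c_{k+1})|+|V(c_k)||U(c_{k+1})|\bigr)$ on the single sheet and $\prod_k\sigma_k\sigma_{k+1}=\sigma_1\sigma_{n+1}\prod_{k=2}^n\sigma_k^2=1$, so $\prod_k F_k^{\mathrm{sing}}>0$, a contradiction. The symmetric case of a strict isolated minimum at $v^\circ\in G^\circ$ is handled identically.

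For non-isolated extrema, the connected component of the extremal level set of $h$ must reach the boundary. Step~1 propagates through quads, and the non-strict version of the cyclic argument above at a $\Lambda(G)$-vertex in the level set forces at least one of the local quantities $U(c_k)V(c_k)$ or $F_k$ to vanish, yielding an adjacent neighbor in the level set. The main technical obstacle is propagation across an edge $(v^\bullet v^\circ)$ where $U(c)V(c)=0$: unlike the pure max-principle setting where $X(c)=0$ automatically forces the increments $X(c)X(c')\cos\theta_z$ at the two adjacent quads to vanish, here $U(c)=0$ (WLOG) does not immediately annihilate the corresponding increment $\tfrac{1}{2}V(c)U(c')\cos\theta_z$. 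One must combine $U(c)=0$ with the propagation equation $U(c)=U(c')\cos\theta_z+U(c'')\sin\theta_z$ and the nonnegativity constraints from maximality at both endpoints to iteratively rule out escape routes. A conceptually cleaner (and heavier) alternative would be to invoke the random-walk interpretation of $H_F$ on the associated S-graph developed in~\cite{CLR1}, in which $h$ is automatically harmonic for a common random walk independent of the spinor, yielding the comparison principle directly.
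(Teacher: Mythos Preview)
Your polarization $U=X+Y$, $V=X-Y$ handles the isolated-extremum case correctly and is a reasonable alternative to the paper's direct argument (which simply observes that $|X(c_k)|>|Y(c_k)|$ together with $X(c_k)X(c_{k+1})>Y(c_k)Y(c_{k+1})$ forces $X(c_k)X(c_{k+1})>0$ and hence contradicts the branching). A minor point: you only treat a maximum at $v^\bullet$ and a minimum at $v^\circ$ explicitly; the other two cases require the analogous bookkeeping with $UV<0$ at each corner, and you should verify the parity count still yields a contradiction there.

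The genuine gap is the non-isolated case, which you yourself flag as an unresolved ``technical obstacle.'' Your proposed continuation---combining $U(c)=0$ with the propagation equation and the remaining sign constraints---is not carried out, and it is not clear that it terminates. The fallback suggestion of invoking a random-walk interpretation is also not on solid ground: the S-graph martingales in Section~\ref{sub:rwalks} concern the projections $\Pr{F}{\alpha\R}$, not $H_F$, and the T-graph primitives $\rI_{\varsigma\alpha\R}[F]$ are different objects from $H_F$; there is no random walk in the paper for which $H_X-H_Y$ is harmonic. The paper's resolution (due to Park) is much simpler than either route you sketch: rescale $X\mapsto(1+\varepsilon)X$, $Y\mapsto(1+\varepsilon')Y$ with small generic $\varepsilon\ne\varepsilon'$. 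If $H_X$ is not constant on the extremal level set $V$, this separates points of $V$ (since $h_{\varepsilon,\varepsilon'}(u)-h_{\varepsilon,\varepsilon'}(u')=[(1+\varepsilon)^2-(1+\varepsilon')^2](H_X(u)-H_X(u'))$ for $u,u'\in V$), making the extremum isolated; one then applies the isolated-case argument and lets $\varepsilon,\varepsilon'\to 0$. The only surviving scenario is that \emph{both} $H_X$ and $H_Y$ are individually constant on $V$, and then the edge-to-quad propagation from Proposition~\ref{prop:HF-max} applies verbatim since $X(c)=Y(c)=0$ on every edge inside $V$.
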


{Before giving a proof, let us mention an equivalent formulation of this fact: if spinors $X_1,X_2:\Upsilon^\times(G)\to\R$ (locally) satisfy the equation~\eqref{eq:3-terms}, then the function $H[X_1,X_2]:=H_{\frac{1}{2}(X_1+X_2)}-H_{\frac{1}{2}(X_1-X_2)}$ satisfies the maximum principle. Note that
\[
H[X_1,X_2](v_p^\bullet(z))-H[X_1,X_2](u_q^\circ(z))\ =\ X_1(c_{pq}(z))X_2(c_{pq}(z))\,,
\]
which is nothing but the polarization identity applied to Definition~\ref{def:HX-def}.}

\begin{proof}[Proof {of Proposition~\ref{prop:HF-comparison}}] This observation is due to S.~C.~Park. As in the proof of Proposition~{\ref{prop:HF-max},} assume first that the function~$H_X-H_Y$ has an \emph{isolated} extremum, which has to be attained at a vertex~$u\in\Lambda(G)$ (the case~$u\in\Dm(G)$ is ruled out by the identity~\eqref{eq:HX(z)=}). Without loss of generality, assume that~$u=v^\circ\in G^\circ$ and (exchanging the roles of~$X$ and~$Y$ if needed) that~$H_X-H_Y$ attains a local minimum at~$v^\circ$. Let~$z_k=(v^\circ v_k^\bullet v^\circ_k v_{k+1}^\bullet)\in \Dm(G)$, $k=1,\ldots,n$ be the neighboring to~$v^\circ$ quads listed counterclockwise, and denote~$c_k=(v^\circ v_k^\bullet)$. Since~$v^\circ$ is an isolated minimum of the function~$H_X-H_Y$, we must have the following strict inequalities:
\[
\begin{array}{l}
H_X(v_k^\bullet)-H_X(v^\circ)>H_Y(v_k^\bullet)-H_Y(v^\circ),\\
H_X(z_k)-H_X(v^\circ)>H_Y(z_k)-H_Y(v^\circ),
\end{array}\quad k=1,\ldots,n,
\]
Due to Definiton~\ref{def:HX-def}, this implies that, for all~$k=1,\ldots,n$, we have
\[
|X(c_k)|>|Y(c_k)|\quad \text{and}\quad X(c_k)X(c_{k+1})>Y(c_k)Y(c_{k+1})
\]
provided that the lifts of~$c_k$ to the double cover~$\Upsilon^\times(G)$ are chosen so that~$c_k\sim c_{k+1}$ on this double cover. In particular, we should have~$X(c_k)X(c_{k+1})>0$ for all~$k$, which leads to a contradiction with the fact that~$\Upsilon^\times(G)$ branches over~$v^\circ$.

It remains to rule out the degenerate case when an extremum of~$H_X-H_Y$ is attained at \emph{several} neighboring vertices~$u\in V$ of the graph~$\Lambda(G)\cup\Dm(G)$ simultaneously. Replacing the functions~$X$ and~$Y$ by $(1+\varepsilon)\cdot X$ and~$(1+\varepsilon')\cdot Y$ with~$\varepsilon,\varepsilon'\to 0$, one can break such a degeneracy (and thus apply the arguments given above before passing to the limit $\varepsilon,\varepsilon'\to 0$) unless \emph{both} functions~$H_X$ and~$H_Y$ are constant on the set~$V$. The latter scenario can be ruled out similarly to the proof of Proposition~\ref{prop:HF-max}: if~$z\in V\cap \Dm(G)$, then all its four neighbors must belong to~$V$ due to the identity~\eqref{eq:HX(z)=}, and if two adjacent vertices~$v^\bullet\sim v^\circ$ belong to~$V$, then~$z\in V$ for both neighboring (i.e., satisfying~$v^\bullet\sim z\sim v^\circ$) vertices~$z\in\Dm(G)$.
\end{proof}

We conclude this section {with} a useful application of Proposition~\ref{prop:HF-max} to observables~$X(c)=\E[\chi_c\mu_{(ba)^\bullet}\sigma_{(ab)^\circ}]$ from Theorem~\ref{thm:FK-conv}. As~$X(a)=X(b)=1$, one can choose an additive constant in the definition of the associated function~$H_X$ so that
\begin{equation}
\label{eq:HF2-bc}
H_X((ab)^\circ)=0\quad \text{and}\quad H_X((ba)^\bullet)=1.
\end{equation}

\begin{corollary} Let the observable~$X(c)$ be given by~\eqref{eq:KC-Dob-def} and the function~$H_X:\Lambda(G)\cup\Dm(G)\to\R$ be defined as above. Then, all the values of~$H_X$ belong to~$[0,1]$.
\end{corollary}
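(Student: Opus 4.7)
The plan is to invoke the Remark that immediately follows Proposition~\ref{prop:HF-max}, which refines the maximum principle in the spinor setting. For the specific observable at hand, $X(c)=\E_{\Omega^\delta}[\chi_c\mu_{(ba)^\bullet}\sigma_{(ab)^\circ}]$, the relevant insertion set is $\varpi=\{(ba)^\bullet,(ab)^\circ\}$, where by the convention of Fig.~\ref{fig:Dob} the wired arc $(ab)^\circ$ is treated as a single vertex of $G^\circ$ and the free arc $(ba)^\bullet$ as a single vertex of $G^\bullet$. Thus $X$ is a spinor on the double cover of $\Upsilon(\Omega^\delta)$ that branches over every face \emph{except} the two elements of $\varpi$, and it satisfies the propagation equation~\eqref{eq:3-terms} at every corner of $\Omega^\delta$.

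First I would verify that the cited Remark applies in our setting with $m=n=2$: the argument in the proof of Proposition~\ref{prop:HF-max} rules out extrema of $H_X$ at every interior vertex of $\Lambda(\Omega^\delta)\cup\diamondsuit(\Omega^\delta)$, while the branching-based step of that proof fails precisely at the two non-branching points, namely at $(ba)^\bullet$ and $(ab)^\circ$. Consequently, the maximum of $H_X$ can be attained only at the unique bullet vertex in $\varpi$, and the minimum only at the unique circle vertex in $\varpi$.

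Since $\Omega^\delta$ is finite, both extrema are necessarily attained, so I can read off the extremal values directly from the chosen normalization~\eqref{eq:HF2-bc}: one has $\min H_X = H_X((ab)^\circ) = 0$ and $\max H_X = H_X((ba)^\bullet) = 1$. This is consistent with the third line of Definition~\ref{def:HX-def} applied at a corner $a$ with $v^\bullet(a)=(ba)^\bullet$ and $v^\circ(a)=(ab)^\circ$, which yields $H_X((ba)^\bullet) - H_X((ab)^\circ) = (X(a))^2 = 1$. Hence $0 \le H_X \le 1$ throughout $\Lambda(\Omega^\delta)\cup\diamondsuit(\Omega^\delta)$.

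There is no real obstacle at this step, since the heavy lifting has already been done in Proposition~\ref{prop:HF-max}. The only point requiring care is the correct identification of $\varpi$ after the wired/free arcs are collapsed to single vertices, so that the Remark pins down the locations of both extrema uniquely and the values $0$ and $1$ are then forced by the fixed boundary data.
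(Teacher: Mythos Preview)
Your proposal is correct and follows exactly the approach the paper intends: the text introduces the Corollary as ``a useful application of Proposition~\ref{prop:HF-max}'', and you correctly invoke that maximum principle (via the subsequent Remark) together with the identification $\varpi=\{(ba)^\bullet,(ab)^\circ\}$ and the normalization~\eqref{eq:HF2-bc} to pin down the extrema.
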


\subsection{Random walks on S-graphs}\label{sub:rwalks}
We now move to \emph{random walks} on S-graphs, which are nothing but random walks on T-graphs discussed in~\cite[Section~4]{CLR1}, rephrased in the Ising model context. Though one can always avoid such a translation by working directly on T-graphs instead of S-graphs, we feel that the forthcoming discussion might be of interest for those readers who prefer to keep the analysis of s-holomorphic functions on s-embeddings self-contained.

The most important output of this discussion is Proposition~\ref{prop:PrF-mart}, which says that, for each~$\alpha\in\mathbb{T}$, the functions~$\Pr{F}{\alpha\R}$ are martingales with respect to a certain directed random walk~$\widetilde{X}_t$ on vertices of the S-graph \mbox{$\cS-\overline{\varsigma}^2\overline{\alpha}^2\cQ=\cS+i\overline{\alpha}^2\cQ$.} (For~instance, functions~$\Re F$ are martingales with respect to a directed random walk on~$\cS+i\cQ$ while~$\Im F$ are martingales with respect to \emph{another} walk on~$\cS-i\cQ$.) This observation remained unnoticed until recently (for generic values~$\alpha\in\mathbb{T}$) even in the square lattice context.

For simplicity, below we assume that all S-graphs under consideration are non-degenerate; the same results in the degenerate case can be obtained by continuity in~$\alpha$ of the laws of continuous time random walks on~$\cS+\alpha^2\cQ$ defined below; see also~\cite[Remark~4.7 and Remark~4.18]{CLR1}.

A careful reader might have noticed a mismatch in the notation along the discussion given above: the functions~$\Pr{F(z)}{\alpha\R}$ are defined on quads~$z\in\Dm(G)$ but we pretend that they are martingales with respect to random walks on \mbox{$v\in\Lambda(G)$}. This inconsistency is eliminated by the following definition; see also Fig.~\ref{fig:SgraphRW}.
\begin{definition}
\label{def:valpha} Let~$\cS$ be a proper non-degenerate s-embedding,~$\alpha\!\in\!\mathbb{T}$, and assume that the S-graph~$\cS+\alpha^2\cQ$ is non-degenerate. Define a mapping~$z\mapsto v^{(\alpha)}(z)$ by requiring that~${(\cS+\alpha^2\cQ)}(v^{(\alpha)}(z))$ is the non-convex vertex of the quad~$(\cS+\alpha^2\cQ)^\dm(z)$.
\end{definition}
It is easy to see that~$v^{(\alpha)}$ defines a \emph{bijection} between~$z\in\Dm(G)$ and~$v\in\Lambda(G)$ away from the boundary of~$G$. Note that this bijection depends on~$\alpha$, the change in the correspondence happens at those~$\alpha$ for which the S-graph~$\cS+\alpha^2\cQ$ is degenerate.
\begin{definition}
For a non-degenerate S-graph~$\cS+\alpha^2\cQ$, let~$X_t=X_t^{(\alpha)}$ be a continuous time random walk on~$\Lambda(G)$ defined as follows (see also Fig.~\ref{fig:SgraphRW}):

\smallskip

\noindent (i) the only non-zero outgoing jump rates from~$v(z)=v^{(\alpha)}(z)$ are those leading to the three other vertices of the quad~$(\cS+\alpha^2\cQ)^\dm(z)$;

\smallskip

\noindent (ii) these three rates are chosen so that both coordinates of the process~$(\cS+\alpha^2\cQ)(X_t)$ and the process~$ |(\cS+\alpha^2\cQ)(X_t)|^2-t$ are martingales.
\end{definition}

\begin{figure}
\centering\includegraphics[clip, trim=5.48cm 16.04cm 6cm 6cm, width=0.81\textwidth]{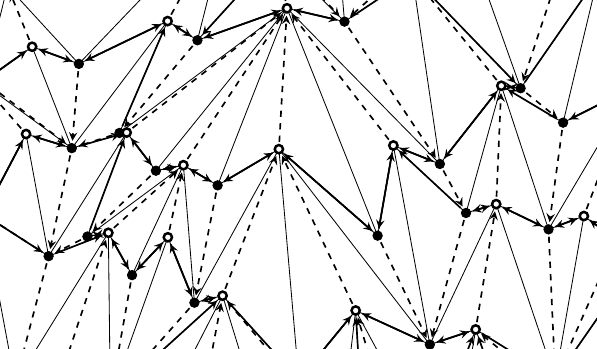}
\caption{The S-graph~$\cS-i\cQ$ {(solid edges),} arrows {(thick edges and dashed diagonals)} indicate possible jumps of the \emph{forward} random walk; {cf. Fig.~\ref{fig:Sdef}B.} For~$\varsigma=e^{i\frac{\pi}{4}}$, s-holomorphic functions~$F$ are gradients of $\R$-valued harmonic functions on \mbox{$\cS-i\cQ$}. More importantly, the functions~$\Re F$ are harmonic with respect to the \emph{time-reversal} of this random walk; see Proposition~\ref{prop:PrF-mart}.\label{fig:SgraphRW}}
\end{figure}

Recall that the S-graph~$\cS+\alpha^2\cQ$ can be viewed either as a T-graph~$\cS+\alpha^2\cQ$ or as a T-graph~$\cS+\alpha^2\overline{\cQ}$ if we add positions of points~$z\in\Dm(G)$ into consideration, see {Fig.~\ref{fig:Sdef}B.} To each of these T-graphs one can associate a natural continuous time random walk on~$\Lambda(G)\cup\Dm(G)$ satisfying the same properties (i), (ii) as~$X_t$; e.g., see~\cite[Definition~4.4]{CLR1}. Denote these walks by~$X_t^\bullet$ and~$X_t^\circ$, respectively.

It is clear that restricting~$X_t^\bullet$ and~$X_t^\circ$ to~$\Lambda(G)$ (i.e., declaring the position of the process unchanged when it jumps from a vertex~$v\in\Lambda(G)$ to~$z\in\Dm(G)$ until it makes the next jump to another vertex~$v'\in\Lambda(G)$) one obtains processes that differ from~$X_t$ only by a time change. (Indeed, each of the processes~$X_t,X_t^\bullet,X_t^\circ$ has martingale coordinates on~$\cS+\alpha^2\cQ$, which defines the transition probabilities uniquely.) Let~$\rX_n$ be the corresponding \emph{discrete time} random walk on~$\Lambda(G)$.

Recall now that the invariant measure~$\mu^\bullet$ of the continuous-time random walk~$X^\bullet_t$ (resp.,~$\mu^\circ$ for~$X^\circ_t$) on~$\Lambda(G)\cup\Dm(G)$ is given by the area of triangles of the \mbox{t-embedding} $\cT=\cS$; see \mbox{\cite[Corollary~4.9 and Proposition~4.11(vi)]{CLR1}} for precise statements. The normalization (ii) of the variance of~$X_t$ implies that the average time required for a move from~$v\in\Lambda(G)$ to the next vertex~$v'\in\Lambda(G)$ is the same for each of the three processes~$X_t$, $X_t^\bullet$, $X_t^\circ$. Comparing these processes with the discrete time random walk~$\rX_n$, it is easy to see that
\begin{equation}
\label{eq:inv-mu=}
\mu(v(z)):=\mu^\bullet(v(z))+\mu^\bullet(z)=\mu^\circ(v(z))+\mu^\circ(z)=\tfrac{1}{2}\Area(\cS^\dm(z))
\end{equation}
is the invariant measure of~$X_t$ (independently of the choice of~$\alpha\in\mathbb{T}$).

\begin{definition}
Let a (continuous time) random walk~$\widetilde{X}_t=\widetilde{X}_t^{(\alpha)}$ be the time reversal (with respect to the invariant measure~\eqref{eq:inv-mu=}) of the walk~$X_t=X_t^{(\alpha)}$.
\end{definition}
The surprising relevance of backward random walks on T-graphs was pointed out in~\cite[Proposition~4.17]{CLR1}, the following proposition is nothing but the translation of this statement to the Ising model context; see also Fig.~\ref{fig:SembBdry}.
\begin{proposition}
\label{prop:PrF-mart} Let~$F$ be an s-holomorphic function defined on (a subset of)~$\cS$. Then, for each~$|\alpha|=1$, the function~$\Pr{F}{\alpha\R}$ is a martingale with respect to the backward random walk~$\widetilde{X}_t=\widetilde{X}_t^{(i\varsigma\overline{\alpha})}$ on the S-graph~$\cS-\varsigma^2\overline{\alpha}^2\cQ=\cS-i\overline{\alpha}^2\cQ$.
\end{proposition}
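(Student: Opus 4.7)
My plan is to deduce Proposition~\ref{prop:PrF-mart} from \cite[Proposition~4.17]{CLR1} via the dictionary set up in Section~\ref{sub:dimers}, rather than redo the computation from scratch in the Ising model language.

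First, I would translate the data. As discussed after Definition~\ref{def:s-hol}, the s-holomorphic function $F$ on $\cS$ corresponds to t-holomorphic functions $F_\frw^\circ=\overline{\varsigma}F$ and $F_\frb^\bullet=\overline{\varsigma}F$ on the t-embedding $\cT=\cS$ of $(\Upsilon^\circ(G)\cup\Upsilon^\bullet(G))^*$. Setting $\beta:=i\varsigma\overline{\alpha}$, one has $\beta^2=-\varsigma^2\overline{\alpha}^2=-i\overline{\alpha}^2$ since $\varsigma^2=i$, so the S-graph $\cS-i\overline{\alpha}^2\cQ$ coincides, on $\Lambda(G)$, with the T-graph $\cT+\beta^2\cO$ (equivalently $\cT+\beta^2\overline{\cO}$, since $\cO=\cQ$ is real on $\Lambda(G)$).

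Second, I would invoke \cite[Proposition~4.17]{CLR1}, which asserts that an appropriate projection of any t-holomorphic function is a martingale for the backward continuous-time random walk on the associated T-graph. The projection direction is determined by $\beta$ through the identification of \cite[Proposition~4.15]{CLR1}. The bullet point preceding Definition~\ref{def:valpha} already records the relevant translation to s-embeddings: the s-holomorphic $F$ is itself the ``gradient'' of a $\varsigma\alpha\R$-valued affine primitive on the S-graph $\cS+\alpha^2\cQ$, so the component that survives projection in the sense of \cite{CLR1} is exactly $\Pr{F}{\alpha\R}$. This pins down the projection direction as $\alpha\R$ for the parameter $\beta=i\varsigma\overline{\alpha}$.

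Third, I would reconcile the random walks. The walk $X_t$ on $\Lambda(G)$ from Section~\ref{sub:rwalks} is obtained from either of the t-embedding walks $X_t^\bullet$, $X_t^\circ$ by collapsing auxiliary visits to $\Dm(G)$; the three processes agree on $\Lambda(G)$ up to a time change since they share the same martingale coordinates on $\cS+\beta^2\cQ$, and the invariant measure~\eqref{eq:inv-mu=} was set up precisely to make this identification hold. The same holds for the time-reversals, so $\widetilde{X}_t$ is the $\Lambda(G)$-restriction of the backward t-embedding walk. Under the bijection $v^{(\beta)}:\Dm(G)\to\Lambda(G)$ of Definition~\ref{def:valpha}, the function $z\mapsto\Pr{F(z)}{\alpha\R}$ becomes precisely the function on $\Lambda(G)$ that \cite[Proposition~4.17]{CLR1} certifies as a $\widetilde{X}_t$-martingale.

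The main obstacle is purely notational: keeping track of the global prefactor $\varsigma$, choosing consistently between the t-white and t-black identifications of $F$, and matching $\cO$ versus $\overline{\cO}$ in the T-graph, all without flipping a sign. The analytic content is entirely contained in \cite{CLR1}. For readers who prefer a self-contained verification, one would instead check directly for each quad $(\cS+\beta^2\cQ)^\dm(z)$ that the three backward transition probabilities into $v^{(\beta)}(z)$ yield the correct weighted average of $\Pr{F}{\alpha\R}$ at the three other vertices; using the geometric identities derived from~\eqref{eq:cS-def}, \eqref{eq:cS(z)-def} and~\eqref{eq:cQ-def}, this balance reduces to the s-holomorphicity relation~\eqref{eq:s-hol} across the edges of $\cS$ meeting at $v^{(\beta)}(z)$.
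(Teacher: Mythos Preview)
Your proposal is correct and follows essentially the same approach as the paper: translate $F$ to a t-holomorphic function via the dictionary of Section~\ref{sub:dimers}, identify the S-graph $\cS-i\overline{\alpha}^2\cQ$ with the T-graph $\cT+\beta^2\cO$ for $\beta=i\varsigma\overline{\alpha}$, match the backward walks on $\Lambda(G)$ by collapsing the $\Dm(G)$-visits (using that the time-reversed T-graph walk can only jump from $z$ to $v(z)$), and then invoke \cite[Proposition~4.17]{CLR1}. The paper's proof makes the ``only allowed jump from $z\in\Dm(G)$ is to $v(z)$'' point explicit, which is what justifies transferring the martingale property from the T-graph walk to $\widetilde{X}_t$; you gesture at this via the bijection $v^{(\beta)}$, which is fine.
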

\begin{proof} By definition (see {Fig.~\ref{fig:Sdef}B}), the time reversal~$\widetilde{X}_t^\bullet$ (resp., $\widetilde{X}_t^\circ$) of the random walk~$X_t^\bullet$ (resp., $X_t^\circ$) on~$\cS-\varsigma^2\overline{\alpha}^2\cQ$  (resp., $\cS-\varsigma^2\overline{\alpha}^2\overline{\cQ}$) has the following property: its only allowed jump from~$z\in\Dm(G)$ is to the vertex~$v(z)\in\Lambda(G)$. (Note that this fits the definition of a t-holomorphic function~$F_\frb^\bullet$ (resp., $F_\frw^\circ$) which has the same values $\overline{\varsigma}F$ on both black (resp., white) faces of~$\cT=\cO$ corresponding to~$z$.) It is not hard to see (e.g., passing to discrete time random walks) that, up to time parameterizations, the trajectories (restricted to~$\Lambda(G)$) of the time reversals of~$X_t^\bullet$, $X_t^\circ$ have the same law as trajectories of the time reversal of~$\rX_n$. Therefore, the claim follows directly from~\cite[Proposition~4.17]{CLR1}.
\end{proof}
\begin{remark}\label{rem:associatedRW} In some situations it is convenient to view random walks~$X_t$ and~$\widetilde{X}_t$ as defined on the graph~$\Dm(G)$ rather than on~$\Lambda(G)$ (or on the corresponding \mbox{S-graphs}). Recall the relevant bijection~$v^{(\alpha)}:\Dm(G)\to\Lambda(G)$ is provided by Definition~\ref{def:valpha}. In Section~\ref{sec:RSW}, we denote these walks on~$\Dm(G)$ by~$Z_t$ and~$\widetilde{Z}_t$, respectively, and will call them \emph{forward and backward walks associated with the S-graph~$\cS-\varsigma^2\overline{\alpha}^2\cQ$.}
\end{remark}
It is worth noting that the fact that functions~$\Pr{F}{\alpha\R}$ (where~$F$ is assumed to be \mbox{s-holomorphic} on an s-embedding~$\cS$) are martingales with respect to \emph{some} directed random walk $\widetilde{Z}_t$ on~$\Dm(G)$ can be deduced from the definition~\eqref{eq:s-hol} via a simple computation similar to the proof of~\cite[Lemma~4.19]{CLR1}. For instance, if~$z_0,z_1,\ldots,z_n=z_0\in\Dm(G)$ are neighbors of~$v\in\Lambda(G)$ listed counterclockwise and~$\eta_{k+1}\in\mathbb{T}$ corresponds to the edge~$(vv_{k+1})$ separating~$z_{k}$ and~$z_{k+1}$, then the condition~\eqref{eq:s-hol} can be written as
\[
(\Re F(z_{k+1})-\Re F(z_k))\cdot\Re\eta_{k+1}+(\Im F(z_{k+1})-\Im F(z_k))\cdot\Im\eta_{k+1}=0.
\]
This implies the identity
\begin{equation}
\label{eq:ReF-RW}
\sum_{k=0}^{n-1}\biggl(\frac{\Re\eta_{k+1}}{\Im\eta_{k+1}}-\frac{\Re\eta_k}{\Im\eta_k}\biggr)\cdot\Re F(z_{k})\ =\ 0
\end{equation}
and it is easy to see that all except one of the coefficients in this sum are positive. Moreover (see also Fig.~\ref{fig:SembBdry}),
\begin{align*}
\frac{\Re\eta_{k+1}}{\Im\eta_{k+1}}-\frac{\Re\eta_k}{\Im\eta_k}<0\quad&\Leftrightarrow \quad \text{the~horizontal~axis~$\R$ lies~in~between~$\eta_k\R$~and~$\eta_{k+1}\R$}\\[-4pt]
&\Leftrightarrow\ \ {\begin{array}{l}\text{the ray $v-i\R_+$ points inside~$\cQ^\diamond(z_k)$ if~$v\in G^\bullet$}\\ \text{the ray~$v+i\R_+$ points inside~$\cQ^\diamond(z_k)$ if $v\in G^\circ$}\end{array}}\\[2pt]
&\Leftrightarrow\quad (\cS-i\cQ)(v)~\text{is~the~concave~vertex~of}~(\cS-i\cQ)^\dm(z_k).
\end{align*}
A striking feature of the identification of~$\widetilde{Z}_t$ with the time reversal of a `nice' balanced random walk~$X_t$ on an S-graph is that this allows to derive the so-called \emph{uniform crossing estimates} for~$\widetilde{Z}_t$ from those for~$X_t$, see~\cite[Section~6.3]{CLR1}.

\begin{remark}\label{rem:|F|-max-principle}
Let~$F$ be an s-holomorphic function defined on (a subset of)~$\Dm(G)$. It immediately follows from Proposition~\ref{prop:PrF-mart} that all its projections~$\Re[\,\overline{\alpha}{F}\,]$, $\alpha\in\mathbb{T}$, satisfy the maximum principle. Varying~$\alpha$, one concludes that the function~$|F-f_0|$ also satisfies the maximum principle for each constant~$f_0\in\C$.
\end{remark}

\subsection{A priori regularity theory for s-holomorphic functions}\label{sub:regularity}
We are now ready to discuss crucial a~priori regularity properties of s-holomorphic functions following the results of~\cite[Section~6]{CLR1}. Let
\[
\osc_B F\ :=\ \max_{z,z'\in B}|F(z')-F(z)|.
\]
\begin{theorem} \label{thm:F-Hol} There {exist} constants~$\beta=\beta(\kappa)>0$ and~$C=C(\kappa)>0$ such that the following estimate holds for all s-holomorphic functions~$F$ defined in a ball of radius~$R>r$ drawn over an s-embedding~$\cS$ satisfying the assumption~$\LipKd$:
\[
\osc_{\{z:\cS(z)\in B(u,r)\}} F\ \le\ C(r/R)^\beta\osc_{\{z:\cS(z)\in B(u,R)\}}F
\]
provided that~$r\ge\cst\cdot\delta$ for a constant depending only on~$\kappa$.
\end{theorem}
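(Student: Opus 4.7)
The plan is to deduce the theorem from the corresponding result for t-holomorphic functions on t-embeddings developed in~\cite[Section~6]{CLR1}. Concretely, two ingredients suffice. First, Proposition~\ref{prop:PrF-mart} says that for each $\alpha\in\mathbb{T}$ the real-valued function $\Re[\overline{\alpha}F]$ (i.e., the projection of $F$ onto $\alpha\mathbb{R}$) is a martingale with respect to the backward random walk $\widetilde{X}_t^{(i\varsigma\overline{\alpha})}$ on the S-graph $\cS-i\overline{\alpha}^2\cQ$. Second, \cite[Section~6.3]{CLR1} provides uniform crossing estimates for these backward walks under the assumption~\LipKd, and hence an elliptic Harnack inequality for their nonnegative martingales at all scales $R\ge\cst\cdot\delta$.

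Given these two inputs, the argument is a standard iterated oscillation reduction. For a fixed $\alpha$, apply Harnack separately to the nonnegative martingales $\Re[\overline{\alpha}F]-\inf_{B(u_0,R)}\Re[\overline{\alpha}F]$ and $\sup_{B(u_0,R)}\Re[\overline{\alpha}F]-\Re[\overline{\alpha}F]$ on $B(u_0,R)$. Summing the two Harnack bounds yields, after rearrangement,
\[
\osc_{\{\cS(z)\in B(u_0,R/2)\}}\Re[\overline{\alpha}F]\ \le\ \mu(\kappa)\cdot\osc_{\{\cS(z)\in B(u_0,R)\}}\Re[\overline{\alpha}F]
\]
for some constant $\mu(\kappa)\in(0,1)$, valid as soon as $R\ge 2\cst\cdot\delta$. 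Iterating on the dyadic sequence $R,R/2,R/4,\dots$ until the radius reaches $r$ (which requires $r\ge\cst\cdot\delta$) gives the $(r/R)^{\beta(\kappa)}$ bound for the oscillation of $\Re[\overline{\alpha}F]$, uniformly in $\alpha\in\mathbb{T}$. Finally, since $|w|=\sup_{\alpha\in\mathbb{T}}\Re[\overline{\alpha}w]$ for $w\in\C$, one has $\osc_B F=\sup_{\alpha\in\mathbb{T}}\osc_B\Re[\overline{\alpha}F]$, and taking the supremum over $\alpha$ upgrades the estimate from each projection to $F$ itself.

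The main difficulty---already settled in~\cite{CLR1}---is the uniform crossing estimate itself. The walks $\widetilde{X}_t^{(\alpha)}$ are strongly asymmetric: from a vertex $z\in\Dm(G)$ only one outgoing move is allowed (cf.~Fig.~\ref{fig:Sdef}C and Fig.~\ref{fig:SgraphRW}), so standard tools for reversible walks do not apply directly. The approach in~\cite{CLR1} is instead indirect: prove crossing estimates first for the forward, reversible walks~$X_t^{(\alpha)}$, whose geometry is controlled under~\LipKd\ because the triangles constituting the S-graph then satisfy quantitative non-degeneracy bounds, and subsequently transfer the estimates to~$\widetilde{X}_t^{(\alpha)}$ via the explicit invariant measure~\eqref{eq:inv-mu=} proportional to the areas of the quads~$\cS^\dm(z)$. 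A subtle point worth noting is that, in full generality, a mild non-degeneracy assumption such as~\ExpFat\ is typically also required in order to rule out pathologically thin triangles; this matches the discussion in Section~\ref{sub:further}.
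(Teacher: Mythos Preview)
Your proposal is correct and follows essentially the same route as the paper: reduce to~\cite[Proposition~6.13]{CLR1} via Proposition~\ref{prop:PrF-mart} and the uniform crossing estimates for backward walks on S-graphs, then run the standard oscillation-halving iteration on the projections~$\Re[\overline{\alpha}F]$. The paper only uses two projections ($\Re F$ and~$\Im F$) rather than all~$\alpha\in\mathbb{T}$, but this is inessential. One small correction: your final remark about~\ExpFat\ is misplaced here---Theorem~\ref{thm:F-Hol} holds under~\LipKd\ alone (see~\cite[Proposition~6.4]{CLR1} for uniform ellipticity of the forward walks under~\LipKd); the assumption~\ExpFat\ enters only later, in ruling out the second alternative of Theorem~\ref{thm:F-via-HF}.
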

\begin{proof} Since s-holomorphic functions are a particular case of t-holomorphic ones, the claim directly follows from~\cite[Proposition~6.13]{CLR1}. Roughly speaking, the idea of the proof is to consider the functions~$\Re F$ and~$\Im F$ separately, and to use Proposition~\ref{prop:PrF-mart} together with uniform crossing estimates for the backward random walks on relevant T- (or S-)graphs to control the oscillations.

The role of the assumption~\LipKd\ is two-fold. First, it guarantees that the distances on~$\cS$ and on~$\cS+\alpha^2\cQ$ are uniformly comparable {(above the scale~$\delta$).} Second, {much} more importantly, it implies the uniform ellipticity (above the scale~$\delta$) of forward random walks on T- or (S-)graphs, see~\cite[Proposition~6.4]{CLR1}.
\end{proof}
The next theorem is a more-or-less straightforward analogue of~\cite[Theorem~6.17]{CLR1} for s-holomorphic functions. However, instead of the primitive of~$F_\frw^\circ$ or~$F_\frb^\bullet$ as in~\cite{CLR1}, here we use the function~$H_F$ constructed from~$F$ via~\eqref{eq:HF-def}. Recall that such functions~$H_F$ satisfy the maximum principle (see Proposition~\ref{prop:HF-max}).

\begin{theorem} \label{thm:F-via-HF} For each~$\kappa<1$ there exist constants~$\gamma_0=\gamma_0(\kappa)>0$ and \mbox{$C_0=C_0(\kappa)>0$} such that the following alternative holds. Let~$F$ be {an} s-holomorphic function defined in a ball of radius~$r$ drawn over an s-embedding~$\cS$ satisfying the assumption~$\LipKd$. Then,
\[
\begin{array}{rcl}
\text{either}\ \max_{\{z:\cS(z)\in B(u,\frac{1}{2}r)\}}|F|^2&\le& C_0r^{-1}\cdot\osc_{\{v:\cS(v)\in B(u,r)\}}H_F\\[4pt]
\text{or}\ \max_{\{z:\cS(z)\in B(u,{\frac{3}{4}r})\}}|F|^2&{>}& \exp(\gamma_0r\delta^{-1})\cdot C_0r^{-1}\osc_{\{v:\cS(v)\in B(u,r)\}}H_F
\end{array}
\]
provided that~$r\ge\cst\cdot\delta$ for a constant depending only on~$\kappa$.
\end{theorem}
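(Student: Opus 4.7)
The plan is to exploit the fact that $H_F$ serves as a primitive of $F^2$: its increments are driven by $\tfrac12\Im(F^2 d\cS)+\tfrac12|F|^2 d\cQ$, cf.\ \eqref{eq:HF-def}. Heuristically, whenever $F$ is nearly constant at a scale $\rho$, one expects $\osc H_F\gtrsim \rho\cdot|F|^2$ on that scale, which is what will drive the first alternative.

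\textbf{Step 1 (Hölder regime near the maximum).} Let $M:=\max_{\cS(z)\in B(u,r/2)}|F|$ and pick $z_0$ realizing this maximum. By Theorem~\ref{thm:F-Hol} applied on $B(u,r)\supset B(\cS(z_0),r/2)$, one has
\[
\osc_{\cS(z)\in B(\cS(z_0),\rho)} F\ \le\ C(\rho/r)^\beta\cdot 2M,\qquad \cst\cdot\delta\le\rho\le r/2.
\]
Choose~$\rho\asymp r$ as large as possible so that the oscillation above stays below~$\tfrac12 M$. Then, on this ball, $|F-f_0|\le\tfrac12 M$ with $f_0:=F(z_0)$ satisfying $|f_0|\ge\tfrac12 M$.

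\textbf{Step 2 (Extracting $\osc H_F\gtrsim r M^2$).} Replace $F$ by~$f_0$ in~\eqref{eq:HF-def} up to a controlled error and write, for $v_0,v_1\in\Lambda(G)$ in $B(\cS(z_0),\rho)$,
\[
H_F(v_1)-H_F(v_0)\ \approx\ \tfrac12\Im\!\bigl[f_0^2(\cS(v_1)-\cS(v_0))\bigr]+\tfrac12|f_0|^2\bigl(\cQ(v_1)-\cQ(v_0)\bigr).
\]
The assumption \LipKd\ gives $|\cQ(v_1)-\cQ(v_0)|\le\kappa|\cS(v_1)-\cS(v_0)|$ with $\kappa<1$, so one can select $v_0,v_1$ at distance $\asymp\rho\asymp r$ along a direction where $\Im[f_0^2(\cS(v_1)-\cS(v_0))]$ has magnitude $\asymp\rho\cdot|f_0|^2$ and is not cancelled by the $\cQ$-contribution (which is strictly smaller in absolute value by the factor $\kappa$). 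This yields $\osc H_F\gtrsim r\cdot M^2$, i.e.\ the first alternative of the theorem.

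\textbf{Step 3 (Failure of Step~2 forces exponential growth).} If Step~2 fails, the error terms in the approximation must conspire to cancel the main term on every scale down to~$\delta$. One controls this via the martingale representation of~$\Pr{F}{\alpha\R}$ with respect to the backward random walks $\widetilde{X}_t^{(\alpha)}$ on the S-graphs $\cS-i\overline{\alpha}^2\cQ$ (Proposition~\ref{prop:PrF-mart}), combined with the uniform ellipticity of these walks above scale~$\delta$ guaranteed by~\LipKd. At each scale doubling from $\delta$ up to $r$, the cancellation required to defeat Step~2 forces an amplification of the internal amplitude of $|F|$ by a definite factor strictly greater than~$1$, depending only on~$\kappa$. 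Iterating across $\asymp r/\delta$ scales multiplies the baseline $(\osc H_F/r)^{1/2}$ by $\exp(\tfrac12\gamma_0 r/\delta)$, giving the second alternative. This parallels the corresponding dichotomy \cite[Theorem~6.17]{CLR1} for $t$-holomorphic functions on $t$-embeddings, adapted here via the Ising-specific primitive~\eqref{eq:HF-def}.

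\textbf{Main obstacle.} The delicate part is Step~3: converting the (merely polynomial) failure of the clean identity of Step~2 into an \emph{exponential} amplification rate. One must combine the maximum principle for $H_F$ (Proposition~\ref{prop:HF-max}) with the random-walk martingale representation to secure a per-scale amplification factor bounded below by a constant $c(\kappa)>1$, uniformly over all admissible geometries. The presence of the mixed form $\tfrac12\Im(F^2 d\cS)+\tfrac12|F|^2 d\cQ$ rather than a pure holomorphic primitive, together with the fact that subtracting a constant from~$F$ does \emph{not} preserve s-holomorphicity (so one cannot freely renormalize~$f_0$ to zero), is the key Ising-specific technicality to be resolved in adapting the argument of~\cite{CLR1}.
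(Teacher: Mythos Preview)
Your Steps~1--2 are essentially the right ingredients and match the paper's approach: integrate the form~\eqref{eq:HF-def} along a segment of length~$\asymp\rho$ in the direction $i\overline{f_0}^2\R$, use \LipKd\ to control the $d\cQ$ contribution, and conclude that if $\osc F$ is small enough (specifically $\le\varepsilon M$ with $\varepsilon=\tfrac18(1-\kappa)$, not $\tfrac12$) on $B(\cS(z_0),\rho)$ then $\osc H_F\gtrsim(1-\kappa)\rho M^2$. This forces $\rho\lesssim(1-\kappa)^{-1}M^{-2}\osc H_F$.

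Step~3 is where there is a genuine gap. You frame the second alternative as ``Step~2 fails due to error terms'' and then invoke the martingale representation on S-graphs (Proposition~\ref{prop:PrF-mart}) together with a vague ``amplification at each scale doubling''. Neither ingredient is needed, and the mechanism you sketch would only give $\asymp\log(r/\delta)$ doublings, hence polynomial rather than exponential growth. The paper's argument is much more direct: one \emph{iterates} Steps~1--2. Assuming the first alternative fails, i.e.\ $M_0^2>C_0r^{-1}\osc H_F$, the bound from Step~2 says $\osc F$ cannot stay below $\varepsilon M_0$ on any ball of radius $r_0:=\max\{3(1-\kappa)^{-1}M_0^{-2}\osc H_F,\,\cst\cdot\delta\}$; applying Theorem~\ref{thm:F-Hol} \emph{in reverse} (enlarging the ball by a fixed factor $A$) then gives $\osc F\ge 3M_0$ on $B(\cS(z_0),Ar_0)$, so there exists $z_1$ with $|F(z_1)|\ge 2M_0$ and $|\cS(z_1)-\cS(z_0)|\le Ar_0$. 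Repeat with $M_1=2M_0$, etc. The step sizes $r_n$ shrink geometrically (as $M_n^{-2}\asymp 4^{-n}$) until they hit the floor $\cst\cdot\delta$, after which each step has size $\asymp\delta$ while $|F|$ still doubles. Since $\sum_n r_n=O(r)$, the sequence needs $\gtrsim r/\delta$ steps at the $\delta$-floor to exit $B(u,r)$, giving $|F|\ge 2^{\gamma_0 r/\delta}M_0$ and hence the second alternative. No random walks, no maximum principle for $H_F$, and no subtraction of constants from $F$ are required.
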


\begin{remark} The first alternative is a standard Harnack-type estimate of the `gradient' of the function~$H_F$ via its maximum, similar to the estimate of the gradient of a continuous harmonic function. The second one describes a pathological scenario when the function~$F$ has exponentially big oscillations. Unfortunately, we do not know how to rule out this scenario using only the assumption~\LipKd; this is why we introduce an additional mild assumption~\ExpFat\ below.
\end{remark}

\begin{proof} Theorem~\ref{thm:F-Hol} implies the existence of two constants~$A=A(\kappa)>1$ and \mbox{$r_0=r_0(\kappa)>0$} depending only on~$\kappa$ and such that
\begin{equation}
\label{eq:oscF-A}
\osc_{\{z:\cS(z)\in B(w,r)\}}F\ \le\ \tfrac{1}{24}(1-\kappa)\cdot\osc_{\{z:\cS(z)\in B(w,Ar)\}}F\ \ \text{if}\ \ r\ge r_0\delta.
\end{equation}

Let~$C_0={32}(1-\kappa)^{-1}A$. Assume that~$\osc_{\{v:\cS(v)\in B(u,r)\}}H_F=1$ and that \mbox{$M_0^2:=|F(z_0)|^2> C_0r^{-1}$} at a point~$z_0\in\cS^{-1}(B(u,\frac{1}{2}r))$. We claim that one can iteratively construct a sequence of points~$z_0,z_1,\ldots$ that satisfy the following conditions:
\begin{equation}
\label{eq:zn->zn+1}
\begin{array}{rcl}
M_{n+1}:=|F(z_{n+1})|&\ge& 2M_n,\\[4pt]
|\cS(z_{n+1})-\cS(z_n)|&\le&{A}\cdot \max\{{2}(1-\kappa)^{-1}M_n^{-2},r_0\delta\}.
\end{array}
\end{equation}

Indeed, assume that~$z_n$ is already constructed and that~$r\ge r_0\delta$ is such that
\[
\osc_{\{z:|\cS(z)-\cS(z_n)|\le r\}}F\ \le\ \varepsilon M_n,\quad \text{where}\quad \varepsilon:=\tfrac{1}{8}(1-\kappa)>0.
\]
{Since~$\varepsilon\le \frac12$, this inequality yields~$\Re[\,(F(z)/F(z_n))^2\,]\ge (1-\varepsilon)^2$} for all~$z$ such that~$|\cS(z)-\cS(z_n)|\le r$. {We now use definition~\eqref{eq:HF-def} of the function~$H_F$ and integrate the differential form} $\tfrac{1}{2}(\Im(F^2d{\mathcal S})+\Re(|F|^2d\cQ))$ along the segment of length~$2r$ centered at~$\cS(z_n)$ and going in the direction~$i(\overline{F(z_n)})^2\R$. (Recall that this form can be viewed as defined in the complex plane in a piecewise constant manner and that~$|d\cQ|\le|d\cS|$.) {This gives the} estimate
\begin{align*}
1\ \ge\ \osc_{\{z:|\cS(z)-\cS(z_n)|\le r\}}H_F\ &\ge\ {2r\cdot\tfrac12}\big((1-\varepsilon)^2M_n^2-(\kappa\cdot M_n^2+(2\varepsilon\!+\!\varepsilon^2)M_n^2\big)\\
&{=\ r\cdot(1-\kappa-4\varepsilon)M_n^2}\ =\ r\cdot \tfrac{1}{2}(1-\kappa)M_n^2\,.
\end{align*}
where the first term comes from the minimal possible contribution of~$\Im(F^2d\cS)$ while the second is the maximal possible contribution of~$\Re(|F|^2d\cQ)$. To avoid a contradiction, we must have~$r\le 2(1-\kappa)^{-1}M_n^{-2}$, which means that
\[
\osc_{\{z:|\cS(z)-\cS(z_n)|\le r_n\}}F\ \ge\ \tfrac{1}{8}(1-\kappa)M_n,\ \ \text{where}\ \ r_n:=\max\{{ 2}(1-\kappa)^{-1}M_n^{-2},r_0\delta\}.
\]
Using~\eqref{eq:oscF-A} we obtain an estimate $\osc_{\{z:|\cS(z)-\cS(z_n)|\le A r_n\}}F\ge 3M_n$, which guarantees the existence of {a} point~$z_{n+1}$ satisfying~\eqref{eq:zn->zn+1}.

It is easy to see that
\[
|{\cS}(z_{n+1})-{\cS}(z_n)|\ \le\ \max\{4^{-n}|{\cS}(z_1)-{\cS}(z_0)|\,,\,{A}r_0\delta\}\ \le\ \max\{4^{ -n-2}r\,,\,{ A}r_0\delta\}.
\]
Therefore, the sequence of points~$\cS(z_n)$ has to make at least~${\frac{1}{8}r}\cdot ({A}r_0\delta)^{-1}$ steps to leave the disc~$B(u,{\frac34}r)$ staring from~${\cS(z_0)\in}B(u,\frac{1}{2}r)$. Since the value~$|F(z_n)|$ at least doubles on each step, the proof is complete if we set~$\gamma_0:={\frac{1}{8}A^{-1}}r_0^{-1}\log 2$.
\end{proof}


It is easy to see that the second alternative from Theorem~\ref{thm:F-via-HF} is not compatible with the assumption~\Unif\ provided that~$\delta$ is small enough. (Indeed, in this case one has a trivial estimate~$|F(z)|^2=O(\delta^{-1}\max_{p=0,1,q=0,1}|H(v^\bullet_p(z))-H(v^\circ_q(z))|)$.) Clearly, one can assume much less to rule out this `exponential blow-up' scenario. The following assumption is a variation of a similar condition from~\cite{CLR1}.
\begin{assumption}\label{assump:ExpFat} We say that a family of proper s-embeddings~$\cS^\delta$ {with~$\delta\to 0$} satisfies the assumption~\ExpFat\ {(or, more accurately, \ExpFatPrime)} on a set~$U\subset\C$ if {there exist auxiliary scales~$\delta'=\delta'(\delta)$ such that $\delta'\to 0$ as~$\delta\to 0$ and the following holds:}
\begin{center} if one removes all quads~$(\cS^\delta)^\dm(z)$ with {radii}~$r_z\ge \delta \exp(-\delta'\delta^{-1})$ from~$U$, then\\ { each of the remaining (vertex-)connected components has diameter at most~$\delta'$.}
\end{center}
\end{assumption}
In the general case, there is no uniform notion of the size of quads~$\cS^\dm(z)$, thus~$\delta$ simply denotes a scale starting from which the assumption~\LipKd\ is fulfilled. Under the assumption~$\Unif$ this is, up to a multiplicative constant, just the same parameter and~\ExpFat\ holds with a huge margin: {indeed, in this case one has~$r_z\ge\mathrm{cst}\cdot\delta$ for \emph{all} quads and hence~$\delta'$ can be chosen to be a multiple of~$\delta$.}

\begin{corollary}\label{cor:H-Lip} Let~$\kappa<1$ and a sequence of s-embeddings~$\cS^\delta$ {with~$\delta\to 0$} satisfies both assumptions~\LipKd\ and~\ExpFat\ in a disc~$U=B(u,r)$. Assume that~$F^\delta$ is an \mbox{s-holomorphic} function on~$\cS^\delta$ and that~${\osc_{\{v:\cS(v)\in U\}}} H_{F^\delta}\le M$ for all~$\delta$. Then, the following uniform estimate holds {for sufficiently small $\delta$}:
\begin{equation}
\label{eq:F-via-HF-Harnack}
|F^\delta(z)|^2\ {\le\ C_0r^{-1}M}\quad \text{if}\ \ \cS^\delta(z)\in B(u,\tfrac{1}{2}r),
\end{equation}
{where $C_0=C_0(\kappa)>0$ is the constant from Theorem~\ref{thm:F-via-HF}.} In particular, the functions~$H_{F^\delta}$ are uniformly Lipschitz on compact subsets of~$U$.
\end{corollary}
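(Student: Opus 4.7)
My plan is to apply Theorem~\ref{thm:F-via-HF} in the ball $B(u,r)$ with $\osc_{B(u,r)}H_{F^\delta}\le 2M$, and to rule out its second (``exponential blow-up'') alternative for all sufficiently small~$\delta$. The first alternative reads exactly as $|F^\delta(z)|^2\le 2C_0 r^{-1}M$ on $\{z:\cS^\delta(z)\in B(u,r/2)\}$, which is~\eqref{eq:F-via-HF-Harnack}; and the Lipschitz estimate on $H_{F^\delta}$ follows immediately from the definition~\eqref{eq:HF-def}, since $|d\cQ^\delta|\le|d\cS^\delta|$ makes $|dH_{F^\delta}|\le|F^\delta|^2|d\cS^\delta|$, so $|F^\delta|^2=O(M/r)$ transfers directly to a Lipschitz constant of the same order on compact subsets of~$U$.

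The first ingredient is an a priori fat-quad bound. Assumption \LipKd\ forces every edge of $\cS^\delta$ to satisfy $|\cS^\delta(v^\bullet(c))-\cS^\delta(v^\circ(c))|\le\delta$: otherwise~\eqref{eq:LipKd} combined with the identity $|\cQ^\delta(v^\bullet(c))-\cQ^\delta(v^\circ(c))|=|\cS^\delta(v^\bullet(c))-\cS^\delta(v^\circ(c))|$ would give $1\le\kappa<1$. Hence $|\cX^\delta(c)|^2\le\delta$, and combined with $|X(c)|^2=|H_{F^\delta}(v^\bullet(c))-H_{F^\delta}(v^\circ(c))|\le 2M$ from~\eqref{eq:HX-def}, the reconstruction formula of Corollary~\ref{cor:F-from-X} yields, for every quad $z$ whose vertices lie in $B(u,r)$,
\[
|F^\delta(z)|^2\ \le\ C(\kappa)\cdot\delta M/r_z^2,
\]
where $r_z$ is the inscribed-circle radius of $(\cS^\delta)^\dm(z)$. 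In particular, on a ``fat'' quad with $r_z\ge\exp(-\gamma/\delta)$ this reads $|F^\delta(z)|^2\le C\delta M\exp(2\gamma/\delta)$. I fix $\gamma:=\gamma_0 r/8$ in \ExpFat, so that $2\gamma<\gamma_0 r/2$. For $\delta$ small every non-fat vertex-connected component inside $B(u,r)$ has diameter $<r/100$, so each $z_0$ with $\cS^\delta(z_0)\in B(u,4r/5)$ admits a fat quad $z^*$ at distance $\varepsilon(\delta)\to 0$ with $\cS^\delta(z^*)\in B(u,r)$; Theorem~\ref{thm:F-Hol} applied in $B(\cS^\delta(z_0),r/8)\subset B(u,r)$ then gives
\[
|F^\delta(z_0)-F^\delta(z^*)|\ \le\ C(8\varepsilon(\delta)/r)^\beta\cdot\max_{B(\cS^\delta(z_0),r/8)}|F^\delta|.
\]

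Combining the maximum principle for $|F^\delta|$ (Remark~\ref{rem:|F|-max-principle} with $f_0=0$) with a bootstrap iteration of the above Hölder estimate along a short chain of nested discs $B(u,4r/5-k\cdot r/100)\subset B(u,r)$ upgrades the pointwise fat-quad bound into the uniform estimate $\max_{\cS^\delta(z)\in B(u,r/2)}|F^\delta(z)|^2\le C'(\kappa)\,\delta M\exp(\gamma_0 r/(2\delta))$. For $\delta$ small this is strictly smaller than $\exp(\gamma_0 r/\delta)\cdot C_0 r^{-1}M$, contradicting the second alternative of Theorem~\ref{thm:F-via-HF}; hence only the first alternative survives and~\eqref{eq:F-via-HF-Harnack} holds. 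The main obstacle is precisely this bootstrap: one must simultaneously keep every enlarged disc inside $B(u,r)$ (so that the a priori fat-quad bound from the previous paragraph remains available) and ensure that the geometric factor $C(8\varepsilon(\delta)/r)^\beta$ is uniformly strictly less than $1/2$, so that the iteration closes. A conceptually cleaner alternative, which bypasses the bootstrap, is to run the doubling-sequence construction from the proof of Theorem~\ref{thm:F-via-HF} starting at a point where $|F^\delta|^2$ is exponentially large: under~\ExpFat, its steps of size at most $Ar_0\delta$ force the trajectory to visit a fat quad within finitely many steps, and the polynomial-in-$\delta$ fat-quad bound is then immediately incompatible with the exponential doubling of $|F^\delta|$ along the sequence.
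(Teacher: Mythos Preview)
Your strategy—bound $|F^\delta|^2$ a~priori at fat quads, use \ExpFat\ to extend to all quads, and thereby rule out the second alternative of Theorem~\ref{thm:F-via-HF}—is the paper's. Two remarks.

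\emph{The comparison with the second alternative has a gap.} That alternative reads $\max|F^\delta|^2\ge\exp(\gamma_0 r/\delta)\,C_0r^{-1}\osc_{B(u,r)}H_{F^\delta}$, not $\ldots C_0r^{-1}M$; since $\osc H_{F^\delta}\le 2M$ may be arbitrarily small, your uniform upper bound being below $\exp(\gamma_0 r/\delta)\,C_0r^{-1}M$ does \emph{not} yield a contradiction. The fix is immediate: use $|X(c)|^2\le\osc_{B(u,r)}H_{F^\delta}$ rather than the weaker $|X(c)|^2\le 2M$ in the fat-quad bound, so that it reads $|F^\delta(z)|^2\le C\,r_z^{-2}\,\osc H_{F^\delta}$. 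After extension this gives $\max|F^\delta|^2\le Ce^{2\gamma/\delta}\osc H_{F^\delta}$, and your choice $2\gamma=\gamma_0 r/4<\gamma_0 r$ then cleanly contradicts the second alternative for small~$\delta$ (the case $\osc H_{F^\delta}=0$ being trivial).

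\emph{The extension step is done more simply in the paper.} Instead of a H\"older bootstrap over a chain of nested balls, the paper applies the maximum principle for~$|F^\delta|$ (Remark~\ref{rem:|F|-max-principle} with $f_0=0$) \emph{once}, directly on each non-fat vertex-connected component. Under \ExpFat\ with the chosen~$\gamma$, any such component meeting $B(u,\tfrac12r)$ lies, for small~$\delta$, entirely inside~$B(u,r)$ and is bordered by fat quads; hence $|F^\delta|$ on it is bounded by the fat-quad value. This single application replaces your iterated use of Theorem~\ref{thm:F-Hol} and sidesteps the bookkeeping you flag as ``the main obstacle''. (Your ``cleaner alternative'' via the doubling sequence is also more delicate than stated: steps of size $O(\delta)$ do not by themselves force the trajectory to exit a non-fat component in $O(1)$ moves, since the trajectory can oscillate and \LipKd\ alone gives no lower bound on quad sizes.)
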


{
\begin{remark} \label{rem:F-via-HF-Harnack}
In fact, to prove the uniform estimate~\eqref{eq:F-via-HF-Harnack} there is no need to assume that~$r$ remains fixed as~$\delta\to 0$: the proof given below only requires that $r\ge \mathrm{cst}(\kappa)\cdot\max\{\delta,\delta'\}$ for a constant~$\mathrm{cst}(\kappa)$ depending on~$\kappa$ only.
\end{remark}}

\begin{proof} {Let~$r\ge 4\max\{\gamma_0^{-1},1\}\cdot \delta'$ and $r\ge C_1\delta$, where
$\gamma_0=\gamma_0(\kappa)$ is the constant from Theorem~\ref{thm:F-via-HF} and the constant~$C_1=C_1(\kappa)>0$ is chosen so that
\[
\delta r^{-1}\log(C_0\delta r^{-1})\ge -\tfrac{1}{2}\gamma_0\quad \text{if}\quad r\ge C_1\delta\,.
\]}
It follows from Corollary~\ref{cor:F-from-X} and from the formula~\eqref{eq:rz=} that
\[
|F^{\delta}(z)|^2\ \le\ \sin^2(2\theta_z)r_z^{-2}\,\cdot\!\!{\max_{p,q\in\{0,1\}}|\cX^\delta(c_{pq}(z))|^2(X^\delta(c_{pq}(z)))^2\ \le\ r_z^{-2}\delta M.}
\]
{If $r_z\ge \delta\exp(-\delta'\delta^{-1})$, this crude estimate implies
\[
|F^\delta(z)|^2\ \le\ \delta^{-1}\exp(2\delta'\delta^{-1})M\
\le\ \exp(\gamma_0r\delta^{-1})\cdot C_0r^{-1}M
\]
since the last inequality is equivalent to saying that~$2\delta'\le (\gamma_0+\delta r^{-1}\log(C_0\delta r^{-1}))\cdot r$.

Now let us consider a point~$\cS^\delta(z_0)\in B(u,\frac{3}{4}r)$. Assumption~\ExpFat\ guarantees that $\cS(z_0)$ is surrounded by a circuit of quads $\cS^\delta(z)\in B(u,r)$ for which the estimate given above holds: otherwise, the vertex-connected component of~$\cS^\delta(z_0)$ remained after removing all quads with~$r_z\ge \delta\exp(-\delta'\delta^{-1})$ would have diameter greater than~$\frac{1}{4}r\ge \delta'$. Applying the maximum principle for the absolute values of s-holomorphic functions (see Remark~\ref{rem:|F|-max-principle}) we conclude that
\[
|F^\delta(z)|^2\ \le\ \exp(\gamma_0r\delta^{-1})\cdot C_0r^{-1}M\ \ \text{everywhere in~$B(u,\tfrac34r)$}
\]
and thus rule out the second (pathological) alternative in Theorem~\ref{thm:F-via-HF}. }

As the functions~$H_{F^\delta}$ can be obtained from~$F^\delta$ by integrating a piecewise constant differential form~\eqref{eq:HF-def} directly in~$\C$ (and not just along edges of~$\cS^\delta$), the a~priori Harnack estimate~\eqref{eq:F-via-HF-Harnack} implies the uniform Lipschitzness of~$H_{F^\delta}$.
\end{proof}
\begin{remark}\label{rem:F-precomp} In the setup of Corollary~\ref{cor:H-Lip}, the functions~$F^\delta$, $\delta\to 0$, form a precompact family in the topology of uniform convergence on compact subsets of~$B(u,r)$. Indeed, these functions are uniformly bounded and $\beta$-H\"older on scales above~$\cst(\kappa)\cdot \delta$ due to Theorem~\ref{thm:F-Hol}.
Thus, the precompactness of~$\{F^\delta\}$ follow from a version of the Arzel\`a-Ascoli theorem: {if a subsequence of~$\{F^\delta\}$ converges pointwise, say, on the set of all rational points in $B(u,r)$, then the limit is $\beta$-H\"older on \emph{all} scales and the uniform convergence on $\overline{B(u,r_1)}$ with $r_1<r$ follows by exactly the same arguments as if all~$F^\delta$ were equicontinuous on all scales.}
\end{remark}

\subsection{Subsequential limits of s-holomorphic functions} \label{sub:shol-limits}
We now discuss subsequential limits of s-holomorphic functions, both under the assumption~$\Unif$\ and in the general context.

Assume that proper s-embeddings~$\cS^\delta$ satisfy the assumption~\LipKd\ and that their images cover a fixed ball~$U=B(u,r)\subset\C$. { Since the mappings $\cQ^\delta\circ(\cS^\delta)^{-1}$ (extended from vertices of~$\cS^\delta$ to~$U$ in a piecewise affine way) are $1$-Lipschitz,} one can always find a sequence~$\delta_k\to 0$ and a Lipschitz function~$\vartheta:U\to\R$ such that
\begin{equation}
\label{eq:Q-to-theta}
\cQ^\delta\circ (\cS^\delta)^{-1}\to\vartheta\quad \text{uniformly~on~compact~subsets~of}~U~\text{as}~\delta=\delta_k\to 0.
\end{equation}
Moreover, the function~$\vartheta$ is~$\kappa$-Lipschitz (with~$\kappa<1$) on all scales as the same is true for each of the functions~$\cQ^\delta\circ(\cS^\delta)^{-1}$ on scales above~$\delta$. Clearly, under the assumption~{\Qflat}\ there is no need to pass to a subsequence and~$\vartheta\equiv0$.
\begin{proposition} \label{prop:f-hol} In the setup of Corollary~\ref{cor:H-Lip}, Remark~\ref{rem:F-precomp} and~\eqref{eq:Q-to-theta}, let $f:U\to\C$ be a subsequential limit of s-holomorphic functions~$F^\delta$ on~$\cS^\delta$. Then,
\begin{equation}
\label{eq:closed-form}
\text{the~form}\ \ \overline{\varsigma}fdz+\varsigma\overline{f}d\vartheta\ \ \text{is~closed~in}\ \ U,
\end{equation}
where~$\varsigma=e^{i\frac{\pi}{4}}$ is chosen in~\eqref{eq:def-eta}. (Recall that~$\vartheta$ is a Lipschitz function, so contour integrals of~\eqref{eq:closed-form} over smooth contours are well defined, e.g., via Riemann sums.)

With a consistent choice of additive constants, the associated functions $H_{F^\delta}$ converge to $h:=\frac{1}{2}\int(\Re(\overline{\varsigma}^2f^2dz)+|f|^2d\vartheta)$ uniformly on compact subsets of~$U$.

In particular, if~$\vartheta\equiv 0$, then $f$ is holomorphic in~$U$ and~$h=\tfrac{1}{2}\int\Im(f^2dz)$.
\end{proposition}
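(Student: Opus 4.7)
\textbf{The overall plan} is to deduce closedness of the complex $1$-form \eqref{eq:closed-form} from a corresponding \emph{discrete} closedness identity which holds, in fact, for each fixed $\delta$; the convergence $H_{F^\delta}\to h$ then follows by combining the Lipschitz regularity from Corollary~\ref{cor:H-Lip} with a parallel limit argument applied to the real closed discrete form $dH_{F^\delta}$.

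\textbf{Step 1: discrete closedness.} First I would introduce the complex-valued discrete $1$-form
$$
\omega^\delta\ :=\ \tfrac{1}{2}\bigl(\,\overline{\varsigma}\,F^\delta\,d\cS^\delta\,+\,\varsigma\,\overline{F^\delta}\,d\cQ^\delta\,\bigr)
$$
on oriented edges $(v^\circ(c),v^\bullet(c))$ of $\Lambda(G^\delta)$, with $F^\delta$ evaluated at a quad adjacent to the edge. Using $(\cX^\delta(c))^2=\cS^\delta(v^\bullet(c))-\cS^\delta(v^\circ(c))$, $|\cX^\delta(c)|^2=\cQ^\delta(v^\bullet(c))-\cQ^\delta(v^\circ(c))$, together with \eqref{eq:X-from-F} of Proposition~\ref{prop:shol=3term}, the $\omega^\delta$-increment across such an edge equals $\cX^\delta(c)\,X^\delta(c)$ and is therefore independent of which of the two adjacent quads is used to evaluate $F^\delta$---this is precisely a repackaging of the s-holomorphicity condition~\eqref{eq:s-hol}. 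Summing the four signed increments around a single quad $z\in\Dm(G^\delta)$, one may use the \emph{common} value $F^\delta(z)$ throughout, at which point the sum collapses to
$$
\tfrac{1}{2}\overline{\varsigma}F^\delta(z)\cdot\oint_{\partial z}d\cS^\delta\,+\,\tfrac{1}{2}\varsigma\overline{F^\delta(z)}\cdot\oint_{\partial z}d\cQ^\delta\ =\ 0,
$$
both telescoping contour sums vanishing by the two identities in the Remark following Definition~\ref{def:intro-semb} (equivalently, because $\cS^\delta$ and $\cQ^\delta$ are well-defined single-valued functions on $\Lambda(G^\delta)$). Since $U$ is simply connected and every discrete loop in $\{v:\cS^\delta(v)\in U\}$ bounds a union of quads for $\delta$ small enough, $\omega^\delta$ is discretely closed.

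\textbf{Step 2: passage to the continuum, and convergence of $H_{F^\delta}$.} For a piecewise smooth loop $\gamma\subset U$, take polygonal approximations $\gamma^\delta\subset\Lambda(G^\delta)$ within Hausdorff distance $O(\delta)$ of $\gamma$ and of comparable combinatorial length. The trivial edge-wise bound $|d\cQ^\delta|\le|d\cS^\delta|$ gives a uniform-in-$\delta$ bound on the total variation of $\cQ^\delta$ along $\gamma^\delta$; together with uniform convergence $F^\delta\to f$ on compacts and $\cQ^\delta\circ(\cS^\delta)^{-1}\to\vartheta$, standard Riemann--Stieltjes convergence yields $\oint_{\gamma^\delta}\omega^\delta\to\oint_\gamma\tfrac{1}{2}(\overline{\varsigma}f\,dz+\varsigma\overline{f}\,d\vartheta)$, and Step~1 forces the limit to vanish, proving closedness of~\eqref{eq:closed-form}. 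Repeating the same limit argument with $\omega^\delta$ replaced by the \emph{real} closed form $dH_{F^\delta}=\tfrac{1}{2}(\Im((F^\delta)^2 d\cS^\delta)+|F^\delta|^2 d\cQ^\delta)$ -- which is also closed around each quad by the same telescoping reason applied to $H_{F^\delta}$ (note that $H_{F^\delta}$ is likewise a single-valued function on $\Lambda(G^\delta)\cup\Dm(G^\delta)$ by Definition~\ref{def:HX-def}) -- and combining with the Arzel\`a--Ascoli precompactness provided by Corollary~\ref{cor:H-Lip} (fix additive constants by $H_{F^\delta}(v_0^\delta)\to 0$ at a basepoint), every subsequential limit coincides with the primitive $h=\tfrac{1}{2}\int(\Im(f^2 dz)+|f|^2 d\vartheta)$. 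Uniqueness upgrades subsequential convergence to full uniform convergence $H_{F^\delta}\to h$.

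\textbf{Step 3: the case $\vartheta\equiv 0$ and the main obstacle.} If $\vartheta\equiv 0$, the form~\eqref{eq:closed-form} reduces to $\tfrac{1}{2}\overline{\varsigma}f\,dz$, whose closedness is the Cauchy--Riemann equation $\overline{\partial}f=0$; the formula for~$h$ specializes as stated. The sole delicate point in the argument is the Riemann--Stieltjes passage against $d\cQ^\delta$ in Step~2: because $\vartheta$ is merely Lipschitz, $d\vartheta$ exists only as a bounded Borel measure, so one cannot differentiate termwise. The clean way to justify this is to fix a Lipschitz parameterization of~$\gamma$, regard $\cQ^\delta\circ(\cS^\delta)^{-1}$ as a uniformly convergent sequence of piecewise affine functions, and appeal to dominated convergence under the edge-wise bound $|d\cQ^\delta|\le|d\cS^\delta|$ to conclude $\int\overline{F^\delta}\,d\cQ^\delta\to\int\overline{f}\,d\vartheta$ along the contour. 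No hypothesis beyond \LipKd\ and~\eqref{eq:Q-to-theta} is needed anywhere in the argument.
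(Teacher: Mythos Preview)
Your proof is correct and follows the same two-step strategy as the paper: establish that the discrete form $\omega^\delta=\tfrac12(\overline\varsigma F^\delta d\cS^\delta+\varsigma\overline{F^\delta}d\cQ^\delta)$ is closed, then pass to the limit. The paper outsources the first step to \cite[Proposition~6.15]{CLR1}, sketching it via the bilinear closedness identity~\eqref{eq:FwFbdT-closed} specialized to $F_\frb^\bullet=\overline\varsigma F^\delta$ paired with the constant $F_\frw^\circ=\overline\alpha$ (and then varying $\alpha\in\mathbb T$); your direct computation of the edge increment as $\cX^\delta(c)\,X^\delta(c)$ is a neat self-contained alternative that makes the independence from the choice of adjacent quad transparent. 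For the continuum passage the paper takes a slightly cleaner route than your lattice-path approximation: rather than choosing $\gamma^\delta\subset\Lambda(G^\delta)$, it regards $\omega^\delta$ and $dH_{F^\delta}$ as \emph{piecewise-constant} $1$-forms on $U\subset\C$ (constant on each triangle of the t-embedding $\cT=\cS^\delta$) and integrates along the fixed continuous curve~$\gamma$ itself; closedness then gives $\oint_\gamma\omega^\delta=0$ exactly, and convergence reduces to uniform convergence of the bounded integrand against the fixed measure $|dz|$ on~$\gamma$. This sidesteps the need to control the Euclidean length of $\gamma^\delta$, which under \LipKd\ alone (no \Unif) is not entirely automatic.
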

\begin{proof} See~\cite[Proposition~6.15]{CLR1} for the proof of~\eqref{eq:closed-form}, {which is based upon the fact that the form~$\rI_\C[\overline{\varsigma}F^\delta]=\overline{\varsigma}F^\delta d\cS^\delta+\varsigma\overline{F}{}^\delta d\cQ^\delta$ mentioned in the end of Section~\ref{sub:dimers} is closed.}
The convergence of the associated functions~$H_F$ can be easily obtained from the formula~\eqref{eq:HF-def}, where the form~$\tfrac{1}{2}(\Re(\overline{\varsigma}^2(F^\delta)^2d\cS^\delta+|F^\delta|^2d\cQ^\delta))$ is viewed as defined in~$U\subset\C$ (and not just on edges of~$\cS^\delta$) in a piecewise constant manner.
\end{proof}

Though we do not handle the general case in this paper because of the current lack of technical tools, we nevertheless feel {it is} worth mentioning a rather unexpected appearance of the \emph{Lorentz geometry} (in the Minkowski space~$\R^{2,1}$) in the planar Ising model context, which {provides a useful interpretation of the condition~\eqref{eq:closed-form}} from Proposition~\ref{prop:f-hol}. To lighten the notation, in what follows we assume that~$\varsigma=1$ (this amounts to replacing~$f$ by~$\overline{\varsigma} f$); thus the condition~\eqref{eq:closed-form} states that the differential form~$fdz+\overline{f}d\vartheta$ is closed.

{At first, assume} for simplicity that the $\kappa$-Lipschitz, $\kappa<1$, function~$z\mapsto \vartheta=\vartheta(z)$ is smooth. Let
\begin{equation}
\label{eq:zeta-param}
{\mathbb{D}\ni\zeta}\ \mapsto\ (z,\vartheta)\in{U}\times\R\,{\subset\C\times\R}\,\cong\,\R^{2,1}
\end{equation}
be an orientation-preserving \emph{conformal parametrization} of the space-like surface $(z,\vartheta(z))_{z\in U}$ equipped with a (positive) metric induced from the ambient Minkowski space. Considering the scalar product (in~$\R^{2,1}$) of infinitesimal increments of the mapping~\eqref{eq:zeta-param} one sees that this mapping preserves angles if and only if
\begin{equation}
\label{eq:conf-param}
z_\zeta\overline{z}_\zeta\ =\ (\vartheta_\zeta)^2\quad \text{and}\quad |z_\zeta|>|\vartheta_\zeta|\ge |\overline{z}_\zeta|\,,
\end{equation}
where~$z_\zeta:=\partial z/\partial\zeta$ (similarly,~$\overline{z}_\zeta$ and~$\vartheta_\zeta$) stands for the Wirtinger derivative;
{the second condition corresponds to the fact that the surface~$(z,\vartheta(z))_{z\in U}$ is space-like and the parametrization~$\zeta$ is orientation-preserving. Assuming that the function~$f$ is differentiable (in~$\zeta$), one can write the condition~\eqref{eq:closed-form} as}
\[
f_{\bar\zeta}\cdot z_{\zeta}+\overline{f}_{\bar\zeta}\cdot \vartheta_\zeta\ =\ f_\zeta\cdot z_{\bar{\zeta}}+\overline{f}_\zeta\cdot \vartheta_{\bar \zeta}\,.
\]
In fact, it is not hard to see that this is only possible if both sides \emph{vanish} since the right-hand side is the complex conjugate of the left-hand one multiplied by~$\vartheta_\zeta/z_\zeta$. Thus,~\eqref{eq:closed-form} can be further rewritten as
\begin{equation}
\label{eq:f-Beltrami}
f_{\bar\zeta}\cdot (z_{\zeta})^{1/2}+\overline{f}_{\bar\zeta}\cdot ({\overline z}_\zeta)^{1/2}\ =\ 0\,;
\end{equation}
note that both square roots~$(z_\zeta)^{1/2}$ and~$(\overline{z}_\zeta)^{1/2}=\vartheta_\zeta\cdot (z_\zeta)^{-1/2}$ are well defined in~$U$ up to the common sign. Let us now denote
\begin{equation}
\label{eq:f-to-phi-change}
\phi\ :=\ f\cdot (z_\zeta)^{1/2}+\overline{f}\cdot (\overline{z}_\zeta)^{1/2}\,.
\end{equation}
A straightforward computation relying upon~\eqref{eq:conf-param} shows that
\[
\phi_{\bar\zeta}\ =\ m\cdot \overline{\phi}\,,\quad \text{where}\quad m=\frac{z_{\zeta\bar\zeta}}{2(z_\zeta z_{\bar \zeta})^{1/2}}= \frac{(|\vartheta_{\zeta\bar\zeta}|^2-|z_{\zeta\bar\zeta}|^2)^{1/2}}{2(|z_\zeta|-|z_{\bar\zeta}|)}.
\]
Finally, note that~$m=\frac{1}{2}H\ell$, where $\ell=|z_\zeta|-|\overline{z}_\zeta|=(|z_\zeta|^2+|\overline{z}_\zeta|^2-2|\vartheta_\zeta|^2)^{1/2}$ is the metric element of the surface~$(z,\vartheta(z))\in\R^{2,1}$ in the parametrization~\eqref{eq:zeta-param} and
\[
H\ =\ \pm\|(z_{\zeta\bar\zeta}{\,,\,}\vartheta_{\zeta\bar\zeta})\|_{\R^{2,1}}\cdot \ell^{-2}
\]
is the \emph{mean curvature} of this surface at a given point. In other words, $f$ satisfies the condition~\eqref{eq:closed-form} if and only if $\phi$ satisfies the massive Dirac (or Cauchy-Riemann) equation~$\partial_{\bar\zeta}\phi=m\overline{\phi}$, where the mass~$m=\frac 12 H\ell$ admits a fully geometric interpretation in terms of the space-like surface~$(z,\theta(z))_{z\in U}$ in the Minkowski space~$\R^{2,1}$.

\begin{remark} \label{rem:massive-iso}
The preceding discussion explains how massive {holomorphic functions} appear in the s-embeddings context. It is also known that these { functions} naturally appear in a near-critical model on regular lattices; e.g., see~\cite{park,park-iso} or~\cite{CIM-massive}. We refer the reader to~\cite[Section~3.3]{CIM-massive} where the link between the two pictures is explained. Namely, one starts with {an appropriate} pair of massive s-holomorphic functions $(\cF_1^\delta\,;\cF_\mathrm{i}^\delta)$ on an isoradial grid~$\Lambda^\delta$ and constructs an s-embedding~$\cS^\delta=\cS^\delta_\cX$ of~$\Lambda^\delta$, where $\cX^\delta=\cX^\delta_1-i\cX^\delta_\mathrm{i}$ is the corresponding solution of~\eqref{eq:3-terms}. {As $\delta\to 0$, the functions~$\cF_1^\delta$ converge (on compacts) to the function~$f_1(\zeta):=\exp(-2m\Im\zeta)$, the functions~$\cF_\mathrm{i}^\delta$ converge to~$f_\mathrm{i}(\zeta):=i\exp(2m\Im \zeta)$ and the graphs $(\cS^\delta;\cQ^\delta)$ converge to the surface
\[
(\Re\zeta\,;\tfrac{1}{4m}\sinh(4m\Im\zeta)\,;\tfrac{1}{4m}(1-\cosh(4m\Im \zeta)))_{\zeta\in\C}\,\subset\,\R^{2,1},
\]
which has a constant mean curvature~$m$ and of which~$\zeta$ is a conformal (and, moreover, an isometric) parametrization. In this example, even \emph{before} passing to the limit~$\delta\to 0$, the construction of the discrete surface~$(\cS^\delta;\cQ^\delta)$ out of a pair of massive s-holomorphic functions $(\cF^\delta_1,\cF^\delta_\mathrm{i})$ defined in the original plane~$\Lambda^\delta$ can be viewed as
a discrete Weierstrass-type parametrization of this surface; see~\cite[Eq.~(3.11)]{CIM-massive}.} Moreover, \cite[Proposition~3.21]{CIM-massive} provides a discretization of the formula~\eqref{eq:f-to-phi-change} that links \mbox{s-holomorphic} functions on~$\cS^\delta$ and massive s-holomorphic functions on~$\Lambda^\delta$.
\end{remark}

{The arguments given above do not directly apply to \emph{non-smooth} Lipschitz functions~$\vartheta$ as in this case even the notion of a `conformal uniformization' of the surface~$(z,\vartheta(z))_{z\in U}\subset\R^{2,1}$ requires a clarification. A possible way to bypass this issue is to construct an appropriate quasi-conformal homeomorphism~$z\mapsto\zeta$ such that~\eqref{eq:zeta-param} holds almost everywhere.

More precisely, if we assume that~$\zeta_{\bar{z}}=\mu\zeta_z$ or, equivalently, $\overline{z}_\zeta=-\overline{\mu}z_\zeta$ for a certain Beltrami coefficient~$\mu\in L^\infty(U)$ with~$\|\mu\|_\infty<1$, then the condition~\eqref{eq:conf-param} reads as
\[
-\overline{\mu}\cdot z_\zeta^2\ =\ (\vartheta_zz_\zeta+\vartheta_{\bar{z}}\overline{z}_\zeta)^2\ =\ (\vartheta_z-\overline{\mu}\vartheta_{\bar{z}})^2\cdot z_\zeta^2\,.
\]
(Note that the derivatives $\vartheta_z$ and~$\vartheta_{\bar{z}}=\overline{\vartheta_z}$ of the real-valued Lipschitz function~$\vartheta$ exist almost everywhere (in~$z$) and that quasi-conformal homeomorphisms send zero measure sets to zero measure sets; e.g., see~\cite[Theorem~3.3.7]{astala-iwaniec-martin-book}.) Thus, in order to find~$\mu$ we need to solve the quadratic equation
\[
\vartheta_z^2+\overline{\mu}(1-2|\vartheta_z|^2)+\overline{\mu}^2\vartheta_{\bar{z}}^2\ =\ 0.
\]
The fact that the function~$\vartheta(z)$ is $\kappa$-Lipschitz yields~$|\vartheta_z|\le \frac{1}{2}\kappa$ almost everywhere and one can find a solution of the quadratic equation given above by requiring that
\begin{equation}
\label{eq:mu-def}
-\frac{\overline{\mu}}{1+|\mu|^2}\ =\ \frac{\vartheta_z^2}{1-2|\vartheta_z|^2}\quad \text{with}\quad |\mu|<\frac{\frac14k^2}{1-\frac34k^2}<1.
\end{equation}
To summarize, for (non-smooth) $\kappa$-Lipschitz, $\kappa<1$, functions~$\vartheta:U\to\R$ one can construct a `conformal parametrization'~\eqref{eq:conf-param} simply by finding a homeomorphic solution~$U\ni z\mapsto \zeta\in \mathbb{D}$ of the Beltrami equation $\zeta_{\bar{z}}=\mu\zeta_z$, where $\mu$ is given by~\eqref{eq:mu-def}. The existence of such a $W^{1,2}_\mathrm{loc}$ quasi-conformal homeomorphism easily follows from the Ahlfors--Bers theorem (e.g., see~\cite[Theorem~5.3.4]{astala-iwaniec-martin-book}). Namely, one sets~$\mu:=0$ outside~$U$, constructs a normalized solution of the Beltrami equation $\zeta_{\bar{z}}=\mu\zeta_z$ in the full complex plane and then post-compose it with a conformal uniformization of the simply connected image of~$U$ onto the unit disc~$\mathbb{D}$.

Another source of difficulties in the non-smooth setup comes from the fact that the derivatives (either in~$z$ or in~$\zeta$) of subsequential limits of s-holomorphic functions are not necessarily well defined. 
As has been suggested in~\cite[Section~5]{mahfouf-thesis}, this issue can be -- at least partly -- overcome by working with the primitives of the differential forms~$fdz+\overline{f}d\vartheta$ (see Proposition~\ref{prop:f-hol}) rather than with functions~$f$ themselves. It is straightforward to check that these primitives satisfy the conjugate Beltrami equation $\overline{g}_\zeta=\nu g_\zeta$ with $\nu=\vartheta_\zeta/z_\zeta$; cf. the equation~\eqref{eq:f-Beltrami}, which \emph{formally} says that~$f_\zeta=-\overline{\nu}\overline{f}{}_\zeta$. Moreover, it is easy to see that~$g\in W^{1,2}_\mathrm{loc}$ as a function of~$\zeta\in\mathbb{D}$. (E.g., one has $g_\zeta=(f+\overline{f}\vartheta_z)z_\zeta+\overline{f}\vartheta_{\bar{z}}\overline{z}_\zeta$, the derivatives $z_\zeta$ and $\overline{z}_\zeta$ are in $L^2(\mathbb{D})$ due to the area principle, the derivatives~$\vartheta_z$ and~$\vartheta_{\bar{z}}$ are in~$L^\infty$, and $f$ is continuous.) Together, the conjugate Beltrami equation \emph{and} the a priori $W^{1,2}_\mathrm{loc}$ regularity of $g$ allow one to apply powerful analytic tools to the study of such subsequential limits; see~\cite[Section~5.5]{astala-iwaniec-martin-book} for a general discussion and~\cite[Section~5]{mahfouf-thesis} for an application of these techniques to a proof of Theorem~\ref{thm:RSW-selfdual} on general s-embeddings.}

\smallskip

We do \emph{not} elaborate the case \mbox{$\vartheta\not\equiv 0$} below and leave it for the future study.
Let us also mention that the preceding discussion suggests that there should also exist a natural interpretation of s-embeddings, s-holomorphicity and, even more importantly, an interpretation of discrete differential operators from the forthcoming Section~\ref{sec:operators} in the language of the \emph{discrete Lorentz geometry.} {We hope that our paper stimulates research progress in this direction.}

\section{Discrete complex analysis on s-embeddings}\label{sec:operators}
\setcounter{equation}{0}

\subsection{Basic differential operators associated to s-embeddings} Let $\cS=\cS_\cX$ be a proper s-embedding of a weighted planar graph $(G,x)$. In this section we introduce several discrete differential operators associated to $\cS$ and list their basic properties. For simplicity, below we always assume that $G$ has the topology of the plane (or that we consider only  \emph{bulk} vertices of $G$ in the disc setup). The following definition appeared in~\cite[Section~6]{Ch-ICM18}.
\begin{definition} \label{def:opaS}
For a function $H$ defined on (a subset of) $\Lambda$, we introduce the operator $\opaS$ as follows: for $z\in\Dm(G)$,
\[
[\opaS H](z)\,:=\,\frac{\mu_z}{4}\biggl[\frac{H(v^\bullet_0)}{\cS(v^\bullet_0)-\cS(z)}+\frac{H(v^\bullet_1)}{\cS(v^\bullet_1)-\cS(z)}- \frac{H(v^\circ_0)}{\cS(v^\circ_0)-\cS(z)}-\frac{H(v^\circ_1)}{\cS(v^\circ_1)-\cS(z)}\biggr],
\]
where the factor $\mu_z$ is chosen so that $[\opaS \overline{\cS}](z)=1$. We also set
\[
[\paS H](z)\ :=\ \overline{[\opaS \overline{H}](z)}.
\]
\end{definition}

\begin{remark} \label{rem:opaS-isorad}
In the isoradial context, one can easily see (or deduce from the next lemma) that~$\opaS$, $\paS$ coincide with the standard discrete Cauchy--Riemann operators~$\opa_{\iso}^{\Lambda}$, $\pa_{\iso}^{\Lambda}$; e.g., see~\cite[Section~2.4]{ChSmi1} for their definitions.
\end{remark}

The next lemma shows that the operators~$\opaS$ and~$\paS$ indeed can be viewed as discretizations of the differential operators~$\opa=\tfrac{1}{2}(\pa_x+i\pa_y)$ and~$\pa=\tfrac{1}{2}(\pa_x-i\pa_y)$.

\begin{lemma} \label{lem:opaS=} The following identities are fulfilled:
\[
\opaS 1=\opaS \cS=\opaS \cQ=0,\qquad \opaS\overline{{\cS}}=1.
\]

Moreover, under the assumption~\Unif\ one has
\[
[\opaS\phi](z)=[\opa\phi](z)+O(\delta\cdot\max\nolimits_{\cS^\dm(z)}|D^2\phi|)
\]
for each $C^2$-smooth function~$\phi$ defined on the quad~$\cS^\dm(z)\subset\C$.

Similar statements hold for the operator~$\paS$.
\end{lemma}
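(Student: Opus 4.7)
My plan is to verify the four algebraic identities first and then deduce the asymptotic formula by Taylor expansion of $\phi$ at the point $\cS(z)$.

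For the identities: $\opaS\overline{\cS}=1$ is built into the normalization of $\mu_z$, while $\opaS\cS=0$ reduces to $\opaS 1=0$ after writing $\cS(v)=\cS(z)+(\cS(v)-\cS(z))$ and observing that the four constant contributions $\pm 1$ cancel. To verify $\opaS 1=0$ I would substitute the explicit expressions from~\eqref{eq:cS(z)-def} for the denominators $\cS(v)-\cS(z)$. Working with chain-lifted corner values $a=\cX(c_{00})$, $p=\cX(c_{01})$, $q=\cX(c_{10})$, $b=\cX(c_{11})$ on the double cover $\Upsilon^\times$, the propagation equation~\eqref{eq:3-terms} yields $a=p\cos\theta_z+q\sin\theta_z$ and $b=p\sin\theta_z-q\cos\theta_z$, the sign in the second relation coming from the fact that $\Upsilon^\times$ branches over the face $z$, so the lifts consistent around $v_1^\bullet$ differ from chain lifts by a sign. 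A direct algebraic manipulation of the four fractions then collapses the sum to zero.

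The identity $\opaS\cQ=0$ is the main computational obstacle. After normalizing $\cQ(v_0^\circ)=0$, definition~\eqref{eq:cQ-def} gives $\cQ(v_0^\bullet)=|a|^2$, $\cQ(v_1^\bullet)=|q|^2$, $\cQ(v_1^\circ)=|a|^2-|p|^2$. Substituting these values and the denominators from~\eqref{eq:cS(z)-def} into the definition of $\opaS$ reduces the claim, after multiplying through by $4pb\,\mu_z^{-1}$, to the single identity
\[
\frac{b\bar a - p\bar q}{\cos\theta_z}\,+\,\frac{|a|^2-|p|^2}{\sin\theta_z}\ =\ 0.
\]
Eliminating $a$ and $b$ via the propagation relations writes each of these two fractions as $\pm\bigl[(p\bar p - q\bar q)\sin\theta_z - (p\bar q + q\bar p)\cos\theta_z\bigr]$ with opposite signs, so the sum vanishes. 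The delicate point is bookkeeping the signs induced by the lifts to $\Upsilon^\times$; any inconsistency between the chain lifts and the lifts consistent around $v_p^\bullet$ or $v_q^\circ$ destroys this cancellation.

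Once these identities are in place, the asymptotic formula follows by Taylor expanding
\[
\phi(\cS(v))\,=\,\phi(\cS(z))+\pa\phi(\cS(z))\cdot(\cS(v)-\cS(z))+\opa\phi(\cS(z))\cdot\overline{\cS(v)-\cS(z)}+R(v),
\]
with $|R(v)|=O(|\cS(v)-\cS(z)|^2\max|D^2\phi|)$. Applying $\opaS$ termwise and using the four identities gives $[\opaS\phi](z)=\opa\phi(\cS(z))+[\opaS R](z)$. Under \Unif, both the inradius $r_z$ and the edge lengths of $\cS^\dm(z)$ are of order $\delta$, so $|\cS(v)-\cS(z)|\asymp\delta$. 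Using $\opaS 1=0$ to rewrite $\mu_z^{-1}=\tfrac{1}{4}\sum\pm\overline{\cS(v)-\cS(z)}/(\cS(v)-\cS(z))$, a short computation factors this quantity as $-e^{-i(\arg a+\arg b+\arg p+\arg q)}\sin(\arg a-\arg b)\sin(\arg p-\arg q)$; this expression is bounded below in modulus by a constant depending only on the angle bound $\theta_0$ in \Unif, so $|\mu_z|=O(1)$. Combining these estimates yields $|[\opaS R](z)|=O(\delta\max|D^2\phi|)$, as claimed. The corresponding statements for $\paS$ follow from the conjugation relation $\paS H=\overline{\opaS\overline H}$.
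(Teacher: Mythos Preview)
Your proof is correct and follows essentially the same overall strategy as the paper: verify the four identities algebraically using~\eqref{eq:cS(z)-def} and the propagation equation, then apply Taylor expansion for the asymptotic. Your treatment of $\opaS 1=0$ and $\opaS\cS=0$, and the Taylor argument, match the paper's.

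The one place where you differ is $\opaS\cQ=0$. You normalize $\cQ(v_0^\circ)=0$, compute the four values of $\cQ$ explicitly, and reduce (after careful sign-tracking on the double cover) to the single identity
\[
\frac{b\bar a - p\bar q}{\cos\theta_z}+\frac{|a|^2-|p|^2}{\sin\theta_z}=0,
\]
which you then verify by substitution. The paper instead extends $\cQ$ to the quad center $z$ and observes the closed-form ratio
\[
\frac{\cQ(v^\bullet_p)-\cQ(z)}{\cS(v^\bullet_p)-\cS(z)}=\tfrac{1}{2}\Bigl[\frac{\overline{\cX(c_{p0})}}{\cX(c_{p0})}+\frac{\overline{\cX(c_{p1})}}{\cX(c_{p1})}\Bigr],
\]
and similarly for $v^\circ_q$; the four fractions then cancel in pairs without any computation. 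Your route works but is more laborious; the paper's identity makes the cancellation structural and avoids the delicate sign bookkeeping you flag.

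Your explicit factorization of $\mu_z^{-1}$ (which the paper only derives later, in the proof of Proposition~\ref{prop:DeltaS}) is a nice addition: it makes the bound $|\mu_z|=O(1)$ under \Unif\ transparent, whereas the paper simply asserts that the coefficients of $\opaS$ are $O(\delta^{-1})$. Note, though, that the lower bound on $|\mu_z^{-1}|$ depends on all the constants $r_0,R_0,\theta_0$ in \Unif\ (via $r_z\asymp\delta$), not only on $\theta_0$.
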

\begin{proof} Recall that, with a proper choice of signs (as in~\eqref{eq:cS(z)-def}), one has
\[
\cS(v^\bullet_p)-\cS(z)=\pm\cX(c_{p0})\cX(c_{p1})\cos\theta_z, \quad \cS(v^\circ_q)-\cS(z)= \pm\cX(c_{0q})\cX(c_{1q})\sin\theta_z.
\]
Assume that the corners~$c_{11}$\,---\,$c_{01}$\,---\,$c_{00}$\,---\,$c_{10}$ are chosen consecutively on~$\Upsilon^\times(G)$ as in~\eqref{eq:rz=}. Then, the propagation equation~\eqref{eq:3-terms} reads as
\[
\cX(c_{01})=\cX(c_{00})\cos\theta_z+\cX(c_{11})\sin\theta_z,\quad \cX(c_{10})=\cX(c_{00})\sin\theta_z-\cX(c_{11})\cos\theta_z.
\]

A simple computation shows that
\[
(\cX(c_{10})\cX(c_{11})-\cX(c_{01})\cX(c_{00}))\sin\theta_z+(\cX(c_{00})\cX(c_{10})+\cX(c_{01})\cX(c_{11}))\cos\theta_z=0
\]
and so~$[\opaS 1](z)=0$. As a corollary, we also have~$[\opaS S](z)=[\opa(\cS(\cdot)-\cS(z))](z)=0$.

\smallskip

Further, the identity~$[\opaS\cQ](z)=[\opaS(\cQ(\cdot)-\cQ(z))](z)=0$ holds since
\[
\frac{\cQ(v^\bullet_p)-\cQ(z)}{\cS(v^\bullet_p)-\cS(z)}= \frac{1}{2}\biggl[\frac{\overline{\cX(c_{p0})}}{\cX(c_{p0})}+\frac{\overline{\cX(c_{p1})}}{\cX(c_{p1})}\biggr],\quad
\frac{\cQ(v^\circ_q)-\cQ(z)}{\cS(v^\circ_q)-\cS(z)}= \frac{1}{2}\biggl[\frac{\overline{\cX(c_{0q})}}{\cX(c_{0q})}+\frac{\overline{\cX(c_{1q})}}{\cX(c_{1q})}\biggr].
\]

Finally, if $\phi$ is a $C^2$--smooth function on the quad~$\cS^\dm(z)\subset\C$, then
\[
\phi({v})=\phi(z)+(\cS(v)-\cS(z))[\pa\phi](z)+(\overline{\cS(v)}-\overline{\cS(z)})[\opa\phi(z)] +O(\delta^2\max\nolimits_{\cS^\dm(z)}|D^2\phi|)
\]
for all~$v\sim z$. This implies the last claim since~$\opaS 1=\opaS\cS=0$, $\opaS\overline{\cS}=1$ and the coefficients of the operator~$\opaS$ are~$O(\delta^{-1})$ under the assumption~\Unif.
\end{proof}

\begin{corollary} \label{cor:opaS-HF}
Let a function~$H_F$ be constructed from an s-holomorphic function~$F$ as in Section~\ref{sub:HF-def}; see~\eqref{eq:HF-def}. Then,
\[
\paS H_F=\tfrac{1}{4i}F^2\quad \text{and}\quad \opaS H_F=-\tfrac{1}{4i}\overline{F}{}^2.
\]
\end{corollary}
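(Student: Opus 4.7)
The strategy is a direct computation: write out the operator $\opaS H_F$ using the increments of $H_F$ produced by~\eqref{eq:HF-def} (equivalently, by Definition~\ref{def:HX-def} applied to the spinor $X$ of Proposition~\ref{prop:shol=3term}), and watch the factors collapse. Since Lemma~\ref{lem:opaS=} gives $\opaS 1=0$, we may replace $H_F(v)$ by $H_F(v)-H_F(z)$ in each of the four summands defining $\opaS$. Using the labelling of Fig.~\ref{fig:Sdef}A together with~\eqref{eq:cS(z)-def} and~\eqref{eq:HX-def}, both the numerators and denominators factor through $\cos\theta_z$, $\sin\theta_z$, and we obtain
\begin{equation*}
\frac{H_F(v^\bullet_p)-H_F(z)}{\cS(v^\bullet_p)-\cS(z)}=Y(c_{p0})Y(c_{p1}),\qquad \frac{H_F(v^\circ_q)-H_F(z)}{\cS(v^\circ_q)-\cS(z)}=Y(c_{0q})Y(c_{1q}),
\end{equation*}
where $Y(c):=X(c)/\cX(c)=\overline{\varsigma}\cdot\Pr{F(z)}{\eta_c\R}=\tfrac{1}{2}\overline{\varsigma}(F(z)+\eta_c^2\overline{F(z)})$ by~\eqref{eq:X-from-F}.

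The four terms then group as
\begin{equation*}
4\mu_z^{-1}\cdot[\opaS H_F](z)=\bigl(Y(c_{01})-Y(c_{10})\bigr)\bigl(Y(c_{00})-Y(c_{11})\bigr).
\end{equation*}
The $F(z)$-part of $Y(c)$ is independent of $c$ and therefore drops out of each difference, leaving only the $\overline{F(z)}$-part. Thus the product equals
$\tfrac{1}{4}\overline{\varsigma}^{\,2}(\eta_{01}^2-\eta_{10}^2)(\eta_{00}^2-\eta_{11}^2)\cdot\overline{F(z)}^{\,2}$.

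To evaluate the remaining geometric factor it suffices to use the normalization $\opaS\overline{\cS}=1$ that defined $\mu_z$ in Definition~\ref{def:opaS}: applying the same factorization trick to $\overline{\cS}$ in place of $H_F$, and noting that $\overline{\cX(c)}/\cX(c)=\overline{\varsigma}^{\,2}\eta_c^2$ (which follows from $\cS(v^\bullet(c))-\cS(v^\circ(c))=\cX(c)^2$ together with~\eqref{eq:def-eta}), we get
\begin{equation*}
1=[\opaS\overline{\cS}](z)=\tfrac{\mu_z}{4}\overline{\varsigma}^{\,4}(\eta_{01}^2-\eta_{10}^2)(\eta_{00}^2-\eta_{11}^2).
\end{equation*}
Dividing the two identities kills the $\eta$-factors and leaves $[\opaS H_F](z)=\tfrac{1}{4}\varsigma^{2}\overline{F(z)}^{\,2}=-\tfrac{1}{4i}\overline{F(z)}^{\,2}$, since $\varsigma^2=i$. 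The companion identity for $\paS H_F$ follows by complex conjugation because $H_F$ is real, by the very definition $\paS H=\overline{\opaS\overline{H}}$.

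The only mildly subtle point is the bookkeeping of signs on the double cover $\Upsilon^\times(G)$: the lifts of $c_{00},c_{01},c_{10},c_{11}$ must be chosen consistently so that both the products $\cX(c_{p0})\cX(c_{p1})$ entering $\cS(v^\bullet_p)-\cS(z)$ and the products $X(c_{p0})X(c_{p1})$ entering $H_F(v^\bullet_p)-H_F(z)$ carry the same signs that already appear in~\eqref{eq:cS(z)-def} and~\eqref{eq:HX-def}; with the convention $c_{11}$--$c_{01}$--$c_{00}$--$c_{10}$ adopted in~\eqref{eq:rz=} this is automatic, and the signs cancel in each of the four ratios above. Everything else is mechanical.
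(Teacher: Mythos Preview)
Your proof is correct but proceeds by a genuinely different route from the paper's. The paper observes that, by the very definition~\eqref{eq:HF-def}, on each quad $z$ the function $H_F$ agrees (up to an additive constant) with the explicit linear combination
\[
\tfrac{1}{2}\bigl[\tfrac{1}{2i}(F(z))^2\cS(\cdot)-\tfrac{1}{2i}(\overline{F(z)})^2\overline{\cS}(\cdot)+|F(z)|^2\cQ(\cdot)\bigr],
\]
and then applies Lemma~\ref{lem:opaS=} ($\paS 1=\paS\overline{\cS}=\paS\cQ=0$, $\paS\cS=1$) as a black box to read off the answer in one line. Your argument instead unwinds everything through the spinor $X$ and the corner values $Y(c)=X(c)/\cX(c)$, factoring the four-term sum into a product of differences and then matching against the analogous factorization for $\overline{\cS}$ to eliminate $\mu_z$ and the $\eta$-factors. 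Your approach is more hands-on and makes the algebraic structure (the bilinear factorization) explicit, at the cost of essentially re-deriving the normalization of $\mu_z$ that Lemma~\ref{lem:opaS=} already encodes; the paper's approach is shorter and more conceptual, leveraging the fact that $\opaS$ was \emph{designed} to annihilate $1,\cS,\cQ$ and send $\overline{\cS}\mapsto 1$.
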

\begin{proof} It follows from the definition of the function~$H_F$ that its values at four vertices of a given quad~$z\in\Dm(G)$ coincide with the values of the function
\[
\tfrac{1}{2}\big[\tfrac{1}{2i}(F(z))^2\cS(\cdot)-\tfrac{1}{2i}(\overline{F(z)})^2\overline{\cS}(\cdot)+|F(z)|^2\cQ(\cdot)\big]
\]
up to an additive constant depending on~$z$. As~$\paS 1 =\paS\overline{\cS}=\paS\cQ=0$,~\mbox{$\paS\cS=1$,} this implies the identity~$\paS H_F=\tfrac{1}{4i}F^2$. The computation of~$\opaS H_F$ is similar.
\end{proof}

We now introduce another pair of operators $\pao$, $\opao$ (acting on functions defined on~$\Dm(G)$ rather than on~$\Lambda(G)$) associated to s-embeddings.

\begin{definition} \label{def:omega} For a vertex $c$ of the medial graph $\Upsilon(G)$, denote
\[
\alpha(c)=\alpha_\cS(c)\ :=\ \arg(\cS(v^\bullet(c))-\cS(v^\circ(c)))=2\arg\cX(c).
\]
Further, let an exact 1-{form} $d\omega=d\omega_\cS$ on oriented edges $(cd)$ of the medial graph $\Upsilon(G)$ be defined by
\[
d\omega((cd))=d\omega_\cS((cd))\ :=\ e^{i\alpha_\cS(d)}-e^{i\alpha_\cS(c)}.
\]
\end{definition}
\begin{remark} \label{rem:omega=S-S/Q-Q}
If a corner~$c\in\Upsilon(G)$ corresponds to an edge~$(vv')$ of~$\Lambda(G)$, then
\[
e^{i\alpha_\cS(c)}\ =\ ({\cS(v')-\cS(v)})\,/\,({\cQ(v')-\cQ(v)}),
\]
independently of whether~$v\in G^\bullet$ and~$v'\in G^\circ$ or~$v\in G^\circ$ and~$v'\in G^\bullet$.
\end{remark}
\begin{definition} \label{def:opao} Let a function $F$ be defined on (a subset of) $\Dm(G)$. For $v\in \Lambda(G)$, set
\[
[\opao F](v)\ :=\ \frac{1}{2i}\sum_{z_k\in\Dm(G):\ z_k\sim v}F(z_k)d\omega((z_kv)^*),
\]
where the edge $(z_kv)^*$ of $\Upsilon(G)$ is assumed to be oriented so that~$z_k$ stays {on} the right. Similarly, let
\[
[\pao F](v)\ :=\ -\frac{1}{2i}\sum_{z_k\in\Dm(G):\ z_k\sim v}F(z_k)
\overline{d\omega((vz_k)^*)}.
\]
\end{definition}
\begin{remark} \label{rem:opao-isorad}
In the isoradial context, the operator~$\opao$ (and similarly~$\pao$) coincides with the standard discrete Cauchy-Riemann operator~$\opa_{\iso}^{\Dm}$ {(e.g., see~\cite[Eq.~(3.2)]{ChSmi2} for the definition)} \emph{up to the sign} $-\Id^\bullet+\Id^\circ$ and up to a positive factor of order~$\delta^{-1}$ coming from the area of the cell at~$v$. The mismatch in the signs on~$G^\bullet$ and~$G^\circ$ is caused by the fact that we orient all the edges~$c$ of the s-embedding from~$v^\circ(c)$ to~$v^\bullet(c)$ in the definition of~$\alpha(c)$.
\end{remark}
Recall that~$r_z$ is the radius of the circle inscribed into a quad~$\cS^\dm(z)$ given by~\eqref{eq:rz=} and that~$\mu_z$ is the pre-factor in the Definition~\ref{def:opaS} of the operator~$\opaS$. The next lemma provides a link between the operators~$\opaS$, $\paS$ and~$\opao$, $\pao$.
\begin{lemma} \label{lem:pao-top=opaS} The following identities hold:
\[
\pao^\top=4U^{-1}R\opaS,\qquad \opao^{\,\top}=4\overline{U}{}^{-1}R\paS,
\]
where~$R=\diag(r_z)_{z\in\Dm(G)}$ and~$U=\diag(\mu_z)_{z\in\Dm(G)}$.
\end{lemma}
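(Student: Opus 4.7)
The identity $\pao^\top = 4U^{-1}R\opaS$ is an equality of matrices acting between functions on~$\Lambda(G)$ and functions on~$\Dm(G)$, so my plan is to fix~$z \in \Dm(G)$ and a neighboring vertex~$v$, and compare the coefficient of~$H(v)$ in $[\pao^\top H](z)$ and in $4\mu_z^{-1}r_z[\opaS H](z)$. From Definition~\ref{def:opaS} the latter coefficient equals $\epsilon_v r_z/(\cS(v)-\cS(z))$, where $\epsilon_v=+1$ for $v\in G^\bullet$ and $\epsilon_v=-1$ for $v\in G^\circ$; transposing Definition~\ref{def:opao} gives the former as $-\tfrac{1}{2i}\overline{d\omega((vz)^*)}$. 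Thus everything reduces to proving
\[
d\omega((vz)^*)\ =\ \frac{2i\epsilon_v\,r_z}{\overline{\cS(v)-\cS(z)}}
\]
for each edge of the graph $\Lambda(G)\cup\Dm(G)$ connecting a quad and one of its four adjacent vertices.

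For this I will compute $d\omega$ directly. The edge $(vz)^*$ of $\Upsilon(G)$ runs between the two corners $c,c'$ simultaneously adjacent to~$v$ and~$z$; following the convention of Definition~\ref{def:opao}, orient it so that~$z$ lies on the right. Using $e^{i\alpha(c)} = \cX(c)^2/|\cX(c)|^2$ (Definition~\ref{def:omega} and Remark~\ref{rem:omega=S-S/Q-Q}), a single line of algebra yields
\[
d\omega((vz)^*)\ =\ e^{i\alpha(c')}-e^{i\alpha(c)}\ =\ \frac{2i\,\Im[\cX(c)\overline{\cX(c')}]}{\overline{\cX(c)\cX(c')}}\,.
\]
The numerator is one of $\pm r_z/\cos\theta_z$ or $\pm r_z/\sin\theta_z$ by the four identities in~\eqref{eq:rz=}, while the denominator is, up to the same sign and trigonometric factor, $\overline{\cS(v)-\cS(z)}$ by~\eqref{eq:cS(z)-def}. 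In all four cases the quotient rearranges into $2i\epsilon_v r_z/\overline{\cS(v)-\cS(z)}$.

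The only delicate point is the bookkeeping on the double cover $\Upsilon^\times(G)$. Fix the sheet on which $c_{11}\mbox{---}c_{01}\mbox{---}c_{00}\mbox{---}c_{10}$ are mutually connected (as in~\eqref{eq:rz=}); three of the four pairs required to apply~\eqref{eq:cS(z)-def} --- namely $(c_{00},c_{01})$, $(c_{00},c_{10})$, and $(c_{01},c_{11})$ --- lie on this sheet, but the remaining pair $(c_{10},c_{11})$ closes the loop around~$z$ and consequently passes to the opposite sheet, giving $\cS(v_1^\bullet)-\cS(z) = -\cX(c_{10})\cX(c_{11})\cos\theta_z$ when both $\cX$'s are evaluated on our chosen sheet. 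This extra minus conspires with the sign flip in the corresponding identity of~\eqref{eq:rz=} (namely $\Im[\cX(c_{10})\overline{\cX(c_{11})}]\cos\theta_z = -r_z$, derivable from the propagation equation), so the final formula is uniform in~$v$ and depends only on its color through~$\epsilon_v$. I expect this sign tracking to be the main (really only) obstacle; once the lifts are pinned down the computation is elementary.

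The second identity $\opao^\top = 4\overline U^{-1}R\paS$ will follow from the first by entrywise complex conjugation. Indeed, the definitions of $\pao$ and $\opao$ in Definition~\ref{def:opao} differ exactly by a conjugate bar on~$d\omega$ together with the opposite overall sign, so the matrix of~$\opao$ is the entrywise conjugate of the matrix of~$\pao$; similarly $\paS = \overline{\opaS}$ as matrices by Definition~\ref{def:opaS}. Since $R$ is real while $\mu_z$ need not be (one easily sees from the normalization $\opaS\overline\cS=1$ that $\mu_z^{-1}=\tfrac14\sum_v\epsilon_v e^{-i(\alpha(c)+\alpha(c'))}$ is complex in general), conjugation turns $U^{-1}$ into $\overline U^{-1}$ and leaves $R$ unchanged, producing the stated identity.
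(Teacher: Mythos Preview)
Your proof is correct and follows essentially the same route as the paper: identify the matrix entry of $4U^{-1}R\opaS$ at $(z,v)$ as $\epsilon_v r_z/(\cS(v)-\cS(z))$, match it with the corresponding entry $-\tfrac{1}{2i}\overline{d\omega((vz)^*)}$ of $\pao^\top$, and deduce the second identity by entrywise conjugation. The paper records the key equality $D_{v,z}=-\tfrac{1}{2i}\overline{d\omega((vz)^*)}$ as ``a simple computation'' without further detail, whereas you actually carry it out via~\eqref{eq:rz=} and~\eqref{eq:cS(z)-def}; note only that in your displayed identity for $d\omega((vz)^*)$ the numerator should read $2i\,\Im[\cX(c')\overline{\cX(c)}]$ rather than $2i\,\Im[\cX(c)\overline{\cX(c')}]$ (with $c\to c'$ oriented so that $z$ is on the right), a harmless sign slip that your subsequent case analysis in fact corrects.
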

\begin{proof} For~$z\sim v$, where $z\in\Dm(G)$ and $v\in\Lambda(G)$, denote
\[
D_{v,z}:=\frac{r_z}{\cS(v)-\cS(z)}\ \ \text{if}\ \ v\in G^\bullet\quad \text{and}\quad D_{v,z}:=-\frac{r_z}{\cS(v)-\cS(z)}\ \ \text{if}\ \ v\in G^\circ,
\]
these quantities are nothing but the matrix coefficients of the operator~$4U^{-1}R\opaS$. A simple computation shows that, in both cases,~$D_{v,z}$ can be also written as
\[
D_{v,z}=-\frac{1}{2i}\overline{d\omega((vz)^*)}\,,
\]
where the edge~$(vz)^*$ of the medial graph~$\Upsilon(G)$ is oriented so that~$v$ is {on} the left.
Therefore, the matrix entries of the operators~$4U^{-1}R\opaS$ and~$\pao^\top$ coincide. The second identity can be obtained by taking the complex conjugation.
\end{proof}

The following operator~$\DS$ appeared in~\cite[Section~6]{Ch-ICM18} under the name \emph{s-Laplacian}.
\begin{proposition} \label{prop:DeltaS} The following identities hold:
\[
\DS\ :=\ -4\pao\opaS\ =\ -4\opao\paS\ =\ \DS^\top.
\]
Moreover, for~$v^\bullet\in G^\bullet$ and~$v^\circ\in G^\circ$, respectively, one has
\begin{align*}
[\DS H](v^\bullet)\ =\ &\textstyle \sum_{v_k^\bullet\sim v^\bullet}a_{v^\bullet v_k^\bullet}(H(v_k^\bullet)-H(v^\bullet))+
\sum_{v^\circ_k\sim v^\bullet}b_{v^\bullet v_k^\circ}(H(v_k^\circ)-H(v^\bullet)),\\
[\DS H](v^\circ)\ =\ &\textstyle \sum_{v_k^\bullet\sim v^\circ}b_{v^\circ v_k^\bullet}(H(v_k^\bullet)-H(v^\circ))-
\sum_{v^\circ_k\sim v^\circ}a_{v^\circ v_k^\circ}(H(v_k^\circ)-H(v^\circ)),
\end{align*}
with symmetric coefficients~$a_{vv'}=a_{v'v}>0$ and $b_{vv'}=b_{v'v}\in\R$. In particular, $a_{v_0^\bullet v_1^\bullet}=r_z^{-1}\sin^2\theta_z$ and $a_{v_0^\circ v_1^\circ}=r_z^{-1}\cos^2\theta_z$ for a quad~$z=(v_0^\bullet v_0^\circ v_1^\bullet v_1^\circ)$.
\end{proposition}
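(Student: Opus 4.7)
The symmetry $\DS=\DS^\top$ is immediate from Lemma~\ref{lem:pao-top=opaS}: since $\pao^\top=4U^{-1}R\opaS$ and the diagonal matrix $U^{-1}R$ equals its transpose, the expression $(-4\pao\opaS)^\top = -16\,\opaS^\top U^{-1}R\,\opaS$ is visibly symmetric; the same argument applies to $-4\opao\paS$. So both compositions are complex-symmetric, and the proposition will follow from verifying (i) the equality $-4\pao\opaS=-4\opao\paS$ and (ii) the explicit form of the coefficients.

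For (i), I first observe the entry-wise conjugation identities $\opao=\overline{\pao}$ (obtained from Lemma~\ref{lem:pao-top=opaS} upon taking complex conjugates, using $\overline{U^{-1}}=\overline{U}^{-1}$) and $\paS=\overline{\opaS}$ (direct from the definition $[\paS H](z):=\overline{[\opaS\bar H](z)}$). These give $\opao\paS=\overline{\pao\opaS}$ as matrices, so the desired equality reduces to the \emph{reality} of the matrix entries of $\pao\opaS$, which will fall out of the explicit computation.

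To compute per-quad contributions, I combine $\pao(v,z)=r_z\epsilon_v/u_v$ (with $u_v:=\cS(v)-\cS(z)$ and $\epsilon_{v^\bullet}=+1$, $\epsilon_{v^\circ}=-1$) with $\opaS(z,v')=\mu_z\epsilon_{v'}/(4u_{v'})$, and use the identity $\opaS 1=0$ (i.e., $\sum_{v'\in z}\epsilon_{v'}/u_{v'}=0$) to rewrite the contribution of each $z\in\Dm(G)$ to $[-4\pao\opaS H](v)$ in the increment form
\[
\textstyle\sum_{v'\sim z,\,v'\ne v}c_{zvv'}(H(v')-H(v)),\qquad c_{zvv'}\ =\ -r_z\mu_z\epsilon_{v'}\,/\,(u_v u_{v'}).
\]
Substituting $u_p^\bullet=\cX(c_{p0})\cX(c_{p1})\cos\theta_z$ and $u_q^\circ=-\cX(c_{0q})\cX(c_{1q})\sin\theta_z$ from~\eqref{eq:cS(z)-def} together with the closed form $\mu_z^{-1}=\tfrac14(\xi_{00}-\xi_{11})(\xi_{01}-\xi_{10})$ where $\xi_{pq}:=\overline{\cX(c_{pq})}/\cX(c_{pq})$ (obtained by factoring the sum $\sum_{v\in z}\epsilon_v\bar u_v/u_v$ that defines $\mu_z^{-1}$ via $\opaS\bar\cS=1$), and combining with~\eqref{eq:rz=} to recognize $\Im[\overline{\cX(c_{00})}\cX(c_{11})]\cdot\Im[\overline{\cX(c_{01})}\cX(c_{10})]=-r_z^2/(\cos\theta_z\sin\theta_z)^2$, one finds after careful bookkeeping of the double-cover lifts around $z$ that
\[
c_{zv_0^\bullet v_1^\bullet}\ =\ r_z^{-1}\sin^2\theta_z,\qquad c_{zv_0^\circ v_1^\circ}\ =\ r_z^{-1}\cos^2\theta_z,
\]
both positive real, which produces the claimed $a$-coefficients.

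The main obstacle is the mixed case $v\in G^\bullet$, $v'\in G^\circ$: the per-quad coefficient $c_{zv^\bullet v^\circ}=-r_z\mu_z\epsilon_{v'}/(u_vu_{v'})$ is complex-valued in general. The plan is to show that the (at most) two quads $z,z'\in\Dm(G)$ sharing the edge $(v^\bullet v^\circ)\in E(\Lambda)$ contribute complex-conjugate values, so that their sum $b_{v^\bullet v^\circ}\in\R$. Using the explicit form $\mu_z=\cX(c_{00})\cX(c_{01})\cX(c_{10})\cX(c_{11})\cos^2\theta_z\sin^2\theta_z/r_z^2$ (a reformulation of the $\xi$-formula via~\eqref{eq:rz=}) and the product structure of $u_vu_{v'}$, the per-quad contribution reduces, after cancellation, to a ratio $\overline{\cX(c_{\mathrm{opp}})}/\cX(c_{\mathrm{opp}})$ involving the $\cX$-value at the corner of $z$ opposite to the shared corner $(v^\bullet v^\circ)$; the propagation equation~\eqref{eq:3-terms} around the shared corner then identifies the analogous ratio for the neighboring quad $z'$ as the complex conjugate of the one for $z$. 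With this reality secured, both operators $-4\pao\opaS$ and $-4\opao\paS$ are seen to be equal real-symmetric matrices with the stated entries, completing the proof.
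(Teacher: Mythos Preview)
Your overall plan matches the paper's proof: use Lemma~\ref{lem:pao-top=opaS} to get symmetry, reduce the identity $-4\pao\opaS=-4\opao\paS$ to reality of the matrix entries via the entry-wise conjugation relations, compute the $a$-coefficients from the explicit form of~$\mu_z$, and finally argue that the mixed coefficients $b_{v^\bullet v^\circ}$ are real. Your symmetry argument via $(-4\pao\opaS)^\top=-16\,\opaS^\top U^{-1}R\,\opaS$ is clean and correct; the paper instead reads off $a_{vv'}=a_{v'v}$ and $b_{vv'}=b_{v'v}$ directly from the explicit formulas, which comes to the same thing.

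The gap is in your treatment of the reality of $b_{v^\bullet v^\circ}$. Your reduction of the per-quad contribution does not yield a pure ratio $\overline{\cX(c_{\mathrm{opp}})}/\cX(c_{\mathrm{opp}})$: carrying out the cancellation (with your own formula for $\mu_z$ and~\eqref{eq:cS(z)-def}) one finds, e.g.\ for the shared corner $c=c_{00}(z)$,
\[
\frac{\mu_z r_z}{(\cS(v_0^\bullet)-\cS(z))(\cS(v_0^\circ)-\cS(z))}\ =\ -\frac{\sin\theta_z\cos\theta_z}{r_z}\cdot\frac{\cX(c_{11})}{\cX(c_{00})},
\]
which involves both the shared and the opposite corner. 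Your appeal to ``the propagation equation around the shared corner'' cannot relate this to the analogous expression in the neighboring quad~$z'$: the equation~\eqref{eq:3-terms} operates within a single quad and says nothing about $\theta_{z'}$, $r_{z'}$, or $\cX(c'_{01})$. In fact the two per-quad contributions are \emph{not} complex conjugates of each other in general.

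What is true (and what the paper does) is that only the \emph{imaginary parts} cancel. Using~\eqref{eq:rz=} one computes $\Im\bigl[\mu_z r_z/\bigl((\cS(v_0^\bullet)-\cS(z))(\cS(v_0^\circ)-\cS(z))\bigr)\bigr]=-|\cX(c_{00})|^{-2}$, and the same calculation at the other three corners of $z$ gives $\pm|\cX(c_{pq})|^{-2}$ with sign $-1$ when $p=q$ and $+1$ when $p\neq q$. Since passing to the adjacent quad $z'$ across the edge $(v^\bullet v^\circ)$ flips this parity (the edge that is $c_{00}$ or $c_{11}$ in $z$ becomes $c'_{01}$ or $c'_{10}$ in $z'$), the two imaginary parts are $\mp|\cX(c)|^{-2}$ and sum to zero. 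This combinatorial sign flip, not a conjugation identity, is the mechanism you need.
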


\begin{remark} (i) Explicit formulas expressing~$b_{vv'}$ via geometric characteristics of a tangential quad~$\cS^\dm(z)$ can be found in~\cite[Section~6]{Ch-ICM18}.

\smallskip

\noindent (ii) In the isoradial context, the s-Laplacian~$\Delta_\cS$ discussed above coincides with the operator $(\Id^\bullet-\Id^\circ)\Delta_{\iso}$ up to positive factors of order~$\delta$ coming from the areas of cells. In particular, in this context the coefficients~$b_{vv'}$ vanish. The mismatch of the signs on~$G^\bullet$ and~$G^\circ$ comes from Remark~\ref{rem:opao-isorad}; we believe that this mismatch is an intrinsic \emph{feature} of the formalism developed in this section.
\end{remark}

\begin{proof}[Proof of Proposition~\ref{prop:DeltaS}] Since~$\opaS 1=0$, it is clear that the operator $\DS:=-4\pao\opaS$ can be written in the form given above with \emph{some} (a priori, neither purely real nor symmetric) coefficients~$a_{vv'}$ and~$b_{vv'}$. Using Lemma~\ref{lem:pao-top=opaS} it is easy to see that
\[
a_{vv'}=-\frac{\mu_zr_z}{(\cS(v')-\cS(z))(\cS(v)-\cS(z))}=a_{v'v}
\]
if $v$ and $v'$ are two opposite vertices of a quad~$z$, and that
\[
b_{vv'}=\frac{\mu_zr_z}{(\cS(v')-\cS(z))(\cS(v)-\cS(z))}+\frac{\mu_{z'}r_{z'}}{(\cS(v')-\cS(z'))(\cS(v)-\cS(z'))}=b_{v'v}
\]
if~$(vv')=(zz')^*$ is an edge of the s-embedding. In particular,~$\Delta_\cS=\Delta_\cS^\top$.

Adopting the notation from Section~\ref{sub:semb-definition} and using the identity~$\opaS[\overline{\cS}(\cdot)-\overline{\cS}(z)]=1$ and~\eqref{eq:cS(z)-def},~\eqref{eq:rz=} one obtains the following expression:
\begin{align*}
\mu_z^{-1}\ &=\ \frac{1}{4}\biggl(\frac{\overline{\cX(c_{11})}}{\cX(c_{11})}-\frac{\overline{\cX(c_{00})}}{\cX(c_{00})}\biggr) \biggl(\frac{\overline{\cX(c_{10})}}{\cX(c_{10})}-\frac{\overline{\cX(c_{01})}}{\cX(c_{01})}\biggr)\\
&=\ \frac{\Im[\,\cX(c_{00})\overline{\cX(c_{11})}\,]\cdot \Im[\,\cX(c_{01})\overline{\cX(c_{10})}\,]}{\cX(c_{00})\cX(c_{01})\cX(c_{10})\cX(c_{11})}\ =\ \frac{r_z^2(\sin\theta_z\cos\theta_z)^{-2}}{\cX(c_{00})\cX(c_{01})\cX(c_{10})\cX(c_{11})}\,.
\end{align*}
Using~\eqref{eq:cS(z)-def} again, we get the required formulas for~$a_{v_0^\bullet v_1^\bullet}$ and $a_{v_0^\circ v_1^\circ}$. In particular, these coefficients are purely real.

It remains to check that the coefficients~$b_{vv'}$ are also purely real. For this purpose note that computations similar to those given above imply that
\[
\Im\biggl[\frac{\mu_zr_z}{(\cS(v_0^\bullet)-\cS(z))(\cS(v_0^\circ)-\cS(z))}\biggr]= -\frac{\sin\theta_z\cos\theta_z}{r_z}\cdot\Im\biggl[\frac{\cX(c_{11})}{\cX(c_{00})}\biggr]=
-\frac{1}{|\cX(c_{00})|^2}.
\]
A similar formula holds true for other corners of the quad~$z=(v_0^\bullet v_0^\circ v_1^\bullet v_1^\circ)$, with the same sign for~$c_{11}$ and with the \emph{opposite} one for~$c_{01}$ and $c_{10}$. As~$b_{vv'}$ is equal to the sum of two expressions coming from two quads adjacent to the edge~$(vv')$ of the s-embedding, this implies~$\Im[b_{vv'}]=0$.

To summarize, the operator~$\DS:=-4\pao\opaS$ is symmetric ($\DS=\DS^\top$) and has real coefficients. Therefore, one also has~$\DS=-4\opao\paS$.
\end{proof}

The following generalization of the miraculous Laplacian positivity property -- which was first discovered by Smirnov~\cite[Lemma~3.8]{Smi-I} in the square grid context -- appeared in~\cite[Section~6]{Ch-ICM18}.

\begin{corollary}[{\bf s-positivity property of functions~$\bm{H_F}$}]\label{cor:s-positivity} Assume that a function~$H_F$ is constructed from an s-holomorphic function~$F$ as in Section~\ref{sub:HF-def}. Then, the inequality~$[\DS H_F](v)\ge 0$ holds for all bulk vertices~$v\in \Lambda(G)$.

Moreover, under the assumption~\Unif, the quantity~$[\DS H_F](v)$ is a quadratic form in the values~$F(z_k)$ at quads~$z_k\in\Dm(G)$ adjacent to~$v\in\Lambda(G)$, which has~$O(1)$ coefficients and vanishes if the values~$F(z_k)$ are equal to each other for all~$z_k\sim v$.
\end{corollary}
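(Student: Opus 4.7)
The plan is to combine the factorization $\Delta_\cS=-4\opao\paS$ from Proposition~\ref{prop:DeltaS} with the identity $\paS H_F=\tfrac{1}{4i}F^2$ from Corollary~\ref{cor:opaS-HF} to obtain the compact representation
\begin{equation*}
[\Delta_\cS H_F](v)\ =\ i\,[\opao F^2](v)\ =\ \tfrac{1}{2}\sum_{z_k\sim v}F(z_k)^2\,d\omega((z_kv)^*).
\end{equation*}
The second assertion of the corollary follows quickly from this formula. Since the spinor sign of $\cX$ is squared out in the definition of $\alpha_\cS$, the function $e^{i\alpha_\cS}$ is single-valued on $\Upsilon(G)$, so $d\omega$ is an exact $1$-form on $\Upsilon(G)$ itself; the edges $(z_kv)^*$ ($z_k\sim v$) bound the face $v$ of $\Upsilon(G)$, hence the telescoping identity $\sum_{z_k\sim v}d\omega((z_kv)^*)=0$ holds, and the right-hand side vanishes whenever all $F(z_k)$ are equal. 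Under the assumption \Unif\ one has $|d\omega((z_kv)^*)|\le 2$ while $\deg v$ is uniformly bounded (by $\pi/\theta_0$), so the coefficients of this quadratic form in $\{F(z_k)\}$ are uniformly $O(1)$.

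For the non-negativity I would pass via Proposition~\ref{prop:shol=3term} to the associated real-valued spinor $X$ on $\Upsilon^\times(G)$ satisfying the propagation equation~\eqref{eq:3-terms}, so that $H_F=H_X$ by the lemma following Definition~\ref{def:HX-def}. Applying Proposition~\ref{prop:DeltaS} at, say, a black vertex $v=v^\bullet$ with neighboring quads $z_1,\ldots,z_n$ and corners $c_1,\ldots,c_n$ in cyclic order (where $z_k$ has $c_k,c_{k+1}$ as its corners at $v^\bullet$), the coefficients $a_{v^\bullet v_k^\bullet}=r_{z_k}^{-1}\sin^2\theta_{z_k}$ are positive, while the white-neighbor differences take the simple form $H_X(v_k^\circ)-H_X(v^\bullet)=-X(c_k)^2$. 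Using~\eqref{eq:HX-def} together with the propagation equation to rewrite $H_X(v_k^\bullet)-H_X(v^\bullet)$ as a quadratic polynomial in $X(c_k),X(c_{k+1})$, the plan is to show that the $b_{v^\bullet v_k^\circ}$-contributions (each of which is built from both quads sharing the edge $(v^\bullet v_k^\circ)$, as made explicit in the proof of Proposition~\ref{prop:DeltaS}) combine with the $a$-terms so that each quad $z_k$ contributes a manifestly non-negative expression: a sum of squares in $X(c_k),X(c_{k+1})$ with positive coefficients depending only on $\theta_{z_k}$ and $r_{z_k}$. The case $v=v^\circ\in G^\circ$ is obtained from the black case by the duality $\sin\leftrightarrow\cos$ together with the sign switch built into the formula in Proposition~\ref{prop:DeltaS}.

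The main obstacle lies in this per-quad sum-of-squares identification. One must carefully track the spinor signs on $\Upsilon^\times(G)$ when applying~\eqref{eq:HX-def}; in particular, the `chain' convention adopted in the proof of Proposition~\ref{prop:DeltaS} forces sign flips in products such as $X(c_{10})X(c_{11})$ (whose lifts are non-adjacent on the chain), and these flips are precisely what guarantee consistency of the two definitions of $H_X$ contained in~\eqref{eq:HX-def}. Once the correct sign convention is in place, one must unwind the closed-form expression for $b_{v^\bullet v_k^\circ}$ derived in the proof of Proposition~\ref{prop:DeltaS} in order to witness the right cancellations with the $a$-terms. This parallels Smirnov's original isoradial computation~\cite[Lemma~3.8]{Smi-I}, but the presence of the nonzero real-valued origami function $\cQ$ -- which produces nonzero $b_{vv'}$-coefficients absent in the isoradial setting -- substantially complicates the algebra.
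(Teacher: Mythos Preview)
Your derivation of the formula $[\Delta_\cS H_F](v)=i\,[\opao F^2](v)$ and your treatment of the second assertion (vanishing on constants via exactness of $d\omega$ around the face $v$, and the $O(1)$ bound under \Unif) are correct and essentially match the paper.

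For the non-negativity, however, you are missing the key shortcut that the paper exploits. Both the operator $\opao$ (through $d\omega$, which is built from the unit vectors $e^{i\alpha_\cS(c)}$) and the s-holomorphicity condition~\eqref{eq:s-hol} depend only on the \emph{directions} $\eta_c$ of the edges incident to $v$, not on their lengths. Consequently, at a fixed vertex $v$ the quantity $i\,[\opao F^2](v)$, viewed as a quadratic form in the values $F(z_k)$ subject to the s-holomorphicity constraints, is literally the same quadratic form one obtains on an isoradial grid whose rhombi at $v$ have the same half-angles. The non-negativity therefore reduces immediately to the isoradial case~\cite[Proposition~3.6]{ChSmi2}, and the form in question is precisely the $Q^{(n)}$ appearing there. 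No unpacking of the $b_{vv'}$-coefficients is needed.

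Your proposed route through the explicit $a$- and $b$-coefficients of Proposition~\ref{prop:DeltaS} is not wrong in principle, but it is substantially harder and you have not carried it out. In particular, a per-quad sum-of-squares decomposition is awkward because each $b_{v^\bullet v_k^\circ}$ couples \emph{two} adjacent quads, so the cross-terms do not localize to a single quad in any obvious way; one would effectively have to rediscover the isoradial computation after first undoing the edge-length dependence hidden in $r_z$ and $\mu_z$. The direction-only observation bypasses all of this at once.
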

\begin{proof} It directly follows from the identity~$\DS=-4\opao\paS$ and Corollary~\ref{cor:opaS-HF} that
\[
[\DS H_F](v)=-i\opao[F^2](v)\,.
\]
Recall that the operator~$\opao$ depends only on the \emph{directions} of the edges of the s-embedding adjacent to the vertex~$\cS(v)$ and that the s-holomorphicity property of the function~$F$ is also formulated in terms of these directions only. Therefore, the required claim follows from~\cite[Propostion~3.6]{ChSmi2}, which provides the s-positivity property in the isoradial context. (To check the signs, recall that, up to positive multiples, the operators $\pao$ and $\DS$ coincide with $(-\Id^\bullet+\Id^\circ)\pa^\Dm_{\iso}$ and $(\Id^\bullet-\Id^\circ)\Delta_{\iso}$, respectively, in the isoradial context.) In particular, the quadratic form mentioned above is nothing but the form~$Q^{(n)}$ from the proof of~\cite[Proposition~3.6]{ChSmi2}.
\end{proof}

\subsection{Coefficients~${A_z,B_z,C_z}$ and an approximation of the Laplacian by~$\cQ\Delta_\cS$ under the assumption~$\Qflat$} \label{sub:ABC}
We begin this section by introducing additional notation related to the differential operators~$\cQ\pao$, $\cQ\opao$, which we heavily use in the forthcoming Section~\ref{sec:convergence}. We then prove an important approximation property (Proposition~\ref{prop:DS-approximate}) of the s-Laplacian which shows that, \emph{under the assumption~\Qflat,} the coarse-grained operator~$\cQ\DS$ can be viewed as a discrete approximation to the standard {(continuous)} Laplacian~$\Delta=\pa_{xx}+\pa_{yy}$.

\smallskip

For a (bulk) quad~$z\in\Dm(G)$, denote
\begin{equation}\label{eq:ABC-def}
A_z:=[\pao^\top(\cQ\cS)](z),\qquad B_z:=[\pao^\top(\cQ^2)](z),\qquad C_z:=[\pao^\top(\cQ\overline{\cS})](z).
\end{equation}

\begin{remark} \label{rem:Cz-shift}
Since~$\opaS\cS=\opaS\cQ=0$ and~$\pao^\top=4U^{-1}R\opaS$ (see Lemma~\ref{lem:pao-top=opaS}), the coefficients~$A_z$ and~$B_z$ do not depend on the choice of the function~$\cQ$. However, this is \emph{not} true for the coefficient~$C_z$: it shifts by~$4\mu_z^{-1}r_z c$ if~$\cQ$ is replaced by~$\cQ+c$.
\end{remark}

\begin{lemma} (i) The coefficient~$A_z$ equals the area of the tangential quad~$\cS^\dm(z)$.

\smallskip

\noindent (ii) Under assumptions~\Unif\ and~\Qflat, one has~$A_z,B_z,C_z=O(\delta^2)$.
\end{lemma}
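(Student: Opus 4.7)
The plan is to apply Lemma~\ref{lem:pao-top=opaS} to rewrite each coefficient as $4\mu_z^{-1}r_z[\opaS H](z)$ with $H=\cQ\cS$, $\cQ^2$, $\cQ\overline{\cS}$ respectively, and then to exploit Lemma~\ref{lem:opaS=} ($\opaS 1=\opaS\cS=\opaS\cQ=0$ and $\opaS\overline{\cS}=1$) via the ``subtract the value at $z$'' trick inside each factor. This kills all linear contributions and isolates a single explicit quadratic term (together with, for $C_z$ only, a $\cQ(z)$-type contribution coming from $\opaS\overline{\cS}=1$).

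For part~(i), I will write $\cQ\cS=(\cQ-\cQ(z))(\cS-\cS(z))+\text{stuff killed by }\opaS$; in the leading product the factor $\cS-\cS(z)$ cancels the denominator in Definition~\ref{def:opaS}, leaving
\[
[\opaS(\cQ\cS)](z)=\tfrac{\mu_z}{4}\bigl[\cQ(v_0^\bullet)+\cQ(v_1^\bullet)-\cQ(v_0^\circ)-\cQ(v_1^\circ)\bigr]=\tfrac{\mu_z}{4}\bigl(|\cX(c_{00})|^2+|\cX(c_{11})|^2\bigr)
\]
after two applications of~\eqref{eq:cQ-def}. The parenthesis equals the semi-perimeter $s$ of $\cS^\dm(z)$ by the tangential identity $|\cX(c_{00})|^2+|\cX(c_{11})|^2=|\cX(c_{01})|^2+|\cX(c_{10})|^2$, so $A_z=r_z s=\Area(\cS^\dm(z))$; the last equality is the signed-triangle decomposition of $\cS^\dm(z)$ from the incenter $\cS(z)$ into four triangles of signed area $\pm\tfrac12 r_z|\cX(c_{pq})|^2$, valid for a simple tangential quadrilateral whether convex or not.

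For part~(ii), combining \Unif\ (which gives $|\cX(c)|^2,r_z\asymp\delta$) with the formula $\mu_z^{-1}=r_z^2(\sin\theta_z\cos\theta_z)^{-2}/\prod_{p,q}\cX(c_{pq})$ from the proof of Proposition~\ref{prop:DeltaS} yields $|\mu_z|\asymp 1$; then $A_z=r_z s=O(\delta^2)$ is immediate from (i). For $B_z$, the subtraction trick gives $[\opaS\cQ^2](z)=[\opaS((\cQ-\cQ(z))^2)](z)$, which is $O(\delta)$ under \Qflat\ (numerator $O(\delta^2)$ over denominator $\Theta(\delta)$), whence $B_z=O(\delta^2)$. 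For $C_z$, choose $c_1:=\cQ(v_0^\bullet)=O(\delta)$ by \Qflat\ and subtract inside both factors of $\cQ\overline{\cS}$; the residual picks up an additional $c_1\cdot\opaS\overline{\cS}=c_1$ from $\opaS\overline{\cS}=1$, but all contributions are $O(\delta)$, so $C_z=O(\delta^2)$.

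The only non-routine input is the identification of the bracket with the semi-perimeter in part~(i); the rest is bookkeeping, with the ``subtract at $z$'' trick plus the size estimates from \Unif\ and \Qflat\ doing all the work.
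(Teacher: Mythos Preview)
Your proof is correct and follows essentially the same approach as the paper's. Both use Lemma~\ref{lem:pao-top=opaS} to rewrite the coefficients via $\opaS$ (the paper works directly with $\pao^\top$, you route through $4\mu_z^{-1}r_z\opaS$, which is equivalent), then exploit $\opaS 1=\opaS\cS=\opaS\cQ=0$ to subtract constants and reduce to $O(\delta^2)$ expressions. One minor point: the bound $(\cQ(v)-\cQ(z))^2=O(\delta^2)$ for $B_z$ already follows from \Unif\ alone (adjacent values of $\cQ$ on a single quad differ by $|\cX(c)|^2=O(\delta)$), so \Qflat\ is not needed there; it is genuinely needed only for $C_z$, where the residual term $c_1=\cQ(v_0^\bullet)$ must itself be $O(\delta)$, exactly as you note and as Remark~\ref{rem:Cz-shift} anticipates.
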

\begin{proof} (i) It follows from Lemma~\ref{lem:pao-top=opaS} and from the identity~$\opaS\cS=0$ that
\[
A_z\ =\ 4\mu_z^{-1}r_z\opaS[\cQ(\cdot)(\cS(\cdot)-\cS(z))](z)\ =\ r_z\cdot(\cQ(v_0^\bullet)+\cQ(v_1^\bullet)-\cQ(v_0^\circ)-\cQ(v_1^\circ)).
\]
By definition of the function~$\cQ$, this expression equals the product of~$r_z$ and the half-perimeter of the tangential quad~$\cS^\dm(z)$, i.e., to the area of this quad.
\smallskip

(ii) The estimate~$A_z=\Area(\cS^\dm(z))=O(\delta^2)$ is a triviality. The estimate \mbox{$B_z=O(\delta^2)$} follows from~\Unif\ since~$\pao^\top[\cQ^2]=\pao^\top[(\cQ(\cdot)-\cQ(z))^2]$ and the coefficients of the operator~$\pao^\top$ are of order~$1$. The last estimate~$C_z=O(\delta^2)$ follows in the same way using~\Qflat\ (cf. Remark~\ref{rem:Cz-shift}).
\end{proof}

In Section~\ref{sec:convergence} we also need the following analogue of Corollary~\ref{cor:opaS-HF} and~Lemma~\ref{lem:opaS=} for the operators~$(\cQ\pao)^\top$ and~$(\cQ\opao)^\top$\!, mentioned here for reference purposes.

\begin{lemma}\label{lem:paoQ=}
(i) Let a function~$H_F$ be constructed from an s-holomorphic function~$F$ as in Section~\ref{sub:HF-def}; see~\eqref{eq:HF-def}. Then,
\[
[\pao^\top(\cQ H_F)](z)=\tfrac{1}{4i}A_z\cdot(F(z))^2+\tfrac{1}{2}B_z\cdot |F(z)|^2-\tfrac{1}{4i}C_z\cdot (\overline{F(z)})^2.
\]

\smallskip

(ii) Under the assumptions~\Unif\ and~\Qflat, for each~$C^2$--smooth function~$\phi$ defined on the quad~$\cS^\dm(z)$ one has
\[
[\opao^\top(\cQ \phi)](z)=A_z\cdot [\opa\phi](z)+\overline{C}_z\cdot [\pa\phi](z)+O(\delta^3\cdot\max\nolimits_{\cS^\dm(z)}|D^2\phi|).
\]

\smallskip

(Similar statements are fulfilled for~$[\opao^\top(\cQ H_F)](z)$ and~$[\pao^\top(\cQ\phi)](z)$.)
\end{lemma}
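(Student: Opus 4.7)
Both parts reduce to the identities $\opaS\cQ=\paS\cQ=0$ (from Lemma~\ref{lem:opaS=}), combined with the translation $\pao^\top=4U^{-1}R\opaS$ and $\opao^\top=4\overline{U}{}^{-1}R\paS$ provided by Lemma~\ref{lem:pao-top=opaS}. In particular, $\pao^\top\cQ=\opao^\top\cQ=0$, so any additive constant in~$\cQ$ (or, more generally, any $z$-dependent constant factor $\phi_0\cdot\cQ$ inserted inside $\pao^\top$ or $\opao^\top$ at the fixed quad~$z$) contributes zero.

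For part~(i), I follow exactly the pattern of the proof of Corollary~\ref{cor:opaS-HF}. By the construction of~$H_F$ via~\eqref{eq:HF-def}, on the four vertices of any fixed quad~$z\in\Dm(G)$ the values of $H_F$ coincide, up to an additive constant depending only on~$z$, with those of the affine combination
\[
\tfrac{1}{2}\bigl[\tfrac{1}{2i}(F(z))^2\cS(\cdot)-\tfrac{1}{2i}(\overline{F(z)})^2\overline{\cS}(\cdot)+|F(z)|^2\cQ(\cdot)\bigr].
\]
I multiply by $\cQ(\cdot)$, apply $\pao^\top$, and use the definitions~\eqref{eq:ABC-def} of $A_z,B_z,C_z$ together with $\pao^\top\cQ=0$ to dispose of the $z$-dependent additive constant; this produces exactly the claimed combination $\tfrac{1}{4i}A_z(F(z))^2+\tfrac{1}{2}B_z|F(z)|^2-\tfrac{1}{4i}C_z(\overline{F(z)})^2$.

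For part~(ii), I expand $\phi$ to second order around $\cS(z)$,
\[
\phi(v)=\phi(z)+(\cS(v)-\cS(z))[\pa\phi](z)+(\overline{\cS}(v)-\overline{\cS}(z))[\opa\phi](z)+O(\delta^2\max\nolimits_{\cS^\dm(z)}|D^2\phi|),
\]
multiply by $\cQ(v)$, and apply $\opao^\top=4\overline{U}{}^{-1}R\paS$. The constant term $\phi(z)\cQ(v)$ gives $\phi(z)\cdot[\opao^\top\cQ](z)=0$. For the two linear terms I use that $\cQ$ and $A_z$ are real, together with $[\paS f](z)=\overline{[\opaS\overline{f}](z)}$, to rewrite $4\overline{\mu_z}^{-1}r_z[\paS(\cQ(\overline{\cS}-\overline{\cS}(z)))](z)=\overline{A_z}=A_z$ and $4\overline{\mu_z}^{-1}r_z[\paS(\cQ(\cS-\cS(z)))](z)=\overline{C_z}$. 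The main (in fact the only delicate) step is controlling the remainder: assumption~\Qflat\ gives $\cQ(v)=O(\delta)$, so the remainder function inside $\paS$ is pointwise $O(\delta^3\max|D^2\phi|)$; under~\Unif\ the coefficients of $\paS$ are $O(\delta^{-1})$ and $\mu_z=\Theta(1)$, $r_z=O(\delta)$, so the prefactor $4\overline{\mu_z}^{-1}r_z$ contributes an extra $O(\delta)$ and the total error is $O(\delta^3\max_{\cS^\dm(z)}|D^2\phi|)$, as claimed. The two parenthetical assertions about $[\opao^\top(\cQ H_F)](z)$ and $[\pao^\top(\cQ\phi)](z)$ follow by complex conjugation, interchanging the roles of~$\opaS$ and~$\paS$.
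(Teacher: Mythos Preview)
Your proposal is correct and follows essentially the same approach as the paper: for part~(i) you mimic the proof of Corollary~\ref{cor:opaS-HF} (exactly as the paper instructs), and for part~(ii) you Taylor-expand $\phi$ and use that the coefficients of $\opao^\top\cQ$ are $O(\delta)$ under \Qflat. Your error bookkeeping in~(ii) is more granular than the paper's one-line version (you factor through $\paS$ and the prefactor $4\overline{\mu_z}^{-1}r_z$ separately rather than noting directly that the coefficients $D_{v,z}=\pm r_z/(\cS(v)-\cS(z))$ of $\opao^\top$ are $O(1)$), but the arithmetic lands in the same place.
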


\begin{proof} The proof of (i) is similar to the proof of Corollary~\ref{cor:opaS-HF}. To prove~(ii), note that~$\phi(\cdot)=\phi(z)+\overline{\cS}(\cdot)[\opa\phi](z)+\cS(\cdot)[\pa\phi](z)+O(\delta^2\max_{\cS^\dm(z)})|D^2\phi|$ and that the coefficients of the operator~$\opao^{\top}\cQ$ are of order~$O(|\cQ|)=O(\delta)$.
\end{proof}

Though the coefficients~$B_z,C_z$, even under assumptions~\Unif\ and~\Qflat, do not admit better \emph{pointwise} estimates than~$O(\delta^2)$, we now show that their oscillations lead to better a priori bounds for their \emph{averages} (with respect to the counting measure) over regions of the size bigger than~$\delta$; see also Remark~\ref{rem:BC=0-periodic}.

\begin{proposition} \label{prop:BC-average}
Let~$P$ be a square of size~$\ell\times\ell$ with~$\ell\ge\delta$, drawn over an s-embedding~$\cS$ satisfying the assumptions~\Unif\ and~\Qflat. Then,
\[
\sum_{z\in\Dm(G):\,\cS(z)\in P}B_z= O(\delta\ell)\qquad \text{and}\quad \sum_{z\in\Dm(G):\,\cS(z)\in P}C_z= O(\delta\ell).
\]
\end{proposition}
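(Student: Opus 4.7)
The proof rests on a summation-by-parts identity. Since the 1-form $d\omega$ of Definition~\ref{def:omega} is exact on the corner graph, $[\pao\mathbf{1}](v)=0$ for every~$v$ (by telescoping around~$v$), whence for any $f:\Lambda(G)\to\C$,
\[
\sum_{z\in P}[\pao^\top f](z)\ =\ \sum_{v}f(v)\,S_P(v),\qquad S_P(v)\,:=\sum_{z\in P,\,z\sim v}(\pao^\top)_{zv},
\]
with $S_P(v)=0$ whenever all neighboring quads of~$v$ lie in~$P$. Under~\Unif, the ``boundary vertices'' $v$ within $O(\delta)$ of~$\partial P$ number $O(\ell/\delta)$, and the partial telescope of~$d\omega$ yields $|S_P(v)|\le O(1)$ for each such~$v$.

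For $B_z$, I would use $\pao^\top\cQ=0$ (from Lemma~\ref{lem:opaS=}) together with $\pao^\top\mathbf{1}=0$ to rewrite $B_z=[\pao^\top((\cQ-q)^2)](z)$ for any real $q$. Taking $q$ that realises the~\Qflat\ bound $|\cQ(v)-q|\le O(\delta)$ uniformly, summation by parts immediately gives
\[
\Bigl|\sum_{z\in P}B_z\Bigr|\ \le\ \sum_{v\text{ boundary}}(\cQ(v)-q)^2\,|S_P(v)|\ \le\ O(\ell/\delta)\cdot O(\delta^2)\cdot O(1)\ =\ O(\delta\ell).
\]

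For $C_z$, analogously one has $\sum_{z\in P}C_z=\sum_v\cQ(v)(\overline{\cS(v)}-\overline{\cS_P})S_P(v)$ for any $\cS_P\in\C$; choosing $\cS_P$ at the centre of~$P$ yields $|\overline{\cS(v)}-\overline{\cS_P}|\le O(\ell)$ on $\partial P$, so a direct estimate only gives the weaker bound $O(\ell^2)$. To gain the extra factor~$\ell/\delta$, my plan is to perform a second Abel summation along each straight side of~$\partial P$: since $|\overline{\cS(v_{i+1})}-\overline{\cS(v_i)}|=O(\delta)$ between consecutive boundary vertices, the problem reduces to bounding partial cumulative sums
\[
T_I\,:=\,\sum_{v\in I}\cQ(v)S_P(v)
\]
over initial subsegments~$I$ of boundary vertices on a single side. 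If $|T_I|\le O(\delta)$ uniformly in~$I$, Abel summation delivers $O(\delta\ell)$ per side, and hence $O(\delta\ell)$ in total.

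The main obstacle is precisely this uniform estimate $|T_I|=O(\delta)$. Writing $T_I=\sum_{z\in P}\tau_z$ with $\tau_z:=\sum_{v\in I,\,v\sim z}\cQ(v)(\pao^\top)_{zv}$, and using $[\pao^\top\cQ](z)=0$ to rewrite $\tau_z=-\sum_{v\notin I,\,v\sim z}\cQ(v)(\pao^\top)_{zv}$, one must show that contributions from the many quads adjacent to the straight interior of~$I$ substantially cancel, leaving only $O(1)$ ``corner quads'' near the endpoints of~$I$ (each contributing $O(\delta)$). This cancellation rests on both the proper centering of~$\cQ$ provided by~\Qflat\ and the structural identity $\cQ(v^\bullet(c))-\cQ(v^\circ(c))=|\cX(c)|^2$ built into the s-embedding; without it, one gets only $|T_I|=O(\ell)$, ultimately yielding the useless $O(\ell^2)$.
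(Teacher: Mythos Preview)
Your treatment of the $B_z$ sum is correct and matches the paper: integration by parts reduces to a boundary sum of $(\cQ)^2\cdot O(1)$ over $O(\ell/\delta)$ vertices, and~\Qflat\ gives $O(\delta\ell)$.

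For the $C_z$ sum there is a genuine gap. Your plan hinges on the uniform bound $|T_I|=O(\delta)$ for partial sums $T_I=\sum_{v\in I}\cQ(v)S_P(v)$, but this is \emph{false}: using the exact telescoping form of $S_P(v)$ one finds
\[
T_I\ =\ -\tfrac{1}{2i}\big[\cQ(v_0)\,\overline{e^{i\alpha(c_0)}}-\cQ(v_n)\,\overline{e^{i\alpha(c_{n+1})}}+\overline{\cS(v_n)}-\overline{\cS(v_0)}\big],
\]
and the last two terms can be of order $\ell$, not $\delta$. Plugging $|T_I|=O(\ell)$ into your Abel step returns only $O(\ell^2)$. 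No amount of ``cancellation along the straight interior of $I$'' fixes this, because the obstruction sits in the displacement $\overline{\cS(v_n)}-\overline{\cS(v_0)}$ itself.

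What the paper does instead is to exploit precisely this explicit formula for $S_P(v)$ rather than to bound it. From Definition~\ref{def:opao} and Remark~\ref{rem:omega=S-S/Q-Q} one has, for each boundary vertex $v$ with boundary neighbours $v^\pm$,
\[
S_P(v)\ =\ -\tfrac{1}{2i}\Big(\tfrac{\overline{\cS(v^-)}-\overline{\cS(v)}}{\cQ(v^-)-\cQ(v)}-\tfrac{\overline{\cS(v^+)}-\overline{\cS(v)}}{\cQ(v^+)-\cQ(v)}\Big).
\]
Summing $\cQ(v)\overline{\cS(v)}S_P(v)$ along the closed boundary and reindexing once gives
\[
\sum_{z\in P}C_z\ =\ -\tfrac{1}{2i}\sum_{v}\tfrac{\overline{\cS(v^+)}-\overline{\cS(v)}}{\cQ(v^+)-\cQ(v)}\big(\cQ(v^+)\overline{\cS(v^+)}-\cQ(v)\overline{\cS(v)}\big).
\]
Splitting the bracket as $(\cQ(v^+)-\cQ(v))\cdot\tfrac12(\overline{\cS(v^+)}+\overline{\cS(v)})+\tfrac12(\cQ(v^+)+\cQ(v))(\overline{\cS(v^+)}-\overline{\cS(v)})$, the first piece produces exactly $-\tfrac{1}{2i}\oint_{\partial P^\diamond}\overline{z}\,d\overline{z}=0$, and the second piece has $O(\ell/\delta)$ terms each of size $\tfrac{O(\delta^2)}{O(\delta)}\cdot O(\delta)=O(\delta^2)$, yielding $O(\delta\ell)$. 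So the missing ingredient is not a bound on $T_I$ but the algebraic identity that turns the large part of the boundary sum into a vanishing contour integral of $\overline{z}\,d\overline{z}$.
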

\begin{proof} For~$z\sim v$, denote by~$D_{v,z}:=\pm r_z/(\cS(v)-\cS(z))=O(1)$ the coefficient (matrix entry) of the operator~$\pao$ discussed in Lemma~\ref{lem:pao-top=opaS}; recall that one has the `$+$' sign in~$D_{v,z}$ if~$v\in G^\bullet$ and the `$-$' sign if~$v\in G^\circ$. Denote by \mbox{$P^\dm:=\cup_{z:\cS(z)\in P}\cS^\dm(z)$} the union of quads~$\cS^\dm(z)\subset\C$ with centers~$\cS(z)\in P$. Since~$\pao 1=0$, we have
\[
\sum_{z:\,\cS(z)\in P} B_z\ =\sum_{z:\,\cS(z)\in \Int P^\dm}[\pao^\top(\cQ^2)](z) \ =\sum_{z\sim v:\,\cS(z)\in \Int P^\dm,\,\cS(v)\in \pa P^\dm}D_{v,z}(\cQ(v))^2,
\]
where, as usual, we assume that~$z\in\Dm(G)$ and~$v\in\Lambda(G)$. The last sum contains~$O(\delta^{-1}\ell)$ terms of order~$O(\delta^2)$, which proves the required estimate for~$B_z$.

The analysis of the coefficients~$C_z$ is slightly more subtle. As above, the ``discrete integration by parts formula'' leads to the identity
\[
\sum_{z:\,\cS(z)\in P} C_z\ =\sum_{z:\,\cS(z)\in \Int P^\dm}[\pao^\top(\cQ\overline{\cS})](z) \ =\sum_{z\sim v:\,\cS(z)\in \Int P^\dm,\,\cS(v)\in \pa P^\dm}D_{v,z}\cQ(v)\overline{\cS(v)}.
\]
 For a vertex~$v\in\Lambda(G)$ such that $\cS(v)\in\pa P^\dm$, denote by~$v^\pm\in\Lambda(G)$ the preceding and the next vertex at the boundary of~$P^\dm$, when this boundary is tracked {in} the counterclockwise direction. Due to Definition~\ref{def:opao} and Remark~\ref{rem:omega=S-S/Q-Q}, for each such a boundary vertex of~$P^\dm$ one has the following identity:
 \[
\sum_{z:\,z\sim v,\, \cS(z)\in\Int P^\dm}D_{v,z}\ 
=\ -\frac{1}{2i}\cdot\biggl(\frac{\overline{\cS(v^-)}-\overline{\cS(v)}}{\cQ(v^-)-\cQ(v)} -\frac{\overline{\cS(v^+)}-\overline{\cS(v)}}{\cQ(v^+)-\cQ(v)}\biggr).
 \]
Therefore,
\begin{align*}
\sum_{z:\,\cS(z)\in P} C_z\ &=\ -\frac{1}{2i}\sum_{v:\,\cS(v)\in \pa P^\dm} \frac{\overline{\cS(v^+)}-\overline{\cS(v)}}{\cQ(v^+)-\cQ(v)}\cdot\big(\cQ(v^+)\overline{\cS(v^+)}-\cQ(v)\overline{\cS(v)}\big)\\
&=\ -\frac{1}{2i}\biggl[\ \sum_{v:\,\cS(v)\in \pa P^\dm}\big(\overline{\cS(v^+)}-\overline{\cS(v)}\big)\cdot\tfrac{1}{2}\big(\overline{\cS(v^+)}+\overline{\cS(v)}\big)\\
&\qquad\qquad  +\!\!\sum_{v:\,\cS(v)\in \pa P^\dm}\frac{(\overline{\cS(v^+)}-\overline{\cS(v)})^2}{\cQ(v^+)-\cQ(v)}\cdot\tfrac{1}{2}\big(\cQ(v^+)+\cQ(v)\big)\biggr].
\end{align*}
Note that the first sum vanishes since~$-\frac{1}{2i}\oint_{\pa P^\dm} \overline{z}d\overline{z}=0$. Finally, under the assumptions~\Unif\ and~\Qflat, the second sum contains~$O(\delta^{-1}\ell)$ terms of order~$O(\delta^2)$ and thus produces a total error~$O(\delta\ell)$ as claimed.
\end{proof}

\begin{remark} \label{rem:BC=0-periodic} If the graph~$(G,x)$, its s-embedding~$\cS$ and the associated function~$\cQ$ are {doubly periodic} (i.e., if we work with the canonical periodic embedding of~$(G,x)$ in the sense of Lemma~\ref{lem:double-periodic}), then the coefficients~$A_z,B_z,C_z$ are {doubly periodic} too. (This is a triviality for~$A_z,B_z$ and easily follows from the identity~$\opaS\cQ=0$ for~$C_z$.) In this situation Proposition~3.11 implies that the sum of the coefficients~$B_z$ over the fundamental domain vanishes and so does the sum of~$C_z$.
\end{remark}
We conclude this section by showing that a coarse-graining (with respect to the counting measure) of the operator~$\cQ\DS$ can be viewed as an approximation to the standard Laplacian in~$\C$. This observation is crucial for the analysis in Section~\ref{sec:convergence}. Let us emphasize that this fact requires a `flatness' assumption similar to~$\Qflat$ and does not hold true in general.

\begin{proposition}\label{prop:DS-approximate}
As above, let $P$ be a square of size~$\ell\times\ell$ with~$\ell\ge\delta$, drawn over an s-embedding~$\cS$ satisfying the assumptions~\Unif\ and~\Qflat. Then, for each~$C^3$--smooth function defined in a vicinity of $P$, the following holds:
\[
\ell^{-2}\hskip -16pt \sum_{v\in\Lambda(G):\,\cS(v)\in P}\cQ(v)[\DS\phi](v)\ =\ \Delta\phi+O\big(\delta\ell^{-1}\max_{P^\dm} |D^2\phi|\big)+O\big(\delta^{-2}\ell^3\max_{P^\dm} |D^3\phi|\big),
\]
where~$P^\dm\supset P$ denotes the union of all quads~$\cS^\dm(z)$ that have at least one vertex \mbox{$\cS(v)\in P$} and the Laplacian~$\Delta\phi=\phi_{xx}+\phi_{yy}$ can be evaluated at any point of~$P$ since the resulting error~$O(\ell\cdot \max_P|D^3\phi|)$ is absorbed by~$O(\delta^{-2}\ell^3\max_P|D^3\phi|)$.
\end{proposition}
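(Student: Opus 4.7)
The argument rests on two structural facts. First, $\DS=-4\pao\opaS=-4\opao\paS$ is symmetric, annihilates $1,\cS,\overline{\cS}$, and annihilates $\cQ$ (since $\opaS\cQ=\paS\cQ=0$, Lemma~\ref{lem:opaS=}). Second, one has the algebraic identity $[\pao^\top(\cQ\cS)](z)=A_z=\Area(\cS^\dm(z))$ together with $[\pao 1](v)=0=[\pao^\top\cQ](z)$. Fix $z_0\in P$ (say the center) and Taylor-expand
\[
\phi=\phi(z_0)+\phi_\mathrm{aff}+\phi_\mathrm{quad}+R,\quad \phi_\mathrm{quad}=\tfrac{1}{2}\pa^2\phi\,(\cS-z_0)^2+\pa\opa\phi\,|\cS-z_0|^2+\tfrac{1}{2}\opa^2\phi\,(\overline{\cS}-\overline{z_0})^2,
\]
with derivatives at $z_0$, $\pa\opa\phi=\tfrac{1}{4}\Delta\phi$, and $|R(v)|\le C|\cS(v)-z_0|^3\max_{P^\dm}|D^3\phi|$. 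Only $\phi_\mathrm{quad}$ and $R$ contribute to $[\DS\phi]$.

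For the ``holomorphic'' monomials, Lemma~\ref{lem:opaS=} gives $[\opaS((\cS-z_0)^2)](z)=\tfrac{\mu_z}{4}\sum_{\pm}(\cS(v)-\cS(z))=O(\delta)$. Summation by parts yields
\[
\sum_{v\in P}\cQ(v)[\DS(\cS-z_0)^2](v)=-4\sum_{z}[\opaS(\cS-z_0)^2](z)\Bigl[\sum_{v\in P,\,v\sim z}(\pao)_{vz}\cQ(v)\Bigr],
\]
and the interior bracket equals $[\pao^\top\cQ](z)=4\mu_z^{-1}r_z[\opaS\cQ](z)=0$, so only $O(\delta^{-1}\ell)$ boundary quads contribute, each of size $O(\delta)\cdot O(\delta)=O(\delta^2)$, summing to $O(\delta\ell)$. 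Multiplication by $\tfrac{1}{2}\pa^2\phi$ and division by $\ell^2$ produces the $O(\delta\ell^{-1}\max_{P^\dm}|D^2\phi|)$ error; the $(\overline{\cS}-\overline{z_0})^2$ piece is analogous.

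The principal contribution comes from $|\cS-z_0|^2$. A direct computation exploiting $[\opaS\overline{\cS}](z)=1$ gives the exact identity
\[
[\opaS|\cS-z_0|^2](z)=(\cS(z)-z_0)+\tfrac{\mu_z}{4}\overline{E_z},\quad E_z=\sum_\pm(\cS(v)-\cS(z))=O(\delta),
\]
and the $\overline{E_z}$ piece contributes $O(\delta\ell)$ as above. For the leading part $-4\sum_{v\in P}\cQ(v)[\pao(\cS-z_0)](v)$, the key algebraic identity (for each $z$) is
\[
\sum_{v\sim z}\cQ(v)(\cS(z)-\cS(v))(\pao)_{vz}=\cS(z)[\pao^\top\cQ](z)-[\pao^\top(\cQ\cS)](z)=-A_z.
\]
Writing $\cS(z)-z_0=(\cS(z)-\cS(v))+(\cS(v)-z_0)$ and noting that $\sum_{z\sim v}(\pao)_{vz}=[\pao 1](v)=0$ kills the second summand when the sum is rearranged to be $v$-outer, one telescopes
\[
\sum_{v\in P}\cQ(v)[\pao(\cS-z_0)](v)=-\!\!\!\sum_{z\text{ with all }v\sim z\in P}\!\!\!A_z\,+\,O(\delta\ell)=-\Area(P)+O(\delta\ell),
\]
the boundary error coming from $O(\delta^{-1}\ell)$ quads having some vertex outside $P$, each contributing $O(\delta^2)$ to the partial inner sum. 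Thus $\sum_{v\in P}\cQ(v)[\DS|\cS-z_0|^2](v)=4\Area(P)+O(\delta\ell)$, and combined with $\pa\opa\phi(z_0)=\tfrac{1}{4}\Delta\phi(z_0)$ this produces the main term $\ell^2\Delta\phi$ after division by $\ell^2$.

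Finally, the cubic remainder is controlled by Green's identity $\sum_{v\in P}\cQ[\DS R]=\sum_{v\in P}R[\DS\cQ]+\text{bdry}=0+\text{bdry}$. The boundary consists of $O(\delta^{-1}\ell)$ pairs $(v\in P,v'\notin P)$, each contributing $O(\delta^{-1})\cdot O(\delta)\cdot O(\ell^3\max|D^3\phi|)=O(\ell^3\max|D^3\phi|)$, for a total of $O(\delta^{-1}\ell^4\max_{P^\dm}|D^3\phi|)$; after division by $\ell^2$ and using $\ell\ge\delta$, this is dominated by the claimed $O(\delta^{-2}\ell^3\max_{P^\dm}|D^3\phi|)$. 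The main obstacle is the telescoping identification of the $|\cS-z_0|^2$ contribution with $4\Area(P)$ up to $O(\delta\ell)$; Proposition~\ref{prop:BC-average} provides further control over residual boundary terms involving $B_z$ and $C_z$ should sharper accounting be required.
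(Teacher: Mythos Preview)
Your proof is correct and follows the same skeleton as the paper's: Taylor-expand to second order, use $\DS 1=\DS\cS=\DS\overline{\cS}=0$, dispose of the $(\cS-z_0)^2$ and $(\overline{\cS}-\overline{z_0})^2$ pieces via $[\pao^\top\cQ]=0$ in the interior, and extract $\Area(P)$ from the $|\cS-z_0|^2$ contribution. The two presentations differ only in bookkeeping. For the $|\cS|^2$ term, the paper pushes the entire sum to the boundary and evaluates the resulting expression $\sum D_{v,z}\cQ(v)\cS(v)$ as a discrete contour integral $-\tfrac{1}{2i}\oint_{\partial P^\dm} z\,d\overline{z}=\Area(P^\dm)$ (reusing the computation from Proposition~\ref{prop:BC-average}); you instead telescope the bulk directly via the pointwise identity $\sum_{v\sim z}\cQ(v)(\cS(z)-\cS(v))(\pao)_{vz}=-A_z$, which follows from $[\pao^\top(\cQ\cS)]=A_z$ and $[\pao^\top\cQ]=0$, and then sum $A_z$ over interior quads. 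For the cubic remainder, the paper applies the crude per-term bound to get $O(\delta^{-2}\ell^3|D^3\phi|)$ directly, whereas your Green's-identity argument using $\DS\cQ=0$ gives the sharper $O(\delta^{-1}\ell^2|D^3\phi|)$, absorbed into the stated bound since $\ell\ge\delta$. Both routes are valid; yours avoids the contour integral, while the paper's reuses machinery already set up for the $C_z$ estimate.
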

\begin{proof} Due to the Taylor formula, each $C^3$--smooth function~$\phi$ on~$P^\dm$ can be written as a linear combination of the functions~$1$, $\cS$, $\overline{\cS}$, $\cS^2$, $|\cS|^2$, $\overline{\cS}{}^2$ up to an error $O(\ell^3\max\nolimits_{P^\dm}|D^3\phi|)$. Recall that~$\cQ=O(\delta)$, the coefficients of the s-Laplacian~$\DS$ are of order~$O(\delta^{-1})$, and there are~$O(\delta^{-2}\ell^2)$ terms in the sum. Therefore, the errors in the Taylor approximation produce the cumulative error~$O(\delta^{-2}\ell^3\max_{P^\dm}|D^3\phi|)$ in the statement.

\smallskip

It follows from Proposition~\ref{prop:DeltaS} that $\DS 1=\DS \cS=\DS \overline{\cS}=0$, thus linear terms~$1$, $\cS$, $\overline{\cS}$ of the Taylor expansion do not contribute to the sum. Hence, it remains to analyze the contributions of the quadratic terms~$\cS^2$, $|\cS|^2$ and~$\overline{\cS}{}^2$. To do this, first note that~$[\DS\cS^2](z)=[\DS(\cS(\cdot)-\cS(z))^2](z)=O(\delta)$ and similarly for~$\DS|\cS|^2$ and~$\DS\overline{\cS}{}^2$. Therefore, replacing the summation over vertices~$v\in \Lambda(G)$ such that~$\cS(v)\in P$ by the (a priori, larger) summation over~$v\in \Lambda(G)$ such that~$\cS(v)\in\Int P^\dm$, one adds no more than~$O(\delta^{-1}\ell)$ terms of order~$O(\delta^2)$, which can be absorbed in the error term~$O(\delta\ell^{-1}\max_{P^\dm}|D^2\phi|)$ in the final statement.

\smallskip To analyze the contribution of the term~$\cS^2$ (and, similarly, that of~$\overline{\cS}{}^2$), note that the identity~$\opaS\cQ=0$ implies~$\pao^\top\cQ=0$ and hence
\begin{align*}
\frac{1}{4}\sum_{v:\,\cS(v)\in \Int P^\dm}\cQ(v)[\DS\cS^2](v)\ &=\ \ -\!\!\!\!\sum_{v:\,\cS(v)\in \Int P^\dm}\cQ(v)[\pao\opaS\cS^2](v)\\
\\ & =\sum_{z\sim v:\,\cS(z)\in\Int P^\dm,\,\cS(v)\in\pa P^\dm}D_{v,z}\cQ(v)[\opaS\cS^2](z),
\end{align*}
where we use the same notation~$D_{v,z}=\pm r_z/(\cS(v)-\cS(z))=O(1)$ as in the proof of Proposition~\ref{prop:BC-average}.
Since $\cQ(v)=O(\delta)$, $[\opaS\cS^2](z)=[\opaS(\cS(\cdot)-\cS(z))^2](z)=O(\delta)$ and the sum along the boundary of~$P^\dm$ contains~$O(\delta^{-1}\ell)$ terms, its total contribution is~$O(\delta\ell^{-1}\max_{P^\dm}|D^2\phi|)$.

\smallskip

We now move on to the analysis of the contribution of the last remaining term~$|\cS|^2\cdot\pa\opa\phi=|\cS|^2\cdot \tfrac{1}{4}\Delta\phi$ in the Taylor expansion of the function~$\phi$ on~$P$. As above, we have the identity
\[
\frac{1}{4}\sum_{v:\,\cS(v)\in \Int P^\dm}\cQ(v)[\DS|\cS|^2](v)\ =\sum_{z\sim v:\,\cS(z)\in\Int P^\dm,\,\cS(v)\in\pa P^\dm}D_{v,z}\cQ(v)[\opaS|\cS|^2](z),
\]
Note that~$[\opaS|\cS|^2](z)=[\opaS|\cS(\cdot)-\cS(v)|^2](z)+\cS(v)=O(\delta)+\cS(v)$ for each of the vertices~$v\sim z$. Therefore, it remains to handle the sum
\[
\sum_{z\sim v:\,\cS(z)\in\Int P^\dm,\,\cS(v)\in\pa P^\dm}D_{v,z}\cQ(v)\cS(v),
\]
which can be done similarly to the proof of Proposition~\ref{prop:BC-average}. This gives the identity
\[
\frac{1}{4}\sum_{v:\,\cS(v)\in \Int P^\dm}\cQ(v)[\DS|\cS|^2](v)\ = -\frac{1}{2i}\oint_{\pa P^\dm}zd\overline{z}+O(\delta\ell)
\]
and we conclude by noting that~$-\frac{1}{2i}\oint_{\pa P^\dm}zd\overline{z}=\Area(P^\dm)=\ell^2+O(\delta\ell)$.
\end{proof}

\section{Convergence of the FK-Dobrushin observable}\label{sec:convergence}
\setcounter{equation}{0}

\subsection{{Strategy of the proof}}\label{sub:HF-approx-harm}
{As it has already been sketched in the introduction, the key ingredient of our proof of Theorem~\ref{thm:FK-conv} is a careful analysis} of functions~$H_F$ everywhere in~$\Od$ except a very thin (width~$\delta^{1-\eta}$) layer near the boundary~$\pa\Od$. Note that such an analysis is much more delicate than the {a priori regularity and the approximate harmonicity of~$H_F$} in the bulk of~$\Od$ (i.e., $O(1)$-away from~$\pa\Od$), which {are} relatively cheap {statements (see Corollary~\ref{cor:H-Lip} and Proposition~\ref{prop:f-hol}).}

Let us emphasize that the setup of Theorem~\ref{thm:HF-almost-harm} is a \emph{fixed} discrete domain~$\Od$, rather than a sequence of those with~$\delta\to 0$; the estimate~\eqref{eq:HF-almost-harm} is uniform with respect to~$\Od$ and~$\delta$ (provided that~$\diam\Od\le\cst$ and~$\delta$ is small enough, depending on~$\eta$ and constants in the assumptions~\Unif\ and~\Qflat). Theorem~\ref{thm:HF-almost-harm} is one of the central results of our paper; note that the uniform \emph{near-to-boundary} estimate~\eqref{eq:HF-almost-harm} is new even in the well-studied isoradial context.

\begin{theorem}\label{thm:HF-almost-harm}
Let $\Od\subset\C$ be a bounded simply connected discrete domain drawn on an s-embedding~$\cS^\delta$ satisfying assumptions~\Unif\ and~\Qflat. Assume that~$F$ is an s-holomorphic function in (the bulk of)~$\Od$ and that $|H_F|\le 1$ in~$\Od$, where the function~$H_F$ is constructed from~$F$ as in Section~\ref{sub:HF-def}.

Let~$\eta\in (0,1)$ and~$\Od_\intr\subset\C$ be (one of the connected components of) the~$\delta^{1-\eta}$-interior of~$\Od$. Denote by~$h_\intr$ the harmonic continuation of the function~$H_F$ from the boundary to the bulk of the domain~$\Od_\intr$ (i.e., $h_\intr$ is the solution of the \emph{continuous} Dirichlet problem in~$\Od_\intr$ with the boundary values given by~$H_F$).

There exists an exponent~$\alpha(\eta)>0$ such that, provided that~$\delta$ is small enough (depending only on~$\eta$ and constants in~\Unif, {\Qflat}), the following estimate holds:
\begin{equation}
\label{eq:HF-almost-harm}
|H_F-h_\intr|\ =\ O(\delta^{\alpha(\eta)})\ \ \text{uniformly~in}\ \ \Od_\intr\,,
\end{equation}
where the implicit constant in the O-estimate depends only on {$\eta$,} the diameter of~$\Od$ and constants in the assumptions~\Unif, \Qflat.
\end{theorem}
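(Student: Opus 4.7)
The plan is to show approximate harmonicity in distribution and then to upgrade to a pointwise estimate via a Green's function/comparison argument. The key ingredients will be the symmetry $\DS=\DS^{\top}$ (Proposition~\ref{prop:DeltaS}), the identities $\DS 1=\DS\cS=\DS\overline{\cS}=\DS\cQ=0$, the s-positivity $[\DS H_F]\ge 0$ (Corollary~\ref{cor:s-positivity}), the coarse-graining property $\ell^{-2}\sum_{P}\cQ\cdot\DS\phi \approx \Delta\phi$ (Proposition~\ref{prop:DS-approximate}), the cancellation of averages of $B_z,C_z$ (Proposition~\ref{prop:BC-average}), and the a priori Hölder/Lipschitz regularity of $F$ and $H_F$ at scales above $\delta$ (Theorem~\ref{thm:F-Hol} and Corollary~\ref{cor:H-Lip}).

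\smallskip

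\textbf{Step 1 (Discrete Green identity).} For a smooth test function $\phi$ compactly supported in $\Od_\intr$, I use $\DS=\DS^{\top}$ and $\DS\cQ=0$ to write
\[
\sum_{v}\phi(v)\cQ(v)[\DS H_F](v)\;=\;\sum_{v}H_F(v)\,[\DS(\phi\cQ)](v)\;=\;\sum_{v}H_F(v)\cQ(v)[\DS\phi](v)\,+\,R(\phi,\cQ),
\]
where the remainder $R(\phi,\cQ)$ is the discrete bilinear ``carré-du-champ'' term $\sum_{v\sim v'}a_{vv'}(\phi(v')-\phi(v))(\cQ(v')-\cQ(v))\cdot H_F(v)$. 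A pointwise estimate on $R$ is too weak; instead I resum by edges and use summation by parts to recognise each term as a sum of increments of $\cQ\cdot \cS$-type combinations, which is precisely the structure controlled on large scales by the cancellation estimate in Proposition~\ref{prop:BC-average}, giving $|R|=O(\delta^{\alpha_1}\|\phi\|_{C^2})$ for some $\alpha_1>0$.

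\smallskip

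\textbf{Step 2 (Weak approximate harmonicity).} Tile the support of $\phi$ by squares $P_j$ of side $\ell=\delta^{\beta}$ with $0<\beta<1-\eta$ to be optimised. On each $P_j$, apply Proposition~\ref{prop:DS-approximate} to $\phi$ and replace $H_F(v)$ by its value at the centre of $P_j$ using the Hölder bound from Theorem~\ref{thm:F-Hol} (which yields $\osc_{P_j}H_F=O(\ell^{\beta'})$ through Corollary~\ref{cor:H-Lip}). Summing over $P_j$ produces
\[
\sum_{v}H_F(v)\,\cQ(v)\,[\DS\phi](v)\;=\;\int_{\Od_\intr}H_F\cdot\Delta\phi\,dA\,+\,O(\delta^{\alpha_2}\|\phi\|_{C^3}).
\]
Combined with Step~1 and the s-positivity $[\DS H_F]\ge 0$ (which makes the left-hand side of the identity of Step~1 nonnegative when $\phi\cQ\ge 0$, in particular giving two-sided control when $\phi\ge 0$ is replaced by $\phi\pm c$ with $c$ a constant of appropriate sign), this furnishes the weak discrete harmonicity
\[
\Bigl|\int_{\Od_\intr}H_F\cdot\Delta\phi\,dA\Bigr|=O(\delta^{\alpha_3}\|\phi\|_{C^3}),\qquad \phi\in C^3_c(\Od_\intr).
\]

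\smallskip

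\textbf{Step 3 (From weak to pointwise).} Fix $u\in\Od_\intr$. Let $G_\intr(\cdot,u)$ be the continuous Green's function of $\Omega_\intr$ with pole at $u$. Apply Green's identity in continuum: $h_\intr(u)=-\int_{\Omega_\intr}G_\intr(\cdot,u)\Delta h_\intr+\oint h_\intr\,\partial_\nu G_\intr=\oint H_F\,\partial_\nu G_\intr$. For the discrete function, take $\phi=\phi^{(u)}$ to be a mollification of $G_\intr(\cdot,u)$ at scale $\rho=\delta^{1-\eta/2}$; applying the identity of Step~2 to $\phi^{(u)}-\chi_{B(u,\rho)}$-type ingredients and using the a priori Hölder regularity of $H_F$ to discretize the contour integral, one recovers $|H_F(u)-h_\intr(u)|=O(\delta^{\alpha(\eta)})$ for a positive $\alpha(\eta)$. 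Near $\partial\Od_\intr$ (where mollification at scale $\rho$ is impossible), the bound is propagated by the maximum principle for $H_F$ (Proposition~\ref{prop:HF-max}) applied to $H_F\pm h_\intr\pm O(\delta^{\alpha(\eta)})$ on a sub-domain pulled in by a further~$\rho$.

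\smallskip

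The main obstacle I anticipate is Step~1: the naive size of the carré-du-champ term $R(\phi,\cQ)$ is only $O(1)$, so the argument succeeds only after one reorganises $R$ into boundary-divergence form on scale $\ell$ and applies the averaged cancellation of Proposition~\ref{prop:BC-average}. A secondary difficulty is the mild blow-up $|D^k h_\intr|\lesssim d(\cdot,\partial\Od_\intr)^{-k}$ of the harmonic extension near $\partial\Od_\intr$, which forces one to balance three small parameters (the coarse-graining scale $\ell$, the mollification scale $\rho$, and the distance of $u$ to $\partial\Od_\intr$) and is the reason the final exponent $\alpha(\eta)$ deteriorates as $\eta\downarrow 0$.
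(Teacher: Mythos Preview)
Your proposal assembles all the right ingredients (Propositions~\ref{prop:DeltaS}, \ref{prop:BC-average}, \ref{prop:DS-approximate}, Corollary~\ref{cor:s-positivity}, and the a~priori regularity), but there is a genuine gap in how you invoke the s-positivity, and a structural issue in Step~1.

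\textbf{The s-positivity argument does not work as stated.} You want to control $\sum_v \phi(v)\cQ(v)[\DS H_F](v)$ by making $\phi\cQ$ sign-definite, and you propose to achieve this by replacing $\phi$ with $\phi\pm c$. But under~\Qflat\ the function~$\cQ$ oscillates on scale~$\delta$ and takes both signs, so no smooth~$\phi$ (nor $\phi+c$) makes the product $\phi\cQ$ sign-definite. Moreover, adding a constant to a compactly supported~$\phi$ destroys the compact support and reintroduces uncontrolled boundary terms in the summation by parts. The paper's trick is the opposite: one shifts~$\cQ$, not~$\phi$. Since~$\cQ$ is only defined up to an additive constant and~$|\cQ|=O(\delta)$, one can choose this constant so that~$\cQ\ge 0$ everywhere, or so that~$\cQ\le 0$ everywhere, while still satisfying~\Qflat. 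The point is that the quantity one ultimately wants to estimate (in the paper, the continuous Laplacian of a mollification of~$H_F$) is \emph{independent} of this choice, and all intermediate error bounds hold uniformly over it; hence the sign-definite sum $\sum\cQ\phi[\DS H_F]$ is squeezed between two quantities that differ only by the accumulated errors. This is the content of Step~7 in Section~\ref{sub:Delta-key-estimate}, and it is not recoverable from your formulation.

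\textbf{The carr\'e-du-champ decomposition in Step~1 is not how the algebra works.} The paper does not attempt a Leibniz-type splitting $\DS(\phi\cQ)=\cQ\DS\phi+R$. Instead it uses the factorization $\DS=-4\pao\opaS$ together with the explicit identity $[\paS H_F]=\tfrac{1}{4i}F^2$ (Corollary~\ref{cor:opaS-HF}) and Lemma~\ref{lem:paoQ=}(i), which produces precisely the $A_z,B_z,C_z$ structure on which Proposition~\ref{prop:BC-average} bites. Your remainder~$R(\phi,\cQ)$ is a generic bilinear form in increments of~$\phi$ and~$\cQ$ weighted by~$H_F$; it is not clear that it reorganises into sums of~$B_z,C_z$ without essentially redoing the paper's Steps~3--6.

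\textbf{Structural comparison.} The paper also reverses the order of your Steps~2 and~3: rather than testing against an (awkward, singular) mollified Green function, it first mollifies~$H_F$ with a \emph{running} kernel of radius $\rho_u\asymp\delta^\varepsilon\crad(u,\Omega)$ and then estimates the \emph{pointwise} continuous Laplacian~$\Delta\widetilde H_F(u)$ directly, obtaining $|\Delta\widetilde H_F(u)|=O(\delta^\alpha d_u^{-2+\alpha})$ uniformly on~$\Od_\intr$. The passage to~$|H_F-h_\intr|=O(\delta^\alpha)$ is then a single application of the continuum estimate Lemma~\ref{lem:cont-estimate}, with no need for the maximum-principle patching you describe near~$\partial\Od_\intr$. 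The running scale~$\rho_u$ is what absorbs the boundary blow-up you flag as a difficulty.
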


The strategy of the proof of Theorem~\ref{thm:HF-almost-harm} is described below. Let us emphasize that the central part of the argument is the estimate~\eqref{eq:Delta-key-estimate}, the proof of which is postponed until Section~\ref{sub:Delta-key-estimate}. Let~$\phi_0\in C_0^\infty(\C)$ be a (fixed once {and} forever) positive symmetric function such that~$\phi_0(u)=0$ for~$|u|\ge \tfrac{1}{2}$ and  $\int_\C\phi_0(u)dA(u)=1$.
\begin{proof}[{\bf Proof of Theorem~\ref{thm:HF-almost-harm} modulo the key estimate~\eqref{eq:Delta-key-estimate}}] Let~$0<\varepsilon\ll \eta$ be a small parameter to be fixed later and denote
\[
d_u:=\dist(u,\pa\Od),\quad \text{and}\quad \rho_u:=\delta^\varepsilon \crad(u,\Od)\asymp \delta^\varepsilon d_u\gg\delta\quad \text{for} \ \ u\in\Od_\intr\,.
\]
Above,~$\crad(u,\Od)$ is the \emph{conformal radius} of the point~$u$ in the domain~$\Od$ (which should be viewed as a polygon in~$\C$): if~$\varphi_u:\mathbb{D}\to\Od$ is a conformal mapping such that~$\varphi_u(0)=u$, then~$\crad(u,\Od)=|\varphi_u'(0)|$.
\begin{remark}
The advantage of using the conformal radius instead of the distance~$d_u$ is that the function~$u\mapsto\crad(u,\Od)$ is \emph{smooth}, its gradient is uniformly bounded and its second derivative is bounded by~$O(d_u^{-1})$, with absolute (i.e., independent of~$u$ and~$\Od$) constants; e.g., see Lemma~\ref{lem:crad-estimates}.
\end{remark}

We now introduce a \emph{running mollifier} $\phi(w,u)$ by setting
\[
\phi(w,u):=\rho_u^{-2}\phi_0(\rho_u^{-1}(w-u)),\quad u\in\Od_\intr,
\]
and consider the mollified function
\begin{equation}
\label{eq:tHF-def}
\widetilde{H}_F(u):=\int_{B(u,\rho_u)}\phi(w,u)H_F(w)dA(w),
\end{equation}
where the function~$H_F$ is {thought} to be continued from {(the images of) vertices \mbox{$\Lambda(G)\cup\Dm(G)$}} in the s-embedding to~$\Od_\intr\subset\C$ in a piecewise {affine way; see~\eqref{eq:HF-def}.} 

It easily follows from the a priori regularity of functions~$H_F$ {(see Corollary~\ref{cor:H-Lip}) that~$|H_F(w)-H_F(u)|=O(|w-u|\cdot d_u^{-1})$ for all~$w$ such that~$|w-u|\le \rho_u$. Therefore,}
\begin{equation}
\label{eq:x1}
\big|\widetilde{H}_F(u)-H_F(u)\big|=O(\rho_u\cdot d_u^{-1})=O(\delta^\varepsilon)\quad \text{uniformly in}~\Od_\intr.
\end{equation}
Now assume that we are able to prove a uniform estimate of the form
\begin{equation}
\label{eq:Delta-key-estimate}
|\Delta\widetilde{H}_F(u)|=O(\delta^pd_u^{-2-q})+O(\delta^{1-s}d_u^{-3})\quad \text{for~all}\ \ u\in\Od_\intr,
\end{equation}
with some exponents~$p,q,s\ge 0$ satisfying the inequalities
\[
p>q(1-\eta)\quad\text{and}\quad s<\eta.
\]
Since $d_u\ge\delta^{1-\eta}$ for all~$w\in\Od_\intr$, in this situation one can find $\alpha=\alpha(\eta,p,q,s)>0$ such that
\begin{equation}
\label{eq:x2}
|\Delta\widetilde{H}_F(u)|=O(\delta^\alpha d_u^{-2+\alpha})\ \ \text{uniformly~over}\ \ u\in\Od_\intr.
\end{equation}
Let~$\widetilde{h}$ denote the harmonic continuation of~$\widetilde{H}_F$ from~$\pa\Od_\intr$ to~$\Od_\intr$ (i.e., $\widetilde{h}$ is the solution of the \emph{continuous} Dirichlet problem in~$\Od_\intr$ with boundary values given by~$\widetilde{H}_F$).
Standard estimates (e.g., Lemma~\ref{lem:cont-estimate} applied to the function~$\widetilde{H}_F-\widetilde{h}$ in the $\delta^{1-\eta}$-interior~$\Od_\intr$ of the domain $\Omega^\delta$, viewed as a subset of~$\C$) allow to derive from~\eqref{eq:x2} that
\begin{equation}
\label{eq:x3}
|\widetilde{H}_F-\widetilde{h}|=O(\delta^\alpha)\ \ \text{uniformly~in}\ \ \Od_\intr,
\end{equation}
where an additional (compared to~\eqref{eq:x2}) factor in the~$O$-bound depends only on~$\alpha$ and~$\diam(\Od)$. By the maximum principle, the estimate~\eqref{eq:x1} also implies that \mbox{$|\widetilde{h}-h|=O(\delta^\varepsilon)$} in~$\Od_\intr$. Combining this with~\eqref{eq:x1} and~\eqref{eq:x3} we obtain the desired estimate~\eqref{eq:HF-almost-harm} with the exponent~$\min\{\alpha,\varepsilon\}>0$. The proof is complete.
\end{proof}

\subsection{Proof of the key estimate~\eqref{eq:Delta-key-estimate}} \label{sub:Delta-key-estimate}
In this section we prove the estimate~\eqref{eq:Delta-key-estimate} for the Laplacian of the mollified function~$\widetilde{H}_F$ given by~\eqref{eq:tHF-def}. Since~$\phi(\cdot,u)$ vanishes near the boundary of the disc~$B(u,\rho_u)$, we have
\begin{equation}
\label{eq:DHF-1}
\Delta\widetilde{H}_F(u)=\int_{B(u,\rho_u)}(\Delta_u\phi(w,u))H_F(w)dA(w),\quad u\in\Od_\intr.
\end{equation}
Recall that {we consider} the mollifier~$\phi(w,u)=\rho_u^{-2}\phi_0(\rho_u^{-1}(w-u))$ {with}~$\rho_u\asymp\delta^\varepsilon d_u$ and that~$\varepsilon>0$ is a small parameter which will be chosen at the end of the proof. Recall also that the function~${u}\mapsto\rho_u$ is smooth, its gradient is or order~$O(\delta^\varepsilon)$, and its second derivatives are of order~$O(\delta^\varepsilon d_u^{-1})=O(\delta^{2\varepsilon}\rho_u^{-1})$; see Lemma~\ref{lem:crad-estimates}.

Let us begin by stating a \emph{trivial} bound
\[
|\Delta\widetilde{H}(u)|\ =\ O(\rho_u^{-4}\cdot\rho_u^2) = O(\rho_u^{-2})=O(\delta^{-2\varepsilon}d_u^{-2}),
\]
where the first factor comes from the estimate~$|\Delta_u\phi(w,u)|=O(\rho_u^{-4})$ and the second is the area of the disc~$B(u,\rho_u)$. To get~\eqref{eq:Delta-key-estimate} instead of this trivial bound we need to \emph{improve} it by factors
\begin{equation}
\label{eq:error-wanted}
O(\delta^{p+2\varepsilon}d_u^{-q})\quad \text{or}\quad O(\delta^{1-s+2\varepsilon}d_u^{-1})\quad \text{with}\quad p>q(1-\eta)\quad \text{and}\quad s<\eta.
\end{equation}
The following analysis consists of seven steps: {in} each of these steps (except the final one) we modify the expression~\eqref{eq:DHF-1} and check that all the errors appearing along the way fit the condition~\eqref{eq:error-wanted}.

\medskip

\noindent {\bf Step~1.} \emph{Replace~$\Delta_u\phi(w,u)$ by~$\Delta_w\phi(w,u)$.}

\medskip

To control the error, note that the a priori regularity of s-holomorphic functions (see Theorems~\ref{thm:F-Hol} and \ref{thm:F-via-HF}(i)) imply that, in the disc~$B(u,\rho_u)$, the function~$H_F$ can be approximated by a \emph{linear} function~$L$ up to~$O(\rho_u\cdot(\rho_u/d_u)^\beta d_u^{-1})=O(\delta^{\varepsilon(1+\beta)})$, where $\beta>0$ is the a priori H\"older exponent of~$F$. It is easy to see that
\begin{align*}
\textstyle \int_{B(u,\rho_u)}(\Delta_w\phi(w,u))L(w)dA(w)\ &=\ \textstyle \int_{B(u,\rho_u)}\phi(w,u)(\Delta L)(w)dA(w)\ =\ 0,\\
\textstyle \int_{B(u,\rho_u)}(\Delta_u\phi(w,u))L(w)dA(w)\ &=\ \Delta L(u)\ =\ 0,
\end{align*}
(for a symmetric mollifier~$\phi_0$) and
\[
|\Delta_w\phi(w,u)-\Delta_u\phi(w,u)|=O(\delta^\varepsilon\cdot\rho_u^{-4}),
\]
the additional factor~$\delta^{\varepsilon}$ appears since we should differentiate~$\rho_u$ at least once. Therefore,
\begin{equation}
\label{eq:DHF-2}
\Delta\widetilde{H}_F(u)\ =\ \int_{B(u,\rho_u)}(\Delta_{w}\phi(w,u))H_F(w)dA(w)+\ O(\delta^\varepsilon\cdot \delta^{\varepsilon(1+\beta)}\cdot\rho_u^{-2}),
\end{equation}
which always fits~\eqref{eq:error-wanted} (with~$p=\varepsilon\beta$ and~$q=0$).

\medskip

\noindent {\bf Step~2.} \emph{Use Proposition~\ref{prop:DS-approximate} to replace $\Delta_w\phi(w,u)$ by~$\cQ(v)[\DS\phi(\cdot,u)](v)$.}

\medskip
More precisely, {in} this step we aim to prove that
\begin{align}
\int_{B(u,\rho_u)}(\Delta_w\phi(w,u))&H_F(w)dA(w)\ \ =\!\!\sum_{v:\,\cS(v)\in B(u,\rho_u)}\cQ(v)[\DS\phi(\cdot,u)](v)H_F(v) \notag\\
&+\ \big(O(\delta^{1-\gamma}d_u^{-1})+O(\delta^\gamma)+O(\delta^{1-3\gamma-\varepsilon}d_u^{-1})\big)\cdot \rho_u^{-2},
\label{eq:DHF-3}
\end{align}
which fits~\eqref{eq:error-wanted} if the exponent~$\gamma$ is chosen so that~$\gamma>2\varepsilon$ and~$3(\gamma+\varepsilon)<\eta$. Note that one can always find such~$\gamma$ provided that~$\varepsilon<\frac{1}{9}\eta$.

\smallskip

To prove~\eqref{eq:DHF-3}, cover~$B(u,\rho_u)$ by squares of size~$\ell=\delta^{1-\gamma}\ll\delta^{1-\eta}\le\rho_u$, where~$\gamma$ is chosen as explained above. On each of these squares the function~$H_F$ can be approximated by a \emph{constant} (of order~$O(1)$) up to an error~$O(\ell d_u^{-1})$, which leads to the first error term in~\eqref{eq:DHF-3}. (Note also that the main sum in~\eqref{eq:DHF-3} admits a similar trivial bound~$O(\rho_u^{-2})$ as the left-hand side since it contains~$O(\delta^{-2}\rho_u^2)$ terms of order~$O(\delta\cdot \delta|D^2\phi(\cdot,u)|)=O(\delta^2\rho_u^{-4})$.) The second and the third error terms in~\eqref{eq:DHF-3} come from those in Proposition~\ref{prop:DS-approximate}: the additional factors compared to the trivial bound are~$O(\delta\ell^{-1})=O(\delta^\gamma)$ and~$O(\delta^{-2}\ell^3\rho_u^{-1})=O(\delta^{1-3\gamma-\varepsilon}d_u^{-1})$.

\medskip

\noindent {\bf Step~3.} \emph{Use the factorization of~$\DS$ and the ``discrete integration by parts'' formula.}

\medskip

This step is an algebraic manipulation with the main term {on} the right-hand side of~\eqref{eq:DHF-3}. Denote~$\phi_u(\cdot):=\phi(\cdot,u)$ for shortness. Since $\DS=-4\pao\opaS$, we have
\begin{align}
&\sum_{v:\,\cS(v)\in B(u,\rho_u)}\cQ(v)[\DS\phi_u](v)H_F(v)\ =\ -4\sum_{z:\,\cS(z)\in B(u,\rho_u)}(\pao^\top[\cQ H_F])(z)[\opaS\phi_u](z) \notag\\
&=\ -4\sum_{z:\,\cS(z)\in B(u,\rho_u)}(\tfrac{1}{4i}A_z(F(z))^2+\tfrac{1}{2}B_z|F(z)|^2-\tfrac{1}{4i}C_z(\overline{F(z)})^2)[\opaS\phi_u](z),
\label{eq:DHF-4}
\end{align}
recall that the coefficients~$A_z,B_z,C_z$ were introduced in Section~\ref{sub:ABC}.

\medskip

\noindent {\bf Step~4.} \emph{Estimate the contribution of the~$B_z$ terms in~\eqref{eq:DHF-4} via Proposition~\ref{prop:BC-average}.}

\medskip

As on Step~2, let us cover the disc~$B(u,\rho_u)$ by squares of size~$\ell=\delta^{1-\gamma}$. We claim that
\begin{equation}
\label{eq:DHF-5}
\sum_{z:\,\cS(z)\in B(u,\rho_u)}B_z|F(z)|^2[\opaS\phi_u](z) = O(\delta^\gamma+\delta^{\beta(1-\gamma)}d_u^{-\beta}+\delta^{1-\gamma-\varepsilon}d_u^{-1})\cdot\rho_u^{-2},
\end{equation}
where~$\beta>0$ is the a priori H\"older exponent of s-holomorphic functions provided by Theorem~\ref{thm:F-Hol}. If~$\gamma>2\varepsilon$, $\beta(\eta-\gamma)>2\varepsilon$ and~$\gamma+3\varepsilon<\eta$, then all the three terms {on} the right-hand side fit the condition~\eqref{eq:error-wanted}. Note that one can find an exponent~$\gamma$ satisfying the required inequalities provided that~$\varepsilon<\min\{\frac{1}{5},\tfrac{\beta}{2(1+\beta)}\}\cdot \eta$.

To prove~\eqref{eq:DHF-5}, note that the function~$|F|^2$ can be approximated by a constant (of order~$d_u^{-1}$) with an error~$O((\ell/d_u)^\beta d_u^{-1})=O(\delta^{\beta(1-\gamma)}d_u^{-1-\beta})$ on each of these squares of size~$\ell$. This gives the second term in~\eqref{eq:DHF-5}. Similarly, the third term comes from the fact that the function~$\opaS\phi_u=\opa\phi_u+O(\delta|D^2\phi_u|)=O(\rho_u^{-1})$ is constant on each of these squares up to an error~$O(\ell\rho_u^{-2})=O(\delta^{1-\gamma-\varepsilon}d_u^{-1}\cdot\rho_u^{-1})$. The first term~$O(\delta^\gamma)$ comes from Proposition~\ref{prop:BC-average} as the improvement of the trivial bound~$O(\ell^2)$ to~$O(\delta\ell)$ for the sum of the coefficients~$B_z$ over a square of size~$\ell$.

\medskip

\noindent {\bf Step~5.} \emph{Replace the~$C_z$ terms in~\eqref{eq:DHF-4} by their conjugate.}

\medskip

This can be done in two ways. First, one can repeat the arguments given in the previous step to prove that these terms produce the same type of negligible errors as the terms~$B_z$, and so do the conjugated ones. An alternative (and more conceptual) way is to recall that the initial expression~\eqref{eq:DHF-1} is \emph{purely real}, thus the imaginary part of~\eqref{eq:DHF-4} is irrelevant anyway. After this conjugation we arrive at the expression
\[
\Delta\widetilde{H}_F(u)\ =\ -4\!\!\!\sum_{z:\,\cS(z)\in B(u,\rho_u)}\big(A_z[\opaS\phi_u](z)+\overline{C}_z[\paS\phi_u](z)\big)[\paS H_F](z)+O(\ldots),
\]
where the~$O(\ldots)$ terms fit the condition~\eqref{eq:Delta-key-estimate}. (Recall that~$[\paS H_F](z)=\frac{1}{4i}(F(z))^2$.)

\medskip

\noindent {\bf Step~6.} \emph{Use Lemma~\ref{lem:paoQ=}(ii) and ``integrate by parts'' once more.}

\medskip

It follows from Lemma~\ref{lem:opaS=} and Lemma~\ref{lem:paoQ=} that
\[
A_z[\opaS\phi_u](z)+\overline{C}_z[\paS\phi_u](z)\ =\ [\opao^{\,\top}(\cQ\phi_u)](z)+O(\delta^3|D^2\phi_u|),
\]
As~$A_z,C_z=O(\delta^2)$, the error term produces an improvement~$O(\delta\rho_u^{-1})=O(\delta^{1-\varepsilon}d_u^{-1})$ compared to the trivial bound, which fits~\eqref{eq:error-wanted} provided that~$\varepsilon<\tfrac{1}{3}\eta$. Therefore, we arrive at the expression
\begin{align}
\Delta\widetilde{H}_F(u)\ &=\ -4\!\!\!\sum_{z:\,\cS(z)\in B(u,\rho_u)}[\opao^{\,\top}(\cQ\phi_u)](z)[\paS H_F](z)+O(\ldots) \notag\\
&=\sum_{v:\,\cS(v)\in B(u,\rho_u)}\cQ(v)\phi_u(v)[\DS H_F](v)+O(\ldots),\label{eq:DHF-6}
\end{align}
where, as above, $O(\ldots)$ stands for error terms satisfying the condition~\eqref{eq:Delta-key-estimate}.

\medskip

\noindent {\bf Step~7.} \emph{Use the s-positivity~$\DS H_F\ge 0$ and shifts of~$\cQ$ to neglect the sum in~\eqref{eq:DHF-6}.}

\medskip

Note that the function~$H_F$ and, in particular, the value~$\Delta\widetilde{H}_F(u)$ do \emph{not} depend on a particular choice of the function~$\cQ$, defined up to a global additive constant. The analysis performed above also does not rely upon such a choice, provided that the assumption~\Qflat\ holds. In particular, if one shifts this additive constant so that the function~$\cQ$ is sign-definite (i.e., either everywhere positive or everywhere negative but still satisfying the uniform bound~$\cQ=O(\delta)$), then the estimate~\eqref{eq:DHF-6} holds in \emph{both} cases~$\cQ\ge 0$ and~$\cQ\le 0$. Since~$\phi_u\ge 0$ and $[\DS H_F]\ge 0$ (see Corollary~\ref{cor:s-positivity} for the latter), the sum in~\eqref{eq:DHF-6} can be made both positive or negative by such shifts. Therefore, this sum must have the same order as the error term~$O(\ldots)$. This completes the proof of the estimate~\eqref{eq:Delta-key-estimate}.

In particular, the proof runs through if the parameter~$\varepsilon$ used in the definition of the running mollifier~$\phi_u$ is chosen smaller than~$\min\{\frac{1}{9},\frac{\beta}{2(1+\beta)}\}\cdot\eta$, where~$\beta$ is the a priori H\"older exponent of s-holomorphic functions provided by Theorem~\ref{thm:F-Hol}\qed

\begin{remark} (i) One can try to avoid the use of the s-positivity property in Step~7 by proving an accurate estimate for the values~$[\DS H_F](v)$. According to Corollary~\ref{cor:s-positivity}, each of them is a quadratic form in the variables~$F(z)=O(d_u^{-1/2})$ with~$O(1)$ coefficients, vanishing on constants. If we were able to prove the estimate
\begin{equation}
\label{eq:x-Holder}
|\DS H_F|(v)\ =\ O((\delta/d_u)^{2\beta}d_u^{-1})\ \ \text{with}\ \ \beta>\tfrac{1}{2},
\end{equation}
this would give a total error~$O(\delta^{-2}\rho_u^2\cdot\delta\cdot\rho_u^{-2}\cdot\delta^{2\beta}d_u^{-1-2\beta})=O(\delta^{2\beta-1}d_u^{-1-2\beta})$, which fits the condition~\eqref{eq:Delta-key-estimate}. In other words, an a priori~$(\tfrac{1}{2}+\varepsilon)$-H\"older regularity of \mbox{s-holomorphic} functions would be enough to provide a more straightforward argument in Step~7. However, at the moment we are not aware of methods allowing to prove such an a priori regularity statement without periodicity assumptions (and beyond the isoradial context, where~$\beta=1$ is known~\cite[Theorem~3.12]{ChSmi2}).

\smallskip

\noindent (ii) Vice versa, one can try to use the s-positivity in order to prove the \mbox{$(\tfrac{1}{2}+\varepsilon)$-H\"older} regularity of s-holomorphic functions. Indeed, \emph{a posteriori} we know that sign-definite terms in~\eqref{eq:DHF-6} do not produce a big sum. Loosely speaking, this means that the estimate~\eqref{eq:x-Holder} holds `in average' because of the s-positivity. One could then imagine a bootstrap-type argument that improves the $\beta$-H\"older regularity (with a small exponent~$\beta$) using this observation.
\end{remark}

\subsection{Proof of Theorem~\ref{thm:FK-conv}} \label{sub:proof-FK}
Let discrete domains~$(\Od;a^\delta,b^\delta)$ converge to~$(\Omega;a,b)$ as $\delta \to 0$ in the Carath\'eodory sense and s-holomorphic functions~$F^\delta$ be constructed from the Kadanoff--Ceva fermionic observables~\eqref{eq:KC-Dob-def} as in Proposition~\ref{prop:shol=3term}. Recall that the corresponding functions~$H_{F^\delta}$ satisfy the~$0/1$ Dirichlet boundary conditions~\eqref{eq:HF2-bc} and that~$H_{F^\delta}\in[0,1]$ everywhere in~$\Omega^\delta$ due to the maximum principle (see Proposition~\ref{prop:HF-max}).

It follows from Theorem~\ref{thm:F-Hol} and Theorem~\ref{thm:F-via-HF}(ii) that the families~$\{F^\delta\}$ and $\{H_{F^\delta}\}$ are precompact in the bulk of~$\Omega$. Therefore, passing to a subsequence~$\delta\to 0$, we can assume that
\[
\textstyle F^\delta\to f\quad\text{and}\quad H_{F^\delta}\to h=\tfrac{1}{2}\int \Im[f^2dz]\quad \text{uniformly~on~compact~subsets~of}~\Omega.
\]
It immediately follows from Proposition~\ref{prop:f-hol} that~$h:\Omega\to [0,1]$ is a harmonic function. Thus, the main difficulty is to prove that the $0/1$ Dirichlet boundary conditions of~$H_{F^\delta}$ survive in the limit~$\delta\to 0$, i.e., that~$h=\hm_\Omega(\cdot,(ba))$.

Let~$p_0>0$ be the uniform lower bound on the probability of {open} circuits in annuli~$\boxbox(z,d)$ from Corollary~\ref{cor:circuits} and~$\eta<-\log (1-p_0)/\log 3$ be a sufficiently small positive constant. Denote by~$\Od_\intr\subset\C$ the principal connected component of the \mbox{$\delta^{1-\eta}$-interior} of~$\Od$. As in Theorem~\ref{thm:HF-almost-harm}, let~$h^\delta_\intr$ be the solution to the \emph{continuous} Dirichlet boundary value problem in~$\Od_\intr$ such that~$h^\delta_{\intr}=H_{F^\delta}$ at~$\pa\Od_\intr$. Since the functions~$H_{F^\delta}$ and~$h^\delta_{\intr}$ are uniformly (with respect to both~$\Od$ and~$\delta$) close to each other due to Theorem~\ref{thm:HF-almost-harm}, it remains to prove that
\[
h^\delta_{\intr}\to\hm_\Omega(\cdot,(ba))\ \ \text{as}\ \ \delta\to 0.
\]

Note that we now work with \emph{continuous} harmonic functions, which are determined by their boundary values in a very stable way.  By construction of the function~$H_{F^\delta}$ (see~\eqref{eq:HF-def}), its gradient is bounded by~$|F^\delta|^2$. {Passing to the random cluster (or Fortuin--Kasteleyn) representation of the Ising model it is easy to see from Corollary~\ref{cor:circuits} that for each dual vertex~$v^\bullet\in G^\bullet=G^{\bullet,\delta}$ we have
\begin{align*}
\E[\mu_{v^\bullet}\mu_{(b^\delta a^\delta)^\bullet}]\ &=\ \mathbb{P}[\,\text{$\cS^\delta(v^\bullet)$ is dual-connected to $(b^\delta a^\delta)^\bullet$}\,]\ \\
&\le\ (1\!-\!p_0)^{\lfloor(\log\dist(\cS^\delta(v^\bullet),(b^\delta a^\delta)^\bullet)-\log(L_0\delta))/\log 3\rfloor}
\end{align*}
and hence the following `crude' estimate holds:}
\[
|F^\delta(z)|\ \le\ O(\delta^{-\frac{1}{2}})\cdot\E[\,\mu_{v^\bullet(z)}\mu_{(b^\delta a^\delta)^\bullet}\,]\ \le\
O\biggl(\delta^{-\frac{1}{2}}\biggl[\frac{\delta}{\dist(\cS^\delta(z),(b^\delta a^\delta))}\biggr]^\eta\,\biggr)\,.
\]
Therefore, we have the following uniform estimate on~$\pa\Od_\intr$:
\[
h^\delta_{\intr}(u)\ =\ H_{F^\delta}(u)\ =\ O\biggl(\frac{\delta^\eta}{(\dist_{\Od}(u,(b^\delta a^\delta))^{2\eta}}\biggr)\quad \text{if}\ \ \dist_{\Od}(u,(a^\delta b^\delta))=\delta^{1-\eta}
\]
(note that~$\dist_{\Od}(u,(b^\delta a^\delta))$ is of order~$1$ if~$u$ is close to~$(a^\delta b^\delta)$) and similarly
\[
h^\delta_{\intr}(u)\ =\ H_{F^\delta}(u)\ =\ 1-O\biggl(\frac{\delta^\eta}{(\dist_{\Od}(u,(a^\delta b^\delta))^{2\eta}}\biggr)\quad \text{if}\ \ \dist_{\Od}(u,(b^\delta a^\delta))=\delta^{1-\eta}.
\]
The convergence~$h^\delta_{\intr}\to\hm_\Omega(\cdot,(ba))$ now follows from standard arguments: e.g., applying the Beurling estimate for \emph{continuous} harmonic functions one immediately sees that~$h^\delta_{\intr}$ must be uniformly (in~$\delta$) close to~$0$ near the boundary arc~$(ab)$ and close to~$1$ near the complementary arc~$(ba)$. The proof is complete. \qed

\begin{remark}\label{rem:on-the-strategy} In our proof of Theorem~\ref{thm:FK-conv}, the key output of Theorem~\ref{thm:HF-almost-harm} is \emph{not} the approximate harmonicity of functions~$\widetilde{H}_F$ in the bulk of~$\Od$ (this follows from Proposition~\ref{prop:f-hol}) but a quantitative control of their near-to-the-boundary values, which relies upon Lemma~\ref{lem:cont-estimate} and the estimate~$|\Delta \widetilde{H}_F(u)|=O(d_u^{-2+\alpha})$ for~$d_u\ge \delta^{1-\eta}$. In particular, a similar scheme of controlling boundary conditions of functions~$h$ in rough domains could be used for near-critical fermionic observables; cf.~\cite{park-iso}.
\end{remark}

\section{Crossing estimates for the FK-Ising model}\label{sec:RSW}
\setcounter{equation}{0}

This section is mostly devoted to the proof of Theorem~\ref{thm:RSW-selfdual}; for completeness, in Section~\ref{sub:circuits} we also recall a (standard) derivation of Corollary~\ref{cor:circuits} from Theorem~\ref{thm:RSW-selfdual}.
The proof is based on the classical idea of Smirnov to use the s-holomorphicity of fermionic observables in order to derive the required crossing estimate; the overall strategy of such a proof is explained in Section~\ref{sub:strategy-RSW}. However, technical details of our realization of this idea differ from, e.g., the proof of~\cite[Theorem~6.1]{ChSmi2} 
due to the lack of comparison with discrete harmonic functions on~$\Lambda(G)$. Section~\ref{sub:cuts} is devoted to the construction of `straight cuts' that are needed to define discrete rectangles~$\cR^\delta$. Properties of fermionic observables in~$\cR^\delta$ are further discussed in Section~\ref{sub:bc-in-R}. Section~\ref{sub:visibility} contains technical estimates of hitting probabilities for random walks in~$\cR^\delta$ required for the proof of Theorem~\ref{thm:RSW-selfdual}. This proof is completed in Section~\ref{sub:RSW-proof}. {In order to lighten the exposition,} we begin our analysis by considering fermionic observables in discrete rectangles with \emph{Dobrushin} boundary conditions and only then pass to more general observables in rectangles with wired/free/wired/free {ones}.

A reader familiar, e.g., with the square grid case can go directly to Section~\ref{sub:RSW-proof} after Section~\ref{sub:strategy-RSW} in order to grasp the main idea of our strategy before diving into technical details of its realization.

\subsection{Fermionic observable with wired/free/wired/free boundary conditions and the strategy of the proof of Theorem~\ref{thm:RSW-selfdual}}
\label{sub:strategy-RSW}
We begin this section with a reminder of Smirnov's idea on the analysis of crossing probabilities in domains with wired/free/wired/free boundary conditions, implemented in~\cite[Section~6]{ChSmi2} in the isoradial context. In what follows we keep using the Kadanoff--Ceva formalism instead of the loop representation of the FK-model used in~\cite[Section~6]{ChSmi2}.

 Let~$(\Omega;a_1,b_1,a_2,b_2)$ be a simply connected discrete domain on an \mbox{s-embedding}~$\cS$ with wired boundary arcs~$(a_1b_1)^\circ$, $(a_2b_2)^\circ$ and dual-wired boundary arcs~$(b_1a_2)^\bullet$ and~$(b_2a_1)^\bullet$, the latter considered as a \emph{single} dual vertex. (In other words, originally we set the interaction parameter~$x_\mathrm{out}:=1$ in Fig.~\ref{fig:SDdomain}.)

\begin{figure}
\centering
\includegraphics[clip, trim=5.8cm 10.7cm 2.7cm 5.7cm, width=0.8\textwidth]{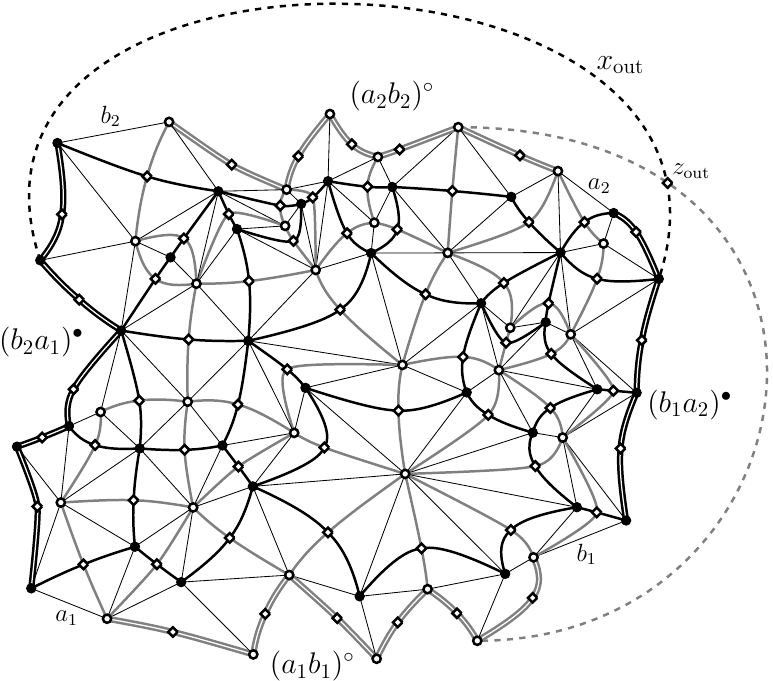}
\caption{An example of a domain with wired/free/wired/free boundary conditions drawn on an s-embedding. Note the external edge and external quad~$z_\mathrm{out}$ introduced so that the graphs $G^\bullet$ and~$G^\circ$ are dual to each other if viewed on the sphere. The parameter~$x_\mathrm{out}$ is originally set to~$1$ and then tuned in a special way.\label{fig:SDdomain}}
\end{figure}

 Denote~${\mathrm{p}}:=2\arctan \E[\sigma_{(a_1b_1)^\circ}\sigma_{(a_2b_2)^\circ}]$, where the expectation~$\E=\E^{(0)}$ is taken in the model described above. Now \emph{re-define}~$x_\mathrm{out}:=\tan\tfrac{1}{2}(\tfrac{\pi}{2}\!-\!{\mathrm{p}})$ and let~$\E^{({\mathrm{p}})}$ be the expectation in the new model with the nontrivial interaction constant between~$\sigma_{(a_1b_1)^\circ}$ and~$\sigma_{(a_2b_2)^\circ}$. It is easy to see that
 \begin{align*}
 \mathbb{P}^{({\mathrm{p}})}[\sigma_{(a_1b_1)^\circ}\!=\sigma_{(a_2b_2)^\circ}]\ &=\ \frac{\mathbb{P}[\sigma_{(a_1b_1)^\circ}\!=\sigma_{(a_2b_2)^\circ}]} {\mathbb{P}[\sigma_{(a_1b_1)^\circ}\!=\sigma_{(a_2b_2)^\circ}]+x_\mathrm{out}\cdot (1-\mathbb{P}[\sigma_{(a_1b_1)^\circ}\!=\sigma_{(a_2b_2)^\circ}])}\\
 &=\ \frac{\frac{1}{2}(1+\tan\frac{1}{2}{\mathrm{p}})}{\tfrac{1}{2}(1+\tan\frac{1}{2}{\mathrm{p}})+x_\mathrm{out}\cdot\frac{1}{2}(1-\tan\frac{1}{2}{\mathrm{p}})}\ =\ \tfrac{1}{2}(1+\sin{\mathrm{p}})\,.
 \end{align*}
 Let~$X(c):=\E^{({\mathrm{p}})}[\chi_c\mu_{(b_1a_2)^\bullet}\sigma_{(a_2b_2)^\circ}]$ be the Kadanoff--Ceva fermionic observable in the \emph{new} (i.e., with~$x_\mathrm{out}\ne {1}$ defined above) model. It is easy to see that
 \[
 X(a_2)=\pm 1\quad\text{and}\quad X(b_1)=\pm\E^{({\mathrm{p}})}[\sigma_{(a_1b_1)^\circ}\sigma_{(a_2b_2)^\circ}]=\pm\sin{\mathrm{p}}.
 \]
 The propagation equation~\eqref{eq:3-terms} applied around the \emph{external} quad~$z_\mathrm{out}$ yields
 \[
 X(b_2)=\pm\cos{\mathrm{p}}\qquad \text{and}\qquad X(a_1)=0,
 \]
 the latter fact is crucial for the further analysis {(see Remark~\ref{rem:4pt>0})} and justifies the concrete choice of the parameter~$x_\mathrm{out}$ made above. Now let a function~$H_X$ be constructed from~$X$ via~\eqref{eq:HX-def}. Provided that the global additive constant in its definition is chosen appropriately, we have
 \begin{equation}
 \label{eq:HF4-bc}
 \begin{array}{l}
 H_X((b_1 a_2)^\bullet)=1,\\[2pt]
 H_X((a_2 b_2)^\circ)=0,
 \end{array} \qquad H_X((b_2a_1)^\bullet)=H_X((a_1b_1)^\circ)=\cos^2{\mathrm{p}},
 \end{equation}
 and~$\E^{(0)}[\sigma_{(a_1b_1)^\circ}\sigma_{(a_2b_2)^\circ}]=\tan\frac{1}{2}{\mathrm{p}}$; cf.~\cite[Eq.~(6.5)]{ChSmi2}.

\smallskip

We now describe the \emph{general strategy of the proof of Theorem~\ref{thm:RSW-selfdual},} which details are presented below. Given~$x_1<x_2$, $y_1<y_2$, let \mbox{$\cR:=(x_1,x_2)\times(y_1,y_2)\subset\C$} and {$\cR^\delta$ be} `straight rectangles'
\[
\cR^\delta:=[\cR(x_1,x_2;y_1,y_2)]^{\circ\bullet\circ\bullet}_{\cS^\delta}
\]
with corners~$a_1^\delta,b_1^\delta,a_2^\delta,b_2^\delta$ drawn on s-embeddings $\cS^\delta$ with~$\delta\to 0$. Let~$X^\delta$ be the Kadanoff--Ceva fermionic observables discussed above and~$F^\delta$ be the corresponding s-holomorphic functions in~$\cR^\delta$; see Proposition~\ref{prop:shol=3term}.

The functions~$H_{F^\delta}$ are uniformly bounded on~${\cR}^\delta$ due to the maximum principle (see Proposition~\ref{prop:HF-max}) and to the boundary conditions~\eqref{eq:HF4-bc}. The a priori regularity results from Section~\ref{sub:regularity} and Proposition~\ref{prop:f-hol} imply that there exist subsequential limits
\[
F^\delta\to f,\quad H_{F^\delta}\to {\textstyle h=\frac{1}{2}\int\Im[(f(z))^2dz]}\quad \text{on~compact~subsets~of~}\cR,
\]
where the function~$f$ is holomorphic and the function~$h:\cR\to [0,1]$ is harmonic. Assume now that~$\E^{(0)}_{\cR^\delta}[\sigma_{(a_1^\delta b_1^\delta)^\circ}\sigma_{(a_2^\delta b_2^\delta)^\circ}]\to 0$ as~$\delta\to 0$, which means that~${\mathrm{p}}={\mathrm{p}}^\delta\to 0$ in the boundary conditions~\eqref{eq:HF4-bc}. The desired contradiction is obtained in two steps:
\begin{itemize}
\item We show that~$h$ must have boundary values~$0$ at the top side~$(a_2b_2)$ of~$\cR$ and boundary values~$1$ at all the three other sides, including~$(a_1b_1)$. Intuitively, this follows from~\eqref{eq:HF4-bc} and~${\mathrm{p}}^\delta\to 0$ but an accurate proof is rather involved.

\smallskip
\item To rule out the only remaining possibility~$h(\cdot)=1-\hm_\cR(\cdot,(a_2b_2))$ we analyze the behavior of~$f=\sqrt{i\partial h}$ near the bottom side~$(a_1b_1)$ of~$\cR$. Loosely speaking, the contradiction comes from the fact that, for a small~$\phi_0>0$, we have $\Re [e^{\pm i\phi_0} F^\delta]\ge 0$ at the boundary arcs~$(a_1^\delta b_1^\delta)^\circ$, which is not compatible with almost constant purely imaginary values of~$f$ near a point on~$(a_1b_1)$.
\end{itemize}
Let us emphasize that \emph{both} these steps heavily rely upon a very particular choice of `straight cuts' on s-embeddings~$\cS^\delta$ and  `straight rectangles'~$\cR^\delta$; {see Section~\ref{sub:cuts}.}

{
\subsubsection{More recent developments: an alternative strategy of the proof} \label{subsub:RSW-Mahfouf}
Since the first version of this paper was published, a more conceptual strategy of the proof of Theorem~\ref{thm:RSW-selfdual} appeared in a very recent work of Mahfouf~\cite[Section~5]{mahfouf-thesis}. The most technical parts of our proof of Theorem~\ref{thm:RSW-selfdual} presented in the forthcoming Sections~\ref{sub:cuts}--\ref{sub:RSW-proof} are the construction of special `straight' cuts on a given \mbox{s-embedding}~$\cS^\delta$ and the analysis of fermionic observables near such `straight' boundaries. As shown in~\cite[Section~5]{mahfouf-thesis}, these unpleasant technicalities can be bypassed if one first `glues' (staying within the s-embeddings framework) a piece of an auxiliary square grid to the piece of~$\cS^\delta$ under consideration. This procedure does \emph{not} rely upon any additional assumption on~$\cQ^\delta$ except~\LipKd\ and provides an extension~$\widetilde{\cS}^\delta$ of~$\cS^\delta$ such that the condition~$\LipKd$ remains essentially unchanged and one has~$r_{\widetilde{z}}\ge \delta^{100}$ for all quads~$\widetilde{z}$ in~$\widetilde{\cS}^\delta$. In particular, in this case one can  take~$\delta':=99\delta|\log\delta|$ in Assumption~\ref{assump:ExpFat} to treat subsequential limits of s-holomorphic functions on~$\widetilde{\cS}^\delta$ as $\delta\to 0$. Moreover, we believe that this approach should also lead to an affirmative answer to the open question (I) from Section~\ref{sub:further} in full generality, i.e., under the least restrictive `technical' assumption~\ExpFat\ on s-embeddings~$\cS^\delta$ for which the analytic tools discussed in Section~\ref{sub:shol-limits} can be applied as for now.

We decided to keep the original proof of Theorem~\ref{thm:RSW-selfdual} below instead of referring to~\cite[Section~5]{mahfouf-thesis} as we hope that the construction of such `straight' cuts can be also used for other purposes. In particular, we hope that these techniques, together with the strategy recently developed in~\cite{DCMT-fractal}, can lead to the proof of uniform `boundary-to-boundary' crossings estimates in general topological rectangles with free boundary conditions at least for all critical doubly periodic planar Ising models. In their turn, such `strong RSW' estimates allow one to deduce the convergence of full loop ensembles from that of interfaces; see~\cite[Section~5]{Ch-ICM18} and references therein.}

\subsection{Construction of discrete half-planes and discrete rectangles~${\cR^\delta}$}\label{sub:cuts} To simplify the notation, below we work in the full plane setup and focus on a construction of a discrete \emph{upper half-plane}~${\mathbb{H}}^\circ_\cS$ on an s-embedding~\mbox{$\cS=\cS^\delta$} satisfying the assumption~\Unif. Generalizations required to construct similar discrete half-planes \mbox{$[\overline{\alpha}^2(\H+is)]^\circ_\cS$} and \mbox{$[\overline{\alpha}^2(\H+is)]^\bullet_\cS$} with~$\alpha\in\mathbb{T},s\in\R$, are more-or-less straightforward and mentioned explicitly at the end of this section. The same constructions can be also applied locally for s-embeddings with the disc topology.

The construction of~$\H^\circ_\cS$ comes from the S-graph~$\cS-i\cQ$, recall that the image of a quad~$z=(v^\bullet_0(z)v^\circ_0(z)v^\bullet_1(z)v^\circ_1(z))$ in this S-graph is a \emph{non-convex} quad~$(\cS-i\cQ)^\dm(z)$, with both vertices $(\cS-i\cQ)(v^\circ_q(z))$ lying \emph{higher} than both  $(\cS-i\cQ)(v^\bullet_p(z))$. Let us note the following simple geometric property of these quads in the situation when the s-embedding~$\cS$ satisfies the assumption~\Unif, cf.~Remark~\ref{rem:Sgraph-deg}:
\begin{itemize}
\item There exists a constant~$\veps_0>0$ depending only on constants in~\Unif\ such that, for each quad~$z\in\Dm(G)$, the following estimate holds:
\[
\Im(\cS-i\cQ)(v_q^{\circ}(z))-\Im(\cS-i\cQ)(v_p^{\bullet}(z))\ \ge\ 4\veps_0\delta
\]
except maybe for \emph{one} of the four sides~$(v^\bullet_p(z)v^\circ_q(z))$, $p=0,1$, $q=0,1$. Moreover, if~$(\cS-i\cQ)(v)$ is the concave vertex of~$(\cS-i\cQ)^\dm(z)$, then
\[
|\Im(\cS-i\cQ)(v')-\Im(\cS-i\cQ)(v)|\ \ge\ 4\veps_0\delta,
\]
where~$v':=v^\bullet_{1-p}$ if~$v=v^\bullet_p$ and $v':=v^\circ_{1-q}$ if~$v=v^\circ_q$.
\end{itemize}
The following definition provides a technical notation, which we use in Definition~\ref{def:half-plane} to construct the discrete half-plane~$\H^\circ_\cS$; see also Fig.~\ref{fig:SgraphCut}.
\begin{definition}\label{def:doubleton} Let~$\veps_0>0$ be chosen as above.
\begin{itemize}
\item We say that a pair of adjacent vertices~$v^\circ\in G^\circ$ and~$v^\bullet\in G^\bullet$ form a \emph{doubleton} in the S-graph~$\cS-i\cQ$ if
\[
\Im(\cS-i\cQ)(v^\circ)-\Im(\cS-i\cQ)(v^\bullet)\ <\ 2\veps_0\delta.
\]
If a vertex~$v\in\Lambda(G)$ is not a part of a doubleton, we call it a \emph{singleton}.
\smallskip
\item We call~$v^\circ\!\in G^\circ$ \emph{positive} if~$\Im(\cS-i\cQ)(v^\circ)\ge \veps_0\delta$ and \emph{negative} otherwise.
\item We call~$v^\bullet\!\in G^\bullet$ \emph{positive} if~$\Im(\cS-i\cQ)(v^\bullet)>-\veps_0\delta$ and \emph{negative} otherwise.
\item We call a doubleton~$(v^\circ v^\bullet)$ \emph{neutral} if~$v^\circ$ is negative but~$v^\bullet$ is positive (in other words, if~$|\Im(\cS-i\cQ)(v)|<\veps_0\delta$ for both vertices of the doubleton).
\end{itemize}
\end{definition}
\begin{remark} The introduction of the cut-off~$\veps_0\delta$ and a special attention paid to neutral doubletons is motivated by the fact that below we consider backward random walks in the discrete half-plane~$\H^\circ_\cS$ associated with small \emph{perturbations} \mbox{$\cS-ie^{2i\phi}\cQ$} of the S-graph~$\cS-i\cQ$ {(below we always require that~$|\phi|\le\phi_0$ , where $\phi_0>0$ does not depend on~$\delta$ and is chosen small enough).} Under such perturbations, possible changes in the combinatorial structure of the S-graph are caused by doubletons.
\end{remark}
Let~$G^\bullet_+$ and~$G^\bullet_-$ denote the sets of positive and negative vertices~$v^\bullet\in G^\bullet$, respectively, and let~$G^\bullet_0\subset G^\bullet_+$ be the set of `black' vertices which are parts of neutral doubletons. It is easy to see that
\begin{itemize}
\item under the assumption~\Unif, there exists a constant~$C_0>0$ such that
\begin{itemize}
\item all vertices~$v^\bullet\in G^\bullet$ with~$\Im\cS(v^\bullet)\ge C_0\delta$ are positive and
\item all vertices~$v^\bullet\in G^\bullet$ with~$\Im\cS(v^\bullet)\le -C_0\delta$ are negative;
\end{itemize}
\smallskip
\item the set~$G^\bullet_-\cup G^\bullet_0$ is a \emph{connected} subgraph of~$G^\bullet$. Indeed, if~$(\cS-i\cQ)(v^\bullet)$ is the concave vertex of a quad~$(\cS-i\cQ)^\dm(z)$, then the opposite vertex of this quad lies at least~$4\veps_0\delta$ lower and one can iterate this consideration.
\end{itemize}

Since the graph~$G^\bullet_-\cup G^\bullet_0$ is connected, the set~$G^\bullet_+\setminus G^\bullet_0$ is a disjoint union of simply connected subgraphs of~$G^\bullet$. Let~$G^\bullet_\H$ be the connected component of $G^\bullet_+\setminus G^\bullet_0$ that contains all vertices~$v^\bullet$ with~$\Im(\cS-i\cQ)(v^\bullet)\ge C_0\delta$. We are now ready to give the central definition of this section.

\begin{figure}
\centering
\includegraphics[clip, trim=5cm 14cm 4cm 6cm, width=0.8\textwidth]{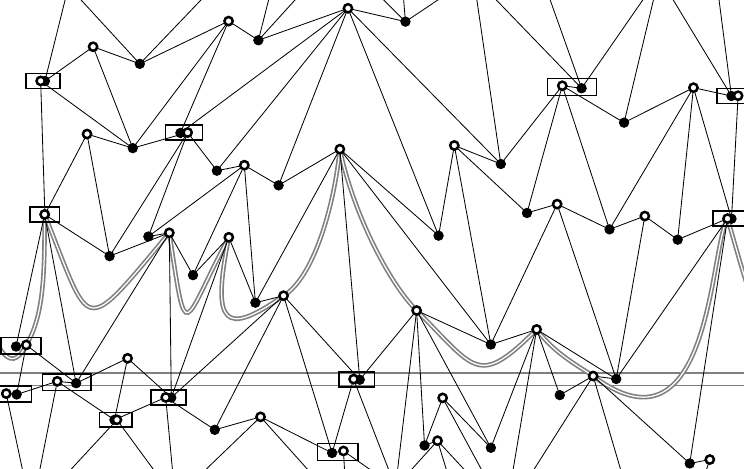}
\caption{An example of a `horizontal straight cut' {(the curve formed by double edges) on the S-graph} $\cS-i\cQ$ from the Definition~\ref{def:half-plane} of the discrete {upper} half-plane~$\H^\circ_{\cS}$. {A vertex $v^\bullet\in G^\bullet$ belongs to $\mathbb{H}^\circ_\cS$ if and only if $\Im \cS(v^\bullet)>-\varepsilon_0\delta$ and $v^\bullet$ does not form a doubleton  lying in the strip~$\R\times(-\varepsilon_0\delta,\varepsilon_0\delta)$.} The horizontal lines represent the levels~$\pm\veps_0\delta$, doubletons are marked by rectangles.
\label{fig:SgraphCut}}
\end{figure}

\begin{definition} \label{def:half-plane}
Let~$\pa\H^\circ_\cS$ be the simple path on~$G^\circ\cup\Dm(G)$ that separates~$G^\bullet_\H$ and its complement in $G^\bullet$, and let~$\H^\circ_\cS\subset \Lambda(G)\cup\Dm(G)$ be the set of vertices and quads lying above (or at) this path in the S-graph~$\cS-i\cQ$.
\end{definition}
Let us emphasize once again that~$\pa\H^\circ_\cS$ is always a \emph{simple} path: since~$G^\bullet_-\cup G^\bullet_0$ is a connected subgraph of~$G^\bullet$ and thus~$G^\bullet_\H$ is simply connected, no vertex of~$G^\circ$ can be visited by~$\pa\H^\circ_\cS$ more than once due to topological reasons.

Recall that the S-graph~$\cS-i\cQ$ defines a bijection~$z\leftrightarrow v(z)$ between~$\Dm(G)$ and~$\Lambda(G)$: by definition, $(\cS-i\cQ)(v(z))$ is the concave vertex of~$(\cS-i\cQ)^\dm(z)$. The following two properties of~$\H^\circ_\cS$ are straightforward from its definition (recall that we view the path~$\pa\H^\circ_\cS$ as a part of~$\H^\circ_\cS$):
\renewcommand{\theenumi}{\alph{enumi}}
\begin{enumerate}
\item $z\in \H^\circ_\cS$ if and only if~$v(z)\in \H^\circ_\cS$;
\item if $(v^\circ v^\bullet)$ is a doubleton in~$\cS-i\cQ$, then~$v^\circ\in\H^\circ_\cS$ if and only if~$v^\bullet\in\H^\circ_\cS$.
\end{enumerate}
In particular, if we replace the S-graph~$\cS-i\cQ$ by~$\cS-ie^{2i\phi}\cQ$ with a sufficiently small~$\phi\in[-\phi_0,\phi_0]$ (where~$\phi_0>0$ depends only on constants in~\Unif), then the equivalence (a) remains true for the bijection~$z\leftrightarrow v(z)$ defined by the \emph{new} S-graph.

\smallskip

Let us now discuss simple modifications required to construct discrete half-planes $[\overline{\alpha}^2(\H+is)]^\circ_\cS$ and~$[\overline{\alpha}^2(\H+is)]^\bullet_\cS$ with~$\alpha\in\mathbb{T}$ and~$s\in\R$.
\begin{itemize}
\item To construct a \emph{shifted} half-plane~$[\H+is]^\circ_\cS$ with~$s\in\R$, use the levels~$s\pm\veps_0\delta$ instead of~$\pm\veps_0\delta$ in Definition~\ref{def:half-plane} of positive and negative vertices.
\smallskip
\item To construct a \emph{rotated} half-plane~$[\overline{\alpha}^2(\H+is)]^\circ_\cS$ with~$\alpha\in\mathbb{T}$, use the S-graph $\cS-i\overline{\alpha}^2\cQ$ instead of~$\cS-i\cQ$ and the condition~$\Im[(\alpha^2\cS-i\cQ)(v)]\ge s\pm \veps_0\delta$ to define positive vertices in Definition~\ref{def:half-plane}.
\smallskip
\item Finally, to construct a discrete half-plane~$[\overline{\alpha}^2(\H+is)]^\bullet_\cS$ with boundary running along the set~$G^\bullet\cup\Dm(G)$, use the S-graph~$\cS+i\overline{\alpha}^2\cQ$ instead of~$\cS-i\overline{\alpha}^2\cQ$ and interchange the roles of $G^\circ$ and~$G^\bullet$ in Definition~\ref{def:doubleton} and Definition~\ref{def:half-plane}.
\end{itemize}

We conclude this section by defining discrete rectangles that we use in Theorem~\ref{thm:RSW-selfdual}. For~$x_1<x_2$ and~$y_1<y_2$, we set (up to a small modification near the corners discussed below)
\begin{align}
[\cR(x_1,x_2;&y_1,y_2)]^{\circ\bullet\circ\bullet}_\cS \notag \\
&:=\ [(\H+iy_1)]^\circ_\cS\ \cap\ [i\H+x_2]^\bullet_\cS\ \cap\ [-\H+iy_2]^\circ_\cS\ \cap\ [-i\H+x_1]^\bullet_\cS
\label{eq:Rdelta-def}
\end{align}
and similarly for other superscripts in place of~${}^{\circ\bullet\circ\bullet}$ where each symbol ($\circ$ or~$\bullet$) denotes the `type'of the discretization chosen on the corresponding (bottom, right, top, left) side of the rectangle.

Under the assumption~\Qflat, the intersection~\eqref{eq:Rdelta-def} defines a finite set, whose boundaries on the s-embedding~$\cS$ go within~$O(\delta)$ from the vertical lines~$x_{1,2}+i\R$ and horizontal lines~$y_{1,2}+\R$. In general, the boundaries of discrete half-planes can intersect more than once near the points $x_{1,2}+iy_{1,2}$. In this case, we always assume that, if necessary, the set~\eqref{eq:Rdelta-def} is slightly modified near the corners so that its boundary consists of four parts
\begin{equation}
\label{eq:Rdelta-ab-def}
\begin{array}{ll}
(a_1b_1)^\circ\in\pa[\H+iy_1]^\circ_\cS,\quad & (b_1a_2)^\bullet\in\pa[i\H+x_2]^\bullet_\cS,\\
(a_2b_2)^\circ\in\pa[-\H+iy_2]^\circ_\cS, & (b_2a_1)^\bullet\in\pa[-i\H+x_1]^\bullet_\cS,
\end{array}
\end{equation}
where~$a_1,b_1,a_2,b_2\in\Upsilon(G)$ as in Section~\ref{sub:strategy-RSW}; see also Fig.~\ref{fig:SDdomain}.

\subsection{Boundary values of fermionic observables in discrete rectangles and backward random walks stopped at~${\pa\H^\circ_\cS}$}\label{sub:bc-in-R}
Let~$F=F^\delta$ be the basic fermionic observable in a discrete rectangle~$\cR^\delta=[\cR(x_1,x_2;y_1,y_2)]^{\circ\bullet\bullet\bullet}_{\cS^\delta}$ with \emph{Dobrushin} boundary conditions (wired along the bottom side~$(ab)$, dual-wired along the three other sides); below we assume that~$y_1=0$ and skip~$\delta$ for shortness. (Recall that our strategy is to analyze the boundary behavior of Dobrushin observables in discrete rectangles first and then use them to analyze more general fermionic observables from Section~\ref{sub:strategy-RSW}.)

At inner quads~$z$ of~$\cR$, the function~$F(z)$ is related to the Kadanoff--Ceva fermionic observable~$X(c)=\E[\chi_c\mu_{(ba)^\bullet}\sigma_{(ab)^\circ}]$ by the identity~\eqref{eq:X-from-F}. The same identity can be used to define the values~$F(z)$ at \emph{boundary half-quads} of~$\cR$. Note that the function~$F$ is defined up to a global sign caused by the ambiguity in the choice of the global sign of the Dirac spinor
\begin{equation}
\label{eq:cF-sign-in-R}
\cX(c)=(\cS(v^\bullet(c))-\cS(v^\circ(c)))^{1/2}.
\end{equation}
In what follows, let us fix lifts of near-to-boundary corners~$c\in\Upsilon(G)$ to the double cover~$\Upsilon^\times(G)$ so that the values~\eqref{eq:cF-sign-in-R} change `in a continuous manner' along the boundary of~$\cR$ except at the corner~$a$, where the sign of the square root flips. (In other words, we require that $\arg(\cX(c_{-+})\overline{\cX(c_{--})})$, $\arg(\cX(c_{-+})\overline{\cX(c_{+-})})$ and~$\arg(\cX(c_{++})\overline{\cX(c_{+-})})$ belong to~$(0;\pi)$ in the notation of the left part of Fig.~\ref{fig:SembBdry}, and similarly along other boundaries.)

Below we focus on the bottom side~$\pa\H^\circ_\cS$ of the discrete rectangle~$\cR$ with wired boundary conditions, the other three sides (carrying dual-wired boundary conditions) can be treated similarly. Let~$z=(v^\bullet v^\circ_-zv^\circ_+)$ be a boundary half-quad of~$\cR$ and~$c_\pm\in\Upsilon^\times(G)$ correspond to the edges~$(v^\circ_\pm v^\bullet)$ as explained above; see Fig.~\ref{fig:SembBdry}.

By definition of the Kadanoff--Ceva fermionic observables~$X$ we have
\begin{equation}
\label{eq:2ptX>0}
X(c_-)=X(c_+)\ =\ \E_\cR[\chi_{c_\pm}\sigma_{(ab)^\circ}\mu_{(ba)^\bullet}]=
\E_\cR[\mu_{v^\bullet}\mu_{(ba)^\bullet}]>0.
\end{equation}

{
\begin{remark}\label{rem:4pt>0} It is worth noting that the same positivity condition
\begin{equation}
\label{eq:4pt>0}
X(c_-)=X(c_+)\ =\ \E_\cR^{(\mathrm{p})}[\mu_{v^\bullet}\mu_{(b_1a_2)^\bullet}\sigma_{(a_1b_1)^\circ}\sigma_{(a_2b_2)^\circ}]\ >\ 0
\end{equation}
holds for the observables in discrete rectangles with~${}^{\circ\bullet\circ\bullet}$ boundary conditions defined in Section~\ref{sub:strategy-RSW} (as above, we denote by~$c_\pm\in\Upsilon^\times(G)$ two nearby corners corresponding to the edges $(v^\circ_\pm v^\bullet)$ of a boundary half-quad adjacent to~$(a_1b_1)^\circ$; see Fig.~\ref{fig:SembBdry}). To prove~\eqref{eq:4pt>0}, note that
\[
\sigma_{(a_1b_1)^\circ}\sigma_{(a_2b_2)^\circ}\cdot \cos\mathrm{p}\ =\ 1 - \mu_{(b_2a_1)^\bullet}\mu_{(b_1a_2)^\bullet}\cdot \sin\mathrm{p}
\]
and hence
\begin{align*}
X(c_\pm)\cdot \cos\mathrm{p}\ &=\ \E_\cR^{(\mathrm{p})}[\mu_{v^\bullet}\mu_{(b_1a_2)^\bullet}]-\E_\cR^{(\mathrm{p})}[\mu_{v^\bullet}\mu_{(b_2a_1)^\bullet}]\cdot \sin\mathrm{p}\\
&=\ \E_\cR^{(\mathrm{p})}[\mu_{v^\bullet}\mu_{(b_1a_2)^\bullet}]-\E_\cR^{(\mathrm{p})}[\mu_{v^\bullet}\mu_{(b_2a_1)^\bullet}]\cdot \E_\cR^{(\mathrm{p})}[\mu_{v^\bullet}\mu_{(b_1a_2)^\bullet}]\ >\ 0,
\end{align*}
where the equality $\sin\mathrm{p}=\E_\cR^{(\mathrm{p})}[\mu_{v^\bullet}\mu_{(b_1a_2)^\bullet}]$ follows from the fact that~$X(a_1)=0$ and the positivity follows from the FKG inequality applied to the dual model.
\end{remark}}

\begin{lemma}\label{lem:argF=} In the setup described above and for a proper choice of the global sign of the observable~$F$ with Dobrushin boundary conditions, the following holds:
\begin{equation}
\label{eq:argF=}
\arg F(z)\ =\ \arg\big(i\varsigma(\overline{\cX(c_+)}-\overline{\cX(c_-)})\big),
\end{equation}
for all half-quads $(v^\bullet v^\circ_- z v^\circ_+)$ on the wired part~$(ab)^\circ$ of the boundary; see Fig.~\ref{fig:SembBdry}.

The same identity \emph{modulo~$\pi$} holds for all fermionic observables defined in a discrete domain whose boundary contains~$(ab)^\circ$.
\end{lemma}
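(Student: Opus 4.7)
The plan is to read the identity off directly from Corollary~\ref{cor:F-from-X}, with the only non-trivial work being the tracking of signs on the double cover. The starting input is the positivity identity~\eqref{eq:2ptX>0}: at a wired boundary half-quad $(v^\bullet v^\circ_- z v^\circ_+)$ the two corners $c_\pm$ are attached to the same interior vertex $v^\bullet$, and since both dual vertices $v^\circ_\pm$ have been merged into the single wired vertex $(ab)^\circ$, one has
\[
X(c_-)\ =\ X(c_+)\ =\ \E_\cR[\mu_{v^\bullet}\mu_{(ba)^\bullet}]\ >\ 0.
\]
For the mod-$\pi$ version of the statement, for a general fermionic observable $X_\varpi$ with $\varpi$ disjoint from the half-quad, exactly the same argument gives $X_\varpi(c_-)=X_\varpi(c_+)\in\R$, with a sign that is irrelevant modulo~$\pi$.

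Next I apply Corollary~\ref{cor:F-from-X}, which is stated for an interior quad but carries over verbatim to a boundary half-quad once we identify the pair of corners to which the Kadanoff--Ceva values are ascribed: here the role of $(c_{01}(z),c_{10}(z))$ is played by $(c_+,c_-)$, both adjacent to $v^\bullet$ and lifted to $\Upsilon^\times(G)$ as specified right before the lemma. Substituting $X(c_\pm)=x>0$ and using linearity yields
\[
F(z)\ =\ -i\varsigma\cdot \frac{x\,\bigl(\overline{\cX(c_+)}-\overline{\cX(c_-)}\bigr)}{\Im[\,\overline{\cX(c_+)}\cX(c_-)\,]}\,,
\]
so $\arg F(z)$ is determined by $\arg\bigl(i\varsigma(\overline{\cX(c_+)}-\overline{\cX(c_-)})\bigr)$ up to the sign of the real denominator $\Im[\overline{\cX(c_+)}\cX(c_-)]$.

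It remains to check that, under the convention fixed before the lemma, this denominator is \emph{positive}. Writing $\Im[\overline{\cX(c_+)}\cX(c_-)]=|\cX(c_+)||\cX(c_-)|\sin(\arg\cX(c_-)-\arg\cX(c_+))$, the requested sign is equivalent to the inequality $\arg(\cX(c_-)\overline{\cX(c_+)})\in (0,\pi)$ for the chosen lifts. This is exactly the continuity convention imposed before the lemma (``$\arg(\cX(c_{-+})\overline{\cX(c_{--})})\in(0,\pi)$, etc.''), applied to the two corners attached to $v^\bullet$ on the wired side, so the sign is as desired. The global sign of $F$ needed to turn this argument identity into a genuine equality of complex numbers is then the one fixed by the convention on~\eqref{eq:cF-sign-in-R}; reversing that global sign only flips $F$, which does not affect the statement modulo~$\pi$.

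The only real obstacle is bookkeeping: one has to verify that Corollary~\ref{cor:F-from-X}, which as written assumes an interior quad, indeed holds with the same formula for a boundary half-quad (this is a direct check on the two-term reduction of the propagation equation~\eqref{eq:3-terms} at the wired boundary, where two of the four corners coincide with their neighbors after the wiring identification), and that the chosen lifts of $c_\pm$ to the double cover $\Upsilon^\times(G)$ are compatible all along $(ab)^\circ$ with the global continuity convention. The analogous argument, with the three other sides of $\cR$ and with the more general four-point observable~\eqref{eq:4pt>0} replacing the Dobrushin observable, handles the statement for any fermionic observable whose boundary conditions contain~$(ab)^\circ$.
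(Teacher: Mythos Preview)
Your approach is essentially the paper's own: use $X(c_+)=X(c_-)=x>0$ from~\eqref{eq:2ptX>0}, reconstruct $F(z)$ from the two projections (exactly the linear-algebra computation behind Corollary~\ref{cor:F-from-X}), and then check the sign of the real scalar in front of $i\varsigma(\overline{\cX(c_+)}-\overline{\cX(c_-)})$ using the continuity convention on the lifts. The paper writes the result as
\[
\overline{\varsigma}F(z)=iX\cdot\frac{\overline{\cX(c_+)}-\overline{\cX(c_-)}}{\Im[\cX(c_+)\overline{\cX(c_-)}]}
\]
and then notes that $\Im[\cX(c_+)\overline{\cX(c_-)}]>0$ under the convention.

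There is, however, a sign slip in your bookkeeping. From your displayed formula $F(z)=-i\varsigma\cdot x(\overline{\cX(c_+)}-\overline{\cX(c_-)})/D$ with $D=\Im[\overline{\cX(c_+)}\cX(c_-)]$, you need $-x/D>0$, i.e.\ $D<0$, to get $\arg F(z)=\arg(i\varsigma(\overline{\cX(c_+)}-\overline{\cX(c_-)}))$; you instead claim you need $D>0$. Correspondingly, the continuity convention fixed before the lemma gives $\arg(\cX(c_+)\overline{\cX(c_-)})\in(0,\pi)$ (the later corner has larger argument), hence $\Im[\cX(c_+)\overline{\cX(c_-)}]>0$ and $D=-\Im[\cX(c_+)\overline{\cX(c_-)}]<0$, not $D>0$. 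Once you flip this sign the argument is complete and matches the paper's; the mod-$\pi$ statement for general observables is unaffected either way.
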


\begin{proof} The identities~\eqref{eq:X-from-F} and~\eqref{eq:2ptX>0} imply that~$X:=X(c_\pm)=\Re[\,\overline{\varsigma}F(z)\cdot \cX(c_\pm)]$,
which gives the explicit formula
\[
\overline{\varsigma}F(z)\ =\ iX\cdot({\overline{\cX(c_+)}-\overline{\cX(c_-)}})\,/\,{\Im\big[\cX(c_+)\overline{\cX(c_-)}\big]}.
\]
The claim for fermionic observables with Dobrushin boundary conditions easily follows since the denominator is a positive number (recall that we chose the lifts of~$c_\pm$ to the double cover~$\Upsilon^\times(G)$ so that the values~$\cX(c)$ change `continuously' along the boudnary) and~$X>0$ for Dobrushin observables. {A} similar claim for other fermionic observables follows since the identity~$X(c_-)=X(c_+)\in\R$~(similar to~\eqref{eq:2ptX>0}) holds for all of them, though without the positivity condition.
\end{proof}
\begin{corollary} \label{cor:ReF>0}
There exists a constant~$\phi_0>0$ depending only on constants in \Unif\ such that, with a proper choice of the sign of the Dobrushin observable~$F$ in a discrete rectangle~$\cR=\cR^{\circ\bullet\bullet\bullet}$, one has
\[
\Re[e^{i\phi}F(z)]\ge 0\quad \text{for~all}\ \ \phi\in[-\phi_0,\phi_0]
\]
and all boundary half-quads $(v^\bullet v^\circ_- z v^\circ_+)$ at the bottom side~$(ab)^\circ\subset\pa\H^\circ_\cS$ of~$\cR$.
\end{corollary}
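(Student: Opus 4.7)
The plan is to derive an explicit formula for $\arg F(z)$ at each boundary half-quad on $(ab)^\circ$, then bound it uniformly using \Unif, \Qflat, and the straight-cut construction of $\pa\H^\circ_\cS$ from Definition~\ref{def:half-plane}.

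Starting from Lemma~\ref{lem:argF=}, $\arg F(z)=\arg[\,i\varsigma(\overline{\cX(c_+)}-\overline{\cX(c_-)})\,]$. The identity $\cX(c_\pm)^2=\cS(v^\bullet)-\cS(v^\circ_\pm)$ combined with the factorization $\cX(c_+)^2-\cX(c_-)^2=(\cX(c_+)+\cX(c_-))(\cX(c_+)-\cX(c_-))$ gives
\[
\cX(c_+)-\cX(c_-)\ =\ \frac{\cS(v^\circ_-)-\cS(v^\circ_+)}{\cX(c_+)+\cX(c_-)}\,,
\]
so after conjugating and substituting one arrives at the clean identity
\[
\arg F(z)\ =\ \tfrac{3\pi}{4}\ -\ \arg\!\big(\cS(v^\circ_-)-\cS(v^\circ_+)\big)\ +\ \arg\!\big(\cX(c_+)+\cX(c_-)\big).
\]
A direct check on the critical isoradial square grid evaluates this to a constant of modulus $\le\tfrac{\pi}{8}$ at each boundary half-quad of $\H^\circ_\cS$; the task is to control perturbations of the two geometric terms under \Unif\ and \Qflat.

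The vector $\cS(v^\circ_-)-\cS(v^\circ_+)$ is the $v^\circ$-diagonal of the boundary quad $z$, whose length is $\asymp\delta$ under \Unif. Its vertical extent is $O(\delta)$: Definition~\ref{def:half-plane} confines $\pa\H^\circ_\cS$ to a layer $|\Im(\cS-i\cQ)|=O(\delta)$, and \Qflat\ passes this to $|\Im\cS|=O(\delta)$ along the boundary. Hence $\arg(\cS(v^\circ_-)-\cS(v^\circ_+))$ lies in a cone of width $\eta_1<\pi$ around its isoradial benchmark direction, with $\eta_1$ depending only on $r_0,R_0,\theta_0,\veps_0$. For $\cX(c_+)+\cX(c_-)$: \Unif\ forces the moduli to be comparable $\asymp\sqrt\delta$, and the continuity convention for $\cX$ combined with the quad-angle lower bound at $v^\bullet$ forces $\arg\cX(c_+)-\arg\cX(c_-)\in[\tfrac{\theta_0}{2},\pi-\tfrac{\theta_0}{2}]$; the elementary identity $r(e^{i(\alpha+\beta)}+e^{i(\alpha-\beta)})=2r\cos\beta\cdot e^{i\alpha}$ then shows that $\arg(\cX(c_+)+\cX(c_-))$ lies within a neighborhood of width $\eta_2<\tfrac{\pi}{2}$ of the `mid-direction' $\tfrac12(\arg\cX(c_+)+\arg\cX(c_-))$. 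Substituting into the key identity yields $|\arg F(z)|\le \tfrac{\pi}{8}+\eta_1+\eta_2$ after the proper global sign choice, so any $\phi_0\in(0,\tfrac{\pi}{2}-\tfrac{\pi}{8}-\eta_1-\eta_2)$ works.

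The main obstacle is pinning the mid-direction, equivalently the sum $\arg(\cS(v^\bullet)-\cS(v^\circ_+))+\arg(\cS(v^\bullet)-\cS(v^\circ_-))$: under \Qflat\ the bound $|\cQ|=O(\delta)$ is of the same order as the edge length, so \emph{a priori} each individual edge $(v^\bullet v^\circ_\pm)$ can point in any direction in the ambient plane and only the combined effect of the two edges is pinned. What ultimately controls the mid-direction is the combined effect of (i) the fact that $v^\bullet$ sits between $v^\circ_-$ and $v^\circ_+$ along the $v^\bullet$-diagonal of the tangential quad, with angle at $v^\bullet$ bounded in $[\theta_0,\pi-\theta_0]$ by \Unif, and (ii) the sea-level confinement of $v^\bullet$ from Definition~\ref{def:half-plane} (with the $\veps_0$-cutoffs handling possible degeneracies of $\cS-i\cQ$ at neutral doubletons). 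Making this mechanism quantitative amounts to enumerating the admissible singleton/doubleton patterns at the triple $(v^\bullet,v^\circ_-,v^\circ_+)$ allowed by Definition~\ref{def:half-plane} and performing an elementary but case-heavy trigonometric estimate for each, which is the bulk of the technical work.
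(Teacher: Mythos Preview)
Your formula
\[
\arg F(z)=\tfrac{3\pi}{4}-\arg\big(\cS(v^\circ_-)-\cS(v^\circ_+)\big)+\arg\big(\cX(c_+)+\cX(c_-)\big)
\]
is correct, but the route you then take has two genuine problems.

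First, the statement asks for $\phi_0$ depending \emph{only} on the constants in \Unif, yet your argument injects \Qflat\ when you write ``\Qflat\ passes this to $|\Im\cS|=O(\delta)$ along the boundary.'' Everything downstream --- the cone width $\eta_1$ for $\arg(\cS(v^\circ_-)-\cS(v^\circ_+))$ and eventually $\phi_0$ --- then inherits a dependence on the \Qflat\ constant. This is not a cosmetic issue: without \Qflat\ the boundary $\pa\H^\circ_\cS$ need not sit in an $O(\delta)$-strip of the real axis \emph{in the s-embedding}~$\cS$, so your control on $\arg(\cS(v^\circ_-)-\cS(v^\circ_+))$ simply fails. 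Second, you yourself flag the ``main obstacle'' (pinning the mid-direction $\tfrac12(\arg\cX(c_+)+\arg\cX(c_-))$) and defer it to an unperformed case enumeration over singleton/doubleton patterns. That is precisely the heart of the matter, and as stated your bounds $\eta_1+\eta_2<\tfrac{3\pi}{8}$ are asserted rather than proved.

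The paper avoids both difficulties by working on the S-graph $\cS-i\cQ$ instead of on $\cS$. Writing $w:=\cS(v^\bullet)-\tfrac12(\cS(v^\circ_-)+\cS(v^\circ_+))$ and $u:=\tfrac12(\cS(v^\circ_+)-\cS(v^\circ_-))$, an inverse-Zhukovsky computation turns the condition $\Re F(z)\ge 0$ into a condition on $\arg[w-(w^2-u^2)^{1/2}]$; after using $|w\pm u|=|\cX(c_\mp)|^2$ and $\cQ(v^\circ_+)-\cQ(v^\circ_-)=|w-u|-|w+u|$, this collapses to the clean equivalence
\[
\Re F(z)\ge 0\quad\Longleftrightarrow\quad \arg\big[(\cS-i\cQ)(v^\circ_+)-(\cS-i\cQ)(v^\circ_-)\big]\in(-\pi,\pi),
\]
i.e.\ the boundary direction \emph{drawn on the S-graph} does not cross $\R_-$. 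But this is exactly what the construction of $\pa\H^\circ_\cS$ in Section~\ref{sub:cuts} was designed to guarantee (positive/neutral black vertices above, negative below, with the $\veps_0$-cutoffs giving a definite margin under \Unif). For general $\phi$, the same computation shows $\Re[e^{i\phi}F(z)]\ge 0$ is equivalent to the analogous condition on the perturbed S-graph $\cS-ie^{-2i\phi}\cQ$; under \Unif\ the doubleton cutoffs ensure this survives for $|\phi|\le\phi_0$. No \Qflat, no case analysis, and no separate control of the two argument terms in your decomposition is needed: the $\cQ$-correction is absorbed into the S-graph coordinate rather than bounded away.
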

\begin{proof}
Let us start with the case~$\phi=0$. To simplify the notation, denote
\[
w:=\cS(v^\bullet)-\tfrac{1}{2}(\cS(v^\circ_-)+\cS(v^\circ_+))\quad \text{and}\quad u:=\tfrac{1}{2}(\cS(v^\circ_+)-\cS(v^\circ_-)).
\]
Due to Lemma~\ref{lem:argF=}, the condition~$\Re[F(z)]\ge 0$ can be equivalently rewritten as
\begin{equation}
\label{eq:arg-Zhuk}
\arg\big[w-(w^2\!-u^2)^{1/2}\big]\ =\ 2\arg\big[(w\!-\!u)^{1/2}-(w\!+\!u)^{1/2}\big]\ \in\ (\tfrac{1}{2}\pi;\tfrac{5}{2}\pi)
\end{equation}
Since~$w\mapsto w-(w^2\!-\!u^2)^{1/2}$ is nothing but the inverse Zhukovsky (or Joukowski) function, it is not hard to see that~\eqref{eq:arg-Zhuk} holds in the following three cases:
\begin{itemize}
\item $\arg u=\arg(\cS(v^\circ_+)-\cS(v^\circ_-))\in[-\frac{1}{2}\pi;\frac{1}{2}\pi]$,

\item $\arg u\in (\tfrac{1}{2}\pi;\frac{3}{2}\pi)$ and~$|w+u|-|w-u|<2\Im u$,

\item $\arg u\in (-\tfrac{3}{2}\pi;-\frac{1}{2}\pi)$ and~$|w+u|-|w-u|>2\Im u$.
\end{itemize}
As~$|w+u|-|w-u|=-\cQ(v^\circ_-)+\cQ(v^\circ_+)$, the preceding computation can be summarized in a surprisingly simple form:
\[
\Re[F(z)]\ge 0\quad \Leftrightarrow\quad \arg\big[(\cS-i\cQ)(v^\circ_+)-(\cS-i\cQ)(v^\circ_-)\big]\in (-\pi,
\pi)\ \ \text{along}\ \ \pa\H^\circ_\cS.
\]
In other words, the direction of the boundary of~$\pa\H^\circ_\cS$, \emph{drawn on the S-graph~$\cS-i\cQ$,} should not turn across the negative real line~$\R_-$. This condition easily follows from the construction of the discrete plane~$\H^\circ_\cS$ given in Section~\ref{sub:cuts}. (To give a formal proof, note that the oriented angle between diagonals of a non-self-intersecting quad~$(\cS-i\cQ)^\dm(z)$ is always in~$(0;\pi)$ and that the direction of the vector going from the positive to the negative (or neutral) `black' vertex of this quad is in~$(-\pi;0)$.)

Finally, note that a similar condition~$\Re[e^{i\phi}F(z)]\ge 0$ along~$\pa\H^\circ_\cS$ is equivalent to say that the direction of the boundary~$\pa\H^\circ_\cS$ \emph{drawn on the S-graph~$\cS-ie^{-2i\phi}\cQ$} also does not turn across the negative real line. For sufficiently small~$|\phi|\le\phi_0$ this follows from the construction of~$\H^\circ_\cS$ and the assumption~\Unif.
\end{proof}

\begin{figure}
\centering \includegraphics[clip, trim=4.6cm 12.2cm 1.4cm 5.6cm, width=0.92\textwidth]{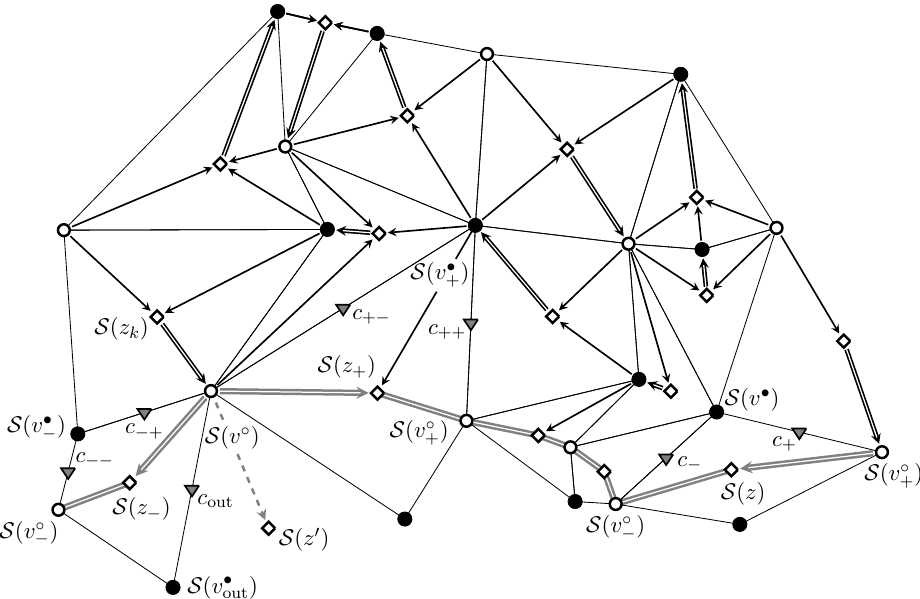}
\caption{An example of allowed jumps of the \emph{backward} random walk associated with the S-graph~$\cS-i\cQ$ stopped at the boundary of the discrete half-plane~$\H^\circ_\cS$. Double arrows indicate the bijection~$z\mapsto v(z)$ defined by this S-graph; note that not all boundary half-quads~$z\in\pa\H^\circ_\cS$ can be reached by this random walk.
\textsc{Right:} The notation for a boundary half-quad~$(v^\bullet v^\circ_- z v^\circ_+)$ used in Lemma~\ref{lem:argF=} and Corollary~\ref{cor:ReF>0}. \textsc{Left:} The notation for two consecutive boundary half-quads~$(v^\bullet_- v^\circ_- z_- v^\circ)$ and~$(v^\bullet_+ v^\circ z_+ v^\circ_+)$ used in Definition~\ref{def:RW-at-paH}. As explained in Lemma~\ref{lem:RW-at-paH} and Lemma~\ref{lem:kpm=O(1)}, the jumps from~$z_k$ to~$z'$ can be redistributed to~$z_+$ not breaking the martingale property of (the real parts of) fermionic observables. \label{fig:SembBdry}}
\end{figure}

Recall that real parts~$\Re[F(z)]$ of s-holomorphic functions are martingales with respect to a certain random walk~$\widetilde{Z}_t$ on~$\Dm(G)$, namely with respect to the backward random walk associated with the S-graph~$\cS-i\cQ$; see Proposition~\ref{prop:PrF-mart}, Remark~\ref{rem:associatedRW} and the equation~\eqref{eq:ReF-RW} for the values~$\Re(z_k)$ near a vertex~$v\in\Lambda(G)$. When approaching the boundary of the discrete half-plane~$\H^\circ_\cS$ (i.e., when~$v\in\pa\H^\circ_\cS$), this random walk can jump \emph{across} $\pa\H^\circ_\cS$ rather than simply hit a point~$z\in\pa\H^\circ_\cS$. This definition is not compatible with the analysis of fermionic observables near~$\pa\H^\circ_\cS$ (since these observables are not defined below the path~$\pa\H^\circ_\cS$) and should be modified if we wish to apply the optional stopping theorem for these martingales up to the hitting time to~$\pa\H^\circ_\cS$. We discuss such a modification below.

Recall that~$\pa\H^\circ_\cS$ is a \emph{simple} path on~$G^\circ\cup\Dm(G)$. Let~$(v^\bullet_-v^\circ_-zv^\circ)$ and~$(v^\bullet_+v^\circ zv^\circ_+)$ be two consecutive boundary half-quads at~$\pa\H^\circ_\cS$ and let~$v^\bullet_\mathrm{out}\not\in\H^\circ_\cS$ be one of the `outer' neighbors of~$v^\circ$, i.e.,~$v^\circ\sim v^\bullet_\mathrm{out}\in  G^\bullet\smallsetminus G^\bullet_\H$.

\begin{definition}\label{def:RW-at-paH} We say that the backward random walk associated with the \mbox{S-graph} $\cS-i\cQ$ is stopped at~$\pa\H^\circ_\cS$ if, for each~$v^\circ\in\pa\H^\circ_\cS$, all its transitions from $z\in\Int\H^\circ_\cS$ to $z'\not\in\H^\circ_\cS$ with~$z\sim v^\circ\sim z'$  are redirected either to~$z_+$ or to~$z_-$ depending on the position of~$z'$ with respect to the edge~$(v^\circ v^\bullet_\mathrm{out})$; see the left part of Fig.~\ref{fig:SembBdry}.

\end{definition}
\begin{remark} \label{rem:RWphi-at-paH}
 A similar construction of the backward random walk associated with the S-graph~$\cS-ie^{2i\phi}\cQ$ \emph{stopped at~$\pa\H^\circ_\cS$} applies for all~$\phi\in[-\phi_0,\phi_0]$, where~$\phi_0>0$ is fixed in Corollary~\ref{cor:ReF>0}. The condition~$|\phi|\le\phi_0$ is particularly important in the forthcoming discussion of the \emph{positivity} of coefficients~$k_\pm$ in the identity~\eqref{eq:RW-at-paH}.
\end{remark}

Certainly, a priori there is no guarantee that~$\Re F(z)$ remains a martingale with respect to this modified walk until its hitting time to~$\pa\H^\circ_\cS$. However, it turns out that boundary conditions~\eqref{eq:argF=} allows to save this martingale property at~$\pa\H^\circ_\cS$ by the cost of a controllable modification of values of~$\Re F(z)$ at the stopping time.

\begin{lemma}\label{lem:RW-at-paH}
Let $v^\circ\in\pa {\H}^\circ_\cS$ and~$z_+=z_0,z_1,\ldots,z_n=z_-\in\Dm(G)$ be neighbors of $v^\circ$ in the discrete upper half-plane~$\H^\circ_\cS$ listed counterclockwise. Assume that an s-holomorphic function~$F$ satisfies (modulo~$\pi$) boundary conditions~\eqref{eq:argF=} at both half-quads~$z_\pm$. Then, the following identity holds:
\begin{align}
\sum_{k=1}^{n-1}\biggl(\frac{\Re\eta_{k+1}}{\Im\eta_{k+1}}-\frac{\Re\eta_k}{\Im\eta_k}\biggr)\Re F(z_k)&+ \biggl(\frac{\Re\eta_1}{\Im\eta_1}-\frac{\Re\eta_\mathrm{out}}{\Im\eta_\mathrm{out}}\biggr)\cdot k_+\Re F(z_+) \notag\\
&+\ \biggl(\frac{\Re\eta_\mathrm{out}}{\Im\eta_\mathrm{out}}-\frac{\Re\eta_n}{\Im\eta_n}\biggr)\cdot k_-\Re F(z_-)=0,
\label{eq:RW-at-paH}
\end{align}
where the coefficients~$k_\pm\in\R$ are given by the formulas~\eqref{eq:k+=} and~\eqref{eq:k-=}.
\end{lemma}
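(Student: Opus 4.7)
The plan is to derive~\eqref{eq:RW-at-paH} as the boundary analog of the interior random-walk identity~\eqref{eq:ReF-RW} around~$v^\circ$, with the contributions from the ``outside'' neighbors of $v^\circ$ (lying in $G^\bullet\smallsetminus G^\bullet_\H$) collapsed into a single virtual edge along $(v^\circ v^\bullet_\mathrm{out})$ at the cost of two real renormalizing factors $k_\pm$ at the boundary half-quads $z_\pm$. The argument is essentially algebraic: it combines the $n$ real relations coming from s-holomorphicity on the interior edges of $\cS$ incident to $v^\circ$ with the two linear relations on $F(z_\pm)$ imposed by~\eqref{eq:argF=}.

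First, for $k=1,\ldots,n$ the s-holomorphicity of $F$ across the edge $(v^\circ v^\bullet_k)$ of $\cS$ separating $z_{k-1}$ and $z_k$ reads
\[
(\Re F(z_k)-\Re F(z_{k-1}))\Re\eta_k+(\Im F(z_k)-\Im F(z_{k-1}))\Im\eta_k=0.
\]
Dividing by $\Im\eta_k$, summing over $k=1,\ldots,n$ (so that the $\Im F$ contributions telescope), and performing an Abel transformation on the $\Re F$ contributions yields the preliminary identity
\[
\sum_{k=1}^{n-1}\!\Bigl(\frac{\Re\eta_{k+1}}{\Im\eta_{k+1}}-\frac{\Re\eta_k}{\Im\eta_k}\Bigr)\Re F(z_k) \,=\, \frac{\Re\eta_n}{\Im\eta_n}\Re F(z_-)-\frac{\Re\eta_1}{\Im\eta_1}\Re F(z_+) + \bigl(\Im F(z_-)-\Im F(z_+)\bigr).
\]
Only the $n$ interior edges incident to $v^\circ$ enter here; no s-holomorphicity constraint is imposed ``across'' the boundary $\pa\H^\circ_\cS$, where $F$ is not defined on the outside.

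Second, the boundary condition~\eqref{eq:argF=} fixes $\arg F(z_\pm)$ modulo~$\pi$, which is equivalent to a real-linear relation $\Im F(z_\pm)=\lambda_\pm\Re F(z_\pm)$, where $\lambda_\pm\in\R$ is an explicit function of the directions of the corner spinors $\cX$ adjacent to the boundary half-quads $z_\pm$ (the borderline case $\Re F(z_\pm)=0$ is handled by reading the condition as a homogeneous linear equation in $(\Re F(z_\pm),\Im F(z_\pm))$). Substituting these into the preliminary identity and collecting the coefficients of $\Re F(z_\pm)$ produces a single linear relation among $\Re F(z_0),\ldots,\Re F(z_n)$. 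Matching it coefficient-by-coefficient with~\eqref{eq:RW-at-paH} forces
\[
k_+ \,=\, \frac{\Re\eta_1/\Im\eta_1+\lambda_+}{\Re\eta_1/\Im\eta_1-\Re\eta_\mathrm{out}/\Im\eta_\mathrm{out}}, \qquad k_- \,=\, \frac{\Re\eta_n/\Im\eta_n+\lambda_-}{\Re\eta_n/\Im\eta_n-\Re\eta_\mathrm{out}/\Im\eta_\mathrm{out}},
\]
which are (up to notation) the formulas~\eqref{eq:k+=},~\eqref{eq:k-=} of the paper.

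The bulk of the work is thus bookkeeping: tracking the $\eta$-indexing and the cyclic position of $v^\bullet_\mathrm{out}$ between $v^\bullet_+$ and $v^\bullet_-$ so that the signs in~\eqref{eq:RW-at-paH} come out correctly, and making $\lambda_\pm$ explicit from~\eqref{eq:argF=}. Crucially, this lemma by itself does \emph{not} establish that $k_\pm>0$; positivity is what turns~\eqref{eq:RW-at-paH} into a genuine martingale relation for the boundary-stopped random walk of Definition~\ref{def:RW-at-paH}. I expect that to be the content of the subsequent Lemma~\ref{lem:kpm=O(1)} and to reduce, exactly as in the proof of Corollary~\ref{cor:ReF>0}, to the geometric observation that in the S-graph $\cS-ie^{2i\phi}\cQ$ the direction of $\pa\H^\circ_\cS$ does not turn across the negative real axis for $|\phi|\le\phi_0$.
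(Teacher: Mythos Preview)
Your proposal is correct and follows essentially the same route as the paper: telescope the s-holomorphicity relations across the interior edges incident to $v^\circ$ (as in the derivation of~\eqref{eq:ReF-RW}), then use the boundary condition~\eqref{eq:argF=} to write $\Im F(z_\pm)/\Re F(z_\pm)=\lambda_\pm$ and match coefficients to obtain $k_\pm$. The paper then carries out the explicit computation of $\lambda_\pm$ in terms of the Dirac spinor values $\cX(c_{\pm\pm}),\cX(c_\mathrm{out})$ to arrive at the closed forms~\eqref{eq:k+=},~\eqref{eq:k-=}, and you are right that the positivity of $k_\pm$ is deferred to Lemma~\ref{lem:kpm=O(1)}.
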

\begin{proof} This is a straightforward computation. Using the s-holomorphicity of~$F$ as in the derivation {of}~\eqref{eq:ReF-RW}, we see that the identity~\eqref{eq:RW-at-paH} holds provided that
\[
k_+\cdot\biggl(\frac{\Re\eta_1}{\Im\eta_1}-\frac{\Re\eta_\mathrm{out}}{\Im\eta_\mathrm{out}}\biggr) =\ \frac{\Re\eta_1}{\Im\eta_1}+\frac{\Im F(z_+)}{\Re F(z_+)}
\]
and similarly for~$z_-$. Let~$c_{+\pm}:=c_\pm(z_+)=c_{(v^\circ v^\bullet_\pm)}$; see Fig.~\ref{fig:SembBdry}. It~is easy to see that
\[
\frac{\Re\eta_1}{\Im\eta_1}-\frac{\Re\eta_\mathrm{out}}{\Im\eta_\mathrm{out}}\ =\ \frac{\Re[\varsigma\overline{\cX(c_{+-})}]}{\Im[\varsigma\overline{\cX(c_{+-})}]}+ \frac{\Re[\overline{\varsigma}\cX(c_\mathrm{out})]}{\Im[\overline{\varsigma}\cX(c_\mathrm{out})]}\ =\ \frac{\Im[\cX(c_{+-})\overline{\cX(c_\mathrm{out})}]}{\Im[\varsigma\overline{\cX(c_{+-})}]\Im[\varsigma\overline{\cX(c_\mathrm{out})}]}.
\]
Also, it follows from Lemma~\ref{lem:argF=} that
\begin{align*}
\frac{\Re\eta_1}{\Im\eta_1}+\frac{\Im F(z_+)}{\Re F(z_+)}\ &=\ \frac{\Re[\varsigma\overline{\cX(c_{+-})}]}{\Im[\varsigma\overline{\cX(c_{+-})}]}- \frac{\Im[-i\overline{\varsigma}(\cX(c_{++})\!-\!\cX(c_{+-}))]}{\Re[-i\overline{\varsigma}(\cX(c_{++})\!-\!\cX(c_{+-}))]}\\
&=\ \frac{\Im[\cX(c_{++})\overline{\cX(c_{+-})}]}{\Im[\varsigma\overline{\cX(c_{+-})}]\cdot \Re[-i\overline{\varsigma}(\cX(c_{++})\!-\!\cX(c_{+-}))]}.
\end{align*}
Therefore, we arrive at the formula
\begin{equation}
\label{eq:k+=}
k_+\ =\ \frac{\Im[\cX(c_{++})\overline{\cX(c_{+-})}]}{\Im[\cX(c_{+-})\overline{\cX(c_\mathrm{out})}]}\cdot \frac{\Im[\varsigma\overline{\cX(c_\mathrm{out})}]}{\Re[-i\overline{\varsigma}(\cX(c_{++})\!-\!\cX(c_{+-}))]}\,.
\end{equation}
A similar computation (see Fig.~\ref{fig:SembBdry} for the notation) shows that
\begin{equation}
\label{eq:k-=}
k_-\ =\ -\ \frac{\Im[\cX(c_{--})\overline{\cX(c_{-+})}]}{\Im[\cX(c_{-+})\overline{\cX(c_\mathrm{out})}]}\cdot \frac{\Im[\varsigma\overline{\cX(c_\mathrm{out})}]}{\Re[-i\overline{\varsigma}(\cX(c_{-+})\!-\!\cX(c_{--}))]}\,.
\end{equation}
Note that these expressions do not depend on the choice of the sign of~$\cX(c_\mathrm{out})$.
\end{proof}
Assume now that there exists~$k=1,\ldots,n-1$ such that~$v^\circ=v(z_k)$ under the bijective correspondence {between}~$\Dm(G)$ and~$\Lambda(G)$ provided by the S-graph~$\cS-i\cQ$. (Recall that this happens if and only if the line~$\R$ lies in between~$\eta_k\R$ and~$\eta_{k+1}\R$ or, equivalently, if and only if the ray~$i\R_+$ lies in between~$(\cS(v^\bullet_k)-\cS(v^\circ))\R_+$ and~$(\cS(v^\bullet_{k+1})-\cS(v^\circ))\R_+$.) In this situation, the backward random walk associated with the S-graph~$\cS-i\cQ$ has non-zero jump rates from~$z_k$ to other neighbors of~$v^\circ$.

\begin{lemma} \label{lem:kpm=O(1)}
Let~$v^\circ\in\partial\H^\circ_\cS$ be such that~$i\R_+$ lies in between \mbox{$(\cS(v^\bullet_+)-\cS(v^\circ))\R_+$} and~$(\cS(v^\bullet_-)-\cS(v^\circ))\R_+$. Then, the coefficients $k_\pm$ given by~\eqref{eq:k+=}, \eqref{eq:k-=} are positive. Moreover, under the assumption~\Unif, both~$k_\pm$ are uniformly comparable to~$1$.
\end{lemma}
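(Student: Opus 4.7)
The plan is to prove both claims by a direct geometric computation, interpreting each factor in \eqref{eq:k+=}, \eqref{eq:k-=} via the s-embedding and then extracting signs from the geometric assumption on $v^\circ$ together with the construction of $\partial\H^\circ_\cS$. I will treat $k_+$ in detail; the argument for $k_-$ is the mirror image (the overall minus sign in \eqref{eq:k-=} is absorbed by the opposite orientation of the pair $(c_{--},c_{-+})$ relative to $(c_{++},c_{+-})$). Throughout I denote $w_{\pm}:=\cS(v^\bullet_\pm)-\cS(v^\circ)$ and $w_{\mathrm{out}}:=\cS(v^\bullet_{\mathrm{out}})-\cS(v^\circ)$, so that $\cX(c_{+-})^2=w_+$, $\cX(c_{-+})^2=w_-$, and $\cX(c_{\mathrm{out}})^2=w_{\mathrm{out}}$.

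First I will identify each factor geometrically. The numerator $\Im[\cX(c_{++})\overline{\cX(c_{+-})}]$ involves two corners of the same half-quad $z_+$ that share the vertex $v^\bullet_+$; by formula \eqref{eq:rz=} (applied on $\Upsilon^\times(G)$ with the lifts consistent with the conventions fixed in Section~\ref{sub:bc-in-R}) this equals $r_{z_+}/\sin\theta_{z_+}$, a positive quantity comparable to $\delta$ under \Unif. The denominator $\Im[\cX(c_{+-})\overline{\cX(c_{\mathrm{out}})}]$ involves two corners sharing the vertex $v^\circ$ and equals $\tfrac12|w_+|^{1/2}|w_{\mathrm{out}}|^{1/2}\sin\tfrac12(\arg w_+-\arg w_{\mathrm{out}})$ (with the sign convention for square roots already fixed on $\partial\H^\circ_\cS$). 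Finally, $\Im[\varsigma\overline{\cX(c_{\mathrm{out}})}]$ is of the form $|\cX(c_{\mathrm{out}})|\sin(\tfrac{\pi}{4}-\arg\cX(c_{\mathrm{out}}))$, and $\Re[-i\overline{\varsigma}(\cX(c_{++})-\cX(c_{+-}))]=-\Re[\varsigma(\cX(c_{++})-\cX(c_{+-}))]$, a quantity which, by the derivation in Lemma~\ref{lem:argF=}, is proportional to $\Re F(z_+)^{-1}\cdot \Im[\cX(c_{++})\overline{\cX(c_{+-})}]$ times $|\cX(c_{\mathrm{out}})|^{-1}$-type factors; in particular it is real and bounded away from zero in absolute value under \Unif.

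Next I will fix signs using the geometric assumption on $v^\circ$ and the construction of $\partial\H^\circ_\cS$. The assumption that $i\R_+$ lies between $w_+\R_+$ and $w_-\R_+$ means that, say, $\arg w_+\in(0,\tfrac{\pi}{2})$ and $\arg w_-\in(\tfrac{\pi}{2},\pi)$; in particular both $w_\pm$ have positive imaginary part. On the other hand, $v^\bullet_{\mathrm{out}}$ is by Definition~\ref{def:half-plane} either a negative black vertex (i.e.\ $\Im(\cS-i\cQ)(v^\bullet_{\mathrm{out}})\le-\veps_0\delta$) or a member of a neutral doubleton attached to $v^\circ$. Since $v^\circ\in\partial\H^\circ_\cS$ satisfies $\Im(\cS-i\cQ)(v^\circ)\ge\veps_0\delta$, a short case analysis (using that $\cQ(v^\bullet_{\mathrm{out}})-\cQ(v^\circ)=|w_{\mathrm{out}}|>0$) yields that $\Im w_{\mathrm{out}}<0$, i.e.\ $\arg w_{\mathrm{out}}\in(-\pi,0)$, with a quantitative margin under \Unif (the angle between $w_{\mathrm{out}}$ and $\R_-$ stays bounded away from zero by a constant depending only on $r_0,R_0,\theta_0,\veps_0$). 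Consequently all the four factors identified above have well-defined signs, and one checks, multiplying them together (and using that the $\pm$-sign ambiguity in the lift of $c_{\mathrm{out}}$ cancels between numerator and denominator, as noted at the end of the proof of Lemma~\ref{lem:RW-at-paH}), that $k_+>0$.

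Finally, uniform comparability $k_+\asymp 1$ under \Unif is an immediate consequence: $|\cX(c)|^2\asymp\delta$, $r_z\asymp\delta$, $\sin\theta_z,\cos\theta_z\asymp 1$, and each angular factor $\sin(\cdot)$ and $\cos(\cdot)$ arising above is bounded below by a positive constant depending only on the constants of \Unif, using once more the geometric assumption on $v^\circ$ and the quantitative margins built into the construction of $\partial\H^\circ_\cS$. The same analysis applied to $k_-$ (with the orientation reversed) gives $k_-\asymp 1$. The main obstacle will be the bookkeeping of signs induced by the choice of lifts of boundary corners to $\Upsilon^\times(G)$ and the verification that $v^\bullet_{\mathrm{out}}$ lies strictly ``below'' $v^\circ$ in the $\cS-i\cQ$ picture with a uniform quantitative gap; once these are in place, the remaining computations are routine and only use the assumption \Unif\ in a mild way.
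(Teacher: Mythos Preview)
Your overall approach---analyze each of the four factors in \eqref{eq:k+=} separately, extract their signs from the geometry, and then verify uniform comparability under \Unif---is exactly the paper's strategy. However, two of your sign arguments do not go through as written.

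First, the claim ``$v^\circ\in\partial\H^\circ_\cS$ satisfies $\Im(\cS-i\cQ)(v^\circ)\ge\veps_0\delta$'' is unjustified and in general false. Membership of $v^\circ$ in $\partial\H^\circ_\cS$ only says that $v^\circ$ is adjacent to some $v^\bullet\in G^\bullet_\H\subset G^\bullet_+$, which via \eqref{eq:S:Im>Im} yields $\Im(\cS-i\cQ)(v^\circ)\ge\Im(\cS-i\cQ)(v^\bullet)>-\veps_0\delta$, not $\ge+\veps_0\delta$. So your ``short case analysis'' leading to $\Im w_{\mathrm{out}}<0$ with a uniform margin breaks down. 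The paper avoids this altogether: once the sign of $\cX(c_{\mathrm{out}})$ is fixed by a continuous clockwise turn from $\cX(c_{+-})$, positivity of the first factor and of $\Im[\varsigma\overline{\cX(c_{\mathrm{out}})}]$ follow directly from the hypothesis on the position of $i\R_+$ relative to $w_\pm\R_+$ (which controls the range of $\arg\cX(c_{\mathrm{out}})$ along the clockwise turn), with no need to locate $v^\bullet_{\mathrm{out}}$ in the S-graph.

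Second, your treatment of the denominator $\Re[-i\overline{\varsigma}(\cX(c_{++})-\cX(c_{+-}))]$ is too vague. Saying it is ``bounded away from zero in absolute value under \Unif'' is not enough for positivity of $k_+$; you need the \emph{sign}. The paper obtains this sign by combining Lemma~\ref{lem:argF=} (which identifies this quantity, up to a positive factor, with $\Re F(z_+)$ for the Dobrushin observable) with Corollary~\ref{cor:ReF>0}. The latter is precisely where the special construction of $\partial\H^\circ_\cS$ enters: it is equivalent to the statement that the boundary path, drawn on the S-graph $\cS-i\cQ$, never turns across $\R_-$. You allude to Lemma~\ref{lem:argF=} but never invoke Corollary~\ref{cor:ReF>0}, so the sign of this factor is left unestablished. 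Once you cite these two results, the positivity and the $\asymp\delta^{1/2}$ size of this factor follow, and the rest of your argument (the uniform comparability of the remaining factors to powers of $\delta$ under \Unif) is correct.
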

\begin{proof}
Let the choice of the sign of~$\cX(c_\mathrm{out})$ be fixed by a `continuous clockwise turn' starting from the value~$\cX(c_+)$. Then, the first factor in~\eqref{eq:k+=} is positive. Moreover, the numerator of the second factor is also positive due to the assumption made on the position of the ray~$i\R_+$. Finally, the denominator of the second factor is positive due to Corollary~\ref{cor:ReF>0} (and Lemma~\ref{lem:argF=}); note that the last fact heavily relies upon the special choice of the path~$\pa\H^\circ_\cS$ made in Section~\ref{sub:cuts}.

Assuming~\Unif, one easily sees that both the numerator and the denominator of the first factor in~\eqref{eq:k+=} are uniformly comparable to~$\delta$ while both the numerator and the denominator of the second factor are uniformly comparable to~$\delta^{1/2}$. The coefficient~$k_-$ can be treated similarly. 
\end{proof}

\subsection{Estimates of hitting probabilities in discrete rectangles} \label{sub:visibility} In this section we discuss technical estimates for the hitting probabilities of boundary points (in the sense of Definition~\ref{def:RW-at-paH}) in discrete rectangles~$\cR^\delta$ constructed above. The key estimate is given in Proposition~\ref{prop:hm>delta}: loosely speaking it says that, under the assumption~\Unif\ and for all~$|\phi|\le\phi_0$, the probability that the backward random walk associated with the S-graph~$\cS-ie^{2i\phi}\cQ$ exits the discrete half-plane~$\H^\circ_{\cS^\delta}$ through a given point~$z\in\pa\H^\circ_\cS$ (starting, e.g., at height one above this point) is uniformly comparable to~$\delta$ unless~$z$ is \emph{completely} screened by~$\pa\H^\circ_\cS$; cf. Definition~\ref{def:local-visible}.

\stoptocwriting

\subsubsection{Uniform crossing property and staircase paths on S-graphs} To begin with, let us recall that both forward and backward random walks associated with \mbox{S-graphs} satisfy the so-called \emph{uniform crossing property} (cf.~\cite{BLR}) on scales above~$\delta$ (if viewed on the s-embedding~$\cS$ rather then on~$\Dm(G)$). We refer the reader to~\cite[Sections~6.2,~6.3]{CLR1} where this fact is derived under much weaker assumption~\LipKd\ than~\Unif. In particular, this property implies that there exists a (big) constant~$C_0>0$ depending only on constants in~\Unif\ such that, for all~$w\in\C$, all~$\rho>C_0\delta$ and all~$z$ such that~$|\cS(z)-w|=\rho+O(\delta)$, the probability that the random walk started at~$z$ makes the full turn inside the annulus~$B(w,2\rho)\smallsetminus B(w,\frac{1}{2}\rho)$ before hitting its boundary is uniformly bounded from below. Moreover, one can even condition this random walk on any event outside this annulus (e.g., on exiting it through a given boundary point) and the same uniform lower bound still holds.

Among other consequences of the uniform crossing property let us mention the following one, which we frequently use in Section~\ref{sub:RSW-proof}. There exist constants~$k_0>0$ and~$\beta>0$ depending only on constants in~$\Unif$\ such that
\begin{itemize}
\item for all~$s>0$, the probability that the backward random walk associated with the \mbox{S-graph} $\cS^\delta-ie^{2i\phi}\cQ^\delta$, started at~$\cS^\delta(z)=ik_0s+O(\delta)$ and stopped when it hits~$\pa\H^\circ_{\cS^\delta}$ {or the lines}~$4k_0s-y=\pm k_0x$, { visits} the ball~$B(\pm 4s,\rho)$ is uniformly (in~$\delta\le C_0^{-1}s$) bounded by~$O(\max\{\rho,\delta\}^{1/2+\beta})$; {see Fig.~\ref{fig:technical}A.}
\end{itemize}

Recall now the notion of a \emph{doubleton} introduced in Definition~\ref{def:doubleton}. If \mbox{$w=(v^\circ v^\bullet)$} is a doubleton on the S-graph~$\cS^\delta-i\cQ^\delta$, then for all random walks associated with the S-graphs~$\cS^\delta-ie^{2i\phi}\cQ^\delta$, $|\phi|\le\phi_0$, the jump rates between~$v^\circ$ and~$v^\bullet$ are at least~$\cst\cdot\delta^{-2}$ (and possibly much higher if~$(\cS^\delta-i\cQ^\delta)(v^\circ)$ and~$(\cS^\delta-i\cQ^\delta)(v^\bullet)$ are very close to each other). All the other jump rates (outgoing from a singleton or leaving a doubleton) are uniformly comparable to~$\delta^{-2}$. Recall also that the bijection~$z\mapsto v(z)$ defined by the S-graph~$\cS^\delta-i\cQ^\delta$ remains unchanged when passing to S-graphs~$\cS^\delta-ie^{2i\phi}\cQ^\delta$ except possible swaps at doubletons.

\begin{figure}
\begin{minipage}{0.64\textwidth}

\includegraphics[clip, trim=4.55cm 21cm 8cm 4.2cm, width=\textwidth]{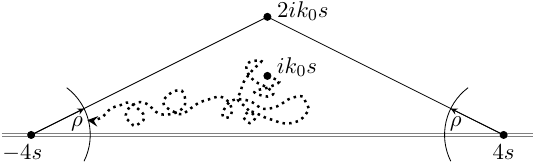}

\vskip 3pt

{\textsc{(A)} For small enough~$k_0$, the probability that a random walk satisfying the uniform crossing property hits~$B(\pm 4s,\rho)$ staying inside the triangle is~$O(\rho^{\frac12+\beta})$.}

\vskip 12pt

\includegraphics[clip, trim=4.55cm 20.8cm 8cm 4.2cm, width=\textwidth]{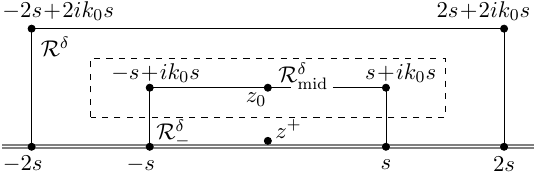}

\vskip 3pt

{\textsc{(B)} Notation from the proof of Proposition~\ref{prop:hm>delta}.}
\end{minipage}
\hskip 0.04\textwidth
\begin{minipage}{0.27\textwidth}
\includegraphics[clip, trim=11.6cm 14.7cm 5.6cm 4.8cm, width=\textwidth]{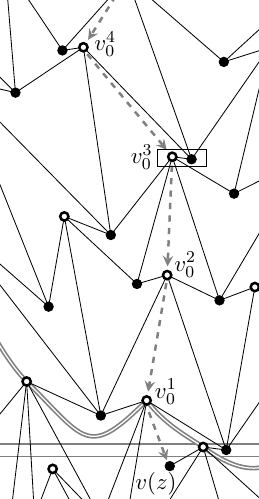}

\vskip 4pt

{\textsc{(C)} A staircase path to $\partial\mathbb{H}^\circ_{\cS^\delta}$ (Lemma~\ref{lem:staircase}).}
\end{minipage}

\caption{{`Technical' constructions used for estimates of hitting probabilities.}\label{fig:technical}}
\end{figure}

\begin{lemma} \label{lem:staircase}
Under the assumption~\Unif, for each~$z\in\Dm(G)$ there exists a sequence of vertices~$v_n^\circ\in G^\circ$, $n\ge 1$, such that
\[
\Im(\cS\!-\!i\cQ)(v^\circ_{n+1})\ \ge\ \Im(\cS\!-\!i\cQ)(v^\circ_n)+4\veps_0\delta,
\]
the jump rates from~$v^\circ_{n+1}$ to~$v_n^\circ$ of the backward random walks associated with \mbox{S-graphs} $\cS^\delta-ie^{2i\phi}\cQ^\delta$, $|\phi|\le\phi_0$, are non-zero, and the same holds for~$v_1^\circ$ and~$v(z)$.
\end{lemma}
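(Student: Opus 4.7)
The construction will be inductive: at each step, given a vertex $v \in \Lambda(G)$ that is the concave vertex of a quad $z^* \in \Dm(G)$ (i.e.\ $v = v(z^*)$ under the bijection from Section~2.5), I identify a white vertex $v^{\circ\prime}$ of $z^*$ satisfying $\Im(\cS - i\cQ)(v^{\circ\prime}) \ge \Im(\cS - i\cQ)(v) + 4\veps_0 \delta$. Since the forward walk associated with $\cS - i\cQ$ jumps from the concave vertex $v(z^*)$ to all three other vertices of $z^*$ with strictly positive rates, the time-reversed walk has non-zero rate from $v^{\circ\prime}$ to $v$, which is exactly what is required.

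For the base case, set $v := v(z)$ and $z^* := z$. Two subcases arise. If $v = v^\bullet_p$ is black, then the two edges from $v^\bullet_p$ to the white vertices $v^\circ_0,v^\circ_1$ of $z$ cannot both have height-difference below $4\veps_0 \delta$ (else $z$ would contain two bad sides, contradicting the geometric property recalled at the start of Section~5.2); the good one yields $v_1^\circ$. If $v = v^\circ_q$ is white, the second part of that same property gives $|\Im(\cS - i\cQ)(v^\circ_{1-q}) - \Im(\cS - i\cQ)(v)| \ge 4\veps_0 \delta$, and since in $\cS - i\cQ$ both whites lie above both blacks, the concave white $v = v^\circ_q$ must lie strictly inside the triangle with vertices $v^\bullet_0, v^\bullet_1, v^\circ_{1-q}$; in particular $v^\circ_{1-q}$ lies strictly above $v$, and we set $v_1^\circ := v^\circ_{1-q}$.

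For the inductive step, given $v_n^\circ \in G^\circ$, let $z_n := v^{-1}(v_n^\circ)$; then $v_n^\circ$ is the concave \emph{white} vertex of $z_n$ and the white case above applies verbatim, producing $v_{n+1}^\circ$ as the opposite white vertex of $z_n$. Both the height increment and the non-zero backward jump rate in the S-graph $\cS - i\cQ$ then hold by construction.

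The delicate point, and the one I expect to be the main obstacle, is the uniformity in $\phi$: the sequence $(v_n^\circ)$ is built using the unperturbed S-graph $\cS - i\cQ$, but the non-zero jump rates must hold for \emph{every} backward walk associated with $\cS - ie^{2i\phi}\cQ$ with $|\phi| \le \phi_0$. Combinatorial changes in the S-graph as $\phi$ varies occur only at doubletons (Definition~5.2), so it suffices to verify that $v_n^\circ$ remains the concave vertex of the \emph{same} quad $z_n$ in $\cS - ie^{2i\phi}\cQ$. The quantitative gap $4\veps_0\delta$ between $v_n^\circ$ and the three other vertices of $z_n$ (which follows from either part of the geometric property together with the strict white-above-black ordering) safely exceeds the $2\veps_0\delta$ doubleton threshold, while assumption \Unif\ bounds the $\phi$-sensitivity of heights in $\cS - ie^{2i\phi}\cQ$ by $O(|\phi|\delta)$. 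Choosing $\phi_0 > 0$ small enough depending only on the constants in \Unif\ thus makes the construction uniform over $|\phi|\le\phi_0$, completing the proof.
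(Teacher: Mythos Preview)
Your proposal is correct and follows essentially the same approach as the paper: at each step take the quad in which the current vertex is concave and pass to the opposite white vertex, using the two geometric bullet points at the start of Section~5.2 to get the $4\veps_0\delta$ height increment. The paper's own proof is a two-sentence ``triviality'' that gives exactly this construction; your write-up is more explicit in separating the black/white base cases and in discussing the $\phi$-uniformity (which the paper defers to the surrounding text and the bullet points after the lemma). One small overstatement: you claim the $4\veps_0\delta$ gap holds between $v_n^\circ$ and \emph{all three} other vertices of $z_n$, but the first bullet only guarantees this for three of the four sides, so one black neighbor of $v_n^\circ$ may form a doubleton---this is exactly the case the paper handles separately in the bullets following the lemma rather than inside the proof itself.
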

\begin{proof}
This is a triviality: if~$v_n^\circ$ is already constructed and~$(\cS-i\cQ)(v_n^\circ)$ is the concave vertex of {the} quad~$(\cS^\delta-i\cQ^\delta)^\dm(z_n)$, then one can take the other `white' vertex of this quad as~$v_{n+1}^\circ$. The same choice works for~$v(z)$: even if~$v(z)\in G^\bullet$ and~$v(z)$ is a part of a doubleton, then the other `white' vertex of this quad lies at least~$4\veps_0\delta$ above in the S-graph~$\cS^\delta-i\cQ^\delta$.
\end{proof}
We call the paths~$v(z),v_1^\circ,v_2^\circ,\ldots $ from Lemma~\ref{lem:staircase} \emph{staircase paths}; {see Fig.~\ref{fig:technical}C.} The preceding discussion implies that
\begin{itemize}
\item there exists a constant~$p_0>0$ depending only on constants in~\Unif\ such that for all \emph{backward} random walks on S-graphs~$\cS^\delta-ie^{2i\phi}\cQ^\delta$, $|\phi|\le\phi_0$,
    \begin{itemize}
    \item if~$v_n^\circ$ is a singleton (on the S-graph~$\cS^\delta-i\cQ^\delta$), then the probability to jump from~$v_n^\circ$ to~$v_{n-1}^\circ$ is at least~$p_0$;
    \item if~$(v_n^\circ v^\bullet)$ is a doubleton, then the probability to leave this doubleton (starting, e.g., at~$v^\circ_n$) to~$v^\circ_{n-1}$ is at least~$p_0$;
    \end{itemize}
Moreover, the same property holds for~$v_1^\circ$ and~$v_0:=v(z)$ even if $v_0\in G^\bullet$.
\end{itemize}

Recall that, under the assumption~\Qflat, the discrete upper half-plane~$\H^\circ_{\cS^\delta}$ contains all~$z$ such that~$\Im(\cS^\delta-i\cQ^\delta)(z)\ge C_0\delta$, where~$C_0>0$ depends only on constants in~\Unif\ and \Qflat. Given a boundary half-quad~$z\in\pa\H^\circ_{\cS^\delta}$ it can happen that it is completely screened by the nearby parts of~$\pa\H^\circ_{\cS^\delta}$ from the viewpoint of the backward random walks in~$\H^\circ_{\cS^\delta}$ discussed above. (In other words, the irregular  boundary~$\pa\H^\circ_{\cS^\delta}$ can produce bottlenecks preventing the backward random walks to enter small fjords behind them and, in particular, preventing these random walks to hit points~$z\in\pa\H^\circ_{\cS^\delta}$ inside such fjords; see also~Fig.~\ref{fig:SembBdry}).

\begin{definition}\label{def:local-visible} We call a half-quad~$z\in\pa\H^\circ_{\cS^\delta}$ \emph{locally visible} if there exists~\mbox{$z^+\in\H^\circ_{\cS^\delta}$} with~$\Im(\cS^\delta-i\cQ^\delta)(z^+)\ge C_0\delta$ such that the backward random walk associated with the S-graph~$\cS^\delta-i\cQ^\delta$ started at~$z^+$ has non-zero probability to hit~$\pa\H^\circ_{\cS^\delta}$ at~$z$.
\end{definition}
It is worth noting that if~$z\in\pa\H^\circ_{\cS^\delta}$ is locally visible, then the same property holds for \emph{all}~$z^+\in\H^\circ_{\cS^\delta}$ with~$\Im(\cS^\delta-i\cQ^\delta)(z^+)\ge C_0\delta$; this is a consequence of the uniform crossing property on scales above~$\delta$ and of the existence of staircase paths on scale~$\delta$. Moreover, the same property holds for the backward random walks associated to \emph{each} of the S-graphs~$\cS^\delta-ie^{2i\phi}\cQ^\delta$ with~$|\phi|\le\phi_0$; this follows from the special consideration of doubletons in the construction of~$\H^\circ_{\cS^\delta}$ .

\subsubsection{Hitting probabilities in discrete rectangles} We are now ready to prove a required `uniform visibility' estimate for boundary half-quads of the half-plane~$\H^\circ_{\cS^\delta}$. Let rectangles~$\cR^\delta:=
[\cR(-2s,2s;0,2k_0s)]^{\circ\circ\circ\circ}_{\cS^\delta}$ be defined as in Section~\ref{sub:cuts} (the choice of discretizations of its right, top and left sides is irrelevant).

\begin{proposition} \label{prop:hm>delta} There exists a (small) constant~\mbox{$k_0>0$} and a (big) constant \mbox{$C_0>0$}, only depending on constants in the assumptions \Unif\ and \Qflat, such that the following holds for all ${s}\ge C_0\delta$, all~$\phi\in[-\phi_0,\phi_0]$, and all {locally visible} boundary half-quads~$z\in\pa\H^\circ_{\cS^\delta}$ such that~$|\Re(\cS^\delta-i\cQ^\delta)(z)|\le s$:

\smallskip

the probability that the backward random walk associated with \mbox{$\cS^\delta\!-\!ie^{2i\phi}\cQ^\delta$} and started {in an $O(\delta)$ vicinity of} the center of $\cR^\delta$ exits this rectangle through a given boundary half-quad~$z\in\pa\H^\circ_{\cS^\delta}$ is at least~$\cst\cdot s^{-1}\delta$, where~$\cst>0$ depends only on constants in \Unif\ and \Qflat.
\end{proposition}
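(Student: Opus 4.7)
The target bound matches the Poisson kernel of the rectangle in the continuum: planar Brownian motion started at the centre of $(-2s,2s)\times(0,2k_0 s)$ assigns harmonic measure of order $\delta/s$ to any arc of length $\delta$ on the bottom side with horizontal coordinate in $[-s,s]$. My plan is to transfer this continuous estimate to the backward random walk on $\cS^\delta-ie^{2i\phi}\cQ^\delta$ using the uniform crossing estimates from Section~\ref{sub:rwalks} and~\cite[Section~6]{CLR1}, and then upgrade the event ``hit a $\delta$-neighbourhood of~$\cS^\delta(z)$'' to ``exit through the specific boundary half-quad~$z$'' using the local-visibility hypothesis and the staircase paths of Section~\ref{sub:visibility}.

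Set $p_z := \cS^\delta(z)+ik_0 s$; the assumption $|\Re(\cS^\delta-i\cQ^\delta)(z)|\le s$ together with \Qflat\ guarantees that, for $k_0$ small enough, $p_z$ lies at distance at least $\tfrac12 k_0 s$ from each side of~$\cR^\delta$. I will decompose the route from the centre of~$\cR^\delta$ to the half-quad~$z$ into three stages and estimate each separately. First (\emph{bulk descent}), a standard chain of Harnack / uniform crossing estimates will show that the backward walk started near the centre of~$\cR^\delta$ visits the ball $B(p_z, \tfrac14 k_0 s)$ with probability at least some $q_1>0$ without first exiting~$\cR^\delta$. Last (\emph{local exit}), once the walk first enters a fixed ball $B(\cS^\delta(z), A\delta)$, local visibility combined with the staircase-path construction (valid uniformly for $|\phi|\le\phi_0$ thanks to the neutral-doubleton treatment built into Definition~\ref{def:half-plane}) will supply a uniformly positive probability $q_3>0$ of exiting $\pa\H^\circ_{\cS^\delta}$ specifically through $z$. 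Multiplying $q_1$, the middle-step estimate, and $q_3$ will yield the proposition.

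The crucial middle step is a discrete Poisson-kernel lower bound: started from $p_z$, the backward walk should enter $B(\cS^\delta(z),A\delta)\cap\cR^\delta$ before leaving $\cR^\delta$ with probability at least $\cst\cdot\delta/s$, where the constant $A$ depends only on the constants in~\Unif\ and~\Qflat. I plan to establish this by a dyadic exhaustion based on the uniform crossing property: at each scale $R_k=2^k\delta$ with $C_0\delta\le R_k\le\tfrac14 k_0 s$, the walk either falls below height $R_k$ above~$\cS^\delta(z)$, or performs a full turn in the annulus $B(\cS^\delta(z),2R_k)\smallsetminus B(\cS^\delta(z),R_k)$ with uniform positive probability. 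A gambler's-ruin argument over the $\log_2(s/\delta)$ successive scales then produces exactly the factor $\delta/s$. Alternatively, one can invoke the invariance-principle-type convergence results from~\cite[Section~6]{CLR1} and compare the hitting distribution directly with the continuous Poisson kernel of the upper half-plane from height~$k_0 s$, which has density~$\asymp 1/s$ on the interval $[-s,s]$.

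The hard part will be keeping all constants uniform in $\phi\in[-\phi_0,\phi_0]$ and independent of the (irregular) combinatorial structure of $\cS^\delta$ near $\pa\H^\circ_{\cS^\delta}$. The uniform crossing property for backward walks down to scale~$\delta$ is what makes the dyadic argument close, but one must handle doubletons and the potentially complicated fjord structure of $\pa\H^\circ_{\cS^\delta}$ carefully; the neutral-doubleton convention in Definition~\ref{def:half-plane} is precisely what preserves the crossing estimates uniformly in~$\phi$, and the local-visibility hypothesis is precisely what is needed to route the walk into a fjord-bound $z$ with non-negligible probability in Step~3.
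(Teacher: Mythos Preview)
Your three-stage decomposition and the treatment of the local exit (Step~3) are in the right spirit, but the middle step has a genuine gap. A dyadic exhaustion based on uniform crossing gives, at each scale, a uniformly positive probability of descending one level; iterating over $n\asymp\log_2(s/\delta)$ scales produces a bound of the form $p_0^{\,n}=(\delta/s)^{\beta}$ with $\beta=\log_2(1/p_0)$, not the sharp exponent~$1$. To obtain the precise order $\delta/s$ by a gambler's-ruin argument one needs a \emph{coordinate martingale}, and the backward walk $\widetilde Z_t^{(\phi)}$ does not have one: Proposition~\ref{prop:PrF-mart} only tells you that projections of s-holomorphic functions are $\widetilde Z_t^{(\phi)}$-martingales, not the coordinates themselves. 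Your fallback of invoking an invariance principle from~\cite{CLR1} also does not close the gap, since those results concern regularity and crossing estimates on scales $\gg\delta$, whereas here the target is a microscopic $\delta$-scale hitting probability.

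The paper's proof circumvents this by working with the \emph{forward} walk $Z_t^{(\phi)}$, for which $\Im[(\cS^\delta-ie^{2i\phi}\cQ^\delta)(v^{(\phi)}(Z_t^{(\phi)}))]$ \emph{is} a martingale. Starting the forward walk at a point $z^+$ at height $\asymp C\delta$ (supplied by local visibility), this martingale plus a side-exit bound gives $\mathbb P^{(z^+)}[\text{exit a smaller rectangle through its top}]\ge \cst\cdot\delta/s$, hence $\sum_{z\in\cR^\delta_{\mathrm{mid}}}G^{(\phi)}_{\cR^\delta}(z,z^+)\ge \cst\cdot s\delta$. The crucial transfer to the backward walk uses the \emph{explicit} invariant measure~\eqref{eq:inv-mu=}, which is $\asymp\delta^2$ under \Unif, so that $\widetilde G^{(\phi)}_{\cR^\delta}(z^+,z)\asymp G^{(\phi)}_{\cR^\delta}(z,z^+)$; Harnack for the backward Green function then yields $\widetilde G^{(\phi)}_{\cR^\delta}(z^+,z_0)\ge \cst\cdot s^{-1}\delta^3$. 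Finally, $\widetilde G^{(\phi)}_{\cR^\delta}(z^+,z^+)=O(\delta^2)$ (via staircase paths) converts this into the desired hitting-probability bound $\cst\cdot s^{-1}\delta$. In short, the missing ingredient in your outline is the detour through the forward walk and the time-reversal identity afforded by the known invariant measure; without it, the sharp power $\delta/s$ is not accessible from uniform crossing alone.
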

\begin{remark}
Under the assumption~\Unif\ it is plausible to expect that the probability for a random walk to exit the rectangle of size~$(-2s,2s)\times(0,2k_0s)$ through a given point~$z$ in the middle of its boundary is uniformly comparable to~$s^{-1}\delta$. For our purposes it is enough to prove the lower bound only; with a caveat that~$z$ has to be \emph{locally visible} as described above, otherwise this probability simply vanishes.
\end{remark}
\begin{proof} Let 
$z_0$ be such that {its image in the S-graph~$\cS^\delta-i\cQ^\delta$} lies within~$O(\delta)$ from the center of~$\cR^\delta$; {see Fig.~\ref{fig:technical}B for the notation. Further, let} $Z_t=Z_t^{(\phi)}$ be the forward random walk associated with the S-graph \mbox{$\cS^\delta-ie^{2i\phi}\cQ^\delta$}, and~$\widetilde{Z}_t=\widetilde{Z}_t^{(\phi)}$ be the corresponding backward random walk. Also, let~$z\mapsto v^{(\phi)}(z)$ be the bijection of~$\Dm(G)$ and~$\Lambda(G)$ defined by the S-graph~$\cS^\delta-ie^{2i\phi}\cQ^\delta$.

Since we assume that the boundary half-quad~$z$ is \emph{locally visible}, for each~$C>0$ there exists~$z^+$ with~$\Im(\cS^\delta-i\cQ^\delta)(z^+)\ge C\delta$ such that, for all~$\phi\in[-\phi_0,\phi_0]$, the probability that the random walk~$\widetilde{Z}^{(\phi)}_t$ started at~$z_+$ exits~$\cR^\delta$ through~$z$ is uniformly bounded from below by~$p(C)>0$. Therefore, it is enough to prove that
\begin{equation}
\label{eq:z0->z+>delta}
\mathbb{P}^{(z_0)}[\,\widetilde{Z}^{(\phi)}_t\ \text{visits}\ z^+\ \text{before~exiting}\ \cR^\delta\,]\ \ge\ \cst\cdot s^{-1}\delta,
\end{equation}
where the superscript~$z_0$ indicates that the random walk~$\widetilde{Z}^{(\phi)}_t$ is started at~$z_0$.

Let us first consider the \emph{forward} random walk~$Z^{(\phi)}_t$ started at~$z_+$ and stopped when it exists the twice smaller rectangle~$\cR^\delta_-:=[\cR(-s,s;0,k_0s)]^\circ_{\cS^\delta}$; {see Fig.~\ref{fig:technical}B.} We claim that
\begin{equation}
\label{eq:z+->up>delta}
\mathbb{P}^{(z^+)}[\,Z^{(\phi)}_t\ \text{exits~$\cR^\delta_-$~through~its~top~side}\,]\ \ge \cst\cdot s^{-1}\delta
\end{equation}
provided that~$k_0$ is chosen small enough and~$C_0$ and~$C$ are big enough. To prove the uniform estimate~\eqref{eq:z+->up>delta}, note that the process~$\Im(\cS^\delta-ie^{2i\phi}\cQ^\delta)(v^{(\phi)}(Z^{(\phi)}_t))$ is a martingale. Under the assumption~\Qflat, the constant~$C>0$ can be chosen so as to guarantee that the starting value~$\Im(\cS^\delta-ie^{2i\phi}\cQ^\delta)(v^{(\phi)}(z_+))$ is at least~$\frac{3}{4}C\delta$ and all the values~$\Im(\cS^\delta-ie^{2i\phi}\cQ^\delta)(v^{(\phi)}(z))$ for~$z\in\pa\H^\circ_{\cS^\delta}$ are less than~$\frac{1}{4}C\delta$. Therefore, the probability to exit~$\cR^\delta_-$ through the top, left or right sides starting from~$z_+$ must be at least~$\frac{1}{4}C\delta s^{-1}$. Moreover, provided that the aspect ratio~$k_0$ is small enough, the probability to exit~$\cR^\delta_-$ through the left or the right side can be neglected compared to that for the top side. (E.g., one can easily show that the probability to hit these sides starting from~$z^+$ is~$O(\delta k_0s^{-1})$ by using quadratic {supermartingales $\varepsilon_0x^2+(y+C_0\delta)(2k_0s-y+C_0\delta)$} of the uniformly elliptic martingale process~$(\cS^\delta-ie^{2i\phi}\cQ^\delta)(v^{(\phi)}(Z_t^{(\phi)}))$ similarly to the proof of~\cite[Lemma~3.17]{ChSmi1}.)

\smallskip

Let~$G^{(\phi)}_{\cR^\delta}(z,z^+)$ be the Green function of the random walk~$Z^{(\phi)}_t$ in~$\cR^\delta$, i.e. the expected time spent at~$z$ by the walk started at~$z^+$ before it exits~$\cR^\delta$. Also, let~$\cR^\delta_\mathrm{mid}:=[\cR(-\frac{3}{2}s,\frac{3}{2}s;\frac{1}{2}k_0s,\frac{3}{2}k_0s)]^\circ_{\cS^\delta}\subset\cR^\delta$ be a smaller rectangle containing the top side of~$\cR^\delta_-$; {see Fig.~\ref{fig:technical}B.} It is easy to see that the estimate~\eqref{eq:z+->up>delta} {gives}
\begin{equation}
\label{eq:z+->middleR>delta}
\textstyle \sum_{z\in\cR^\delta_\mathrm{mid}} G^{(\phi)}_{\cR^\delta}(z,z^+)\ \ge\ \cst\cdot s \delta,
\end{equation}
where a (new) constant~$\cst>0$ depends only on constants in~\Unif\ and \Qflat. Indeed, the time parametrization of forward random walks on S-graphs is given by the trace of the variance and their instant increments are uniformly bounded by~$O(\delta)$. Therefore, once this random walk hits the top side of~$\cR^\delta_-$, the probability that it stays inside~$\cR^\delta_\mathrm{mid}$ for at least~$\cst\cdot s^2$ amount of time is uniformly bounded from below (e.g., see~\cite[Proposition~6.1]{CLR1} for a version of the Bernstein inequality which allows to control large deviations).

\smallskip

Let~$\widetilde{G}{}^{(\phi)}_{\cR^\delta}(z^+\!,z_0)$ stand for the Green function of the backward random walk in~$\cR^\delta$, i.e., denote the expected time spent at~$z^+$ by the backward random walk associated with the S-graph~$\cS^\delta-ie^{2i\phi}\cQ^\delta$ started at the point~$z_0$ near the center of~$\cR^\delta$. We now benefit from the fact that the invariant measure of the random walk~$Z_t^{(\phi)}$ is explicitly given by the area of the quads~$\cS^\dm(z)$ (see Section~\ref{sub:rwalks}) and thus is uniformly comparable to~$\delta^2$ under the assumption~\Unif. This implies the uniform estimate~$\widetilde{G}{}^{(\phi)}_{\cR^\delta}(z^+\!,z)\asymp G^{(\phi)}_{\cR^\delta}(z,z^+)$. Also, note that {the function} $\widetilde{G}{}^{(\phi)}_{\cR^\delta}(z^+\!,\,\cdot\,)$ is positive and harmonic with respect to the backward random walk and thus satisfies the Harnack principle (e.g., see~\cite[Corollary~6.12]{CLR1}). Therefore, the estimate~\eqref{eq:z+->middleR>delta} implies that
\begin{equation}
\label{eq:G(z0->z+)>delta3}
\textstyle \widetilde{G}{}^{(\phi)}_{\cR^\delta}(z^+\!,z_0)\ \ge\ \cst\cdot(\delta^{-2}s^2)^{-1}\sum_{z\in\cR^\delta_\mathrm{mid}} \widetilde{G}^{(\phi)}_{\cR^\delta}(z^+\!,z)\ \ge\ \cst\cdot s^{-1}\delta^3
\end{equation}
(where the factor~$\delta^{-2}s^2$ comes from the number of vertices in~$\cR^\delta_\mathrm{mid}$).

\smallskip

The last step to derive the estimate~\eqref{eq:z0->z+>delta} from~\eqref{eq:G(z0->z+)>delta3} is to use the identity
\[
\widetilde{G}{}^{(\phi)}_{\cR^\delta}(z^+\!,z_0)\ =\ \mathbb{P}^{(z_0)}[\,\widetilde{Z}^{(\phi)}_t\ \text{visits}\ z^+\ \text{before~exiting}\ \cR^\delta\,]\cdot \widetilde{G}{}^{(\phi)}_{\cR^\delta}(z^+\!,z^+)
\]
and to note that the expected time spent by the process~$\widetilde{Z}^{(\phi)}_t$ at~$z^+$ is~$O(\delta^2)$. Indeed, due to the uniform crossing property, the backward random walk started at~$z^+$ has a positive chance to reach one of staircase paths from Lemma~\ref{lem:staircase} { (see Fig.~\ref{fig:technical}C)} and to descend to~$\pa\H^\circ_{\cS^\delta}$ following this path in~$O(\delta^2)$ amount of time.
\end{proof}

\newcommand\Lik[1]{L^{(ik)}(\H^\circ_{#1})}
\newcommand\UpaH[1]{U(\pa\H^\circ_{#1})}

\subsubsection{The sets~$\Lik{\cS}$ and~$\UpaH{\cS}$} In our proof of Theorem~\ref{thm:RSW-selfdual} given in the forthcoming Section~\ref{sub:RSW-proof} we also need to `slice' the discrete upper half-plane into `horizontal lines' at level~$ik$ in a way compatible with estimates similar to those given in Proposition~\ref{prop:hm>delta}; see Proposition~\ref{prop:hm>delta-Lik}.

\smallskip

Let a constant~$C_0>0$ be fixed so that all quads~$z$ with~$\Im(\cS^\delta-i\cQ^\delta)(z)\ge C_0\delta$ belong to~$\H^\circ_{\cS^\delta}$. For~$k\ge C_0\delta$, we say that~$v\in\Lambda(G)$ \emph{lies above above the level~$ik$} in the S-graph~$\cS^\delta-i\cQ^\delta$ if
\begin{itemize}
\item $v$ is a singleton and~$\Im(\cS^\delta-i\cQ^\delta)(v)\ge k$;
\item $v$ is a part of a doubleton~$(v^\circ v^\bullet)$ and the inequality~$\Im(\cS^\delta-i\cQ^\delta)(v)\ge k$ for \emph{both}~$v=v^\circ$ and~$v=v^\bullet$.
\end{itemize}
In the opposite case we say that~$v$ \emph{lies below the level~$ik$}; note the similarity of the construction with that from Definition~\ref{def:doubleton}.

\smallskip

Let $z\mapsto v(z)$ be the bijection of~$\Dm(G)$ and~$\Lambda(G)$ defined by the S-graph~$\cS^\delta-i\cQ^\delta$.
\begin{definition}\label{def:Lik}
Given~$k\ge C_0\delta$, we denote by~$\Lik{\cS}$ the set of all~$z$ such that
\begin{itemize}
\item either $v(z)$ is a singleton lying below the level~$ik$ such that all the vertices~$v_1^\circ,v_2^\circ,\ldots$ of a staircase path from Lemma~\ref{lem:staircase} lie above the level~$ik$
\item or $v(z)$ is a part of a doubleton~$(v^\circ v^\bullet)$ lying below the level~$ik$ and both staircase paths started at~$v^\circ$ and~$v^\bullet$ go along vertices lying above the level~$ik$.
\end{itemize}
\end{definition}
As usual, the special attention paid to doubletons is caused by the fact that we need to work with backward random walks associated to the S-graphs~$\cS^\delta-ie^{2i\phi}\cQ^\delta$ with~$\phi\in[-\phi_0,\phi_0]$ and not with a single S-graph~$\cS^\delta-i\cQ^\delta$. It is easy to see that, if~$\Im(\cS^\delta-i\cQ^\delta)(v(z))\ge C_0\delta$, then
\begin{equation}
\label{eq:in-delta-Lik}
\lambda\big(\{k\ge C_0\delta:z\in\Lik{\cS^\delta}\}\big)\ \ge\ 2\veps_0\delta,
\end{equation}
where~$\lambda$ stands for the Lebesgue measure.

\smallskip

Note that we do \emph{not} pretend that~$\Lik{\cS^\delta}$ is a boundary of any discrete domain and view it simply as a subset of~$\H^\circ_{\cS^\delta}$. This slightly simplifies the considerations since, by construction, all points~$z\in\Lik{\cS^\delta}$ are `locally visible' from the set of points lying above the level~$ik$; the property which is a priori not guaranteed, e.g., for the boundary of the shifted discrete half-plane~$[\H+ik]^\circ_{\cS^\delta}$.

\begin{proposition}
\label{prop:hm>delta-Lik} In the setup of Proposition~\ref{prop:hm>delta}, the following holds for all \mbox{$k\in [C_0\delta,\tfrac{1}{2}k_0]$}, \mbox{$\phi\in[-\phi_0,\phi_0]$} and all~$z\in\Lik{\cS^\delta}$ such that~$\Re(\cS^\delta-i\cQ^\delta)(z)\le s$:

\smallskip

the probability that the backward random walk associated with \mbox{$\cS^\delta\!-\!ie^{2i\phi}\cQ^\delta$} and started {in an $O(\delta)$ vicinity of} the center of $\cR^\delta$ hits the set of vertices lying below the level~$ik$ (as defined above) at the point~$z$ and before it exits $\cR^\delta$ is at least~$\cst\cdot s^{-1}\delta$, where~$\cst>0$ depends only on \Unif\ and \Qflat.
\end{proposition}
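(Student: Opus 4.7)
The plan is to run the exact same chain of estimates as in the proof of Proposition~\ref{prop:hm>delta}, with the boundary $\pa\H^\circ_{\cS^\delta}$ replaced by the set of vertices of $\Lambda(G)$ lying below the level~$ik$ and with `local visibility' replaced by the staircase-path condition built into Definition~\ref{def:Lik}. To set this up, I first fix an `entry point' from above level $ik$: for each $z\in\Lik{\cS^\delta}$, choose $z^+\in\Dm(G)$ lying within $O(\delta)$ of the first vertex of a staircase path (from $v(z)$ as in Definition~\ref{def:Lik}) that rises to height at least $k+C\delta$, for a large constant $C=C(\Unif,\Qflat)$. By Lemma~\ref{lem:staircase} and the uniform transition-probability lower bound $p_0$ recorded after it, the backward random walk associated with $\cS^\delta-ie^{2i\phi}\cQ^\delta$, started at $z^+$, descends through this staircase path to $v(z)$ (and hence enters the `below level $ik$' set for the first time precisely at $z$) with probability at least some $p(C)>0$, uniformly in $\phi\in[-\phi_0,\phi_0]$.

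Next, I consider the forward walk $Z_t^{(\phi)}$ started at $z^+$ and stopped upon exiting the shifted smaller rectangle $[\cR(-s,s;k,k+k_0 s)]^\circ_{\cS^\delta}$. The process $\Im[(\cS^\delta-ie^{2i\phi}\cQ^\delta)(v^{(\phi)}(Z_t^{(\phi)}))]$ is a martingale starting at height at least $k+\tfrac{3}{4}C\delta$ under $\Qflat$, while its values along the bottom slice stay below $k+\tfrac{1}{4}C\delta$; provided the aspect ratio $k_0$ is small, a quadratic-submartingale estimate analogous to~\cite[Lemma~3.17]{ChSmi1} bounds the probability of exiting through the left or right sides by $O(\delta k_0 s^{-1})$ and the optional stopping theorem yields
\[
\mathbb{P}^{(z^+)}\bigl[Z_t^{(\phi)}\text{ exits through the top side}\bigr]\ \ge\ \cst\cdot s^{-1}\delta.
\]
Combining this with the uniform crossing property above scale $\delta$ and the Bernstein-type estimate of~\cite[Proposition~6.1]{CLR1}, exactly as in the passage from~\eqref{eq:z+->up>delta} to~\eqref{eq:z+->middleR>delta}, I obtain $\sum_{z'\in\cR^\delta_\mathrm{mid}}G^{(\phi)}_{\cR^\delta}(z',z^+)\ge\cst\cdot s\delta$ for a slightly thickened horizontal slab $\cR^\delta_\mathrm{mid}$ near the top of the shifted smaller rectangle.

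The remaining steps are identical to those in the proof of Proposition~\ref{prop:hm>delta}: using that the invariant measure of the forward walk is uniformly comparable to $\delta^2$ under $\Unif$, that the backward Green's function $\widetilde{G}^{(\phi)}_{\cR^\delta}(z^+,\cdot)$ is positive and backward-harmonic hence satisfies the Harnack principle~\cite[Corollary~6.12]{CLR1}, one arrives at $\widetilde{G}^{(\phi)}_{\cR^\delta}(z^+,z_0)\ge\cst\cdot s^{-1}\delta^3$. The identity
\[
\widetilde{G}^{(\phi)}_{\cR^\delta}(z^+,z_0)\ =\ \mathbb{P}^{(z_0)}\bigl[\widetilde{Z}_t^{(\phi)}\text{ visits }z^+\bigr]\cdot \widetilde{G}^{(\phi)}_{\cR^\delta}(z^+,z^+),
\]
together with the $O(\delta^2)$ upper bound on $\widetilde{G}^{(\phi)}_{\cR^\delta}(z^+,z^+)$ obtained by descending from $z^+$ through a staircase path into the below-level-$ik$ set in time $O(\delta^2)$, and finally multiplying by $p(C)$ from the first paragraph, yields the desired lower bound $\cst\cdot s^{-1}\delta$. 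The only genuinely new point compared to Proposition~\ref{prop:hm>delta} is the verification that the staircase-path condition in Definition~\ref{def:Lik} -- tailored to survive every doubleton swap under the perturbations $\cQ^\delta\mapsto e^{2i\phi}\cQ^\delta$ with $|\phi|\le\phi_0$ -- really plays the role of uniform-in-$\phi$ local visibility from above; this is immediate since each staircase step has uniformly positive transition probability for every such $\phi$ and only $O(1)$ steps are needed to climb the height $O(\delta)$ from $z$ to $z^+$. I do not anticipate a genuine obstacle beyond bookkeeping the doubleton case in Definition~\ref{def:Lik} carefully.
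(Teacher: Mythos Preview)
Your proposal is correct and takes essentially the same approach as the paper: the paper's own proof is the single sentence ``The same proof as that of Proposition~\ref{prop:hm>delta} applies (with minor changes in the definitions of the domains~$\cR^\delta_-$ and~$\cR^\delta_\mathrm{mid}$),'' and you have spelled out precisely those minor changes---shifting the auxiliary rectangles up by~$ik$ and replacing the local-visibility hypothesis by the staircase-path condition built into Definition~\ref{def:Lik}.
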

\begin{proof}
The same proof as that of Proposition~\ref{prop:hm>delta} applies (with minor changes in the definitions of the domains~$\cR^\delta_-$ and~$\cR^\delta_\mathrm{mid}$).
\end{proof}
The last technical definition that we need is the following notation for a tiny (of width~$O(\delta)$) vicinity of the boundary~$\pa\H^\circ_{\cS^\delta}$.
\begin{definition}
\label{def:UpaH} In the same setup as above, denote
\[
\UpaH{\cS^\delta}\ :=\ \{z\in \Int H^\circ_{\cS^\delta}\,:\ \text{the~estimate~\eqref{eq:in-delta-Lik}~fails}\}.
\]
\end{definition}
The following simple fact is listed here for reference purposes: there exists a constant~$P_0>0$ depending only on constants in~\Unif\ and \Qflat\ such that for all~$z'\not\in\UpaH{\cS^\delta}$ and all~$\phi\in[-\phi_0,\phi_0]$,
\begin{equation}
\label{eq:P0-leave-UpaH}
\textstyle \sum_{z\in\UpaH{\cS^\delta}}\mathbb{P}^{(z)}[\,\widetilde{Z}^{(\phi)}_t~\text{first~leaves the set}~\UpaH{\cS^\delta}~\text{through}~z'\,]\ \le\ P_0.
\end{equation}
Indeed, the points~$z$ with~$|(\cS^\delta-ie^{2i\phi}\cQ^\delta)(z)-(\cS^\delta-ie^{2i\phi}\cQ^\delta)(z')|\ge C\delta$ give exponentially small (in~$C$) contributions to~$P_0$ because of the uniform crossing property, and there are only~$O(1)$ points with \mbox{$|(\cS^\delta\!-ie^{2i\phi}\cQ^\delta)(z)-(\cS^\delta\!-ie^{2i\phi}\cQ^\delta)(z')|=O(\delta)$}.

\resumetocwriting

\subsection{Scaling limits of fermionic observables in discrete rectangles} \label{sub:RSW-proof}
We begin with proving the convergence of fermionic observables~$F^\delta$ in discrete rectangles~$\cR^\delta=[\cR(x_1,x_2;y_1,y_2)]^{\circ\bullet\bullet\bullet}_\cS$ with \emph{Dobrushin} boundary conditions: wired at the bottom side~$(a^\delta b^\delta)^\circ$ of~$\cR^\delta$ and dual-wired along the three other sides. Though we are not interested in this result itself (\emph{a posteriori}, this is a very particular case of Theorem~\ref{thm:FK-conv}), it is useful to start with a simpler setup before moving to the analysis of fermionic observables from Section~\ref{sub:strategy-RSW} and to the proof of Theorem~\ref{thm:RSW-selfdual}.

\smallskip

Recall that the functions~$H_{F^\delta}$ associated to~$F^\delta$ satisfy the boundary conditions
\begin{equation}
\label{eq:x-H2pt-bv}
H_{F^\delta}(v)=0\ \ \text{for}\ \ v\in (a^\delta b^\delta)^\circ,\qquad H_{F^\delta}(v)=1\ \ \text{for}\ \ v\in (b^\delta a^\delta)^\bullet.
\end{equation}
Due to the maximum principle (see Proposition~\ref{prop:HF-max}), we have~$H_{F^\delta}(v)\in [0,1]$ for all~$v$ in~${\cR}^\delta$. Thus, the a priori regularity results discussed in Section~\ref{sub:regularity} {allow us} to find a subsequence~$\delta=\delta_k\to 0$ such that
\begin{equation}
\label{eq:HF-to-h-2pt}
F^\delta\to f,\quad H_{F^\delta}\to {\textstyle h=\frac{1}{2}\int\Im[(f(z))^2dz]}\quad \text{in}\ \ \cR=(x_1,x_2)\times(y_1,y_2),
\end{equation}
uniformly on compact subsets. Moreover, the function~$f$ is holomorphic, the function~$h$ is harmonic, so it only remains to prove that the \emph{boundary conditions}~\eqref{eq:x-H2pt-bv} of functions~$H_F$ imply the same boundary conditions for the continuous function~$h$. Below we focus on the behavior of~$h$ near the segment~$(ab)^\circ$, the three other sides of~$\cR$ can be {treated} similarly.

Let~$x_0\in (x_1,x_2)$ and~$s>0$ be chosen so that~$(x_0-8s,x_0+8s)\subset (x_1,x_2)$ and~$8k_0s<y_2-y_1$, where~$k_0>0$ is fixed in Section~\ref{sub:visibility}. By scaling, in what follows we assume that~$y_1=0$, $x_0=0$ and~$s=1$ without true loss of generality. The analysis of boundary conditions of~$h$ goes through a sequence of lemmas.
\begin{lemma} \label{lem:x:>-O(1)}
Let~$F=F^\delta$ be a fermionic observable in the rectangle~$\cR=\cR^\delta$ with Dobrushin boundary conditions. For all~$\phi\in[-\phi_0,\phi_0]$ (where~$\phi_0$ is given in Corollary~\ref{cor:ReF>0}), the following uniform (in~$\delta$ and in the position of~$z\in\cR$, including points near~$\pa\H^\circ_\cS$) estimate holds:
\[
\Re[e^{i\phi}F^\delta(z)]\ge-O(1)\ \ \text{if}\ \ |\Re \cS^\delta(z)|\le 2\ \ \text{and}\ \ \Im \cS^\delta(z)\le 2k_0.
\]
\end{lemma}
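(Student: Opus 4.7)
The plan is to represent $\Re[e^{i\phi}F^\delta(z)]$ as a real-valued martingale under the backward random walk $\widetilde Z^{(\phi)}_t$ associated with the S-graph $\cS^\delta - ie^{2i\phi}\cQ^\delta$ (via Proposition~\ref{prop:PrF-mart} applied with $\alpha = e^{-i\phi}$, noting $\Re[e^{i\phi}F] = e^{i\phi}\Pr{F}{e^{-i\phi}\R}$), and then to apply optional stopping on an intermediate box strictly inside $\cR$ on which the two needed inputs --- (i)~a pointwise a~priori bound $|F^\delta|=O(1)$ away from $\pa\cR$, and (ii)~the boundary positivity $\Re[e^{i\phi}F^\delta]\geq 0$ on $\pa\H^\circ_{\cS^\delta}$ --- are available.

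Concretely, introduce the box $D:=(-4,4)\times(0,4k_0)$; after the normalization $y_1=x_0=0,\ s=1$, the assumption $(x_0\pm 8s)\subset(x_1,x_2)$ and $8k_0s<y_2-y_1$ places the top, left, and right sides of $D$ at Euclidean distance uniformly bounded away from the corresponding sides of $\cR$, and the point $z$ in question lies well inside $D$. Since $H_{F^\delta}\in[0,1]$ on all of $\cR$ by Proposition~\ref{prop:HF-max} together with the boundary values~\eqref{eq:x-H2pt-bv}, Theorem~\ref{thm:F-via-HF}(i) applied in discs of fixed radius centered at points of the top/left/right sides of $D$ gives $|F^\delta|^2=O(1)$ on those sides (the second alternative in Theorem~\ref{thm:F-via-HF} is ruled out for small $\delta$ under \Unif).

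Next, run $\widetilde Z^{(\phi)}_t$ stopped at $\pa\H^\circ_{\cS^\delta}$ in the sense of Definition~\ref{def:RW-at-paH}, and let $\tau$ be the first time it either is absorbed at $\pa\H^\circ_{\cS^\delta}\cap D$ or exits $D$ through its top, left, or right side (this happens a.s.\ in finite time by uniform ellipticity on scales above~$\delta$). Proposition~\ref{prop:PrF-mart} together with Lemma~\ref{lem:RW-at-paH} then yields the martingale identity
\[
\Re[e^{i\phi}F^\delta(z)]\ =\ \E^z\bigl[\Re[e^{i\phi}F^\delta(\widetilde Z^{(\phi)}_\tau)]\,;\,\text{top/left/right}\bigr]+\E^z\bigl[k_\pm\,\Re[e^{i\phi}F^\delta(\widetilde Z^{(\phi)}_\tau)]\,;\,\pa\H^\circ_{\cS^\delta}\bigr],
\]
where $k_\pm$ are the reweighting factors from Lemma~\ref{lem:RW-at-paH}. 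For $|\phi|\leq\phi_0$, Lemma~\ref{lem:kpm=O(1)} guarantees $k_\pm>0$ (and in fact $k_\pm\asymp 1$), while Corollary~\ref{cor:ReF>0} (with the appropriate sign choice for $F^\delta$) forces $\Re[e^{i\phi}F^\delta]\geq 0$ along $\pa\H^\circ_{\cS^\delta}$. The second expectation is therefore nonnegative, and the first is bounded below by $-O(1)$ via the a~priori estimate established above. Combining these two facts yields the claim.

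The only real subtlety is that one must stop $\widetilde Z^{(\phi)}_t$ at $\pa\H^\circ_{\cS^\delta}$ using the modified rule of Definition~\ref{def:RW-at-paH}, since otherwise the walk would jump across $\pa\H^\circ_{\cS^\delta}$ and one would need to interpret $F^\delta$ at half-quads outside $\H^\circ_{\cS^\delta}$ where it is not defined. The reweighting by $k_\pm$ arising from this modification is harmless precisely because of the positivity $k_\pm>0$ from Lemma~\ref{lem:kpm=O(1)}, which itself rests on the careful choice of the path $\pa\H^\circ_{\cS^\delta}$ in Section~\ref{sub:cuts} (specifically, the control on the direction of $\pa\H^\circ_{\cS^\delta}$ drawn on the S-graph, for all nearby rotations).
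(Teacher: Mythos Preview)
Your overall strategy --- optional stopping for the martingale $\Re[e^{i\phi}F^\delta(\widetilde Z^{(\phi)}_t)]$, nonnegative contribution from the base via Corollary~\ref{cor:ReF>0}, bounded contribution from the remaining sides --- is exactly the paper's. The gap is in the claim that $|F^\delta|=O(1)$ on the \emph{full} left and right sides of your box $D=(-4,4)\times(0,4k_0)$. Theorem~\ref{thm:F-via-HF} requires a disc of fixed radius inside~$\cR^\delta$; at a point on the left side of $D$ at height $y$ you only have room for a disc of radius $\sim y$, giving $|F^\delta|^2=O(y^{-1})$, not $O(1)$. Since the lemma explicitly allows $z$ arbitrarily close to $\pa\H^\circ_{\cS^\delta}$, you cannot avoid these corner regions, and the ``first expectation is $\ge -O(1)$'' step does not follow from what you wrote.

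The paper handles exactly this difficulty by replacing the rectangle with a \emph{triangular} domain $\cT$ bounded by $\pa\H^\circ_{\cS^\delta}$ and the diagonals $4k_0-y=\pm k_0x$. On the sides one still only has $|F^\delta|=O(y^{-1/2})$, but the slope $k_0$ is chosen small enough that the hitting probability of the side at height $\le\rho$ decays as $O(\rho^{1/2+\beta})$ for some $\beta>0$ (this is the bullet in Section~\ref{sub:visibility}, derived from the uniform crossing property; the narrow wedge angle $\arctan k_0$ at the corners $\pm 4$ forces a large Beurling-type exponent). Integrating $y^{-1/2}$ against this exit distribution then converges and gives $O(1)$. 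With a right-angle corner as in your box, uniform crossing alone does not guarantee a decay exponent exceeding $1/2$, so the integral may diverge. In short: keep your optional-stopping skeleton and the $k_\pm$ discussion (which is correct), but replace the box by the triangle and trade the false pointwise bound for the $O(\rho^{1/2+\beta})$ hitting estimate.
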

\begin{proof} Recall that, by Corollary~\ref{cor:H-Lip}, the functions~$F^\delta$ admit an a priori bound $|F^\delta(z)|=O(\max\{\Im \cS^\delta(z),\delta\}^{-1/2})$ and that~$\Re[e^{i\phi}F^\delta(z)]\ge 0$ on~$\pa\H^\circ_\cS$. Let \mbox{$\cT\subset\cR$} be a triangular domain bounded by $\pa\H^\circ_\cS$ and two lines~\mbox{$4k_0-y=\pm k_0x$}; {see Fig.~\ref{fig:technical}A with~$s=1$} (the choice of discretizations of {the sides of~$\cT$} is irrelevant). The claim follows by applying the optional stopping theorem for the martingale~$\Re[e^{i\phi}F(\widetilde{Z}_t)]$, where the backward random walk~$\widetilde{Z}_t$ associated with the S-graph~$\cS^\delta-ie^{2i\phi}\cQ^\delta$ is started at the point~$z$ with~$|\Re z|\le 2$,~$\Im z\le 2k_0$ and stopped when it hits~$\pa\H^\circ_\cS$ (see Definition~\ref{def:RW-at-paH}) or crosses one of the two sides of~$\cT$. The base of the triangular domain~$\cT$ gives a positive contribution to~$\Re[e^{i\phi}F(z)]$ and the (possibly, negative) contribution of its sides is~$O(1)$ provided that the slope \mbox{$k_0>0$} is chosen small enough to guarantee that the probability to hit these sides at height~$\rho\ll 1$ decays as~$O(\rho^{\frac{1}{2}+\beta})$ with~$\beta>0$. (Recall that such~$k_0$ can be also found due to the uniform crossing property of the random walk.)
\end{proof}

\begin{lemma} \label{lem:x:paH=O(1)}
In the same setup, the following uniform (in~$\delta$) estimate holds:
\[
\textstyle \delta\sum_{z\in\pa\H^\circ_{\cS^\delta}:\,|\Re \cS^\delta(z)|\le 1}|F^\delta(z)|\ =\ O(1).
\]
\end{lemma}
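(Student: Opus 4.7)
The plan is to combine the cone estimate from Corollary~\ref{cor:ReF>0} with a martingale decomposition for $\Re[F^\delta(\widetilde{Z}_t)]$ (Proposition~\ref{prop:PrF-mart} and Lemma~\ref{lem:RW-at-paH}) and the hitting-probability lower bound of Proposition~\ref{prop:hm>delta}. The key observation is that since both $\Re[e^{i\phi_0}F^\delta(z)]\ge 0$ and $\Re[e^{-i\phi_0}F^\delta(z)]\ge 0$ hold on $\pa\H^\circ_{\cS^\delta}$, the value $F^\delta(z)$ is trapped in a cone around~$\R_+$ of half-angle $\tfrac{\pi}{2}-\phi_0$, yielding the pointwise bound $|F^\delta(z)|\le(\sin\phi_0)^{-1}\Re[F^\delta(z)]$ for every $z\in\pa\H^\circ_{\cS^\delta}$. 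It therefore suffices to prove the stated estimate with $\Re[F^\delta(z)]$ in place of $|F^\delta(z)|$.

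Next I would pick a quad~$z_0$ with~$\cS^\delta(z_0)\asymp 2ik_0$ and apply the optional stopping theorem to the bounded martingale $\Re[F^\delta(\widetilde{Z}_t)]$, where $\widetilde{Z}_t$ is the backward random walk associated with the S-graph~$\cS^\delta-i\cQ^\delta$, modified near~$\pa\H^\circ_{\cS^\delta}$ as in Definition~\ref{def:RW-at-paH} and stopped upon first leaving the sub-rectangle $\cR^\delta_{\mathrm{sm}}:=[\cR(-2,2;0,2k_0)]_{\cS^\delta}$. The a~priori Harnack bound gives $|F^\delta(z_0)|=O(1)$ (Corollary~\ref{cor:H-Lip}); Lemma~\ref{lem:x:>-O(1)} applied at~$\phi=0$ gives $\Re[F^\delta]\ge -O(1)$ on the top, left, and right sides of~$\cR^\delta_{\mathrm{sm}}$; and $\Re[F^\delta]\ge 0$ on the bottom side~$\pa\H^\circ_{\cS^\delta}$. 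Since the total exit mass is one, the martingale identity yields
\[
\sum_{z\in\pa\H^\circ_{\cS^\delta}\cap\cR^\delta_{\mathrm{sm}}}\mathbb{P}^{(z_0)}[\widetilde{Z}_\tau=z]\cdot\Re[F^\delta(z)]\ \le\ \Re[F^\delta(z_0)]+O(1)\ =\ O(1).
\]
Proposition~\ref{prop:hm>delta} (with $s=2$ and $\phi=0$) bounds each hitting probability from below by~$\cst\cdot\delta$ for every \emph{locally visible} $z\in\pa\H^\circ_{\cS^\delta}$ with $|\Re\cS^\delta(z)|\le 1$; combining with the cone estimate above gives the target bound $\delta\sum|F^\delta(z)|=O(1)$ for the sub-sum restricted to locally visible half-quads on the bottom boundary.

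The remaining difficulty, and what I expect to be the main obstacle, is to treat the non-locally-visible boundary half-quads: these sit inside fjords of~$\pa\H^\circ_{\cS^\delta}$ that the modified backward walk never reaches, so the martingale argument above gives no information about them. Under assumption~\Unif, each such fjord has diameter~$O(\delta)$ and contains only $O(1)$ boundary half-quads, so there is always a locally visible half-quad~$z^*$ within distance~$O(\delta)$ of any non-locally-visible $z$. Using the maximum principle for $|F^\delta-f_0|$ from Remark~\ref{rem:|F|-max-principle} with $f_0:=F^\delta(z^*)$, together with the s-holomorphicity constraint~\eqref{eq:s-hol} that prevents $F^\delta$ from varying rapidly between adjacent quads, one should obtain the pointwise comparison $|F^\delta(z)|\le\cst\cdot|F^\delta(z^*)|$ for every non-locally-visible~$z$ and its designated locally visible neighbor~$z^*$. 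Summing this over all fjords, each of which absorbs only a bounded number of non-locally-visible half-quads, yields $\delta\sum_{\text{non-LV}}|F^\delta(z)|\le\cst\cdot\delta\sum_{\text{LV}}|F^\delta(z)|=O(1)$, completing the proof.
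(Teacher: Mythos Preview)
Your overall strategy---optional stopping for $\Re[e^{i\phi}F^\delta(\widetilde Z_t)]$, the harmonic-measure lower bound of Proposition~\ref{prop:hm>delta}, and the cone inequality $|F^\delta(z)|\le(\sin\phi_0)^{-1}\Re F^\delta(z)$ on the boundary---matches the paper's proof, and your use of the pointwise lower bound from Lemma~\ref{lem:x:>-O(1)} on the non-bottom sides (in a rectangle rather than the paper's triangle~$\cT$) is a legitimate shortcut. There are minor parameter mismatches (with $s=2$ in Proposition~\ref{prop:hm>delta} the reference rectangle is $(-4,4)\times(0,4k_0)$ with center $2ik_0$, not your $\cR^\delta_{\mathrm{sm}}$; and the optional stopping at $\pa\H^\circ_{\cS^\delta}$ actually produces the weighted values $k_\pm\Re F^\delta(z_\pm)$ from Lemma~\ref{lem:RW-at-paH}, which you must then strip using $k_\pm\asymp 1$ from Lemma~\ref{lem:kpm=O(1)}), but these are easily repaired.

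The genuine gap is your treatment of the non--locally-visible half-quads. The claim that s-holomorphicity ``prevents $F^\delta$ from varying rapidly between adjacent quads'' is false: condition~\eqref{eq:s-hol} only says $F^\delta(z)-F^\delta(z')\in i\eta_c\R$ and places no bound whatsoever on $|F^\delta(z)-F^\delta(z')|$. Likewise, the maximum principle of Remark~\ref{rem:|F|-max-principle} is a statement about interior quads and gives no control at a boundary half-quad~$z$. So the inequality $|F^\delta(z)|\le\cst\cdot|F^\delta(z^*)|$ does not follow from the reasons you give. The paper closes this gap by an \emph{Ising-theoretic} argument: for the Dobrushin observable one has $X(c_\pm)=\E_{\cR}[\mu_{v^\bullet}\mu_{(ba)^\bullet}]$ at every boundary half-quad (equation~\eqref{eq:2ptX>0}), and under~\Unif\ these disorder--disorder correlations at neighbouring~$v^\bullet$ are uniformly comparable because the interaction parameters are bounded away from~$0$ and~$\tfrac\pi2$. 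Since $|F^\delta(z)|\asymp\delta^{-1/2}X(c_\pm)$ by the explicit formula in the proof of Lemma~\ref{lem:argF=}, this yields the required comparison between adjacent boundary half-quads and hence lets one absorb the non--locally-visible terms into the locally visible ones. Your argument needs to be replaced by this (or an equivalent) mechanism.
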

\begin{proof} As in the proof of the previous lemma, we apply the optional stopping theorem for~$\Re[e^{i\phi}F(\widetilde{Z}_t)]$, where the random walk~$\widetilde{Z}_t$ is started at a point~\mbox{$z\in\Dm(G)$} with~$\cS^\delta(z)=ik_0+O(\delta)$ and stopped when it hits the boundary of the same triangular domain~$\cT$. The starting value~$\Re[e^{i\phi}F(z)]$ is~$O(1)$ and the contribution of the sides of~$\cT$ is also~$O(1)$ provided that~$k_0$ is chosen small enough. We now invoke the harmonic measure estimates from Proposition~\ref{prop:hm>delta}, which imply that
\[
\textstyle\delta\sum_{z\in\pa\H^\circ_{\cS^\delta}:\,|\Re z|\le 1,\ z\ \text{is~locally~visible}}\Re[e^{i\phi}F^\delta(z)]\ \le\ O(1).
\]
Indeed, if this sum were too big, the positive contribution of the base of~$\cT$ to the starting value could not have been compensated by diagonal sides. Note that the values~\eqref{eq:2ptX>0} of the fermionic observables with Dobrushin boundary conditions at nearby boundary points are uniformly comparable to each other (as all the Ising interaction parameters are uniformly bounded from below under the assumption~\Unif). This allows to include \emph{all} the boundary points to the last sum by the cost of an additional multiplicative constant {on} the right-hand side. Thus,
\[
\textstyle\delta\sum_{z\in\pa\H^\circ_{\cS^\delta}:\,|\Re z|\le 1}\Re[e^{i\phi}F^\delta(z)]\ \le\ O(1).
\]
Since we have this estimate for both~$\phi=\pm\phi_0$ and $\Re[e^{i\phi}F^\delta(z)]\ge 0$ at~$\pa\H^\circ_{\cS^\delta}$, it also holds with~$|F^\delta(z)|$ instead of~$\Re[e^{i\phi}F(z)]$.
\end{proof}

Recall that the sets~$\Lik{\cS^\delta}$ with~$C_0\delta\le k\le\tfrac{1}{2}k_0$ and~$\UpaH{\cS^\delta}$ are introduced in Definition~\ref{def:Lik} and Definition~\ref{def:UpaH}, respectively.

\begin{lemma}\label{lem:x:Lik=O(1)}
In the same setup, the following uniform (in~$\delta$) estimates hold:
\[
\textstyle \delta\sum_{z\in\Lik{\cS^\delta}:\,|\Re \cS^\delta(z)|\le 1}|F^\delta(z)|\ =\ O(1)\ \ \text{for~all}\ \ k\in [C_0\delta,\frac{1}{2}k_0].
\]
Moreover, one also has $\delta\sum_{z\in\UpaH{\cS^\delta}:\,|\Re \cS^\delta(z)|\le 1}|F^\delta(z)|\ =\ O(1)$.
\end{lemma}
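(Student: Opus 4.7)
The plan is to follow the template of Lemma~\ref{lem:x:paH=O(1)}, now invoking Proposition~\ref{prop:hm>delta-Lik} in place of Proposition~\ref{prop:hm>delta}. Specifically, one applies the optional stopping theorem to the martingale $\Re[e^{i\phi}F^\delta(\widetilde Z_t^{(\phi)})]$, $\phi\in[-\phi_0,\phi_0]$, with the backward random walk on the S-graph $\cS^\delta-ie^{2i\phi}\cQ^\delta$ started at a point $z_0$ with $\cS^\delta(z_0)=ik_0+O(\delta)$ and stopped when it first enters the set of quads with $v(z)$ below level $ik$ or exits the triangle $\cT$ through one of its diagonal sides. As in Lemma~\ref{lem:x:paH=O(1)}, the initial value and the diagonal side contributions are both $O(1)$, and Proposition~\ref{prop:hm>delta-Lik} supplies a hitting probability bound $\ge\cst\cdot\delta$ at each $z\in\Lik{\cS^\delta}$ with $|\Re\cS^\delta(z)|\le 1$.

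The new technical obstacle, absent from Lemma~\ref{lem:x:paH=O(1)}, is that the values $\Re[e^{i\phi}F^\delta(z)]$ at interior quads $z\in\Lik{\cS^\delta}$ are not sign-definite, so the positivity argument of Lemma~\ref{lem:x:paH=O(1)} does not transfer directly. I propose to resolve this by a Jensen-type estimate using that both $\Re F^\delta$ and $\Im F^\delta$ are martingales, but with respect to two different backward walks: $\Re F^\delta$ for $\widetilde Z_t^{(0)}$ on the S-graph $\cS^\delta-i\cQ^\delta$, and $\Im F^\delta$ for $\widetilde Z_t^{(\pi/2)}$ on $\cS^\delta+i\cQ^\delta$ (Proposition~\ref{prop:PrF-mart} with $\alpha=1$ and $\alpha=i$). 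Stopping each walk at the first exit from $\cR^\delta$ yields, for $z\in\Lik{\cS^\delta}$,
\[
|F^\delta(z)|\ \le\ |\Re F^\delta(z)|+|\Im F^\delta(z)|\ \le\ \widetilde E_z^{(0)}\bigl[|F^\delta(\widetilde Z_\tau^{(0)})|\bigr]+\widetilde E_z^{(\pi/2)}\bigl[|F^\delta(\widetilde Z_\tau^{(\pi/2)})|\bigr],
\]
so $|F^\delta(z)|$ is pointwise controlled by hitting distributions on $\pa\cR^\delta$ applied to $|F^\delta|$.

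Summing over $z\in\Lik{\cS^\delta}$ with $|\Re\cS^\delta(z)|\le 1$ and interchanging the order of summation reduces the claim to the uniform estimate
\[
\sum_{z\in\Lik{\cS^\delta},\,|\Re\cS^\delta(z)|\le 1}\!\!\tilde p^{(\phi)}(z,z')\ \le\ \cst\qquad\text{uniformly in}\ z'\in\pa\cR^\delta,
\]
a discrete counterpart of the fact that the $L^1$-norm of the Poisson kernel of the upper half-plane on any horizontal line is bounded by one. Under assumption~\Unif\ this should follow from reversibility of the forward walk (whose invariant measure is uniformly comparable to $\delta^2$, see~\eqref{eq:inv-mu=}) together with the uniform crossing property of random walks on S-graphs. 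Combined with Lemma~\ref{lem:x:paH=O(1)} applied on each of the four sides of $\cR^\delta$ (the three dual-wired sides being treated analogously via rotated sign properties), this yields $\delta\sum_{z\in\Lik{\cS^\delta},\,|\Re\cS^\delta(z)|\le 1}|F^\delta(z)|=O(1)$.

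The $\UpaH$ estimate is simpler: by Definition~\ref{def:UpaH} the set $\UpaH{\cS^\delta}$ is contained in an $O(\delta)$-tubular neighborhood of $\pa\H^\circ_{\cS^\delta}$ and intersects each column of the s-embedding in at most $O(1)$ quads, so the sum reduces, up to a multiplicative constant, to the boundary sum already controlled by Lemma~\ref{lem:x:paH=O(1)}, using the a priori H\"older regularity of $F^\delta$ furnished by Theorem~\ref{thm:F-Hol}. The main technical obstacle is the verification of the uniform Poisson-type $L^1$ bound above, in particular its behavior near the corners of $\cR^\delta$, where $|F^\delta|$ itself may blow up and one must show that the hitting distribution $\tilde p^{(\phi)}(z,\cdot)$ decays sufficiently fast there.
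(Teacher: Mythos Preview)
You correctly set up the martingale argument and correctly identify the obstacle: the values $\Re[e^{i\phi}F^\delta(z)]$ on $\Lik{\cS^\delta}$ are not sign-definite. But you overlook that the paper has already established, in Lemma~\ref{lem:x:>-O(1)}, the uniform lower bound $\Re[e^{i\phi}F^\delta(z)]\ge -O(1)$ for \emph{all} $z$ with $|\Re\cS^\delta(z)|\le 2$, $\Im\cS^\delta(z)\le 2k_0$ and all $\phi\in[-\phi_0,\phi_0]$. Since under~\Unif\ the set $\Lik{\cS^\delta}\cap\{|\Re\cS^\delta(z)|\le 1\}$ contains only $O(\delta^{-1})$ points, the negative part of $\delta\sum\Re[e^{i\phi}F^\delta(z)]$ is automatically $O(1)$; the positive part is then controlled exactly as in Lemma~\ref{lem:x:paH=O(1)} via Proposition~\ref{prop:hm>delta-Lik}. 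Applying this at $\phi=\pm\phi_0$ gives the full bound on $|F^\delta|$. That is the paper's proof, and it is much shorter than your route.

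Your alternative via Jensen and two backward walks introduces real difficulties. First, the martingale for $\Im F^\delta$ lives on the S-graph $\cS^\delta+i\cQ^\delta$, i.e.\ corresponds to $\phi=\tfrac{\pi}{2}$, which is outside the range $|\phi|\le\phi_0$ for which the constructions of Section~\ref{sub:cuts} (in particular the compatibility of $\H^\circ_{\cS^\delta}$ with the bijection $z\mapsto v(z)$) are arranged. Second, the uniform $L^1$ Poisson-type bound you need is not established anywhere in the paper and, as you note yourself, its verification near the corners is a genuine open point; this makes the argument incomplete.

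Your treatment of $\UpaH{\cS^\delta}$ also has a gap. Theorem~\ref{thm:F-Hol} requires a ball of radius $\ge\cst\cdot\delta$ centered at distance $\ge R$ from the boundary; points of $\UpaH{\cS^\delta}$ lie at distance $O(\delta)$ from $\pa\H^\circ_{\cS^\delta}$, so H\"older regularity gives no usable comparison with boundary values there. The paper instead runs the backward martingale from each $z\in\UpaH{\cS^\delta}$ until it exits the strip, and uses the estimate~\eqref{eq:P0-leave-UpaH} on the total incoming mass at each exit point $z'$. The exit points lie either on $\pa\H^\circ_{\cS^\delta}$ (controlled by Lemma~\ref{lem:x:paH=O(1)}) or just above the strip (controlled by the first part of the present lemma), and again the lower bound from Lemma~\ref{lem:x:>-O(1)} handles sign issues.
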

\begin{proof} Recall that~$\Re[e^{i\phi}F^\delta(z)]\ge -O(1)$ due to Lemma~\ref{lem:x:>-O(1)}, for all~$\phi\in[-\phi_0,\phi_0]$. Repeating the proof of Lemma~\ref{lem:x:paH=O(1)} with the uniform bounds on hitting probabilities from Proposition~\ref{prop:hm>delta} replaced by those from Proposition~\ref{prop:hm>delta-Lik}, it is easy to see that
\[
\textstyle \delta\sum_{z\in\Lik{\cS^\delta}:\,|\Re \cS^\delta(z)|\le 1}\Re[e^{i\phi}F^\delta(z)]\ =\ O(1),
\]
for both~$\phi=\pm\phi_0$, which proves the desired uniform estimate for the set~$\Lik{\cS^\delta}$.

The similar estimates for the sum over a tiny strip~$\UpaH{\cS^\delta}$  follow from the already obtained estimates on the boundaries of this strip and from~\eqref{eq:P0-leave-UpaH}.
\end{proof}

Let us now prove the convergence of fermionic observables in discrete rectangles with Dobrushin boundary conditions basing upon the uniform estimates from Lemmas~\ref{lem:x:>-O(1)}--\ref{lem:x:Lik=O(1)}. From the perspective of the proof of Theorem~\ref{thm:RSW-selfdual}, we do \emph{not} need the following proposition, it is included to illustrate the strategy of this proof in a simpler situation.
\begin{proposition}\label{prop:Dobrushin-rectangles} All subsequential limits~\eqref{eq:HF-to-h-2pt} of functions~$H_{F^\delta}$ associated with fermionic observables in discrete rectangles with Dobrushin boundary conditions satisfy boundary conditions~$h|_{(ab)}=0$ and~$h|_{(ba)}=1$ (and hence~$h(\cdot)=\hm_\cR(\cdot,(ba))$).
\end{proposition}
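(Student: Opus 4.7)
The plan is to identify the limit $h$ as $\hm_\cR(\cdot,(ba))$ by proving that $h$ has a non-tangential boundary trace equal to $0$ on the wired side $(ab)$ and equal to $1$ on each of the three dual-wired sides, and then invoking uniqueness of the bounded harmonic extension in the rectangle.

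To establish the trace on $(ab)$, I fix an interior point $x_0\in(x_1,x_2)$ and a small $\rho>0$ and aim to prove
\[
\limsup_{\delta\to 0}\ \sup_{z\,:\,|\cS^\delta(z)-x_0|<\rho/2,\ \Im\cS^\delta(z)>0} H_{F^\delta}(z)\ \le\ \omega(\rho),
\]
with $\omega(\rho)\to 0$ as $\rho\to 0$. Combined with the uniform convergence $H_{F^\delta}\to h$ on compacts of~$\cR$, such a bound immediately gives the desired non-tangential limit. The maximum principle for $H_{F^\delta}$ (Proposition~\ref{prop:HF-max}) in the half-disc $B(x_0,\rho)\cap\H^\circ_{\cS^\delta}$ reduces the task to a boundary estimate: on the inner semicircle $H_{F^\delta}\le 1$ contributes at most $\hm_{B(x_0,\rho)\cap\cR}(z,\pa B(x_0,\rho))=O(\Im z/\rho)$, while on the bottom segment $\pa\H^\circ_{\cS^\delta}\cap\{|x-x_0|<\rho\}$ one has $H_{F^\delta}=0$ at wired vertices and $H_{F^\delta}(v^\bullet)=(X(c))^2=O(\delta)\,|F^\delta(z)|^2$ at adjacent dual boundary vertices. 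Combining the $L^1$ bound $\delta\sum_v|F^\delta(v)|=O(1)$ of Lemma~\ref{lem:x:paH=O(1)} with the uniform hitting-probability estimate of Proposition~\ref{prop:hm>delta} (for the backward random walk on $\cS-i\cQ$, under which $\Re F^\delta$ is a martingale by Proposition~\ref{prop:PrF-mart}) and chaining over dyadic scales $\delta\le r\le\rho$, I expect to obtain $\omega(\rho)=O(\rho^\beta)$ for some $\beta>0$.

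For each of the three dual-wired sides, I will perform the analogous straight-cut construction of Section~\ref{sub:cuts} for that side (interchanging $G^\bullet\leftrightarrow G^\circ$ and rotating the underlying S-graph as indicated at the end of that section), and prove the analogues of Lemmas~\ref{lem:x:>-O(1)}--\ref{lem:x:Lik=O(1)}. Since $H_{F^\delta}\equiv 1$ along each such arc, the same argument applied to $1-H_{F^\delta}$ establishes the trace~$1$ there. Standard uniqueness of the bounded harmonic extension in a Jordan domain, given the specified boundary values at every interior point of the four sides, then yields $h=\hm_\cR(\cdot,(ba))$.

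The main obstacle is the chaining step near $(ab)$: since the a~priori pointwise bound $|F^\delta|^2=O(\mathrm{dist}^{-1})$ makes the naive summation of jumps of $H_{F^\delta}$ divergent as one approaches the boundary, one must replace pointwise estimates by averaged ones. The key ingredients that have to be combined with care are the integrated $L^1$ control from Lemma~\ref{lem:x:paH=O(1)}, the one-sided positivity $\Re[e^{i\phi}F^\delta]\ge 0$ on $\pa\H^\circ_{\cS^\delta}$ (Corollary~\ref{cor:ReF>0}) for all $|\phi|\le\phi_0$, and the harmonic-measure estimates of Proposition~\ref{prop:hm>delta}, in order to extract a quantitative modulus of continuity. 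Conceptually this step is the rectangle-specific counterpart of the near-boundary analysis carried out in Theorem~\ref{thm:HF-almost-harm}, but here it is localized and executed by chaining over dyadic scales rather than through the mollification and Laplacian bounds used in Section~\ref{sec:convergence}.
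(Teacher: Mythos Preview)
Your proposal has a genuine gap at the core step near the boundary $(ab)$. You write that the maximum principle for $H_{F^\delta}$ in the half-disc ``reduces the task to a boundary estimate: on the inner semicircle $H_{F^\delta}\le 1$ contributes at most $\hm_{B(x_0,\rho)\cap\cR}(z,\partial B(x_0,\rho))$''. But Proposition~\ref{prop:HF-max} gives only $\max_{\text{interior}}H_F\le\max_{\text{boundary}}H_F$; it does \emph{not} give a harmonic-measure decomposition of $H_F(z)$ into weighted boundary contributions. Such a decomposition would require $H_F$ to be sub- or super-harmonic for some random walk on $\Lambda(G)$, and this is exactly what fails on general s-embeddings: the s-Laplacian $\DS$ of Proposition~\ref{prop:DeltaS} has coefficients of both signs and is not the generator of any random walk (the paper stresses repeatedly that the comparison of $H_F$ with harmonic functions, available in the isoradial case, is missing here). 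The martingale property you cite from Proposition~\ref{prop:PrF-mart} concerns $\Re[e^{i\phi}F^\delta]$, not $H_{F^\delta}$; the two functions live on different graphs ($\Dm(G)$ versus $\Lambda(G)$) and are related nonlinearly, so the hitting estimates of Proposition~\ref{prop:hm>delta} cannot be applied to $H_{F^\delta}$. Without this harmonic-measure step your dyadic chaining never starts.

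The paper's argument sidesteps this by never seeking a pointwise bound on $H_{F^\delta}$ near $(ab)$. It combines the $L^1$ control $\delta\sum_{z\in L^{(ik)}}|F^\delta(z)|=O(1)$ from Lemma~\ref{lem:x:Lik=O(1)} with the pointwise a~priori bound $|F^\delta(z)|=O(\max\{\Im\cS^\delta(z),\delta\}^{-1/2})$ to obtain $\delta\sum_{z\in L^{(ik)}}|F^\delta(z)|^2=O(k^{-1/2})$ on each slice. Since $H_{F^\delta}$ equals the integral of the closed form~\eqref{eq:HF-def} along a vertical segment from $\partial\H^\circ_{\cS^\delta}$ (where $H_{F^\delta}=0$), integrating in $k$ and using~\eqref{eq:in-delta-Lik} yields the \emph{averaged} bound $\delta\sum_{\Im\cS^\delta(v)=y+O(\delta)}H_{F^\delta}(v)=O(y^{1/2}+\delta^{1/2})$. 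Passing to the limit gives $\int_{-1}^{1}h(x+iy)\,dx=O(y^{1/2})$, and since $h\ge 0$ is harmonic this forces $h=0$ on the segment (e.g., via the Herglotz representation). Thus the paper trades the unavailable pointwise control for an integral one, which suffices because the limit is a nonnegative harmonic function.
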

\begin{proof} We work in the same setup as above and analyze the boundary values of the function~$h$ near the bottom (wired) side of~$\cR$, the dual-wired sides can be handled similarly. Since~$|F^\delta(z)|=O((\max\{\Im\cS^\delta(z),\delta\}^{-1/2})$ due to the a priori regularity of fermionic observables and to the trivial estimate~$|H_{F^\delta}|\le 1$, it easily follows from Lemma~\ref{lem:x:Lik=O(1)} that
\[
\textstyle \delta\sum_{z\in\Lik{\cS^\delta}:\,|\Re \cS^\delta(z)|\le 1}|F^\delta(z)|^2\ =\ O({k}^{-1/2})\ \ \text{for~all}\ \ k\in [C_0\delta,\frac{1}{2}k_0]
\]
and~$\phi\in[-\phi_0,\phi_0]$, and that $\delta\sum_{z\in\UpaH{\cS^\delta}:\,|\Re \cS^\delta(z)|\le 1}|F^\delta(z)|^2\ =\ O(\delta^{-1/2})$.

Recall that~$H_{F^\delta}$ and~$F^\delta$ are linked by~\eqref{eq:HF-def}. As the function~$H_{F^\delta}$ has boundary values~$0$ at~$\pa\H^\circ_{\cS^\delta}$, its values near the segment~$[-1,1]+iy$ can be represented by integrals~\eqref{eq:HF-def} computed over \emph{vertical} segments starting at~$\pa\H^\circ_{\cS^\delta}$. Taking into account the property~\eqref{eq:in-delta-Lik},
this allows us to conclude that
\begin{equation}
\label{eq:HF-near-paH}
\textstyle \delta\sum_{v:\,|\Re \cS^\delta(v)|\le 1,\,\Im\cS^\delta(v)=y+O(\delta)}H_{F^\delta}(v)\ =\ O(y^{1/2}+\delta^{1/2})
\end{equation}
for all~$y\le \frac{1}{2}k_0$. Passing to the limit~$\delta\to 0$, we see that
\begin{equation}
\label{eq:h-bc}
\int_{-1}^{1}h(x+iy)dx\ =\ O(y^{1/2})\ \ \text{uniformly~for}\ \ y\in (0,\tfrac{1}{2}k_0].
\end{equation}
Since~$h$ is a \emph{non-negative} harmonic function, sending $y\to 0$ in~\eqref{eq:h-bc} implies that~$h$ satisfies Dirichlet boundary conditions on the segment~$[-1,1]$ (this easily follows, e.g., from the Herglotz representation theorem).
\end{proof}

We are now ready to give a proof of Theorem~\ref{thm:RSW-selfdual}. Let~$\cR=(x_1,x_2)\times(y_1,y_2)\subset\C$, $x_1<x_2$,~$y_1<y_2$. Following the strategy described in Section~\ref{sub:strategy-RSW}, assume that
\begin{equation}
\label{eq:HF-to-h-4pt}
F^\delta\to f,\quad H_{F^\delta}\to {\textstyle h=\frac{1}{2}\int\Im[(f(z))^2dz]}\quad \text{on~compact~subsets~of~}\cR,
\end{equation}
where~$F^\delta$ are the fermionic observables in rectangles~$\cR^\delta=[\cR(x_1,x_2;y_1,y_2)]^{\circ\bullet\circ\bullet}_{\cS^\delta}$
leading to the special boundary conditions~\eqref{eq:HF4-bc} for the associated functions~$H_{F^\delta}$.
Recall that the function~$f:\cR\to\C$ is holomorphic and the function~$h:\cR\to [0,1]$ is harmonic.
Arguing by contradiction, assume that
\begin{equation}
\label{eq:RSW-contrary}
\cos^2{\mathrm{p}}=\cos^2({\mathrm{p}}^\delta)\to 1\ \ \text{as}\ \ \delta\to 0.
\end{equation}
The first step in the proof of Theorem~\ref{thm:RSW-selfdual} is to prove that all subsequential limits~\eqref{eq:HF-to-h-4pt} of functions~$H_{F^\delta}$ inherit the boundary conditions~\eqref{eq:HF4-bc}.

\begin{proposition} \label{prop:hbc-0111}
In the setup of Theorem~\ref{thm:RSW-selfdual} and under the assumption~\eqref{eq:RSW-contrary}, each subsequential limit~$h$ of functions~$H_{F^\delta}$ defined above has boundary values~$0$ at the top side~$(a_2b_2)$ of~$\cR$ and boundary values~$1$ at the three other sides of~$\cR$.
\end{proposition}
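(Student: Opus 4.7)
The plan is to analyze each of the four sides of $\cR$ separately, following the same scheme as in the proof of Proposition~\ref{prop:Dobrushin-rectangles}, with the discrete upper half-plane $\H^\circ_{\cS^\delta}$ replaced by the appropriate discrete half-plane $\H^\delta_S$ constructed as in Section~\ref{sub:cuts} whose boundary approximates the side $S$ of $\cR^\delta$. The first task is to verify that the Kadanoff--Ceva observable $X$ of Section~\ref{sub:strategy-RSW} is nonnegative at every boundary half-quad on each of the four sides. On the top arc $(a_2b_2)^\circ$, the identity $\sigma_{v^\circ(c)}=\sigma_{(a_2b_2)^\circ}$ gives $X(c)=\E^{({\mathrm{p}})}[\mu_{v^\bullet(c)}\mu_{(b_1a_2)^\bullet}]\ge 0$; on the right arc $(b_1a_2)^\bullet$, the analogous identity for $\mu$ gives $X(c)=\E^{({\mathrm{p}})}[\sigma_{v^\circ(c)}\sigma_{(a_2b_2)^\circ}]\ge 0$; on the bottom arc $(a_1b_1)^\circ$, positivity is exactly the FKG argument of Remark~\ref{rem:4pt>0}; on the left arc $(b_2a_1)^\bullet$, the same reasoning as in that remark (starting instead from the identity $X(a_1)=0$, which expresses $\mu_{(b_2a_1)^\bullet}$ as a linear combination of $1$ and $\mu_{(b_1a_2)^\bullet}\sigma_{(a_1b_1)^\circ}\sigma_{(a_2b_2)^\circ}$) applies. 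Consequently, Lemma~\ref{lem:argF=} and Corollary~\ref{cor:ReF>0} carry over to each side $S$ after rotation by the appropriate unit factor $\alpha_S\in\mathbb{T}$, yielding $\Re[e^{i\phi}\alpha_S F^\delta(z)]\ge 0$ at every boundary half-quad on $S$ for all $|\phi|\le\phi_0$.

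Next, the proofs of Lemmas~\ref{lem:x:>-O(1)},~\ref{lem:x:paH=O(1)},~\ref{lem:x:Lik=O(1)} carry over verbatim to each side $S$, using the hitting-probability bounds of Propositions~\ref{prop:hm>delta} and~\ref{prop:hm>delta-Lik} applied inside $\H^\delta_S$, the positivity from the previous step, and the a priori Harnack bound $|F^\delta(z)|^2=O(\dist(z,S)^{-1})$ provided by Corollary~\ref{cor:H-Lip} (which uses only $|H_{F^\delta}|\le 1$). The corners $a_2,b_1,b_2$ at which the boundary value of $H_{F^\delta}$ jumps are handled by the observation that the jump sizes $1,\sin^2{\mathrm{p}},\cos^2{\mathrm{p}}$ match $(X(a_2))^2,(X(b_1))^2,(X(b_2))^2$ exactly, so the corresponding contributions to boundary sums are $O(1)$ uniformly in ${\mathrm{p}}^\delta$ and can be localized away by restricting the sums to the middle portion of $S$; the corner $a_1$ requires no special treatment because $X(a_1)=0$. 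Integrating the form $\tfrac12(\Im(F^2d\cS)+|F|^2d\cQ)$ from the side $S$ inwards along straight segments, exactly as in the derivation of~\eqref{eq:HF-near-paH}, these estimates yield
\[
\delta\!\!\sum_{\substack{v\,:\,\dist_{\H^\delta_S}(v,S)=s+O(\delta),\\ \text{proj}_S(\cS^\delta(v))\ \text{in the middle half of}\ S}}\!\!\bigl|H_{F^\delta}(v)-v^\delta_S\bigr|\ =\ O(s^{1/2}+\delta^{1/2})
\]
uniformly in $\delta$, ${\mathrm{p}}^\delta$ and $s\in[C_0\delta,\tfrac12 k_0]$, where $v^\delta_S\in\{0,\cos^2{\mathrm{p}}^\delta,1\}$ is the boundary value on $S$.

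To conclude, I treat the four sides case by case. On the top side, $v^\delta_S=0$ and $H_{F^\delta}\ge 0$ everywhere by the maximum principle, so passing to the limit $\delta\to 0$ yields $\int|h(x+iy)|\,dx=O((y_2-y)^{1/2})$ and the non-negativity of $h$ forces $h=0$ at the top side. On the right side, $v^\delta_S=1$ and $H_{F^\delta}\le 1$, so applying the same analysis to $1-H_{F^\delta}\ge 0$ yields $h=1$ on the right. On the bottom and left sides, $v^\delta_S=\cos^2{\mathrm{p}}^\delta$ and again $1-H_{F^\delta}\ge 0$; the above bound combined with $1-\cos^2{\mathrm{p}}^\delta=\sin^2{\mathrm{p}}^\delta$ yields
\[
\delta\sum_v(1-H_{F^\delta}(v))\ =\ O\bigl(\sin^2{\mathrm{p}}^\delta+s^{1/2}+\delta^{1/2}\bigr),
\]
which under the standing assumption~\eqref{eq:RSW-contrary} (${\mathrm{p}}^\delta\to 0$) passes in the limit to $\int(1-h)\,dx=O(s^{1/2})$ along segments parallel to these sides, forcing $h=1$ on both of them.

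The main technical obstacle is the verification of the analogues of Lemmas~\ref{lem:x:>-O(1)}--\ref{lem:x:Lik=O(1)} simultaneously at all four sides, together with a uniform control of the corner contributions. The structural input that makes this possible is the precise matching between the four boundary-value jumps of $H_{F^\delta}$ and the four corner values $(X(a_1))^2=0$, $(X(b_1))^2=\sin^2{\mathrm{p}}$, $(X(a_2))^2=1$, $(X(b_2))^2=\cos^2{\mathrm{p}}$, which guarantees that the corner `source terms' in the boundary sums remain bounded as $\delta\to 0$, notwithstanding the fact that three of the four boundary values are of order one.
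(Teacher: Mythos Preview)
Your approach is correct in spirit but differs from the paper's in the key reduction step. You verify positivity of~$X(c)$ on each of the four sides separately (via Remark~\ref{rem:4pt>0} and its analogues) and then re-run Lemmas~\ref{lem:x:>-O(1)}--\ref{lem:x:Lik=O(1)} directly for~$F^\delta$. The paper instead introduces, for the side~$S$ under consideration, the \emph{Dobrushin} observable~$F^\delta_0$ in the rectangle~$[\cR(x_1,x_2;y_1,y_2)]^{\circ\bullet\bullet\bullet}_{\cS^\delta}$ (wired on~$S$, dual-wired on the other three sides), uses the pointwise comparison $|X(c_\pm)|\le X_0(c_\pm)$ at boundary half-quads on~$S$ (a one-line FKG bound, since $|\E^{({\mathrm{p}})}[\mu_{v^\bullet}\sigma_{(a_1b_1)^\circ}\mu_{(b_1a_2)^\bullet}\sigma_{(a_2b_2)^\circ}]|\le \E^{({\mathrm{p}})}[\mu_{v^\bullet}\mu_{(b_1a_2)^\bullet}]\le \E_{\cR^{\circ\bullet\bullet\bullet}}[\mu_{v^\bullet}\mu_{(ba)^\bullet}]$), and then applies optional stopping to~$\Re[e^{i\phi}(F^\delta\pm F^\delta_0)(\widetilde Z_t)]$ to obtain $|\Re[e^{i\phi}F^\delta(z)]|\le|\Re[e^{i\phi}F^\delta_0(z)]|+O(1)$ throughout the triangular region~$\cT$. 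This immediately transfers the \emph{already-proven} Lemma~\ref{lem:x:Lik=O(1)} from~$F^\delta_0$ to~$F^\delta$, with no need to re-establish positivity on each side or to repeat the three lemmas. The paper's route is more economical and reuses the Dobrushin analysis as a black box; yours is more self-contained and makes the role of positivity on all four sides explicit.

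One caveat: your claim that the lemmas carry over ``verbatim'' is slightly too strong. The step in the proof of Lemma~\ref{lem:x:paH=O(1)} that extends the boundary sum from locally visible to \emph{all} boundary half-quads relies on the uniform comparability of nearby values~$X_0(c_\pm)=\E[\mu_{v^\bullet}\mu_{(ba)^\bullet}]$, a property of two-point disorder correlations under~\Unif. For the four-point values on the bottom and left sides this comparability is not literally the same statement; you would have to supply it (e.g., via the factorization $X(c)\cos\mathrm{p}=\E^{({\mathrm{p}})}[\mu_{v^\bullet}\mu_{(b_1a_2)^\bullet}]\cdot(1-\E^{({\mathrm{p}})}[\mu_{v^\bullet}\mu_{(b_2a_1)^\bullet}])$ from Remark~\ref{rem:4pt>0}, where each factor is separately comparable at nearby~$v^\bullet$). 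The paper's comparison route sidesteps this nuisance entirely.
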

\begin{proof} We focus our attention on the analysis near the bottom sides~$(a_1^\delta b_1^\delta)^\circ$ of discrete rectangles~$\cR^\delta$, the three other sides can be {treated} in a similar way. As above, let~$x_0$ and~$s>0$ be such that~$(x_0-8s,x_0+8s)\subset(x_1,x_2)$, $8k_0s\le y_2-y_1$, and assume that~$x_0=y_1=0$ and~$s=1$ by shifts and by the scaling.

Denote by~$F^\delta_0$ the fermionic observable in the rectangle~$[\cR(x_1,x_2;y_1,y_2)]^{\circ\bullet\bullet\bullet}_{\cS^\delta}$ with \emph{Dobrushin} boundary conditions. By definition, if~$z=(v^\bullet v^\circ_- z v^\circ_+)$ is a boundary half-quad on~$(a_1^\delta b_1^\delta)$ and~$c_\pm:=c_{(v^\circ_\pm v^\bullet)}$, then
\begin{align*}
|X(c_-)|=|X(c_+)|\ &=\ \big|\E^{({\mathrm{p}})}_\cR[\mu_{v^\bullet}\sigma_{(a_1b_1)^\circ}\mu_{(b_1a_2)^\bullet}\sigma_{(a_2b_2)^\circ}]\big|\\
&\le\ \E^{({\mathrm{p}})}_\cR[\mu_{v^\bullet}\mu_{(b_1a_2)^\bullet}]\ \le\ \E_{\cR^{\circ\bullet\bullet\bullet}}[\mu_{v^\bullet}\mu_{(a_2b_1)^\bullet}]\ =\ X_0(c_\pm),
\end{align*}
where the last expectation is taken with Dobrushin boundary conditions (i.e., dual-wired along \emph{all} three sides of~$\cR$ except~$(a_1b_1)^\circ$) and the Kadanoff--Ceva fermionic observables~$X$, $X_0$ correspond to~$F$, $F_0$ in a standard way (see Proposition~\ref{prop:shol=3term}).

Due to Lemma~\ref{lem:argF=} and Corollary~\ref{cor:ReF>0}, this inequality implies the a priori bound
\[
|\Re [e^{i\phi}F^\delta(z)]|\ \le\ \Re[e^{i\phi}F^\delta_0(z)],\qquad z\in(a_1^\delta b_1^\delta)^\circ,
\]
for all~$\phi\in[-\phi_0,\phi_0]$ with~$\phi_0>0$. Recall now the triangular domain~$\cT$ from the proof of Lemma~\ref{lem:x:>-O(1)}, which is bounded by~$\pa\H^\circ_\cS$ and the lines \mbox{$4k_0-y=\pm k_0x$}, with a sufficiently small~$k_0>0$; {see Fig.~\ref{fig:technical}A with~$s=1$.} The optional stopping theorem for the martingales~$\Re[e^{i\phi}(F^\delta\pm F^\delta_0)(\widetilde{Z}_t)]$ started at a point with~$|\Re\cS^\delta(z)|\le 2$, $\Im\cS^\delta(z)\le 2k_0\delta$, implies the uniform estimate
\[
|\Re [e^{i\phi}F^\delta(z)]|\ \le\ |\Re[e^{i\phi}F^\delta_0(z)]|+O(1)\quad \text{if}\ \ |\Re\cS^\delta(z)|\le 2,\ \Im\cS^\delta(z)\le 2k_0\delta.
\]
Thus, the uniform estimates from Lemma~\ref{lem:x:Lik=O(1)} -- originally proved for the fermionic observables~$F^\delta_0$ with Dobrushin boundary conditions -- also hold for the observables~$F^\delta$. Therefore, one can repeat the proof of Proposition~\ref{prop:Dobrushin-rectangles} with~$F^\delta_0$ replaced by~$F^\delta$. In this setup, the analogue of the estimate~\eqref{eq:HF-near-paH} reads as
\[
\textstyle \delta\sum_{v:\,|\Re \cS^\delta(v)|\le 1,\,\Im\cS^\delta(v)=y+O(\delta)}\big|\cos^2({\mathrm{p}}^\delta)-H_{F^\delta}(v)\big|\ =\ O(y^{1/2}+\delta^{1/2})
\]
and hence
\[
\int_{-1}^{1}(1-h(x+iy))dx\ =\ O(y^{1/2})\ \ \text{uniformly~for}\ \ y\in (0,\tfrac{1}{2}k_0].
\]
This is enough to conclude that the harmonic function~$1-h\ge 0$ has boundary values~$0$ at the segment~$[-1,1]$.
\end{proof}

Proposition~\ref{prop:hbc-0111} implies that the only possible subsequential limit~\eqref{eq:HF-to-h-4pt} of functions~$H_{F^\delta}$ under the assumption~\eqref{eq:RSW-contrary} is~$h(\cdot)=1-\hm_\cR(\cdot,(a_2b_2))=\hm_\cR(\cdot,(b_2a_2))$. To complete the proof of Theorem~\ref{thm:RSW-selfdual} it remains to rule out this scenario.

\begin{proof}[{\bf Proof of Theorem~\ref{thm:RSW-selfdual}}] Assume, by contradiction, that the convergence~\eqref{eq:HF-to-h-4pt} holds with \mbox{$h(\cdot)=\hm_\cR(\cdot,(b_2a_2))$} and~$f^2=i\pa h=\tfrac{i}{2}(\pa_xh-i\pa_yh)$. As above, we assume that~$y_1=0$ and work in a vicinity~$(x_0-8s,x_0+8s)\subset(x_1,x_2)$ of a point \mbox{$x_0=0$}. Note that now we do not assume that~$s=1$ (and actually will choose it sufficiently small to get a contradiction below).

From the Dirichlet boundary conditions of~$h$ near~$0$ it is easy that there exists a \emph{real} constant~$f_0\ne 0$ such that
\[
(f(w))^2\ =\ -f_0^2+O(|w|)\ \ \text{for}\ \ w\in\cR\ \text{near}\ 0.
\]
Let~$\cT$ be the same triangular subdomain of~$\cR$ as above {(see Fig.~\ref{fig:technical}A),} with the base~$\pa\H^\circ_{\cS^\delta}$ and the sides~${4k_0s}-y=\pm k_0x$. Note that the convergence~\eqref{eq:HF-to-h-4pt} yields
\begin{align}
\Re[e^{\pm i\phi_0}F^\delta(z)]\ &=\ \Re[e^{\pm i\phi_0}(f(\cS^\delta(z))+o_{\delta\to 0}(1))] \notag \\
&=\ -|f_0|\sin\phi_0+O(s)+o_{\delta\to 0}(1),
\label{eq:ReF-near-R<0}
\end{align}
uniformly for~$\cS^\delta(z)$ away from~$\R$, provided that the sign in~$\pm\phi_0$ is chosen appropriately. {We claim that, provided that~$s>0$ is small enough, the asymptotics~\eqref{eq:ReF-near-R<0} in not compatible with the positive boundary values \mbox{$\Re[e^{\pm i\phi_0}F^\delta(z)]\ge 0$} at~$\pa H^\circ_{\cS^\delta}$; see Remark~\ref{rem:4pt>0} and Corollary~\ref{cor:ReF>0} for this positivity property.}

To get a contradiction, consider the martingale~$\Re[e^{\pm i\phi_0}F^\delta(\widetilde{Z}_t)]$ started at a point~$z$ such that~$\cS^\delta(z)=ik_0s+O(\delta)$ and stopped at the boundary of~$\cT$. The probability to hit the base of~$\cT$ is uniformly (in~$\delta$) bounded from below by a constant~$p_0>0$ due to the uniform crossing property and this event produces a \emph{non-negative} contribution. In view of~\eqref{eq:ReF-near-R<0}, the two sides of~$\cT$ -- except small vicinities of radius~$\rho\ll s$ near the corner points~$\pm 4s$ of~$\cT$ -- contribute~$(1-p_0)(-|f_0|\sin\phi_0+O(s)+o_{\delta\to 0}(1))$. Finally, the contribution of these tiny vicinities of~$\pm 4s$ can be uniformly (in~$\delta$) estimated by~$O((\rho/s)^\beta)$ with~$\beta>0$ as in the proof of Lemma~\ref{lem:x:>-O(1)}. To summarize, the expectation at the hitting time is greater than~$-(1-\frac{1}{2}p_0)|f_0|\sin\phi_0$ provided that $s>0$, $\rho>0$, and then~$\delta>0$ are chosen small enough. 
\end{proof}

\begin{remark}
Let us emphasize that our proof of Theorem~\ref{thm:RSW-selfdual} heavily relies upon the fact that we consider very \emph{special} discrete rectangles~$\cR^\delta$ (and not \emph{generic} discrete domains~$(\Od;a_1,b_1,a_2,b_2)$ as in~\cite[Theorem~6.1]{ChSmi2}). A posteriori, i.e. once Theorem~\ref{thm:FK-conv} (which relies upon Corollary~\ref{cor:circuits} of Theorem~\ref{thm:RSW-selfdual}) is proven, one can generalize~\cite[Theorem~6.1]{ChSmi2} to the setup of s-embeddings satisfying the same assumptions as in Theorem~\ref{thm:FK-conv}. However, this requires an additional work and we do not include such a derivation {into} this paper.
\end{remark}

\subsection{Proof of Corollary~\ref{cor:circuits}}\label{sub:circuits} The argument given below mimics the proof of~\cite[Proposition~2.10]{duminil-garban-pete}, see also~\cite[Section~5.4]{duminil-parafermions} for more references.

Let~$\cR:=(\Re u-3d,\Re u-d)\times (\Im u-4d,\Im u+4d)\subset\C$. By the FKG inequality, it is enough to prove that
\begin{equation}
\label{eq:middle-to-middle}
\mathbb{P}^{\operatorname{free}}_{\cR}
\biggl[\begin{array}{l}\text{there exists an open path in $\cR$ from $v_1$ to $v_2$ such that}\\
\Im\cS^\delta(v_1)\le\Im u-3d\ \text{and}\ \Im\cS^\delta(v_2)\ge \Im u+3d\end{array}\biggr]\,\ge\ p_0^{1/4}
\end{equation}
and similarly for the rectangle~$(\Re u+d,\Re u+3d)\times (\Im u-4d,\Im u+4d)$ and for two horizontal rectangles $(\Re u-4d,\Re u+4d)\times (\Im u\pm3d,\Im u\pm d)$. Applying shifts and the scaling, we can assume that~$\cR=(-1,1)\times(-4,4)$ and~$\delta\le L_0^{-1}$.

Assume, by contradiction, that~\eqref{eq:middle-to-middle} does not hold. Then, it should exist a sequence of s-embeddings~$\cS^\delta$ with~$\delta\to0$ (note that we use \Unif\ here) such that
\[
\liminf_{\delta\to 0}\mathbb{P}^{\operatorname{free}}_\cR\Biggl[\begin{array}{l}\text{there~exists~an~open~path~in}\ \cR\ \text{from}\ v_1\ \text{to}\ v_2\\
\text{such~that}\ \Im\cS^\delta(v_1)\le -3\ \text{and}\ \Im\cS^\delta(v_2)\ge 3\end{array}\Biggr]\ =\ 0.
\]

Denote \mbox{$\cR_-\!:=(-1,1)\!\times\!(-4,-3)$}, \mbox{$\cR_\mathrm{mid}\!:=(-1,1)\!\times\!(-3,3)$}, \mbox{$\cR_+\!:=(-1,1)\!\times\!(3,4)$}. We~call the event considered above an \emph{open vertical crossing of~$\cR_\mathrm{mid}$}, and use similar terminology for crossings in~$\cR_\pm$. Using the monotonicity of the probability to have open/closed crossings with respect to boundary conditions (which is again a corollary of the FKG inequality), one sees that
\begin{align}
\notag 
&\mathbb{P}^{\operatorname{free}}_\cR\big[\text{there exists an open vertical crossing of $\cR_\mathrm{mid}$}\big]\\
&\hskip 24pt \ge\
\mathbb{P}^{\operatorname{free}}_\cR\Biggl[\begin{array}{l}
\text{there exists an open vertical crossing of $\cR_\mathrm{mid}$}\\
\hskip 48pt \&\ \text{a closed horizontal crossing of $\cR_-$}\\
\hskip 48pt \&\ \text{a closed horizontal crossing of $\cR_+$}
\end{array}\Biggr]\notag\\
&\hskip 24pt \ge\ \mathbb{P}^{\operatorname{wfwf}}_\cR\Biggl[\begin{array}{l}
\text{there exists an open vertical crossing of $\cR_\mathrm{mid}$}\\
\hskip 48pt \&\ \text{a closed horizontal crossing of $\cR_-$}\\
\hskip 48pt \&\ \text{a closed horizontal crossing of $\cR_+$}
\end{array}\Biggr],\notag
\end{align}
where we denote by `wfwf' the wired boundary conditions (all edges are open) on the horizontal sides of a rectangle and free boundary conditions (all edges are closed) on the vertical sides. Using the FKG inequality again, one obtains the estimates
\begin{align}
&\mathbb{P}^{\operatorname{wfwf}}_\cR\Biggl[\begin{array}{l}
\text{there exists an open vertical crossing of $\cR_\mathrm{mid}$}\\
\hskip 24pt \&\ \text{a closed horizontal crossing of $\cR_-$}\\
\hskip 24pt \&\ \text{a closed horizontal crossing of $\cR_+$}
\end{array}\Biggr],\notag\\
&\hskip 24pt \ge\ \mathbb{P}^{\operatorname{wfwf}}_{\cR}\big[\text{there exists an open vertical crossing of $\cR_\mathrm{mid}$}\big] \notag\\
&\hskip 40pt \times\ \mathbb{P}^{\operatorname{wfwf}}_{\cR_-}\big[\text{there exists a closed horizontal crossing of $\cR_\mathrm{-}$}\big] \label{eq:wfwf-wfwf-wfwf}\\
&\hskip 40pt \times\ \mathbb{P}^{\operatorname{wfwf}}_{\cR_+}\big[\text{there exists a closed horizontal crossing of $\cR_\mathrm{+}$}\big], \notag
\end{align}
Also, one trivially has
\begin{align*}
&\mathbb{P}^{\operatorname{wfwf}}_{\cR}\big[\text{there exists an open vertical crossing of $\cR_\mathrm{mid}$}\big]\\
&\hskip 24pt \ge\ \mathbb{P}^{\operatorname{wfwf}}_{\cR}\big[\text{there exists an open vertical crossing of $\cR$}\big].
\end{align*}

We are now (almost) in the setup of Theorem~\ref{thm:RSW-selfdual}: it remains to note, e.g., that
\begin{align*}
&\mathbb{P}^{\operatorname{wfwf}}_{\cR}\big[\text{there exists an open vertical crossing of $\cR$}\big]\\
&\hskip 24pt \ge\ \E_{\widetilde{\cR}{}^\delta}[\sigma_{(a_1^\delta b_1^\delta)^\circ}\sigma_{(a_2^\delta b_2^\delta)^\circ}],\qquad \widetilde{\cR}{}^\delta:=[\cR(-\tfrac{1}{2},\tfrac{1}{2};-\tfrac{9}{2},\tfrac{9}{2})]^{\circ\bullet\circ\bullet}_{\cS^\delta}
\end{align*}
(and similarly for~$\cR_\pm$) where we again used the monotonicity with respect to boundary conditions and the fact that the boundaries of discrete rectangles~$\widetilde{\cR}{}^\delta$ constructed in Section~\ref{sub:cuts} stay within~$O(\delta)$ from the corresponding horizontal and vertical lines under the assumption~\Qflat. Thus, all the three probabilities in~\eqref{eq:wfwf-wfwf-wfwf} are uniformly bounded from below due to Theorem~\ref{thm:RSW-selfdual}, which leads to a contradiction.

\renewcommand\thesection{A}
\section{Appendix}
\setcounter{equation}{0}
For the sake of completeness of the presentation, below we give the proofs of two simple estimates `in {the} continuum' that were used in Section~\ref{sec:convergence}.

Recall that, given a simply connected domain~$\Omega\subset\C$ and a point~$u\in\Omega$, we denote by~$d_u=d(u,\Omega)$ the distance from~$u$ to the boundary of~$\Omega$, and by~$\crad(u,\Omega)$ the \emph{conformal radius} of~$u\in\C$: if~$\varphi_u:\mathbb{D}\to \Omega$ is conformal and~$\varphi_u(0)=u$, then~$\crad(u,\Omega)=|\varphi'_u(0)|$. The Koebe~$\frac{1}{4}$-theorem says that~$\frac{1}{4}d_u\le\crad(u,\Omega)\le 4d_u$.
\begin{lemma}\label{lem:crad-estimates} The function $u\mapsto\crad(u,\Omega)$ is smooth, its gradient is bounded by~$4$ and all its second derivatives are uniformly (both in~$u$ and~$\Omega$) bounded by~$18d_u^{-1}$.
\end{lemma}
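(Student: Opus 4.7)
My plan is to realize $R(w) := \crad(w,\Omega)$ through a single conformal uniformizer based at $u$ and to reduce all the required bounds to classical coefficient estimates for schlicht functions. Fix $u\in\Omega$ and let $f := \varphi_u:\mathbb{D}\to\Omega$ be the conformal map normalized by $f(0)=u$ and $R := f'(0)=\crad(u,\Omega)>0$. For $w\in\Omega$ close to $u$, set $\zeta(w) := f^{-1}(w)$. Pre-composing $f$ with the Möbius automorphism $z\mapsto (z+\zeta)/(1+\bar\zeta z)$ of $\mathbb{D}$ produces a conformal map $\mathbb{D}\to\Omega$ sending $0$ to $w$, whose derivative at $0$ has modulus $|f'(\zeta)|(1-|\zeta|^2)$. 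This gives the key identity
\[
R(w)\;=\;|f'(\zeta(w))|\,\bigl(1-|\zeta(w)|^2\bigr),
\]
from which smoothness (in fact real-analyticity) of $R$ near $u$ is immediate, since $f^{-1}$ is holomorphic in $w$.

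Next I will differentiate at $w=u$ using the identities $\zeta(u)=0$, $\partial_w\zeta(u)=1/R$, $\partial_{\bar w}\zeta\equiv 0$. Introduce $g := \log R$ and split $\partial_w g = A(\zeta)+B(\zeta,\bar\zeta)$ with $A(\zeta)=f''(\zeta)/(2(f'(\zeta))^2)$ holomorphic in $\zeta$ and $B(\zeta,\bar\zeta) = -\bar\zeta/[(1-|\zeta|^2)f'(\zeta)]$. A direct calculation (in which $\partial_{\bar\zeta}A\equiv 0$ and the fact that $B\propto\bar\zeta$ kills many terms at $\zeta=0$) will then yield
\[
\partial_w R(u)=\frac{f''(0)}{2R},\qquad \partial_w^2 R(u)=\frac{3(a_3-a_2^2)}{R},\qquad \partial_w\partial_{\bar w}R(u)=\frac{|a_2|^2-1}{R},
\]
where $a_2,a_3$ are the Taylor coefficients of the normalized schlicht function $F(\zeta) := (f(\zeta)-u)/R = \zeta+a_2\zeta^2+a_3\zeta^3+\cdots$ on $\mathbb{D}$.

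The three target bounds then reduce to the standard coefficient estimates for univalent maps. Bieberbach's inequality $|a_2|\le 2$ gives $|f''(0)|\le 4R$, whence $|\nabla R(u)| = 2|\partial_w R(u)| = |f''(0)|/R\le 4$. The area theorem applied to $z\mapsto 1/F(1/z)+a_2$ on $\{|z|>1\}$ yields $|a_3-a_2^2|\le 1$, and combined with $||a_2|^2-1|\le 3$ this gives $|\partial_w^2 R(u)|,\,|\partial_w\partial_{\bar w}R(u)|\le 3/R$. Passing to Cartesian derivatives through $R_{xx}=2\Re\partial_w^2 R+2\partial_w\partial_{\bar w}R$, $R_{yy}=-2\Re\partial_w^2 R+2\partial_w\partial_{\bar w}R$, $R_{xy}=-2\Im\partial_w^2 R$ (using $\partial_{\bar w}^2 R = \overline{\partial_w^2 R}$), I obtain $|R_{xx}|,|R_{yy}|\le 12/R$ and $|R_{xy}|\le 6/R$. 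Finally, the Koebe one-quarter theorem applied to $F$ gives $R\le 4d_u$, while the Schwarz lemma applied to $F^{-1}$ at the origin gives $R\ge d_u$; substituting $R\ge d_u$ into the above bounds turns each second partial into something $\le 12/d_u<18/d_u$, as required.

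The main obstacle I anticipate is purely bookkeeping: keeping track, in the second differentiation of $R(w)=|f'(\zeta(w))|(1-|\zeta(w)|^2)$, of which pieces are holomorphic in $\zeta$ and which are not, and cleanly extracting the Schwarzian-type combination $a_3-a_2^2$. No deeper difficulty is expected, since once the splitting $\partial_w g = A+B$ above is in place the computation collapses because $\partial_{\bar\zeta}B|_{\zeta=0}=-1/R$ is the only nonzero anti-holomorphic contribution at the base point, and all other dangerous-looking terms vanish thanks to $\partial_{\bar w}\zeta\equiv 0$.
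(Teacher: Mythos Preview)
Your proof is correct and follows essentially the same route as the paper: write $\crad(w,\Omega)=|\varphi_u'(\varphi_u^{-1}(w))|\,(1-|\varphi_u^{-1}(w)|^2)$, expand to second order at $w=u$, and control the Taylor coefficients via classical bounds for schlicht functions together with $\crad(u,\Omega)\ge d_u$. The only notable difference is that you invoke the area-theorem inequality $|a_3-a_2^2|\le 1$ where the paper cites $|a_2|\le 2$ and $|a_3|\le 3$; your choice is the sharper one here and in fact yields the better constant $12\,d_u^{-1}$.
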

\begin{proof}
Without loss of generality (by translating and scaling~$\Omega$ appropriately), we can assume that~$u=0$ and that~$\crad(u,\Omega)=1$. Under these assumptions, we have $\varphi_u(z)=z+a_2z^2+a_3z^3+O(z^4)$, $z\to 0$, and
$\varphi^{-1}(w)=w-a_2w^2+O(w^3)$, $w\to 0$. If~$w\in \Omega$, then
\[
\crad(w,\Omega)\ =\ \varphi'_u(\varphi_u^{-1}(w))\cdot(1-|\varphi^{-1}_{u}(w)|^2)
\]
since~$\varphi_u^{-1}\circ\varphi_w$ is the M\"obius mapping~$z\mapsto (z+\varphi_u^{-1}(w))/(1+\overline{\varphi_u^{-1}(w)}z)$. Therefore,
\[
\crad(w,\Omega)\ =\ 1+2a_2w+(3a_3-2a_2^2)w^2-|w|^2+O(|w|^3),\quad w\to 0.
\]
The desired estimates for derivatives of the function~$\crad(u,\Omega)$ at~$u=0$ follow from the classical bounds~$|a_2|\le 2$ and~$|a_3|\le 3$ for the coefficients of univalent maps.
\end{proof}

The second lemma is slightly more involved.

\begin{lemma} \label{lem:cont-estimate} Let~$\Omega\subset\C$ be a bounded simply connected domain, $\alpha\in(0,2]$, and a function~$h\in C^2(\Omega)\cap C(\overline\Omega)$ satisfy the Dirichlet boundary conditions~$h=0$ at~$\partial\Omega$. If~$|\Delta h(u)|\le d_u^{-2+\alpha}$ for all~$u\in\Omega$, then
\[
|h(w)|\le \cst(\alpha)\cdot d_w^{\beta(\alpha)}\cdot (\diam(\Omega))^{\alpha-\beta(\alpha)}\ \ \text{for all}\ \ w\in\Omega,
\]
where~$\cst(\alpha),\beta(\alpha)>0$ depend only on~$\alpha$. In particular, one can choose~$\beta(\alpha):=\frac{1}{5}\alpha$.
\end{lemma}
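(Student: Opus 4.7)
By the scaling $h(\cdot)\mapsto h(\lambda\,\cdot)$, which preserves the hypothesis $|\Delta h|\le d_u^{\alpha-2}$, we may assume $\diam(\Omega)=1$; the goal is then $|h(w)|\le C(\alpha)\,d_w^{\alpha/5}$. The plan is to first prove a uniform pointwise bound $\|h\|_{L^\infty(\Omega)}\le C(\alpha)$ and then upgrade it to H\"older decay by a local splitting centered at the nearest boundary point.

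For the uniform bound, Green's identity and $h|_{\partial\Omega}=0$ give \mbox{$|h(w)|\le C\int_\Omega G_\Omega(w,u)\,d_u^{\alpha-2}\,dA(u)$}. Pulling back by the Riemann map $\varphi\colon\mathbb D\to\Omega$ with $\varphi(0)=w$, and using the Koebe identities $|\varphi'(z)|(1-|z|^2)\asymp d_{\varphi(z)}$ together with $G_\Omega(w,\varphi(z))=-\log|z|\asymp 1-|z|$, this integral transforms into $C\int_{\mathbb D}(1-|z|)^{\alpha-1}\,|\varphi'(z)|^\alpha\,dA(z)$. H\"older's inequality with conjugate exponents $2/\alpha$ and $2/(2-\alpha)$ bounds the latter by the product of $\bigl(\int_\mathbb D|\varphi'|^2\,dA\bigr)^{\alpha/2}=\Area(\Omega)^{\alpha/2}\le C(\alpha)$ and $\bigl(\int_\mathbb D(1-|z|)^{2(\alpha-1)/(2-\alpha)}\,dA\bigr)^{(2-\alpha)/2}$, whose integrand is integrable precisely because $\alpha>0$. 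Hence $M:=\|h\|_{L^\infty(\Omega)}\le C(\alpha)$ uniformly in~$\Omega$.

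To obtain H\"older decay, I would then fix $w\in\Omega$ with $d_w=r$, pick $p\in\partial\Omega$ with $|w-p|=r$, and for a scale $R\in[r,1]$ to be optimized let $U$ denote the component of $B(p,R)\cap\Omega$ containing $w$. Being the intersection of a disc with the simply connected $\Omega$, $U$ is itself simply connected. Decompose $h=h_1+h_2$ in $U$, where $h_1$ is harmonic in $U$ with the same boundary values as $h$, and $h_2=h-h_1$ satisfies $\Delta h_2=\Delta h$ with $h_2=0$ on $\partial U$. Since $h_1=0$ on $\partial\Omega\cap\overline{B(p,R)}\subset\partial U$ and $|h_1|\le M$ on the complementary arc $\partial B(p,R)\cap\Omega$, Beurling's projection theorem applied in the simply connected $U$ yields the boundary H\"older estimate $|h_1(w)|\le C\,(r/R)^{1/2}\,M$. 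Applying Step~1 inside $U$ (which has diameter $\le 2R$, and on which $|\Delta h_2|\le(d_u^U)^{\alpha-2}$ because $d_u^U\le d_u^\Omega$ and $\alpha-2<0$) bounds $|h_2(w)|\le C(\alpha)\,R^\alpha$. The optimal $R\asymp(Mr^{1/2})^{2/(2\alpha+1)}$ then produces
\[
|h(w)|\ \le\ CM(r/R)^{1/2}+C(\alpha)R^\alpha\ \le\ C(\alpha)\,M^{2\alpha/(2\alpha+1)}\,r^{\alpha/(2\alpha+1)}\ \le\ C(\alpha)\,r^{\alpha/(2\alpha+1)};
\]
and since $\alpha/(2\alpha+1)\ge\alpha/5$ for all $\alpha\in(0,2]$, the claim follows.

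The only delicate point is the simple connectivity of the auxiliary component $U$: it is needed to apply Beurling's projection theorem with the sharp boundary-H\"older exponent $1/2$, and is the one place where the hypothesis of simple connectivity of $\Omega$ is actually used. The loss in exponent (the statement's $\alpha/5$, compared with the sharper $\alpha/(2\alpha+1)$ obtained above and with the optimal $\alpha$ that holds on smooth domains such as $\mathbb D$) is inherent to the Beurling bound at rough simply connected boundaries.
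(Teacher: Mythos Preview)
Your argument is correct and genuinely different from the paper's. Both proofs begin identically --- scale to $\diam(\Omega)=1$, write $|h(w)|\le\int G_\Omega(w,u)\,d_u^{\alpha-2}\,dA(u)$, pull back by the Riemann map $\varphi_w$, and use Koebe to reach the integral $\int_{\mathbb D}(1-|z|)^{\alpha-1}|\varphi_w'(z)|^\alpha\,dA(z)$. From here the paper stays entirely inside this single integral: it splits $\mathbb D$ into $\{|z|\ge 1-2d_w^\gamma\}$ and $\{|z|\le 1-2d_w^\gamma\}$ with $\gamma=\tfrac25$, applies the same H\"older inequality you use on each piece, and controls the inner piece by bounding $\Area(\varphi_w(\{|z|\le 1-2d_w^\gamma\}))$ via a hyperbolic-distance argument (Koebe gives $|\varphi_w(z)-w|\le d_w^{1-2\gamma}$ on that set). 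Balancing $d_w^{\alpha\gamma/2}$ against $d_w^{\alpha(1-2\gamma)}$ yields $\beta=\alpha/5$.

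Your route instead uses the full-disc H\"older bound only to get $\|h\|_\infty\le C(\alpha)$, and then runs a separate local argument: the decomposition $h=h_1+h_2$ in the component $U$ of $B(p,R)\cap\Omega$, Beurling for the harmonic part, and a rescaled application of the uniform bound for $h_2$. This buys you the sharper exponent $\alpha/(2\alpha+1)$ (strictly better than $\alpha/5$ for $\alpha<2$) at the cost of invoking Beurling's projection theorem and the regularity of $\partial U$ for the Dirichlet problem. The paper's proof is more self-contained --- everything goes through the single conformal map $\varphi_w$ and Koebe --- but gives the weaker exponent. Minor quibbles: your scaling line should read $h\mapsto\lambda^{-\alpha}h(\lambda\,\cdot)$ rather than $h(\lambda\,\cdot)$ to actually preserve the hypothesis; the H\"older step needs $\alpha<2$ handled separately from the easy case $\alpha=2$; and you should note that the optimal $R$ may fall outside $[2r,\tfrac12]$, in which case one takes the nearest endpoint (both endpoint cases are immediate).
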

\begin{proof} Applying the scaling~$\Omega\mapsto k\Omega$, $h(u) \mapsto k^\alpha h(k^{-1}u)$, we can assume (without loss of generality) that~$\diam(\Omega)=1$. Let~$G_\Omega(w,u)$ be the Green function in~$\Omega$. Clearly,
\[
|h(w)|\le\int_\Omega |G_\Omega(w,u)|d_u^{-2+\alpha}dA(u).
\]
Let~$\varphi_w:\mathbb{D}\to\Omega$ be a conformal mapping such that~$\varphi_w(0)=w$. The Koebe $\frac{1}{4}$-theorem and the conformal invariance of the Green function yield the estimate
\begin{equation}
\label{eq:A-x}
|h(w)|\le 16\int_\mathbb{D}|G_\mathbb{D}(z,0)|(1\!-\!|z|)^{-2+\alpha}|\varphi_w'(z)|^\alpha dA(z);
\end{equation}
recall that
\[
G_\mathbb{D}(z,0)(1\!-\!|z|)^{-2+\alpha}=\tfrac{1}{2\pi}\log|z|\cdot (1\!-\!|z|)^{-2+\alpha}\sim-\tfrac{1}{2\pi}(1\!-\!|z|)^{-1+\alpha},\quad |z|\to 1.
\]

We now choose a small parameter~$\gamma>0$ (e.g.,~$\gamma:=\frac{2}{5}$), split the unit disc~$\mathbb{D}$ into the regions~$|z|\ge 1-2d_w^\gamma$ and~$|z|\le 1-2d_w^\gamma$, and estimate each of the corresponding terms in~\eqref{eq:A-x} via the H\"older inequality with the exponents~$2/(2-\alpha)$ and~$2/\alpha$.

Since~$\int_\mathbb{D}|\varphi_w'(z)|^2dA(z)=\Area(\Omega)\le \pi$ (recall that~$\diam(\Omega)=1$), we have
\begin{align}
\int_{z\in\mathbb{D}:|z|\ge 1-2d_w^\gamma}&|G_\mathbb{D}(z,0)|(1\!-\!|z|)^{-2+\alpha}|\varphi_w'(z)|^\alpha dA(z) \notag\\ &\le\
\mathrm{cst}\cdot \biggl[\int_0^{2d_w^\gamma}t^{\frac{2(-1+\alpha)}{2-\alpha}}dt\biggr]^{\frac{2-\alpha}{2}}\ =\ \mathrm{cst}(\alpha)\cdot d_w^{\frac{\alpha\gamma}{2}}.
\label{eq:A-x1}
\end{align}
Similarly, the contribution of the second region can be estimated as follows:
\begin{align}
\int_{z\in\mathbb{D}:|z|\le 1-2d_w^\gamma}&|G_\mathbb{D}(z,0)|(1\!-\!|z|)^{-2+\alpha}|\varphi_w'(z)|^\alpha dA(z) \notag\\ &\le\
\mathrm{cst}(\alpha)\cdot (\Area(\varphi_w(\{z:|z|\le 1-2d_w^\gamma\})))^{\frac{\alpha}{2}}.
\label{eq:A-x2}
\end{align}
Note that
\[
|z|\le 1-2d_w^\gamma\ \ \Rightarrow\ \ \mathrm{hdist}_\Omega(w,\varphi_w(z))=\mathrm{hdist}_\mathbb{D}(0,z)=\int_0^{|z|}\!\frac{dr}{1-r^2}\ \le\ -\frac{\gamma}{2}\log d_w,
\]
where~$\mathrm{hdist}_\Omega$ stands for the hyperbolic distance in~$\Omega$ (thus, $\mathrm{hdist}_\mathbb{D}$ is the distance in the Poincar\'e metric in~$\mathbb{D}$). Due to the Koebe theorem, the hyperbolic metric element at a point~$u\in\Omega$ is bounded from below by~$\frac{1}{4}d_u^{-1}$. Therefore,
\[
\mathrm{hdist}_\Omega(w,\varphi_w(z))\ \ge\ \frac{1}{4}\int_0^{|\varphi_w(z)-w|}\!\frac{dt}{d_w+t}\ =\ \frac{1}{4}\log\biggl(1+\frac{|\varphi_w(z)\!-\!w|}{d_w}\biggr).
\]
In particular, the condition~$|z|\le 1\!-\!2d_w^\gamma$ implies that~$|\varphi_w(z)\!-\!w|\le d_w^{1-2\gamma}$. This allows us to estimate~\eqref{eq:A-x2} from above by~$\cst(\alpha)\cdot d_w^{\alpha(1-2\gamma)}$\!. The proof is complete.
\end{proof}


\end{document}